\newtheorem{theorem}{Theorem}[section]
\newtheorem*{theorem*}{Theorem}
\newtheorem{lemma}[theorem]{Lemma}
\newtheorem*{lemma*}{Lemma}
\newtheorem{proposition}[theorem]{Proposition}
\newtheorem{example}{Example}
\newtheorem{remark}{Remark}
\newtheorem*{definition*}{Definition}
\providecommand{\customgenericname}{}
\newcommand{\newcustomtheorem}[2]{%
  \newenvironment{#1}[1]
  {%
   \renewcommand\customgenericname{#2}%
   \renewcommand\theinnercustomgeneric{##1}%
   \innercustomgeneric
  }
  {\endinnercustomgeneric}
}
\newcommand{\overbar}[1]{\mkern1.5mu\overline{\mkern-1.5mu#1\mkern-1.5mu}\mkern 1.5mu}
\DeclarePairedDelimiter{\tripnorm}{\lvert\kern-0.25ex\lvert\kern-0.25ex\lvert}{\rvert\kern-0.25ex\rvert\kern-0.25ex\rvert}
\DeclarePairedDelimiterX{\helperkldiv}[2]{(}{)}{#1\;\delimsize\|\;#2}
\newcommand{\cross}{\times}
\newlength{\dhatheight}
\newcommand{\diag}{\textnormal{diag}}
\newcommand\indep{\protect\mathpalette{\protect\independenT}{\perp}}
\def\independenT#1#2{\mathrel{\rlap{$#1#2$}\mkern2mu{#1#2}}}
\title{
Augmented balancing weights as linear regression\thanks{We would like to thank David Arbour, Eli Ben-Michael, Andreas Buja, Alex D'Amour, Skip Hirshberg, Guido Imbens, Apoorva Lal, Mark van der Laan, Whitney Newey, Rahul Singh, Jann Spiess, and Qingyuan Zhao for useful discussion and comments. A.F. and D.B-S. were supported in part by the Institute of Education Sciences, U.S. Department of Education, through Grant R305D200010. The opinions expressed are those of the authors and do not represent the views of the Institute or the U.S. Department of Education. O.D. was supported by NIH grant  579679 and by the FWO grant 1222522N. E.L.O. was supported by ONR grant N000142112820 and by the Simons Institute for Theoretical Computer Science.}}
\author{David Bruns-Smith \\ UC Berkeley \and Oliver Dukes \\ Ghent Univ. \and 
Avi Feller \\ UC Berkeley \and 
Elizabeth L. Ogburn \\ Johns Hopkins Univ.}
\date{\today}
\begin{document}

\maketitle

\begin{abstract}
We provide a novel characterization of augmented balancing weights, also known as automatic debiased machine learning (AutoDML). These popular \emph{doubly robust} or \emph{de-biased machine learning} estimators combine outcome modeling with balancing weights --- weights that achieve covariate balance directly in lieu of estimating and inverting the propensity score. When the outcome and weighting models are both linear in some (possibly infinite) basis, we show that the augmented estimator is equivalent to a single linear model with coefficients that combine the coefficients from the original outcome model and coefficients from an unpenalized ordinary least squares (OLS) fit on the same data. We see that, under certain choices of regularization parameters, the augmented estimator often collapses to the OLS estimator alone; this occurs for example in a re-analysis of the \citet{lalonde1986evaluating} dataset. We then extend these results to specific choices of outcome and weighting models. We first show that the augmented estimator that uses (kernel) ridge regression for both outcome and weighting models is equivalent to a single, undersmoothed (kernel) ridge regression. This holds numerically in finite samples and lays the groundwork for a novel analysis of undersmoothing and asymptotic rates of convergence. When the weighting model is instead lasso-penalized regression, we give closed-form expressions for special cases and demonstrate a ``double selection'' property. Our framework opens the black box on this increasingly popular class of estimators, bridges the gap between existing results on the semiparametric efficiency of undersmoothed and doubly robust estimators, and provides new insights into the performance of augmented balancing weights.
\end{abstract}

\clearpage
\pagenumbering{arabic}
\section{Introduction}

Combining outcome modeling and weighting, as in augmented inverse propensity score weighting (AIPW) and other doubly robust (DR) or double machine learning (DML) estimators, is a core strategy for estimating causal effects using observational data. A growing body of literature finds weights by solving a ``balancing weights'' optimization problem to estimate weights directly, rather than by first estimating the propensity score and then inverting. DR versions of these estimators are referred to by a number of terms, including \emph{augmented balancing weights} \citep{athey2018approximate, hirshberg2021augmented}, \emph{automatic debiased machine learning} \citep[AutoDML;][]{chernozhukov2022automatic}, and \emph{generalized regression estimators} \citep[GREG;][]{deville1992calibration}; see \citet{ben2021balancing} for a review. 
Moreover, this strategy has been applied to a wide range of linear estimands via the Riesz representation theorem \citep[e.g.,][]{hirshberg2021augmented, chernozhukov2022riesz}.
In this paper, we consider augmented balancing weights in which the estimators for both the outcome model and the balancing weights are based on penalized linear regressions in some possibly infinite basis; in addition to all high-dimensional linear models, this broad class includes popular nonparametric models such as kernel regression and certain forms of random forests and neural networks. 

We first show that, somewhat surprisingly, augmenting any regularized linear outcome regression (the ``base learner") with linear balancing weights is numerically equivalent to a single linear outcome regression applied to the target covariate profile.
The resulting coefficients are an affine (and often convex) combination of the base learner model coefficients and unregularized OLS coefficients; 
the hyperparameter for the balancing weights estimator directly controls the regularization path defining the affine combination. In the extreme case where the weighting hyperparameter is set to zero --- which we show can easily occur in practice --- the entire procedure is equivalent to estimating a single, unregularized OLS regression.

We specialize these results to ridge and lasso regularization ($\ell_2$ and $\ell_\infty$ balancing, respectively) and show that augmenting an outcome regression estimator with balancing weights generally corresponds to a form of \emph{undersmoothing}. 
Most notably, we show that an augmented balancing weight estimator that use (kernel) ridge regression for both outcome and weighting models --- which we refer to as ``double ridge'' --- collapses to a single, undersmoothed (kernel) ridge regression estimator. 

We leverage these results to prove novel \emph{statistical} results for double ridge estimators and to make progress towards practical hyperparameter tuning, which remains an open problem in this area.
We first make explicit the connection between asymptotic results for double kernel ridge estimators \citep[e.g.,][]{singh2020kernel} and prior results on optimal undersmoothing for a single kernel ridge outcome model \citep[e.g.,][]{mou2023kernel}, showing that the latter is also semiparametrically efficient. 
This generalizes the argument in \citet{robins2007comment} that ``OLS is doubly robust'' to a much broader class of penalized parametric and non-parametric regression estimators. 
As a complementary analysis, we next adapt existing finite sample error analysis results for single ridge regression \citep{dobriban2018high} to derive the finite-sample-exact bias and variance of double ridge estimators. Using these expressions, we can compute oracle hyperparameters for any given data-generating process.

Finally, we illustrate our results with several numerical examples. We first explore hyperparameter tuning for double ridge regression in an extensive simulation study on 36 data-generating processes, and compare three practical methods to the optimal hyperparameter computed using our finite sample analysis. Surprisingly, asymptotic theory and our simulation results suggest equating the hyperparameters for the outcome and weighting models. We  caution against the naive application of hyperparameter tuning based solely on cross-validating the weighting model, forms of which have been suggested previously.
This approach can lead to setting the weighting hyperparameter to exactly zero --- and therefore recovering standard OLS --- even in scenarios where OLS is far from optimal. 
We emphasize this point by applying our results to the canonical \citet{lalonde1986evaluating} study, highlighting that researchers can inadvertently recover OLS in practice. 

Broadly, our results provide important insights into the nexus of causal inference and machine learning. First, these results open the black box on the growing number of methods based on augmented balancing weights and AutoDML --- methods that can sometimes be difficult to taxonomize or understand. We show that, under linearity, these estimators all share an underlying and very simple structure. Our results further highlight that estimation choices for augmented balancing weights can lead to potentially unexpected behavior. 
At a high level, as causal inference moves towards incorporating machine learning and automation, our work highlights how the traditional lines between weighting and regression-based approaches are becoming increasingly blurred.

Second, our results connect two approaches to ``automate'' semiparametric causal inference. AutoDML and related methods exploit the fact that we can estimate a Riesz representer without a closed form expression for a wide class of functionals. The estimated Riesz representer then augments a base learner by bias correcting a plug-in estimator of the functional. Older approaches, such as undersmoothing \citep{goldstein1992optimal,newey1998undersmoothing}, twicing kernels \citep{newey2004twicing}, and sieve estimation \citep{newey1994asymptotic,shen1997methods},  avoid estimation of the Riesz representer, tuning the base learner regression fit such that an additional bias correction is not required. Achieving this optimal tuning in practice has long been a hurdle for the implementation of these methods. Subject to certain conditions, both approaches can yield estimators that are asymptotically efficient. We show that if all required tuning parameters are defined in terms of an $\ell_2$-norm constraint, then these approaches can be numerically identical even in finite samples. We use these equivalences to make progress toward practical hyperparameter selection and find promising directions for new theoretical analysis.  

In Section \ref{sec:background} we introduce the problem setup, identification assumptions, and common estimation methods; we also review balancing weights and previous results linking balancing weights to outcome regression models. In Section \ref{sec:newresults} we present our new numerical results, and in Sections \ref{sec:L2} and \ref{sec:l8_aug_section} we cache out the implications for $\ell_2$ and $\ell_\infty$ balancing weights specifically. 
Building on our numerical results, Section \ref{sec:kernel_ridge_theory} explores both asymptotic and finite sample statistical results for kernel ridge regression.
Section \ref{sec:numerical_illustrations} illustrates our results with a simulation study and application to canonical data sets.
Section \ref{sec:discussion} offers some other directions for future research.
The appendix includes extensive additional technical discussion and extensions.

\subsection{Related work}
\label{sec:lit_review}

\paragraph{Balancing weights and AutoDML.} With deep roots in survey calibration methods and the \emph{generalized regression estimator} \citep[GREG; see][]{deville1992calibration,lumley2011connections,gao2022soft},
a large and growing causal inference literature uses balancing weights estimation in place of traditional inverse propensity score weighting (IPW). \citet{ben2021balancing} provide a recent review; we discuss specific examples at length in Section \ref{linearbalancingproblems} below. This approach typically balances features of the covariate distributions in the different treatment groups, with the aim of minimising the maximal design-conditional mean squared error of the treatment effect estimator. Of particular interest here are augmented balancing weights estimators that combine balancing weights with outcome regression; see, for example, \citet{athey2018approximate, hirshberg2021augmented, ben2021augmented}. 

A parallel literature in econometrics instead focuses on so-called \emph{automatic} 
estimation of the Riesz representer, of which IPW are a special case, where ``automatic" refers to the fact that we can estimate the Riesz representer without obtaining a closed form expression. Estimating the Riesz representer directly, under the assumption that it is linear in some basis, dates back at least to \cite{robins2008higher}; see also \cite{robins2007comment}. The corresponding augmented estimation framework has more recently come to be known as Automatic Debiased Machine Learning, or AutoDML; see, among others, \citet{chernozhukov2022locally}, \citet{chernozhukov2022riesznet}, \citet{chernozhukov2022automatic}, and \citet{chernozhukov2022riesz}. 
This approach has also been applied in a range of settings, including to corrupted data \citep{agarwal2021causal}, to dynamic treatment regimes \citep{chernozhukov2022dynamic}, and to address noncompliance \citep{singh2022late}. 
As we discuss in Appendix \ref{apx:cross-fit-unreg-weight}, the AutoDML approach nearly always employs cross-fitting and is typically motivated by asymptotic properties rather than achieving minimax design-conditional mean squared error.

\paragraph{Numerical equivalences for balancing weights.} Many seminal papers highlight connections between weighting approaches, such as balancing weights and IPW, and outcome modeling; see \citet{bruns2022outcome} for discussion. Most relevant are a series of papers that show numerical equivalences between linear regression and (exact) balancing weights, especially \citet{robins2007comment,kline2011oaxaca,chattopadhyay2021implied}, and between kernel ridge regression and forms of kernel weighting \citep{kallus2020generalized, hirshberg2019minimax}. We discuss these equivalences at length in \Cref{apx:equiv-conditions}. Finally, as we discuss in \Cref{sec:nonlinear_weights}, there are close connections between balancing weights and Empirical Likelihood \citep{hellerstein1999el, newey2004el}.

\section{Problem setup and background}\label{sec:background}

\subsection{Setup and motivation}

The core results in our paper are numeric equivalences for existing estimation procedures, and as such these results hold absent any causal assumptions or statistical model. Nonetheless, a primary motivation for this work is the task of estimating unobserved counterfactual means in causal inference, as well as estimating the broad class of linear functionals described in \cite{chernozhukov2018learning}.
We briefly review the corresponding setup, emphasizing that this is purely for interpretation.

\subsubsection{Example: Estimating counterfactual means}
Let $X, Y, Z$ be random variables defined on $\mathcal{X} , \mathbb{R}, \mathcal{Z}$ with joint probability distribution $p$. To begin, consider the example of a binary treatment, $\mathcal{Z} = \{0,1\}$ and covariates $X$. Define potential or counterfactual outcomes $Y(1)$ and $Y(0)$ under assignment to treatment and control, respectively. Under SUTVA \citep{rubin1980randomization}, we observe outcomes $Y = Z Y(1) + (1 - Z)Y(0)$. To estimate the average treatment effect, $\mathbb{E}[ Y(1) - Y(0) ]$, we first estimate the means of the partially observed potential outcomes. We initially focus on estimating $\mathbb{E}[ Y(1) ]$; a symmetric argument holds for $\mathbb{E}[Y(0)]$.

Let $m(x,z) \coloneqq \mathbb{E}[Y \mid X = x, Z = z]$ be the \emph{outcome model}, $e(x) \coloneqq \mathbb{P}[Z = 1 \mid X = x]$ be the \emph{propensity score}, and $\alpha(x,z) = z / e(x)$ be the \emph{inverse propensity score weights} (IPW). Under the additional assumptions of \emph{conditional ignorability}, $Y(1) \indep Z \mid X$, and \emph{overlap}, $\mathbb{E}[\alpha(X,Z)^2] < \infty$, we have that $\mathbb{E}[ Y(1) ]$ is identified by $ \mathbb{E}[ m(X,1) ]$, a linear functional of the observed data distribution. 

There are three broad strategies for estimating $\mathbb{E}[ Y(1) ]$. First, the identifying functional above suggests estimating the outcome model, $m(x, 1)$ among those units with $Z=1$, and plugging this into the \emph{regression functional}, $\mathbb{E}[ m(X,1) ]$. Second, the equality $\mathbb{E}[ m(X,1) ] = \mathbb{E}[Z/e(X) Y] = \mathbb{E}[\alpha(X,Z) Y]$ suggests estimating the inverse propensity score weights, $\alpha(x,z) = z/e(x)$, and plugging these into the \emph{weighting functional}. Finally, we can combine these two via the \emph{doubly robust functional} \citep{robins1994estimation}:
$$
\mathbb{E}[m(X,1) + \alpha(X,Z)(Y - m(X,1)].
$$
This functional has the attractive property of being equal to $\mathbb{E}[ m(X,1) ]$ even if either one of $\alpha$ or $m$ is replaced with an arbitrary function of $X$ and $Z$, hence the term ``doubly robust.'' Doubly robust estimators have been studied extensively in semiparametric theory; note that $m(X,1) + \alpha(X,Z) (Y - m(X,Z))-\psi(m)$ coincides with the efficient influence function for $\psi(m)$ under a nonparametric model (see \citealp{kennedy2022semiparametric} for a review of the relevant theory).See \citet{chernozhukov2018double, kennedy2022semiparametric} for recent overviews of the active literature in causal inference and machine learning focused on estimating versions of this functional.

\subsubsection{General class of functionals via the Riesz representer}
Our results apply well beyond the example above. In particular, they apply to any functional of the form 
\begin{align}
    \psi(m) = \mathbb{E}[h(X_i, Z_i, m)],\label{eq:general_linear_functional}
\end{align}
where $\mathcal{Z}$ is an arbitrary set; $Z$ a random variable with support $\mathcal{Z}$; and $h$ is a real-valued, mean-squared continuous linear functional of $m$ \citep{chernozhukov2018learning,hirshberg2021augmented, chernozhukov2022automatic}. 
 Following \cite{chernozhukov2022automatic, chernozhukov2022riesz}, we can generalize the weighting functional to this general class of estimands via the \emph{Riesz representer}, which is a function $\alpha(X,Z) \in L_2(p)$ such that, for all square-integrable functions $f \in L_2(p)$:
\begin{align}
\mathbb{E}[h(X, Z, f)] = \mathbb{E} [ \alpha(X,Z) f(X,Z) ].\label{eq:riesz-estimand}  
\end{align} 

As in the counterfactual mean example, we can identify the more general target functional in \eqref{eq:riesz-estimand} via the outcome regression functional in \eqref{eq:general_linear_functional}, via the Riesz representer functional in \eqref{eq:riesz-estimand} with $f=m$, or via the doubly robust functional 
\begin{align}
    \mathbb{E}[h(X, Z, m) + \alpha(X,Z) (Y - m(X,Z))]. \label{eq:doubly_robust_functional}
\end{align} 
Estimators of this DR functional are \emph{augmented} in the sense that they augment the ``plug-in,'' ``outcome regression,'' or ``base learner'' estimator of $\mathbb{E}[h(X, Z, m)]$ with appropriately weighted residuals; or, equivalently, that augment the weighting estimator with an appropriate outcome regression. This is the class of estimators to which our results apply. Doubly robust estimators have been studied extensively in semiparametric theory. In particular, $h(X, Z, m) + \alpha(X,Z) (Y - m(X,Z))-\psi(m)$ coincides with the efficient influence function for $\psi(m)$ under a nonparametric model (see \citealp{kennedy2022semiparametric} for a review of the relevant theory).
In future work we will explore whether we can extend our results to a different class of functionals that admit DR functional forms, first introduced by \cite{robins2008higher}, and to the superset of such functionals characterized by  \cite{rotnitzky2021characterization}.


\subsection{Balancing weights: Background and general form} 

The core idea behind balancing weights is to estimate the Riesz representer directly ---  rather than via an analytic functional form (e.g., by estimating the propensity score and inverting it). 
As a result, balancing weights do not require a known analytic form for the Riesz representer \citep{chernozhukov2022riesz}, are often much more stable \citep{zubizarreta2015stable}, and offer improved control of finite sample covariate imbalance \citep{zhao2019covariate}. We briefly describe two primary motivations for this approach.

First, a central property of the Riesz representer is that the corresponding weights, $w(X, Z) = \alpha(X,Z)$, are the unique weights that satisfy the \emph{population balance property} property in \Cref{eq:riesz-estimand} for all square-integrable functions $f \in L_2(p)$.
For our target estimand $\psi(m)$ we only need to satisfy the condition in \Cref{eq:riesz-estimand} for the special case of $f=m$. If we are willing to assume that $m$ lies in a model class $\mathcal{F} \subset L_2(p)$, then it suffices to balance functions in that class. This is achieved by minimizing the imbalance over $\mathcal{F}$:
\begin{align}
 \text{Imbalance}_\mathcal{F}(w) \coloneqq \sup_{f \in \mathcal{F}} \Big\{ \mathbb{E}[ w(X,Z) f(X,Z)] - \mathbb{E}[h(X,Z,f)] \Big\}.
 \label{eq:imbalanceF} 
\end{align}
As we discuss next, balancing weights minimize a (penalized) sample analog of Equation \eqref{eq:imbalanceF}.

Alternatively, \citet{chernozhukov2022automatic} consider finding weights $f$ that minimize the mean-squared error for $\alpha(X,Z)$:
\begin{align}
    \min_{f \in \mathcal{F}} \left\{ \mathbb{E}\left[\left(f(X,Z) - \alpha(X,Z)\right)^2\right] \right\} . \label{eq:generaldual}
\end{align}
Automatic estimation of the Riesz representer, also known as \emph{Riesz regression} \citep{chernozhukov2024riesz_regression}, minimizes a sample analog of Equation \eqref{eq:generaldual}. 
When $\mathcal{F}$ is convex, then up to choice of hyperparameters (see \eqref{eq:autoform} below), the solutions to Equations \eqref{eq:imbalanceF} and \eqref{eq:generaldual} are equivalent.

\subsection{Linear balancing weights} \label{linearbalancingproblems}

In this paper, we consider the special case in which the outcome models are linear in some basis expansion of $X$ and $Z$. 
This is an extremely broad class that encompasses linear and polynomial models of  arbitrary functions of $X$ and $Z$ and with dimension possibly larger than the sample size, as well as non-parametric models such as reproducing kernel Hilbert spaces \citep[RKHSs;][]{gretton2012kernel}, the Highly-Adaptive Lasso \citep{benkeser2016hal}, the neural tangent kernel space of infinite-width neural networks \citep{jacot2018neural}, and ``honest'' random forests \citep{agarwal2022randomforests}.
However, this class excludes models for $m$ that are fundamentally non-linear in their parameters, like general neural networks or generalized linear models with a non-linear link function. We extend our results to arbitrary nonlinear balancing weights in Appendix \ref{sec:nonlinear_weights}.

Under linearity, the imbalance over all $f \in \mathcal{F}$ has a simple closed form. 
Because our results concern numeric equivalences, we will focus on the finite sample version of the linear balancing weights problem. 
Let $\mathcal{F} = \{ f(x,z) = \theta^\top\phi(x,z) : \Vert \theta \Vert \leq 1 \}$ where $\Vert \cdot \Vert$ can be any norm on $\mathbb{R}^d$. The general setup constrains $\Vert \theta \Vert \leq r$; we set $r = 1$ without loss of generality, which simplifies exposition below. Let $\Vert \cdot \Vert_*$ be the \emph{dual norm} of $\Vert \cdot \Vert$; that is, $ \Vert v \Vert_* \coloneqq \sup_{\Vert u \Vert \leq 1} u^\top v.$ Many common vector norms have familiar, closed-form, dual norms, e.g., the dual norm of the $\ell_2$-norm is the $\ell_2$-norm; and the dual norm of the $\ell_1$-norm is the $\ell_\infty$-norm. Let $X_p, Y_p, Z_p$ be $n$ i.i.d. samples from the distribution $p$ of the observed data.  Define the feature map $\phi : \mathcal{X} \times \mathcal{Z} \rightarrow \mathbb{R}^d$ and let $\phi_j  : \mathcal{X} \times \mathcal{Z} \rightarrow \mathbb{R}$ denote the mapping for the $j$th feature. Define $\Phi_p \coloneqq \phi(X_p, Z_p)$ and  let $\Phi_q \coloneqq h(X_p, Z_p, \phi)$ denote the \emph{target features}.
We will write $\hat{\mathbb{E}}$ for sample averages; define $\overbar{\Phi}_p \coloneqq \hat{\mathbb{E}}[ \Phi_p ]$ and $\overbar{\Phi}_q \coloneqq \hat{\mathbb{E}}[ \Phi_q ]$. 
For exposition, we assume that $d < n$ and that $\Phi_p$ has rank $d$. We emphasize that this is not necessary for our results --- one can replace $\mathbb{R}^d$ with an infinite-dimensional Hilbert space $\mathcal{H}$ and relax the rank restriction. See \Cref{sec:high-dimensions} for a formal presentation of the high-dimensional ($d>n$) setting.

In what follows we write $w$ for the $1 \cross n$ vector $w(\Phi_p)$, to highlight the fact that we will estimate $w$ directly rather than as an explicit function of $X$ or $\Phi_p$. Using the derivation above, we can directly calculate the finite sample imbalance as:
$$ \widehat{\text{Imbalance}_{\mathcal{F}}}(w) = \Vert \tfrac{1}{n} w \Phi_p - \bar{\Phi}_q \Vert_*.$$

Now we can write the penalized sample analog of balancing weights optimization problem in \eqref{eq:imbalanceF} equivalently as either:
\begin{align*}
\text{Penalized form:} \qquad & 
    \min_{w\in\mathbb{R}^n} \Big\{ \Vert  \tfrac{1}{n} w \Phi_p - \bar{\Phi}_q \Vert^2_* + \delta_1 \Vert w\Vert_2^2 \Big\}\\[1em]
\text{Constrained form:} \qquad &  \min_{w\in\mathbb{R}^n} \Vert w \Vert_2^2 \\
    &   \text{such that } \Vert \tfrac{1}{n} w \Phi_p - \bar{\Phi}_q \Vert_* \leq \delta_2 .
\end{align*}
Furthermore, we can write the equivalent problem in 
\eqref{eq:generaldual} as:
\begin{align}
    \text{Riesz regression form:} \qquad &     \min_{\theta \in \mathbb{R}^d} \Big\{ \tfrac{1}{n} \theta^\top( \Phi_p^\top\Phi_p)\theta - \tfrac{1}{n}2\theta^\top\bar{\Phi}_q + \delta_3 \Vert \theta \Vert  \Big\},\label{eq:autoform}
\end{align}

where we use the terminology ``Riesz regression'' from \citet{chernozhukov2024riesz_regression}.
For any parameter $\delta_2 > 0$ and corresponding constrained problem solution $\hat{w}$, there exists a parameter $\delta_3 > 0$ such that $\hat{w} = \delta_3 \Phi_p \hat{\theta}$, where $\hat\theta$ is the solution to the Riesz regression form. 
As a result, for any norm $\Vert \cdot \Vert$, the penalized and constrained forms will always produce weights that are linear in $\Phi_p$ \citep[see][Section 9]{ben2021balancing}. Therefore, since the problems are equivalent, we typically use a generic $\delta$ to denote the regularization parameter, and will specify the particular form only if necessary. In Appendix \ref{apx:balancing-examples} we illustrate several concrete examples for this problem and in Appendix \ref{sec:nonlinear_weights} we consider alternative dispersion parameters and discuss popular forms of balancing that constrain the weights to be non-negative.

\begin{remark}[Intercept]
An important constraint in practice is to normalize the weights, $\frac{1}{n} \sum_{i=1}^n w_i = 1$. This corresponds to replacing $\Phi_p$ and $\Phi_q$ with their centered forms, $\Phi_p - \bar{\Phi}_p$ and $\Phi_q - \bar{\Phi}_p$, in the dual form of the balancing weights problem. This is also equivalent to adding a column of $1$s to $\Phi_p$. Appropriately accounting for this normalization, however, unnecessarily complicates the notation. Therefore, without loss of generality, we will assume that the features are centered throughout, that is, $\bar{\Phi}_p = 0$. 
\end{remark}

\begin{remark}[Equivalence with kernel ridge regression] \label{remark:ridge_equiv}
For the special case of $\ell_2$ balancing (as in Appendix \ref{apx:balancing-examples}) the balancing weights problem is numerically equivalent to directly estimating the conditional expectation $\mathbb{E}[Y_p | \Phi_p]$ via (kernel) ridge regression and applying the estimated coefficients to $\overbar{\Phi}_q$. Moreover, the solution to the balancing weights problem has a closed form that is always linear in $\overbar{\Phi}_q$; we provide further details in \Cref{apx:equiv-conditions}. 
For exact balance with $\delta = 0$, the balancing weights problem is equivalent to fitting unregularized OLS; see, for example, 
 \citet{robins2007comment}, \citet{kline2011oaxaca}, and \citet{chattopadhyay2020balancing}.
\end{remark}

\section{Novel equivalence results for (augmented) balancing weights and outcome regression models}
\label{sec:newresults}

Our first main result demonstrates that \emph{any} linear balancing weights estimator is equivalent to applying OLS to the re-weighted features. Our second result provides a novel analysis of augmented balancing weights, demonstrating that augmenting any linear balancing weights estimator with a linear outcome regression estimator is equivalent to a plug-in estimator of a new linear model with coefficients that are a weighted combination of estimated OLS coefficients and the coefficients of the original linear outcome model.

\subsection{Weighting alone}
Our first result is that estimating $\psi(m)$ with any linear balancing weights is equivalent to fitting OLS for the regression of $Y_p$ on $\Phi_p$ and then applying those coefficients to the re-weighted target feature profile.
The key idea for this result begins with the simple unregularized regression prediction for $\psi(m)$, $\overbar{\Phi}_q \hat{\beta}_{\text{ols}}$.

\begin{proposition} \label{prop:bal_wt_OLS}
Let $\hat{w}^\delta \coloneqq  \hat{\theta}^\delta \Phi_p^\top$,  $\hat{\theta}^\delta \in \mathbb{R}^d$, be any linear balancing weights, with corresponding weighted features $\hat{\Phi}_q^\delta \coloneqq \tfrac{1}{n} \hat{w}^\delta \Phi_p$. 
Let $\hat{\beta}_{\text{ols}} = ( \Phi_p^\top\Phi_p)^\dag \Phi_p^\top Y_p$ be the OLS coefficients of the regression of $Y_p$ on $\Phi_p$. Then: 
\begin{align*}
\hat{\mathbb{E}}\left[\hat{w}^\delta \circ Y_p\right] &=\hat{\Phi}_q^\delta \hat{\beta}_{\text{ols}}  \\[0.5em]
&=  \left(\bar{\Phi}_p 
+ \widehat{\Delta}^\delta \right) \hat{\beta}_{\text{ols}},
\end{align*}
where $\widehat{\Delta}^\delta =  \hat{\Phi}_q^\delta - \bar{\Phi}_p$ is the mean feature shift implied by the balancing weights and where superscript $\delta$ indicates possible dependence on a hyperparameter. We have assumed without loss of generality that $\bar \Phi_p=0$, but we sometimes use $\hat \Delta$ notation to demonstrate the role of mean feature shift in various expressions. We use the symbol $\circ$ to denote element-wise multiplication.
\end{proposition}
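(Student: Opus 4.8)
The plan is to reduce the statement to a short linear-algebra identity, with essentially all of the content residing in the first equality; the second equality $\hat\Phi_q^\delta \hat\beta_{\text{ols}} = (\bar\Phi_p + \hat\Delta^\delta)\hat\beta_{\text{ols}}$ is immediate once we substitute the definition $\hat\Delta^\delta = \hat\Phi_q^\delta - \bar\Phi_p$, so I would dispose of it in a single line. The crux is that the balancing weights are \emph{linear} in the sample features, i.e. $\hat w^\delta = \hat\theta^\delta \Phi_p^\top$; this is precisely the property that makes a post-hoc OLS fit leave the weighted estimate unchanged.

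First I would rewrite the left-hand side using the definition of the sample average together with linearity of the weights, obtaining
\[
\hat{\mathbb{E}}\!\left[\hat w^\delta \circ Y_p\right] = \tfrac1n \hat w^\delta Y_p = \tfrac1n \hat\theta^\delta \Phi_p^\top Y_p .
\]
Next I would expand the right-hand side by substituting $\hat\Phi_q^\delta = \tfrac1n \hat w^\delta \Phi_p$ and $\hat\beta_{\text{ols}} = (\Phi_p^\top \Phi_p)^\dagger \Phi_p^\top Y_p$, again using $\hat w^\delta = \hat\theta^\delta \Phi_p^\top$, to get
\[
\hat\Phi_q^\delta \hat\beta_{\text{ols}} = \tfrac1n \hat\theta^\delta \, \Phi_p^\top \Phi_p (\Phi_p^\top \Phi_p)^\dagger \Phi_p^\top Y_p .
\]
Under the exposition assumption that $\Phi_p$ has full column rank $d$, the Gram matrix $\Phi_p^\top \Phi_p$ is invertible, $(\Phi_p^\top\Phi_p)^\dagger = (\Phi_p^\top\Phi_p)^{-1}$, and the middle factor collapses via $\Phi_p^\top \Phi_p (\Phi_p^\top\Phi_p)^{-1} = I_d$, leaving exactly $\tfrac1n \hat\theta^\delta \Phi_p^\top Y_p$, which matches the left-hand side.

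To obtain the result in the general (rank-deficient or infinite-dimensional) setting promised by the surrounding text, I would instead argue through the orthogonal projector $P \coloneqq \Phi_p (\Phi_p^\top \Phi_p)^\dagger \Phi_p^\top$ onto the column space of $\Phi_p$. The right-hand side is then $\tfrac1n \hat w^\delta P Y_p$, so it suffices to show $\hat w^\delta P = \hat w^\delta$. Because $(\hat w^\delta)^\top = \Phi_p (\hat\theta^\delta)^\top$ lies in $\mathrm{col}(\Phi_p)$ and $P$ is the symmetric projector onto that subspace, we have $P (\hat w^\delta)^\top = (\hat w^\delta)^\top$, and transposing gives the claim. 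The main (and essentially only) delicate point is this verification that the pseudoinverse acts as the identity on the relevant subspace, i.e. that $\Phi_p^\top \Phi_p (\Phi_p^\top\Phi_p)^\dagger \Phi_p^\top = \Phi_p^\top$; once this Moore--Penrose identity is in hand the remainder is immediate. It is exactly here that linearity of the weights is indispensable, since an arbitrary $\hat w^\delta \notin \mathrm{col}(\Phi_p)$ would not be fixed by $P$ and the equivalence would fail.
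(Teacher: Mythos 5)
Your proof is correct and is in substance the same as the paper's: the paper's appendix proof is the one-line chain $Y_p^\top \Phi_p \hat{\theta}^\delta = Y_p^\top \Phi_p \Phi_p^\dag \Phi_p \hat{\theta}^\delta = Y_p^\top \Phi_p (\Phi_p^\top\Phi_p)^\dag \Phi_p^\top \Phi_p \hat{\theta}^\delta$, which rests on exactly the two Moore--Penrose identities you invoke, $A A^\dag A = A$ and $A^\dag = (A^\top A)^\dag A^\top$. Your orthogonal-projector framing --- that $(\hat{w}^\delta)^\top \in \mathrm{col}(\Phi_p)$ is fixed by $P = \Phi_p(\Phi_p^\top\Phi_p)^\dag\Phi_p^\top$ --- is a geometric restatement of the paper's insertion of $\Phi_p\Phi_p^\dag$, so the two arguments coincide, with your full-rank warm-up matching the main-text exposition.
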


Note that here we have written the OLS coefficients using the pseudo-inverse $\dag$. For clarity in the main text, we focus on the full rank setting, where $(\Phi_p^\top \Phi_p)^\dag = (\Phi_p^\top \Phi_p)^{-1}$; we provide a proof for the general setting in \Cref{sec:regpath-d-bigger-n}.
In \Cref{sec:nonlinear_weights}, we extend Proposition \ref{prop:bal_wt_OLS} to non-linear balancing weights, including those wiht a non-negativity constraint. 

We can interpret this result via a contrast with standard regularization. Regularized regression models navigate a bias-variance trade-off by regularizing estimated coefficients $\hat\beta_{\text{reg}}$ relative to $\hat{\beta}_{\text{ols}}$, leading to $\overbar{\Phi}_q \hat{\beta}_{\text{reg}}$. The balancing weights approach instead keeps $\hat{\beta}_{\text{ols}}$ fixed and regularizes the target feature distribution by penalizing the implied feature shift,  $\widehat{\Delta}^\delta = \hat{\Phi}^\delta_q - \overbar{\Phi}_p$.

We emphasize that this is a new and quite general result. As we discuss in \Cref{apx:equiv-conditions}, it has been shown previously that for exact balancing weights, $ \hat{\mathbb{E}}[\hat{w}_\text{exact} Y_p ] = \overbar{\Phi}_q \hat{\beta}_\text{ols}$. However, \Cref{prop:bal_wt_OLS} holds for any weights of the form $w = \theta \Phi_p^\top$ with arbitrary $\theta \in \mathbb{R}^d$.
In Sections \ref{sec:L2} and \ref{sec:l8_aug_section}, we consider the particular form of $\hat{\Phi}_q^\delta$ for $\ell_2$ and $\ell_\infty$ balancing, respectively.

\subsection{Augmented balancing weights}

We can immediately extend this to augmented balancing weights, which regularize \emph{both} the coefficients and the feature shift.
Let $\hat\beta^{\lambda}_\text{reg}$ be the coefficients of any regularized linear model for the relationship between $Y_p$ and $\Phi_p$, where the superscript $\lambda$ indicates dependence on a hyperparameter (e.g., estimated by regularized least squares). We consider augmenting $\hat{\mathbb{E}}\left[\hat{w}^\delta \circ Y_p\right]$ with $\hat\beta^{\lambda}_\text{reg}$ using
the doubly robust functional representation in Equation \eqref{eq:doubly_robust_functional}. The augmented estimator is:
\begin{align}
   \hat{\mathbb{E}}[\Phi_q \hat{\beta}_\text{reg}^\lambda] + 
        \hat{\mathbb{E}}[ \hat{w}^\delta \circ (Y_p - \Phi_p \hat{\beta}_\text{reg}^\lambda)] \label{eq:aug_bal_est_v1} 
    =    \hat{\mathbb{E}}[ \hat{w}^\delta \circ Y_p] + \hat{\mathbb{E}}\left[\left(\Phi_q - \hat{\Phi}_q^\delta\right)\hat{\beta}_\text{reg}^\lambda\right].
\end{align}
Many recently proposed estimators have this form; see e.g., \citet{athey2018approximate, ben2021balancing}. If the weighting model and outcome model have different bases, our result applies to a shared basis by either combining the dictionaries as in \cite{chernozhukov2022automatic} or by applying an appropriate projection as in \cite{hirshberg2021augmented}. 

We apply Proposition \ref{prop:bal_wt_OLS} to the first term of the right-hand side of \eqref{eq:aug_bal_est_v1} to yield the following result. As this result is purely numerical, it applies to arbitrary vectors $\hat\beta^{\lambda}_\text{reg} \in \mathbb{R}^d$, but substantively we think of $\hat\beta^{\lambda}_\text{reg}$ as the estimated coefficients from an outcome model.
\begin{proposition}\label{generalregularizationpath}
    For any $\hat{\beta}_\text{reg}^\lambda \in \mathbb{R}^d$, and any linear balancing weights estimator with estimated coefficients $ \hat{\theta}^\delta \in \mathbb{R}^d$, and with $\hat{w}^\delta \coloneqq \hat{\theta}^\delta \Phi_p^\top$ and $\hat{\Phi}_q^\delta \coloneqq \tfrac{1}{n} \hat{w}^\delta \Phi_p$, the resulting augmented estimator
    \begin{align*}
       & \hat{\mathbb{E}}[ \hat{w}^\delta \circ Y_p] + \hat{\mathbb{E}}\left[\left(\Phi_q - \hat{\Phi}_q^\delta\right)\hat{\beta}_\text{reg}^\lambda\right] \\
       &= \hat{\mathbb{E}}\left[ \hat{\Phi}_q^\delta \hat{\beta}_{\text{ols}} + \left(\Phi_q - \hat{\Phi}_q^\delta\right)\hat{\beta}_\text{reg}^\lambda\right] \\
        &= \hat{\mathbb{E}}[ \Phi_q \hat{\beta}_\text{aug}],
    \end{align*}
    where the $j$th element of $\hat{\beta}_\text{aug}$ is:
    \begin{align*}
        \hat{\beta}_{\text{aug} ,j} &\coloneqq \left(1-a_j^\delta\right) \hat{\beta}_{\text{reg},j}^\lambda + a_j^\delta \hat{\beta}_{\text{ols},j}\\[0.5em]
        a_j^\delta &\coloneqq \frac{\widehat{\Delta}_{j}^\delta}{ \Delta_j }, 
    \end{align*}
    where $\Delta_j = \overbar{\Phi}_{q,j} - \overbar{\Phi}_{p,j}$ is the observed mean feature shift for feature $j$; and $\widehat{\Delta}^\delta_j = \hat{\Phi}_{q,j}^\delta - \overbar{\Phi}_{p,j}$ is the feature shift for feature $j$ implied by the balancing weights model. 
    Finally, $a^\delta \in [0,1]^d$ when the covariance matrix is diagonal, $ (\Phi_p^\top\Phi_p) = \text{diag}(\sigma^2_1, \sigma^2_2, ..., \sigma^2_d)$,
with $\sigma^2_j > 0$.
\end{proposition}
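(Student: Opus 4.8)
The plan is to treat the first two displayed equalities as essentially immediate and to concentrate the real work on the coordinatewise identification of $a_j^\delta$ and on the final $[0,1]^d$ claim. For the first equality I would invoke Proposition \ref{prop:bal_wt_OLS}, which gives $\hat{\mathbb{E}}[\hat{w}^\delta \circ Y_p] = \hat{\Phi}_q^\delta \hat{\beta}_{\text{ols}}$; since $\hat{\Phi}_q^\delta \hat{\beta}_{\text{ols}}$ is a fixed scalar it equals its own sample average, so by linearity of $\hat{\mathbb{E}}$ I can fold it inside the expectation to recover the second line. For the second equality I would push the sample average through, using $\hat{\mathbb{E}}[\Phi_q] = \overbar{\Phi}_q$ and the fact that $\hat{\Phi}_q^\delta$ is a constant row vector, thereby reducing the estimator to $\hat{\Phi}_q^\delta \hat{\beta}_{\text{ols}} + (\overbar{\Phi}_q - \hat{\Phi}_q^\delta)\hat{\beta}_{\text{reg}}^\lambda$.

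The key algebraic step is then to rewrite this as $\overbar{\Phi}_q \hat{\beta}_{\text{aug}} = \sum_j \overbar{\Phi}_{q,j}\hat{\beta}_{\text{aug},j}$, term by term. Using the normalization $\overbar{\Phi}_p = 0$, so that $\Delta_j = \overbar{\Phi}_{q,j}$ and $\widehat{\Delta}_j^\delta = \hat{\Phi}_{q,j}^\delta$, the defining ratio $a_j^\delta = \widehat{\Delta}_j^\delta / \Delta_j$ yields the two identities $a_j^\delta \overbar{\Phi}_{q,j} = \hat{\Phi}_{q,j}^\delta$ and $(1-a_j^\delta)\overbar{\Phi}_{q,j} = \overbar{\Phi}_{q,j} - \hat{\Phi}_{q,j}^\delta$. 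Substituting $\hat{\beta}_{\text{aug},j} = (1-a_j^\delta)\hat{\beta}_{\text{reg},j}^\lambda + a_j^\delta \hat{\beta}_{\text{ols},j}$ and matching coefficients of $\hat{\beta}_{\text{reg},j}^\lambda$ and $\hat{\beta}_{\text{ols},j}$ separately shows that $\sum_j \overbar{\Phi}_{q,j}\hat{\beta}_{\text{aug},j}$ reproduces exactly $\hat{\Phi}_q^\delta \hat{\beta}_{\text{ols}} + (\overbar{\Phi}_q - \hat{\Phi}_q^\delta)\hat{\beta}_{\text{reg}}^\lambda$. This part is purely mechanical and holds for arbitrary $\hat{\theta}^\delta$ and arbitrary $\hat{\beta}_{\text{reg}}^\lambda \in \mathbb{R}^d$.

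The substantive claim is the last one: $a^\delta \in [0,1]^d$ when $\Phi_p^\top \Phi_p = \text{diag}(\sigma_1^2,\ldots,\sigma_d^2)$ with $\sigma_j^2 > 0$. Here I would stop treating $\hat{\theta}^\delta$ as arbitrary and use that it solves the balancing optimization. Setting $u \coloneqq \hat{\Phi}_q^\delta = \tfrac{1}{n}w\Phi_p$ and profiling out $w$ from the constrained (or penalized) balancing problem — the minimum-$\ell_2$-norm $w$ achieving a given $u$ satisfies $\Vert w \Vert_2^2 = n^2\, u (\Phi_p^\top \Phi_p)^{-1} u^\top$, using full rank of $\Phi_p$ — recasts the weighting problem as minimizing $u(\Phi_p^\top\Phi_p)^{-1}u^\top = \sum_j u_j^2/\sigma_j^2$ over $u$ lying close (in dual norm) to $\overbar{\Phi}_q$. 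Since $\Phi_p^\top\Phi_p$ is diagonal and positive definite, the penalty pulls each coordinate toward the origin, and I would argue coordinatewise that the optimal $u_j$ lies between $0$ and $\overbar{\Phi}_{q,j}$, so that $a_j^\delta = u_j/\overbar{\Phi}_{q,j} \in [0,1]$.

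The main obstacle is making this shrinkage argument rigorous for a \emph{general} dual norm, since for an arbitrary norm the imbalance term need neither decouple across coordinates nor be monotone in $|u_j - \overbar{\Phi}_{q,j}|$. For the norms that actually arise this is clean: for $\ell_2$ balancing the profiled problem separates and gives the closed form $a_j^\delta = \sigma_j^2/(\sigma_j^2 + c)$ with $c \ge 0$, manifestly in $[0,1]$; for $\ell_\infty$ (box) imbalance each coordinate is independently the nearest point to $0$ in an interval containing $\overbar{\Phi}_{q,j}$, again giving $a_j^\delta \in [0,1]$. For a general absolute (monotone) norm I would make the truncation argument precise: if some optimal $u_j^\ast$ had the opposite sign of $\overbar{\Phi}_{q,j}$ or exceeded it in magnitude, replacing $u_j^\ast$ by its projection onto the segment between $0$ and $\overbar{\Phi}_{q,j}$ would strictly lower the quadratic penalty while weakly decreasing the per-coordinate imbalance (hence, by monotonicity, the overall imbalance), contradicting optimality and preserving feasibility. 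This confines $u_j^\ast$ to $[0,\overbar{\Phi}_{q,j}]$ and delivers the claim.
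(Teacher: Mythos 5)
Your handling of the two displayed equalities is exactly the paper's route: the paper obtains \Cref{generalregularizationpath} by applying \Cref{prop:bal_wt_OLS} to the term $\hat{\mathbb{E}}[\hat{w}^\delta \circ Y_p]$ and then matching coefficients coordinatewise under the normalization $\overbar{\Phi}_p = 0$, precisely as you do, and this part of your argument is correct and purely mechanical. Where you genuinely diverge is the final claim $a^\delta \in [0,1]^d$. The paper never gives a standalone proof of that claim; it is only verified ex post for the two norms the paper actually studies, via the closed form $a_j^\delta = \sigma_j^2/(\sigma_j^2+\delta)$ for $\ell_2$ balancing (\Cref{l2augment}) and the soft-thresholding representation for $\ell_\infty$ balancing (\Cref{orthonormalellinfty} and \Cref{linfaugment}). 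Your argument --- profiling out $w$ via the minimum-norm identity $\Vert w \Vert_2^2 = n^2\, u (\Phi_p^\top\Phi_p)^{-1} u^\top$ and then showing by a truncation/projection step that any optimizer satisfies $u_j$ between $0$ and $\overbar{\Phi}_{q,j}$ --- is a genuinely different and more unified route: it recovers both of the paper's special cases at once and extends to any absolute (monotone) dual norm. Moreover, the caveat you flag is not a defect of your proof but a real imprecision in the statement itself: for literally arbitrary $\hat{\theta}^\delta$ the claim is false (take $\hat{\theta}^\delta$ so that $\hat{\Phi}_q^\delta = -\overbar{\Phi}_q$, giving $a_j^\delta = -1$), so the optimization structure must be used; and even for genuine balancing solutions the claim can fail for non-absolute dual norms. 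For instance, with $\Phi_p^\top\Phi_p = I$ and $\Vert v \Vert_* = \Vert A v \Vert_2$ where $A$ has a large off-diagonal entry, the profiled solution overshoots in a coordinate where $\overbar{\Phi}_{q,j}$ is near zero, producing $a_j^\delta > 1$; such norms are admissible in the paper's setup, which allows any norm on $\mathbb{R}^d$. So the monotonicity restriction you impose is exactly the hypothesis under which the claim is provable, your proof under that hypothesis is sound, and your treatment of this point is in fact more careful than the paper's.
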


This is our central numerical result for augmented balancing weights: when both the outcome and weighting models are linear, the augmented estimator is equivalent to a linear model applied to the target features $\Phi_q$, with coefficients that are element-wise affine combinations of the base learner coefficients, $\hat{\beta}_\text{reg}^\lambda$, and the coefficients $\hat{\beta}_{\text{ols}}$ from an OLS regression of $Y_p$ on $\Phi_p$. (The coefficients are additionally \emph{convex} combinations of $\hat{\beta}_\text{reg}^\lambda$ and $\hat{\beta}_{\text{ols}}$ when the covariance matrix is diagonal.)
In Sections \ref{sec:L2} and \ref{sec:l8_aug_section} below, we analyze some of the properties of the augmented estimator for  $\ell_2$ and $\ell_\infty$ balancing weights problems respectively.

 The regularization parameter for the balancing weights problem, $\delta$, parameterizes the path between $\hat{\beta}_\text{reg}^\lambda$ and $\hat{\beta}_\text{ols}$. To see this, consider the cases where $\delta \to 0$ and $\delta \to \infty$. As $\delta \to 0$ the balancing weights problem prioritizes minimizing balance over controlling variance, and $\widehat{\Delta}^\delta_j \to \Delta_j$ for all $j$. (Recall that we assume $\overbar{\Phi}_{p,j} = 0$ for all $j$. Thus, $\Delta_j = \overbar{\Phi}_{q,j}$ and $\widehat{\Delta}_j^\delta = \hat{\Phi}_{q,j}^\delta$. So $\widehat{\Delta}_{j}^\delta \rightarrow \Delta_j$ is equivalent to $\hat{\Phi}_{q}^\delta \rightarrow \overbar{\Phi}_{q,j}$.)
In this case, $a_j^\delta = \widehat{\Delta}_j^\delta / \Delta_j \to 1$, and the weights fully ``de-bias'' the original outcome model by recovering unregularized regression, $\hat{\beta}_\text{aug} \to \hat{\beta}_\text{ols}$. In Section \ref{sec:numerical_illustration}, we will see that when chosen by cross-validation, $\delta$ sometimes equals exactly $0$ in applied problems; thus even when $\hat{\beta}_\text{reg}^\lambda$ is a sophisticated regularized estimator, the final augmented point estimate can nonetheless be numerically equivalent to the simple OLS plug-in estimate.  
Conversely, as $\delta \to \infty$, the balancing weights problem prioritizes controlling variance, leading to uniform weights and $\widehat{\Delta}_j \to 0$. In this case, $a_j^\delta = \widehat{\Delta}_j^\delta / \Delta_j \rightarrow 0$, the weighting model does very little, and $\hat{\beta}_\text{aug} \rightarrow \hat{\beta}_\text{reg}^\lambda$.

It is also instructive to consider two other extremes: unregularized outcome model and unregularized balancing weights. First, consider the special case of fitting an unregularized linear regression outcome model, i.e., $\hat{\beta}_\text{reg}^\lambda = \hat{\beta}_{\text{ols}}$. Then Proposition \ref{generalregularizationpath} reproduces the result, originally due to  \citet{robins2007comment}, that ``OLS is doubly robust" \citep[see also][]{kline2011oaxaca}.
This is because $\hat{\beta}_{\text{aug}}  = \hat{\beta}_{\text{ols}}$ for arbitrary linear weights $\hat{\theta}^\delta \in \mathbb{R}^d$. Thus, OLS augmented by \textit{any} choice of linear balancing weights collapses to OLS alone. Equivalently, we can view OLS alone as an augmented estimator that combines an OLS base learner with linear balancing weights. 

A similar result holds for unregularized balancing weights, i.e., exact balancing weights. Let $\hat{w}_\text{exact}$ be the solution to a balancing weights problem in \Cref{linearbalancingproblems} with hyperparameter $\delta = 0$, and let $\hat{\beta}_\text{reg}^\lambda \in \mathbb{R}^d$ be arbitrary coefficients. Then from the balance condition, $\hat{\Phi}_q = \overbar{\Phi}_q$, $a_j^\delta = 1$ for all $j$, and we have that $\hat{\beta}_{\text{aug}}  = \hat{\beta}_{\text{ols}}$.
Thus, the augmented exact balancing weights estimator also collapses to the OLS regression estimator. Equivalently, 
the augmented exact balancing weights estimator collapses to the \textit{unaugmented} exact balancing weights estimator. \citet{zhao2017entropy} use a very similar result to argue that entropy balancing, a form of exact balancing weights, is doubly robust.

Finally, before we turn to new results for $\ell_2$ and $\ell_\infty$ balancing, we briefly comment on several points that are discussed in more detail in the Appendix.

\begin{remark}[Sample splitting]
Sample splitting is a common technique in the AutoDML literature especially, in which we only apply the outcome and weighting models to data points not used for estimation; see, for example, \citet{newey2018cross, chernozhukov2022automatic}. Since \Cref{generalregularizationpath} holds for arbitrary vectors $\hat{\beta}^\lambda_\text{reg}$ and $\hat{\theta}^\delta$, the results still hold under cross-fitting. See \Cref{sec:sample-split} for an extended discussion.
\end{remark}

\begin{remark}[Infinite dimensional setting]
While we emphasize the linear, low-dimensional setting where $\Phi_p^\top\Phi_p$ is invertible, \Cref{generalregularizationpath} holds far more broadly. The result remains true when the function class $\mathcal{F}$ is a subset of \emph{any} Hilbert space. This includes the high dimensional setting where $d > n$ and the infinite dimensional setting. See \Cref{sec:high-dimensions} for a formal statement.
\end{remark}

\begin{remark}[Nonlinear balancing weights] A rich tradition in survey statistics \citep[e.g.,][]{deville1992calibration}, machine learning \citep[e.g.,][]{menon2016linking}, and causal inference \citep[e.g.,][]{vermeulen2015bias,zhao2019covariate,tan2020regularized} focuses on \emph{non-linear} balancing weights, such as when the weights correspond to a specific \emph{link function} $g(\cdot)$ applied to the linear predictor, $\hat{w} = g(\hat{\theta} \Phi_p^\top)$, or, equivalently, when the balancing weights problem penalizes an alternative dispersion penalty. In Appendix \ref{sec:nonlinear_weights}, we briefly consider extending Proposition \ref{prop:bal_wt_OLS} to nonlinear weights and show that the nonlinearity introduces an additional approximation error. A more thorough extension is a promising direction for future research. 
\end{remark}

\begin{remark}[Non-negative weights] 
A common modification of the (minimum variance) balancing weights problem is to constrain the estimated weights to be non-negative or on the simplex; examples include Stable Balancing Weights \citep{zubizarreta2015stable} and the Synthetic Control Method \citep{abadie2010synthetic}, as well as their augmented analogues \citep{athey2018approximate,ben2021augmented}.  Such weights have a number of attractive practical properties: they limit extrapolation; they ensure that the final weighting estimator is sample bounded; and they are typically sparse, which can sometimes aid interpretability \citep{robins2007comment}. In Appendix \ref{sec:simplex}, we extend Proposition \ref{prop:bal_wt_OLS} and show that restricting weights to be non-negative is equivalent to sample trimming. In particular, let $\hat{w}_+^\delta$ be the estimated non-negative weights and $\hat{\beta}^+_{\text{ols}}$ be the OLS coefficient of the regression of $Y_p$ on $\Phi_p$, but restricted to units with positive weight. Then, Proposition \ref{prop:bal_wt_OLS} continues to hold, but with $\hat{\beta}^+_{\text{ols}}$ in place of the unrestricted $\hat{\beta}_{\text{ols}}$: $\hat{\mathbb{E}}\left[\hat{w}_+^\delta \circ Y_p\right] =  \hat{\Phi}_q^\delta \hat{\beta}_{\text{ols}}^+$. See \citet{arbour2024simplex} for additional discussion of the simplex constraint.
\end{remark}

\begin{remark}[Bilinear form]
As pointed out by a reviewer, (many of) the functionals we consider can be written as a bilinear form $\alpha^T\Sigma \beta$ where $\beta$ is the coefficient for the outcome model, $\alpha$ is the coefficient for the Riesz representer and $\Sigma$ is the some weighted population Gram matrix \citep{robins2008higher}; for $E[Y(1)]$, it would be $E[Z\phi(X)\phi(X)^T]$.  
\Cref{generalregularizationpath} suggests that $\beta$ can be estimated using the methods we discuss here, and moreover that the aggregation weights would then be entangled with $\Sigma$ or $\alpha$. Understanding whether this could be used to then motivate new estimators is an interesting topic for future work. 
\end{remark}

\section{Augmented $\ell_2$ Balancing Weights} \label{sec:L2}

In this section, we study $\ell_2$ balancing weights estimators, which are commonly used in the context of kernel balancing \citep{gretton2012kernel, hirshberg2019minimax, kallus2020generalized, ben2021multilevel} and for panel data methods \citep{abadie2010synthetic, ben2021augmented}. 
We first show that the regularization path $a_j^\delta$ from \Cref{generalregularizationpath} follows typical ridge regression shrinkage, with a smooth decay. Moreover, augmenting with $\ell_2$ balancing weights is equivalent to boosting with ridge regression, and always overfits relative to the unaugmented outcome model alone. We then show that when the outcome model used to augment 
$\ell_2$ balancing weights is also a ridge regression (which we refer to as ``double ridge''), the augmented estimator is itself equivalent to a single, generalized ridge regression, albeit undersmoothed relative to the base learner. These results extend immediately to the RKHS setting of ``double kernel ridge'' estimation, combining kernel balancing weights and kernel ridge regression. 
In Section \ref{sec:kernel_ridge_theory}, we show the implications of these numeric results for undersmoothing in the statistical sense.

While the following results hold for arbitrary covariance matrices, in the main text we simplify the presentation by assuming that $\Phi_p^\top\Phi_p$ is diagonal; that is, $ (\Phi_p^\top\Phi_p) = \text{diag}(\sigma^2_1, \sigma^2_2, ..., \sigma^2_d)$,
with $\sigma^2_j > 0$. We show that this is without loss of generality for $\ell_2$ balancing in \Cref{apx:correlated-features}.

\subsection{General linear outcome model}
Following Remark \ref{remark:ridge_equiv} above, 
$\ell_2$ balancing weights, including kernel balancing weights, have a closed form that is always linear in $\overbar{\Phi}_q$. Our next result applies this closed form to Proposition \ref{generalregularizationpath} to derive the regularization path that results from augmenting an arbitrary linear outcome model with $\ell_2$ balancing weights. Although this is an immediate consequence of Proposition \ref{generalregularizationpath}, the resulting form of the augmented estimator has unique structure that warrants a new result.

\begin{proposition}\label{l2augment}
    Let $\hat{w}_{\ell_2}^\delta$  be (penalized) linear balancing weights with regularization parameter $\delta$ and $\mathcal{F} = \{ f(x) = \theta^\top \phi(x) : \Vert \theta \Vert_2 \leq 1 \}$. Then
     $\tfrac{1}{n} \hat{w}_{\ell_2}^\delta =  \overbar{\Phi}_q(\Phi_p^\top\Phi_p + \delta I)^{-1} \Phi_p^\top  . $
     Therefore, the augmented $\ell_2$ balancing weights estimator with outcome model $\hat{\beta}_\text{reg}^\lambda \in \mathbb{R}^d$ has the form
    \begin{align}
        &\hat{\mathbb{E}}[\Phi_q \hat{\beta}_\text{reg}^\lambda] + \hat{\mathbb{E}}[ \hat{w}_{\ell_2}^\delta (Y_p - \Phi_p \hat{\beta}_\text{reg}^\lambda)] =  \hat{\mathbb{E}}[\Phi_q \hat{\beta}_{\ell_2}],\nonumber
    \end{align}
    where the $jth$ coefficient of $\hat{\beta}_{\ell_2}$ is given by
    \begin{align}
        \hat{\beta}_{\ell_2 , j} &\coloneqq \left(1- a_j^\delta\right) \hat{\beta}^\lambda_{\text{reg},j} + a_j^\delta \hat{\beta}_{\text{ols},j}\label{l2augmodel}\\
        a_j^\delta &\coloneqq \frac{\sigma^2_j}{\sigma^2_j + \delta}.\nonumber
    \end{align}
\end{proposition}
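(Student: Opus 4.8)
The plan is to prove the two assertions in turn: first the stated closed form for the $\ell_2$ balancing weights, and then the coefficient formula, which follows by feeding that closed form into \Cref{generalregularizationpath}. For the closed form, I would specialize the penalized balancing problem to the $\ell_2$ norm. Since the dual of the $\ell_2$ norm is again the $\ell_2$ norm, the penalized form reads $\min_w \{\Vert \tfrac1n w \Phi_p - \overbar{\Phi}_q\Vert_2^2 + \delta_1 \Vert w\Vert_2^2\}$, a strictly convex quadratic in the row vector $w$ whose unique minimizer is pinned down by its first-order condition. Differentiating and rearranging gives the ``weight-space'' normal equations $w\left(\tfrac{1}{n^2}\Phi_p\Phi_p^\top + \delta_1 I\right) = \tfrac1n \overbar{\Phi}_q \Phi_p^\top$, expressed through the $n\times n$ Gram matrix $\Phi_p\Phi_p^\top$.

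The key algebraic step is to pass from this $n\times n$ representation to the $d\times d$ representation in the statement. I would factor out $n^{-2}$ and apply the push-through (Woodbury-type) identity $\Phi_p^\top(\Phi_p\Phi_p^\top + cI)^{-1} = (\Phi_p^\top\Phi_p + cI)^{-1}\Phi_p^\top$ with $c = n^2\delta_1$. Tracking the factors of $n$ and absorbing them into a reparameterized regularization parameter $\delta := n^2\delta_1$ (consistent with the paper's convention of writing a generic $\delta$) yields exactly $\tfrac1n \hat{w}_{\ell_2}^\delta = \overbar{\Phi}_q(\Phi_p^\top\Phi_p + \delta I)^{-1}\Phi_p^\top$, the first claim.

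For the coefficient formula I would invoke \Cref{generalregularizationpath}, which already establishes $\hat{\beta}_{\ell_2,j} = (1-a_j^\delta)\hat{\beta}^\lambda_{\text{reg},j} + a_j^\delta \hat{\beta}_{\text{ols},j}$ with $a_j^\delta = \widehat{\Delta}_j^\delta/\Delta_j$; all that remains is to evaluate $a_j^\delta$ for $\ell_2$ balancing. Using the closed form, $\hat{\Phi}_q^\delta = \tfrac1n \hat{w}_{\ell_2}^\delta \Phi_p = \overbar{\Phi}_q(\Phi_p^\top\Phi_p + \delta I)^{-1}\Phi_p^\top\Phi_p$. Under the diagonal assumption $\Phi_p^\top\Phi_p = \diag(\sigma_1^2,\dots,\sigma_d^2)$ this matrix factor is itself diagonal, so coordinate-wise $\hat{\Phi}_{q,j}^\delta = \overbar{\Phi}_{q,j}\,\sigma_j^2/(\sigma_j^2 + \delta)$. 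Because the features are centered ($\overbar{\Phi}_p = 0$), we have $\widehat{\Delta}_j^\delta = \hat{\Phi}_{q,j}^\delta$ and $\Delta_j = \overbar{\Phi}_{q,j}$, whence $a_j^\delta = \sigma_j^2/(\sigma_j^2 + \delta)$, as claimed; this also lies in $[0,1]$, recovering the convex-combination case of the previous proposition.

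I expect the only genuine obstacle to be the matrix-identity bookkeeping in the first part, namely applying the push-through identity correctly and tracking the $n$-factors so that the reparameterization $\delta = n^2\delta_1$ falls out cleanly. Everything else is routine: the second part is an immediate specialization once the diagonal structure collapses the $d\times d$ inverse to a coordinatewise scalar and \Cref{generalregularizationpath} is invoked, and the general non-diagonal case can be deferred to the referenced \Cref{apx:correlated-features}.
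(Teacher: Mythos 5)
Your proposal is correct and follows essentially the same route as the paper: derive the weights' closed form from the first-order condition of the strictly convex penalized problem and convert the $n\times n$ Gram-matrix expression to the $d\times d$ form via the push-through identity, then feed $\hat{\Phi}_q^\delta = \overbar{\Phi}_q(\Phi_p^\top\Phi_p + \delta I)^{-1}\Phi_p^\top\Phi_p$ into \Cref{generalregularizationpath} and use diagonality of $\Phi_p^\top\Phi_p$ to read off $a_j^\delta = \sigma_j^2/(\sigma_j^2+\delta)$. The only difference is cosmetic: you track the $1/n$ factors and reparameterize $\delta = n^2\delta_1$ explicitly, whereas the paper's appendix proof works with the unnormalized objective and absorbs these constants into its generic $\delta$.
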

In this case, the $a_j^\delta$ are exactly equal to the standard regularization path of ridge regression. To see this, recall that ridge regression with penalty $\delta$ shrinks the $\hat{\beta}_\text{ols}$ coefficients as follows:
\begin{align} \hat{\beta}_{\text{ridge},j}^\delta = \left(\frac{\sigma^2_j}{\sigma^2_j + \delta}\right)\hat{\beta}_{\text{ols} , j } = a_j^\delta \hat{\beta}_{\text{ols} , j }. \label{eq:ridgeform} \end{align}
This is identical to the expression in \eqref{l2augmodel} but with 
$\hat{\beta}_\text{reg}^\lambda$ set to $0$: Ridge regression shrinks $\hat{\beta}_\text{ols}$ towards $0$ with regularization path $a_j^\delta$, while $\ell_2$ augmenting shrinks $\hat{\beta}_\text{ols}$ towards $\hat{\beta}_\text{reg}^\lambda$ with the same regularization path. 

As an illustration, the right panel of Figure \ref{fig:regpath} shows $\hat{\beta}_{\ell_2}$ (on the y-axis) for ten covariates, with $\delta$ increasing from  $0$ (on the x-axis). The dots on the left pick out $\hat{\beta}_{\text{ols}}$; when $\delta = 0$, then $a_j^0 = 1$ and $\hat{\beta}_{\ell_2} = \hat{\beta}_{\text{ols}}$. The limit on the right shows $\hat{\beta}_{\text{reg}}^\lambda$. The smooth regularization path is characteristic of ridge regression shrinkage. 

\begin{figure}[tbp]
    \centering
    \begin{subfigure}{0.32\textwidth}
        \centering 
        \includegraphics[width=\textwidth]{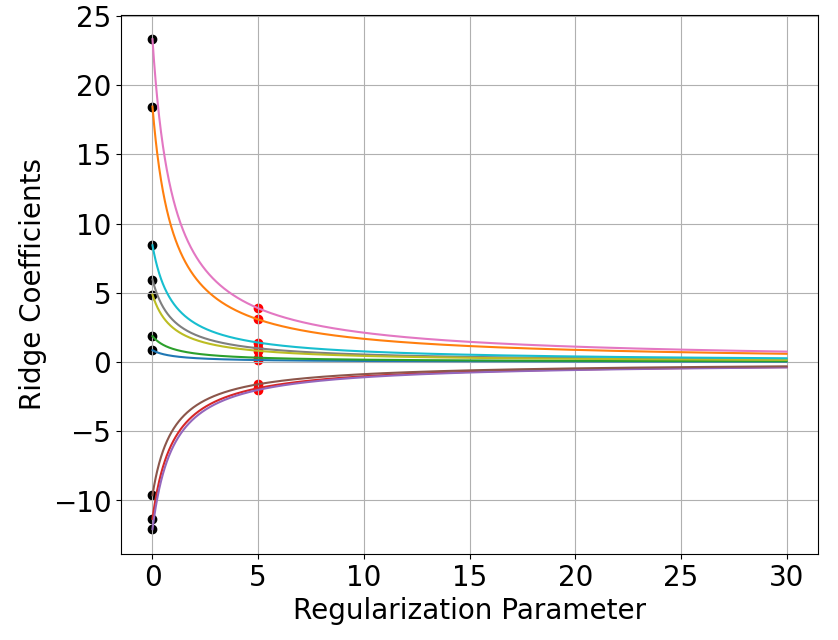}
        \caption{Outcome model}
    \end{subfigure}
    \begin{subfigure}{0.32\textwidth}
        \centering 
        \includegraphics[width=\textwidth]{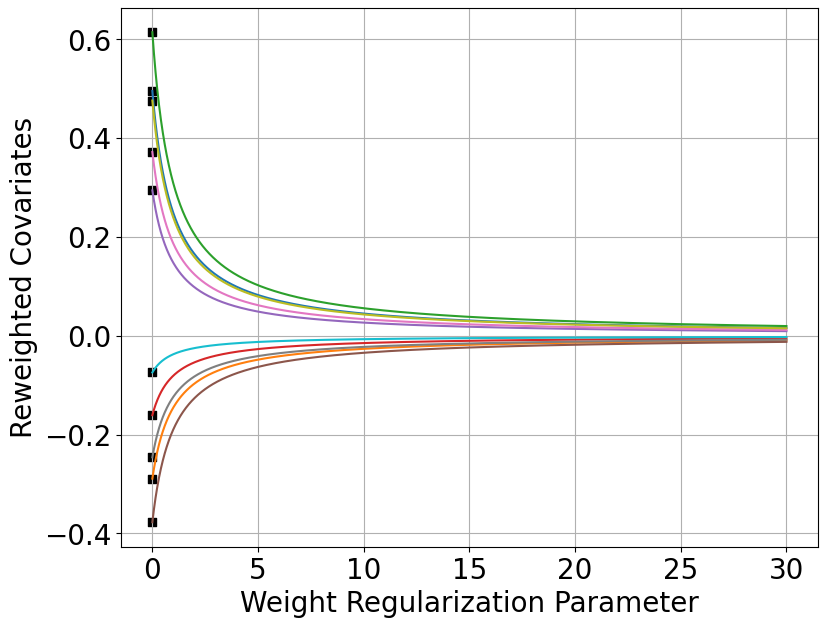}
        \caption{Weighting model}
    \end{subfigure}
        \begin{subfigure}{0.32\textwidth}
        \centering 
        \includegraphics[width=\textwidth]{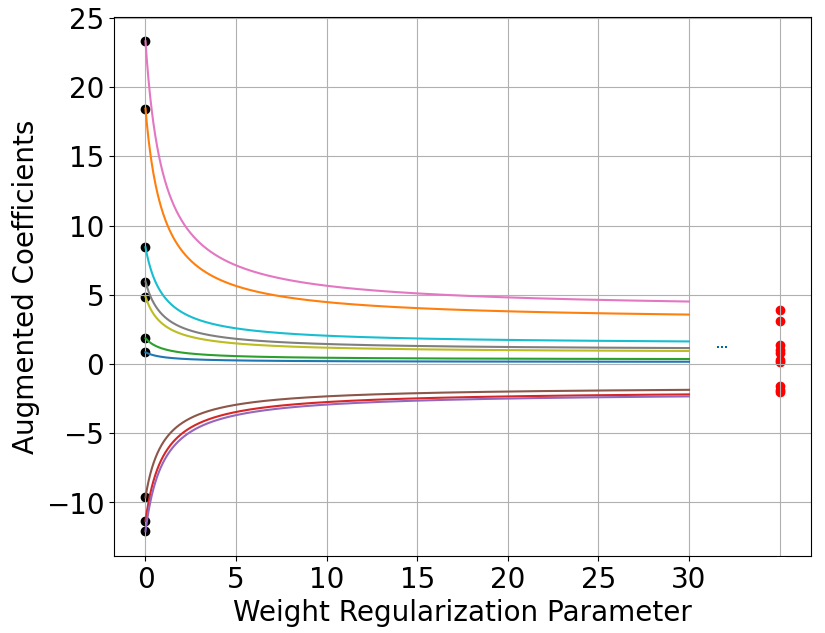}
        \caption{Augmented model}
    \end{subfigure}
\caption{
Regularization paths for ``double ridge'' augmented $\ell_2$ balancing weights. Panel (a) shows the coefficients $\hat{\beta}_{\text{reg}}^\lambda$ of a ridge regression of $Y_p$ on $\Phi_p$ with hyperparameter $\lambda$.
The black dots on the left are the OLS coefficients, with $\lambda = 0$. The red dots at $\lambda = 5$ illustrate the coefficients at a plausible hyperparameter value, $\hat{\beta}_{\text{reg}}^{5}$. 
Panel (b) shows re-weighted covariates, $\hat{\Phi}_q^\delta$, for the $\ell_2$ balancing weights problem with hyperparameter $\delta$; the black dots show exact balance, which corresponds to OLS. As $\delta$ increases, the weights converge to uniform weights and  $\hat{\Phi}_q^\delta$ converges to $\overline{\Phi}_p$, which we have centered at zero.
Panel (c) shows the augmented coefficients, $\hat{\beta}_{\ell_2}$ as a function of the weight regularization parameter $\delta$. The black dots on the left are the OLS coefficients. As $\delta \to \infty$, the coefficients converge to $\hat{\beta}_{\text{reg}}^5$.
All three regularization paths have essentially identical qualitative behavior.
}
    \label{fig:regpath}
\end{figure}

We can also view $\hat{\beta}_{\ell_2}$ as the output of a single iteration of a ridge boosting procedure, fit using $Y_p$ and $\Phi_p$ alone. See \citet{buhlmann2003boosting} and \citet{park2009l_2} for detailed discussion; \citet{newey2004twicing} makes a similar connection in the context of twicing kernels.  
\begin{proposition}\label{prop:l2aug-is-boosting}
Let $\check{Y}_p = Y_p - \Phi_p \hat{\beta}^\lambda_{\text{reg}}$ be the residuals from the base learner.  
Let  $\hat{\beta}_\text{boost}^\delta$ be the coefficients from the ridge regression of $\check{Y}_p$ on $\Phi_p$ with hyperparameter $\delta$. 
Then,
$\hat{\beta}_{\ell_2} = \hat{\beta}_\text{reg}^\lambda + \hat{\beta}_\text{boost}^\delta, $
and
$  \Vert Y_p - \Phi_p \hat{\beta}_{\ell_2} \Vert_2^2 \leq \Vert Y_p - \Phi_p \hat{\beta}_\text{reg}^\lambda\Vert_2^2. $
\end{proposition}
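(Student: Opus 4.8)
The plan is to prove the two assertions separately, leaning on the closed forms already in hand: the convex-combination form of $\hat{\beta}_{\ell_2}$ from Proposition \ref{l2augment} and the ridge closed form for the boosting step.

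For the additive decomposition $\hat{\beta}_{\ell_2} = \hat{\beta}^\lambda_\text{reg} + \hat{\beta}^\delta_\text{boost}$, I would start from the ridge closed form for the boosting regression of $\check{Y}_p$ on $\Phi_p$, namely $\hat{\beta}^\delta_\text{boost} = (\Phi_p^\top \Phi_p + \delta I)^{-1} \Phi_p^\top \check{Y}_p$, and substitute $\check{Y}_p = Y_p - \Phi_p \hat{\beta}^\lambda_\text{reg}$. Expanding $\Phi_p^\top \check{Y}_p = \Phi_p^\top Y_p - \Phi_p^\top \Phi_p \hat{\beta}^\lambda_\text{reg}$ and using $\Phi_p^\top Y_p = (\Phi_p^\top \Phi_p)\hat{\beta}_\text{ols}$ (immediate from the definition of $\hat{\beta}_\text{ols}$) gives $\Phi_p^\top \check{Y}_p = (\Phi_p^\top \Phi_p)(\hat{\beta}_\text{ols} - \hat{\beta}^\lambda_\text{reg})$, hence $\hat{\beta}^\delta_\text{boost} = (\Phi_p^\top \Phi_p + \delta I)^{-1}(\Phi_p^\top \Phi_p)(\hat{\beta}_\text{ols} - \hat{\beta}^\lambda_\text{reg})$. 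Under the running diagonal assumption $\Phi_p^\top \Phi_p = \diag(\sigma_1^2, \dots, \sigma_d^2)$, the leading matrix is $\diag(a_j^\delta)$ with $a_j^\delta = \sigma_j^2/(\sigma_j^2 + \delta)$, so the $j$th entry of $\hat{\beta}^\delta_\text{boost}$ is $a_j^\delta(\hat{\beta}_{\text{ols},j} - \hat{\beta}^\lambda_{\text{reg},j})$; adding $\hat{\beta}^\lambda_{\text{reg},j}$ reproduces exactly the combination $(1-a_j^\delta)\hat{\beta}^\lambda_{\text{reg},j} + a_j^\delta \hat{\beta}_{\text{ols},j}$ of Proposition \ref{l2augment}. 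I would note in passing that the intermediate matrix identity needs no diagonality, so the decomposition is not special to that case.

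For the loss inequality, I would use the decomposition just proved to write the augmented residual as $Y_p - \Phi_p\hat{\beta}_{\ell_2} = \check{Y}_p - \Phi_p\hat{\beta}^\delta_\text{boost}$. Introducing the ridge hat matrix $H_\delta := \Phi_p(\Phi_p^\top \Phi_p + \delta I)^{-1}\Phi_p^\top$, we have $\Phi_p\hat{\beta}^\delta_\text{boost} = H_\delta \check{Y}_p$, so the augmented residual equals $(I - H_\delta)\check{Y}_p$ and it suffices to show $\Vert I - H_\delta \Vert_\text{op} \le 1$. Since $H_\delta$ is symmetric with nonzero eigenvalues $\sigma_j^2/(\sigma_j^2 + \delta) \in [0,1)$ for $\delta > 0$ (read off from the spectral decomposition of $\Phi_p^\top \Phi_p$), the eigenvalues of $I - H_\delta$ lie in $(0,1]$, which yields the operator-norm bound and therefore $\Vert(I - H_\delta)\check{Y}_p\Vert_2 \le \Vert\check{Y}_p\Vert_2 = \Vert Y_p - \Phi_p\hat{\beta}^\lambda_\text{reg}\Vert_2$.

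Each step is short; the one point requiring genuine care is the eigenvalue bound on $H_\delta$, because the inequality is \emph{not} merely ``least squares shrinks the residual.'' The boosting step is a \emph{penalized} fit, so $\hat{\beta}^\delta_\text{boost}$ does not minimize the unpenalized residual norm and one cannot invoke an optimization/projection property directly. The spectral argument is what sidesteps this, bounding $\Vert I - H_\delta \Vert_\text{op}$ head-on rather than appealing to optimality of the boosting coefficients.
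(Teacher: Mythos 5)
Your proof is correct on both counts, and it is essentially the derivation the paper intends: the paper states this proposition without a dedicated proof, treating it as a consequence of \Cref{l2augment} plus standard ridge-boosting algebra, and your argument fills in exactly those details. For the decomposition, substituting $\Phi_p^\top Y_p = (\Phi_p^\top\Phi_p)\hat{\beta}_\text{ols}$ into the ridge closed form gives $\hat{\beta}^\delta_\text{boost} = (\Phi_p^\top\Phi_p+\delta I)^{-1}(\Phi_p^\top\Phi_p)\bigl(\hat{\beta}_\text{ols}-\hat{\beta}^\lambda_\text{reg}\bigr)$, whose $j$th entry is $a_j^\delta(\hat{\beta}_{\text{ols},j}-\hat{\beta}^\lambda_{\text{reg},j})$ under the diagonal assumption, matching \Cref{l2augment}; your side remark that the matrix identity needs no diagonality is also right --- it reproduces the correlated-design form $(I-A_\delta)\hat{\beta}_\text{reg}+A_\delta\hat{\beta}_\text{ols}$ of \Cref{prop:corrolated-l2-augment}. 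For the loss inequality, your spectral bound on $I-H_\delta$ is valid (and extends to $\delta=0$, where $H_0$ is an orthogonal projection). One caveat on your closing commentary: the claim that no optimality argument is available is overstated. Since $\hat{\beta}^\delta_\text{boost}$ minimizes $\beta \mapsto \Vert\check{Y}_p-\Phi_p\beta\Vert_2^2+\delta\Vert\beta\Vert_2^2$, comparing the objective at the minimizer with its value at $\beta=0$ yields $\Vert\check{Y}_p-\Phi_p\hat{\beta}^\delta_\text{boost}\Vert_2^2 \le \Vert\check{Y}_p-\Phi_p\hat{\beta}^\delta_\text{boost}\Vert_2^2+\delta\Vert\hat{\beta}^\delta_\text{boost}\Vert_2^2 \le \Vert\check{Y}_p\Vert_2^2$, which is the desired inequality in two lines; what fails is only the \emph{projection} (orthogonality) property, not optimality-based comparison. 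Your spectral route is nonetheless sound and buys slightly more: it shows the residual contracts in every eigendirection of $\Phi_p\Phi_p^\top$, not just in aggregate norm.
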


So for a fixed $\delta$, the augmented $\ell_2$ balancing estimator is equivalent to estimating a new outcome model coefficient estimator $\hat{\beta}_{\ell_2}$ that \emph{overfits} relative to $\hat{\beta}_\text{reg}^\lambda$ (in the sense of having smaller in-sample training error), and then applying that model to $\Phi_q$. 

Surprisingly --- and in contrast to the general result in \Cref{generalregularizationpath} --- the augmented coefficients $\hat{\beta}_{\ell_2}$ are the same for \emph{every} target covariate profile $\Phi_q$. To see this, note that \Cref{l2augment} shows that $\ell_2$ balancing weights are always linear in $\overbar{\Phi}_q$. Therefore, the corresponding regularization path $a_j^\delta$ does not depend on the target profile $\Phi_q$; it depends only on $\delta$ and the source distribution variances $\sigma_j^2$.
This property is closely related to \emph{universal adaptability} in the computer science literature on multi-group fairness \citep{kim2022universal}. The particular $\Phi_q$ may nonetheless impact the choice of $\delta$ in hyperparameter selection, e.g., via cross-validating imbalance, which in turn influences the degree of overfitting; we do find this to be the case theoretically in \Cref{sec:finite-sample-mse}.

\subsection{Ridge regression outcome model}
\Cref{l2augment} holds for arbitrary linear outcome model coefficient estimators $\hat{\beta}_\text{reg}^\lambda \in \mathbb{R}^d$; we now state the corresponding result for a ``double ridge'' estimator, where the base learner outcome model is itself fit via ridge regression. The key takeaway is that the implied augmented coefficients are \emph{undersmoothed} relative to the base learner ridge coefficients.

For this section, we will consider the following generalized ridge regression, sometimes known as ``adaptive'' ridge regression \citep{grandvalet1998least}. Let $\Lambda \in \mathbb{R}^{d \times d}$ be a diagonal matrix with $j$th diagonal entry $\lambda_j \geq 0$. Then the generalized ridge coefficients are: 
\begin{align*}
    \hat{\beta}_\text{ridge}^\Lambda &\coloneqq \underset{\beta \in \mathbb{R}^d}{\text{argmin}} \Vert \Phi_p \beta - Y_p \Vert_2^2 + \beta^\top \Lambda \beta\\
    &= (\Phi_p^\top\Phi_p + \Lambda)^{-1} \Phi_p^\top Y_p.
\end{align*}  
Standard ridge regression is the special case where the $\lambda_j$ all take the same value and so $\Lambda = \lambda I$. As above, the generalized ridge coefficients can be rewritten as shrinking the OLS coefficients:
\begin{align} \hat{\beta}_{\text{ridge},j}^\Lambda = \left(\frac{\sigma^2_j}{\sigma^2_j + \lambda_j}\right)\hat{\beta}_{\text{ols} , j }. \label{eq:adaptridgeform} \end{align}
We now demonstrate that the augmented $\ell_2$ balancing weights estimator with base learner $\hat{\beta}_\text{ridge}^\Lambda$ is equivalent to a plug-in estimator using generalized ridge with \emph{smaller} hyperparameters, $\hat{\beta}_\text{ridge}^\Gamma$, where $\Gamma$ is a diagonal matrix with $j$th diagonal entry $\gamma_j \in [0,\lambda_j]$.

\begin{proposition} \label{prop:double_ridge}
Let $\hat{\beta}^\Lambda_{\text{ridge}}$ denote the coefficients of a generalized ridge regression of $Y_p$ on $\Phi_p$ with hyperparameters $\Lambda$, and let $\hat{w}^\delta_{\ell_2}$ denote $\ell_2$ balancing weights with hyperparameter $\delta$ defined in \Cref{linearbalancingproblems}. Define the diagonal matrix $\Gamma$ with $j$th diagonal entry:
\[ \gamma_j \coloneqq \frac{\delta \lambda_j}{\sigma_j^2 + \lambda_j + \delta} \leq \lambda_j.\]
Then:
\begin{align*} 
    &\hat{\mathbb{E}}[\Phi_q\hat{\beta}^\Lambda_{\text{ridge}}] + \hat{\mathbb{E}}[ \hat{w}^\delta_{\ell_2}(Y_p - \Phi_p\hat{\beta}^\Lambda_{\text{ridge}}) ] = \hat{\mathbb{E}}[ \Phi_q  \hat{\beta}_{\text{ridge}}^\Gamma ] .
\end{align*} 
Furthermore, $\hat{\beta}_\text{ridge}^\Gamma$ are standard ridge regression coefficients (i.e., $\gamma_j$ is a constant for all $j$) when $\lambda_j = \lambda$ and $\sigma_j = \sigma$ for all $j$.
\end{proposition}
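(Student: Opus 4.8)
The plan is to recognize that Proposition \ref{prop:double_ridge} is essentially an algebraic specialization of Proposition \ref{l2augment}: all of its content lies in collapsing a convex combination of two ridge-shrunk OLS coefficients into a single ridge-shrunk OLS coefficient. Because we work under the diagonal assumption $\Phi_p^\top\Phi_p = \text{diag}(\sigma_1^2,\ldots,\sigma_d^2)$, everything decouples coordinatewise, so I would carry out the argument one coordinate $j$ at a time and then assemble $\Gamma = \text{diag}(\gamma_1,\ldots,\gamma_d)$ at the end.

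First I would invoke Proposition \ref{l2augment} with the generic base learner $\hat{\beta}_\text{reg}^\lambda$ set equal to the generalized ridge coefficients $\hat{\beta}_\text{ridge}^\Lambda$. This immediately gives
$$\hat{\beta}_{\ell_2, j} = (1 - a_j^\delta)\, \hat{\beta}_{\text{ridge}, j}^\Lambda + a_j^\delta\, \hat{\beta}_{\text{ols}, j}, \qquad a_j^\delta = \frac{\sigma_j^2}{\sigma_j^2 + \delta}.$$
Next I would substitute the ridge shrinkage representation \eqref{eq:adaptridgeform}, namely $\hat{\beta}_{\text{ridge},j}^\Lambda = \frac{\sigma_j^2}{\sigma_j^2 + \lambda_j}\hat{\beta}_{\text{ols},j}$, so that both terms of the combination carry the common factor $\hat{\beta}_{\text{ols},j}$.

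The core step is then pure algebra. Factoring out $\hat{\beta}_{\text{ols},j}$, using $1 - a_j^\delta = \frac{\delta}{\sigma_j^2 + \delta}$, and combining the two fractions over the common denominator $(\sigma_j^2 + \delta)(\sigma_j^2 + \lambda_j)$, I would show the total multiplier on $\hat{\beta}_{\text{ols},j}$ equals $\frac{\sigma_j^2(\sigma_j^2 + \lambda_j + \delta)}{(\sigma_j^2 + \delta)(\sigma_j^2 + \lambda_j)}$. I would then match this against a single generalized ridge multiplier $\frac{\sigma_j^2}{\sigma_j^2 + \gamma_j}$ and solve for $\gamma_j$: cancelling $\sigma_j^2$ (legitimate since $\sigma_j^2 > 0$) and clearing denominators, the terms $\sigma_j^4$ and $\sigma_j^2\lambda_j$ in the numerator cancel, leaving exactly $\gamma_j = \frac{\delta \lambda_j}{\sigma_j^2 + \lambda_j + \delta}$. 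Since this holds for every $j$ and $\hat{\beta}_\text{ridge}^\Gamma$ is by definition the plug-in generalized ridge estimator with diagonal matrix $\Gamma$, the displayed identity $\hat{\mathbb{E}}[\Phi_q\hat{\beta}^\Lambda_{\text{ridge}}] + \hat{\mathbb{E}}[\hat{w}^\delta_{\ell_2}(Y_p - \Phi_p\hat{\beta}^\Lambda_{\text{ridge}})] = \hat{\mathbb{E}}[\Phi_q\hat{\beta}_{\text{ridge}}^\Gamma]$ follows.

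Finally I would dispatch the two remaining claims. The bound $\gamma_j \leq \lambda_j$ follows by writing $\gamma_j = \lambda_j \cdot \frac{\delta}{\sigma_j^2 + \lambda_j + \delta}$ and noting the second factor lies in $[0,1]$ because $\sigma_j^2 + \lambda_j \geq 0$; the $\lambda_j = 0$ case is immediate. For the ``furthermore'' statement, setting $\lambda_j = \lambda$ and $\sigma_j = \sigma$ for all $j$ makes $\gamma_j$ independent of $j$, so $\Gamma = \gamma I$ and $\hat{\beta}_\text{ridge}^\Gamma$ reduces to standard (non-adaptive) ridge. I do not anticipate a genuine obstacle here, since the entire argument reduces to a single coordinatewise fraction combination; the only things to be careful about are verifying the $\sigma_j^4$ and $\sigma_j^2\lambda_j$ cross terms cancel cleanly and confirming the cancellation of $\sigma_j^2$ is justified by the standing hypothesis $\sigma_j^2 > 0$, which makes the identification of $\gamma_j$ unique.
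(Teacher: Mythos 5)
Your proposal is correct and follows essentially the same route as the paper's proof: invoke Proposition \ref{l2augment} with the generalized ridge base learner, substitute the shrinkage form \eqref{eq:adaptridgeform}, combine the two fractions into the multiplier $\frac{\sigma_j^2(\sigma_j^2+\lambda_j+\delta)}{(\sigma_j^2+\delta)(\sigma_j^2+\lambda_j)}$, and solve $\frac{\sigma_j^2}{\sigma_j^2+\gamma_j}$ equal to it for $\gamma_j$. The only additions are your explicit verification of $\gamma_j \leq \lambda_j$ and the ``furthermore'' claim, which the paper leaves implicit.
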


The same result holds for kernel ridge regression; see \Cref{sec:rkhs_appendix}.

In this setting, augmenting with balancing weights is equivalent to undersmoothing the original outcome model fit. In particular, we can use the expansion in \Cref{eq:adaptridgeform} to see the undersmoothing in $\hat{\beta}^\Gamma_\text{ridge}$ explicitly:
\[ 
\frac{\sigma^2_j}{\sigma^2_j + \gamma_j} = \underbrace{\left(\frac{\sigma^2_j}{\sigma^2_j + \lambda_j} \right)}_{\text{outcome model}} \underbrace{\left(\frac{\sigma_j^2 + \lambda_j + \delta}{\sigma_j^2 + \delta}\right)}_{\text{augmentation}},
\]
where the first term is the shrinkage from the original generalized ridge model alone, and the second term is due to augmenting with $\ell_2$ balancing weights. Importantly, the second term is in $[1, \frac{\sigma^2_j + \lambda_j}{\sigma^2_j}]$ and therefore partially reverses the shrinkage of the original estimate. In Section \ref{sec:kernel_ridge_asymototics}, we connect this to undersmoothing in the statistical sense. 

\section{Augmented $\ell_\infty$ balancing weights}
\label{sec:l8_aug_section}

In this section, we study $\ell_\infty$ balancing weights estimators, which are widely used in the balancing weights literature \citep{zubizarreta2015stable, athey2018approximate} and in the AutoDML literature \citep{chernozhukov2022automatic}. 
In the main text, we consider the special case where the covariance matrix $\Phi_p^\top\Phi_p$ is diagonal; that is, $ (\Phi_p^\top\Phi_p) = \text{diag}(\sigma^2_1, \sigma^2_2, ..., \sigma^2_d)$,
with $\sigma^2_j > 0$. Unlike with $\ell_2$ balancing, this is no longer without loss of generality. 
 We discuss this general case in \Cref{apx:linf-aug-correlated}.

For diagonal covariance, we first show that $\ell_\infty$ balancing has a closed form: it is equivalent to applying a soft-thresholding operator to the feature shift from $\overbar{\Phi}_p$ to $\overbar{\Phi}_q$. We then write the resulting augmented estimator as applying coefficients $\hat{\beta}_{\ell_\infty}$ to $\Phi_q$ and 
show that $\hat{\beta}_{\ell_\infty}$ is a sparse, element-wise convex combination of the base learner coefficients and OLS coefficients. 
When the outcome model is also fit via the lasso, we use the resulting representation to demonstrate a familiar ``double selection'' phenomenon \citep{belloni2014inference}, where $\hat{\beta}_{\ell_\infty}$ inherits the non-zero coefficients of both the base learner and the weighting model. This is a form of undersmoothing in the $\ell_0$ ``norm,'' in the sense that $\hat{\beta}_{\ell_\infty}$ always has at least as many non-zero coefficients as the base learner, $\hat{\beta}_{\text{reg}}$.

\subsection{Weighting alone}

We first define the soft-thresholding operator and show that the $\ell_\infty$ balancing problem has a closed form solution.

\begin{definition*}[Soft-thresholding operator]
    For $t>0$, define the soft-thresholding operator,
    \[     \mathcal{T}_t(z) \coloneqq  
    \begin{cases}
        0 & \text{if } |z| < t\\
        z - t & \text{if } z > t\\
        z + t & \text{if } z < -t
    \end{cases}.\] 
\end{definition*}

\begin{proposition}[$\ell_\infty$ Balancing]\label{orthonormalellinfty}
    If $\Phi_p^\top \Phi_p$ is diagonal, the solution $ w_{\ell_\infty}^\delta$ to the $\ell_\infty$ optimization problem (\ref{supnormproblem}) is:
    \begin{align*}
    \tfrac{1}{n} w_{\ell_\infty}^\delta &= \Phi_p (\Phi_p^\top\Phi_p)^{-1} \left[\overbar{\Phi}_p + \mathcal{T}_\delta(\overbar{\Phi}_q - \overbar{\Phi}_p) \right] \\
    &= \Phi_p (\Phi_p^\top\Phi_p)^{-1} \left[\overbar{\Phi}_p + \mathcal{T}_\delta(\Delta) \right]
    \end{align*}
    where $\Delta = \overbar{\Phi}_q - \overbar{\Phi}_p$, where we include $\overbar{\Phi}_p$ (equal to $0$ by assumption) to emphasize the dependence on feature shift, and with corresponding reweighted features, 
    $\hat{\Phi}_q^\delta = \overbar{\Phi}_p + \mathcal{T}_\delta(\overbar{\Phi}_q - \overbar{\Phi}_p)$.
\end{proposition}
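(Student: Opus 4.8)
The plan is to reduce the optimization over the weight vector $w\in\mathbb{R}^n$ to a transparent, decoupled box-constrained problem over the $d$ achieved target features, and then solve it coordinate by coordinate. Working with the constrained form of the balancing problem from Section~\ref{linearbalancingproblems}, the objective is $\|w\|_2^2$ and the constraint is $\|\tfrac1n w\Phi_p - \overbar{\Phi}_q\|_\infty \le \delta$; since $\|\cdot\|_\infty$ is the maximum over coordinates, this is exactly the box constraint $|u_j - \overbar{\Phi}_{q,j}| \le \delta$ for every $j$, where I write $u \coloneqq \tfrac1n w\Phi_p \in \mathbb{R}^d$ for the features achieved by $w$.

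First I would reparametrize in terms of $u$. Because feasibility depends on $w$ only through $u$, for any target value of $u$ the best choice of $w$ is the minimum-$\ell_2$-norm solution of $\tfrac1n w\Phi_p = u$. Since $\Phi_p$ has rank $d$, this least-norm solution lies in the row space of $\Phi_p^\top$ and equals $w = n\,u(\Phi_p^\top\Phi_p)^{-1}\Phi_p^\top$, giving $\|w\|_2^2 = n^2\, u(\Phi_p^\top\Phi_p)^{-1}u^\top$. Substituting the diagonal assumption $\Phi_p^\top\Phi_p = \text{diag}(\sigma_1^2,\dots,\sigma_d^2)$ turns the objective into $n^2\sum_j u_j^2/\sigma_j^2$, so the program separates into $d$ independent scalar problems: minimize $u_j^2/\sigma_j^2$ subject to $u_j \in [\overbar{\Phi}_{q,j}-\delta,\ \overbar{\Phi}_{q,j}+\delta]$.

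Next I would solve each scalar problem by inspection. Because $\sigma_j^2>0$, minimizing $u_j^2/\sigma_j^2$ over an interval is the same as choosing the point of that interval closest to the origin. Checking the three cases — $0$ interior to the interval (that is, $|\overbar{\Phi}_{q,j}|\le\delta$), the interval entirely positive, and the interval entirely negative — shows that the minimizer is precisely $u_j^\star = \mathcal{T}_\delta(\overbar{\Phi}_{q,j})$, matching the definition of the soft-thresholding operator. Assembling the coordinates gives $\hat{\Phi}_q^\delta = u^\star = \overbar{\Phi}_p + \mathcal{T}_\delta(\overbar{\Phi}_q - \overbar{\Phi}_p)$ (writing $\overbar{\Phi}_p$, which is $0$ under the centering convention, to emphasize the role of the feature shift $\Delta$), and substituting $u^\star$ into the least-norm formula for $w$ recovers the claimed closed form $\tfrac1n w_{\ell_\infty}^\delta = \Phi_p(\Phi_p^\top\Phi_p)^{-1}[\overbar{\Phi}_p + \mathcal{T}_\delta(\Delta)]$ up to the row/column transpose convention.

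I expect the only real obstacle to be the first step: justifying cleanly that the minimum-norm representation is optimal and produces the quadratic $\|w\|_2^2 = n^2 u(\Phi_p^\top\Phi_p)^{-1}u^\top$, i.e., that one may substitute $w = n\,u(\Phi_p^\top\Phi_p)^{-1}\Phi_p^\top$ without loss of generality. Once that reparametrization is in place, the diagonality assumption does all the remaining work by decoupling the coordinates, and the soft-thresholding solution drops out of the elementary scalar optimization; no delicate estimates or KKT bookkeeping are required.
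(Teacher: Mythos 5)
Your proof is correct, but it takes a genuinely different route from the paper's. The paper proves \Cref{orthonormalellinfty} by KKT analysis: it rewrites the $\ell_\infty$ constraint as two one-sided linear constraints, forms the Lagrangian with dual vectors $\mu,\nu \geq 0$, and uses stationarity plus complementary slackness to construct dual coefficients $\theta^*$ with $w^* = \Phi_p\theta^*$, from which the diagonality of $\Phi_p^\top\Phi_p$ yields the soft-thresholding form case by case. You instead split the minimization as $\min_w = \min_u \min_{\{w:\, \tfrac1n w\Phi_p = u\}}$, observe that the inner problem is solved by the least-norm solution $w = n\,u(\Phi_p^\top\Phi_p)^{-1}\Phi_p^\top$ (valid since $\Phi_p$ has full column rank $d$), so that the outer problem becomes minimizing $n^2\sum_j u_j^2/\sigma_j^2$ over the box $|u_j - \overbar{\Phi}_{q,j}|\le\delta$, which decouples into scalar problems whose solution is the projection of $0$ onto each interval --- exactly $\mathcal{T}_\delta(\overbar{\Phi}_{q,j})$. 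Both arguments are complete; the step you flagged as the "only real obstacle" is standard (the minimum-norm solution of a consistent full-rank linear system), and strict convexity plus coercivity of the objective guarantees the optimum exists, is unique, and is captured by your decomposition. What the two approaches buy is slightly different: your reparametrization makes the geometry transparent (soft-thresholding as projection of the origin onto a box in the $(\Phi_p^\top\Phi_p)^{-1}$-weighted metric) and gives the linearity $w \in \mathrm{rowspace}(\Phi_p^\top)$ for free, whereas the paper's KKT route explicitly produces the dual coefficients $\theta^*$, which is the object that connects to the Riesz regression form \eqref{eq:autoform} and is the template the paper reuses for related results (e.g., non-negativity-constrained weights).
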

 
For intuition, compare the (un-augmented) $\ell_\infty$ balancing weights estimator to the lasso-based coefficient estimates \citep{hastie2009elements}: 
\begin{align*}
     \hat{\mathbb{E}} [ w_{\ell_\infty}^\delta \circ Y_p] &= \mathcal{T}_\delta(\overbar{\Phi}_q)^\top \hat{\beta}_\text{ols}\\
      \hat{\mathbb{E}} [ \Phi_q \hat{\beta}_\text{lasso}^\lambda ] &= \overbar{\Phi}_q^\top \mathcal{T}_\lambda(\hat{\beta}_\text{ols}),
\end{align*}
where we simplify $\hat{\Phi}_q^\delta$ here to emphasize the connections between the methods. Whereas lasso performs soft-thresholding on the OLS coefficients (regularizing the outcome regression), $\ell_\infty$ balancing performs soft-thresholding on the implied feature shift to the target features.

\subsection{General linear outcome model}

We can then plug the closed-form solution for the weights into  \Cref{generalregularizationpath}.

\begin{proposition}\label{linfaugment}
    Let $\hat{w}_{\ell_\infty}^\delta$ be defined as above. Then the augmented $\ell_\infty$ balancing weights estimator with outcome model fit $\hat{\beta}_\text{reg}^\lambda \in \mathbb{R}^d$ has the form,
    \begin{align}
        &\hat{\mathbb{E}}[\Phi_q \hat{\beta}_\text{reg}^\lambda] + \hat{\mathbb{E}}[ \hat{w}_{\ell_\infty}^\delta (Y_p - \Phi_p \hat{\beta}_\text{reg}^\lambda)] =  \hat{\mathbb{E}}[\Phi_q \hat{\beta}_{\ell_\infty} ],\nonumber
    \end{align}
    where the $jth$ coefficient of $ \hat{\beta}_{\ell_\infty} $ equals:
   \[   \hat{\beta}_{\ell_\infty, j}  =
\begin{cases}
        \hat{\beta}_{\text{reg},j}^\lambda & \text{if } |\Delta_j| < \delta \\[0.5em]
        \left| \frac{\delta}{\Delta_j} \right| \hat{\beta}_{\text{reg},j}^\lambda + \left(1 - \left| \frac{\delta}{\Delta_j} \right| \right) \hat{\beta}_{\text{ols},j}  & \text{otherwise} \\
    \end{cases},
 \]
 where $\Delta_j = \overbar{\Phi}_{q,j} - \overbar{\Phi}_{p,j}$.
\end{proposition}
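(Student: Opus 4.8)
The plan is to obtain this result as a direct composition of the two preceding propositions: \Cref{orthonormalellinfty}, which gives the closed form for the reweighted features under $\ell_\infty$ balancing, and \Cref{generalregularizationpath}, which already expresses any augmented balancing weights estimator as $\hat{\mathbb{E}}[\Phi_q \hat{\beta}_{\text{aug}}]$ with element-wise affine coefficients $\hat{\beta}_{\text{aug},j} = (1-a_j^\delta)\hat{\beta}_{\text{reg},j}^\lambda + a_j^\delta \hat{\beta}_{\text{ols},j}$ and $a_j^\delta = \widehat{\Delta}_j^\delta/\Delta_j$. Since $\hat{w}_{\ell_\infty}^\delta$ is linear in $\Phi_p$ by \Cref{orthonormalellinfty}, \Cref{generalregularizationpath} applies verbatim, and the only task that remains is to evaluate the mixing coefficient $a_j^\delta$ explicitly for the $\ell_\infty$ weights.

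First I would read off $\widehat{\Delta}_j^\delta$ from \Cref{orthonormalellinfty}. That proposition states $\hat{\Phi}_q^\delta = \overbar{\Phi}_p + \mathcal{T}_\delta(\overbar{\Phi}_q - \overbar{\Phi}_p)$, so by the definition $\widehat{\Delta}_j^\delta = \hat{\Phi}_{q,j}^\delta - \overbar{\Phi}_{p,j}$ we immediately get $\widehat{\Delta}_j^\delta = \mathcal{T}_\delta(\Delta_j)$, and therefore $a_j^\delta = \mathcal{T}_\delta(\Delta_j)/\Delta_j$. Substituting the definition of the soft-thresholding operator then gives three cases. When $|\Delta_j| < \delta$ we have $\mathcal{T}_\delta(\Delta_j) = 0$, hence $a_j^\delta = 0$ and $\hat{\beta}_{\ell_\infty,j} = \hat{\beta}_{\text{reg},j}^\lambda$, matching the first branch. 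When $\Delta_j > \delta$ we have $\mathcal{T}_\delta(\Delta_j) = \Delta_j - \delta$, so $a_j^\delta = 1 - \delta/\Delta_j$ and $1 - a_j^\delta = \delta/\Delta_j$; since $\Delta_j > 0$ this equals $\lvert \delta/\Delta_j \rvert$. When $\Delta_j < -\delta$ we have $\mathcal{T}_\delta(\Delta_j) = \Delta_j + \delta$, so $a_j^\delta = 1 + \delta/\Delta_j$ and $1 - a_j^\delta = -\delta/\Delta_j$; since $\Delta_j < 0$ this again equals $\lvert \delta/\Delta_j \rvert$. Thus both nontrivial branches collapse to $\hat{\beta}_{\ell_\infty,j} = \lvert \delta/\Delta_j \rvert \hat{\beta}_{\text{reg},j}^\lambda + (1 - \lvert \delta/\Delta_j \rvert)\hat{\beta}_{\text{ols},j}$, which is exactly the stated ``otherwise'' branch.

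The only genuinely delicate point is the sign bookkeeping in the negative branch: one must verify that $-\delta/\Delta_j = \lvert \delta/\Delta_j \rvert$ when $\Delta_j < -\delta < 0$, so that the two one-sided thresholding cases unify into the single absolute-value expression in the statement. This is where I expect readers to want an explicit line, since a naive substitution leaves $1 - a_j^\delta = -\delta/\Delta_j$ with an apparent sign mismatch; making the identity $\lvert \delta/\Delta_j \rvert = \delta/\lvert \Delta_j\rvert$ explicit resolves it. Everything else is a routine substitution, and the diagonality hypothesis on $\Phi_p^\top\Phi_p$ is already baked into the closed form of \Cref{orthonormalellinfty}, so no additional assumptions are needed.
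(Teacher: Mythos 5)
Your proposal is correct and follows exactly the paper's own route: apply \Cref{generalregularizationpath} with $a_j^\delta = \widehat{\Delta}_j^\delta/\Delta_j$, read off $\widehat{\Delta}_j^\delta = \mathcal{T}_\delta(\Delta_j)$ from \Cref{orthonormalellinfty}, and evaluate the three soft-thresholding cases. In fact you are slightly more thorough than the paper, which only works out the $\Delta_j > \delta$ case explicitly, whereas you also verify the sign bookkeeping in the $\Delta_j < -\delta$ branch that unifies both cases into the absolute-value expression.
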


The augmented coefficients $\hat{\beta}_{\ell_\infty}$ are an element-wise convex combination of $\hat{\beta}_\text{reg}^\lambda$ and $\hat{\beta}_\text{ols}$. For features where the mean feature shift $\Delta_j$ is small (relative to $\delta$), $\hat{\beta}_{\ell_\infty}$ is equivalent to the base learner coefficient $\hat{\beta}_\text{reg}^\lambda$. The remaining coefficients are interpolated linearly toward the $\hat{\beta}_\text{ols}$ coefficients.

\Cref{fig:ortho-linf-regpath} summarizes these results and their implications for the augmented estimator. As with \Cref{fig:regpath}, we generate simple simulated data with $d = 10$. In the left panel, we plot the coefficients from lasso regression of $Y_p$ on $\Phi_p$ as a function of the lasso regularization parameter. The regularization path begins with the black dots, which represent the OLS coefficients. Each lasso coefficient (represented by a colored line) then shrinks linearly to exactly zero, due to the soft-thresholding operator. The middle panel plots the reweighted covariates using $\ell_\infty$ balancing weights between $\Phi_p$ and $\Phi_q$ solved in the constrained form. The black dots represent $\overbar{\Phi}_q$, corresponding to exact balance. Then as the weight regularization parameter increases, the reweighted covariates shrink linearly to exactly zero, just as in lasso. The right panel plots coefficients for the augmented estimator that combines a baseline outcome model fit $\hat{\beta}_\text{reg}^\lambda$ with $\ell_\infty$ balancing weights. The lines correspond to $\hat{\beta}_{\ell_\infty}$ as defined in \Cref{linfaugment}. 
The regularization path begins at the black dots, where $\hat{\beta}_{\ell_\infty}  = \hat{\beta}_\text{ols}$, and eventually converges to $\hat{\beta}_\text{reg}^\lambda$, showing the usual soft-thresholding behavior. The order at which the coefficients go to zero reflects the size of $\overbar{\Phi}_q$, because the regularization path depends on the weight coefficients from the middle panel. Thus, the augmented estimator shrinks $\hat{\beta}_\text{ols}$ toward $\hat{\beta}_\text{reg}^\lambda$ but via a soft-thresholding operator applied to the feature shift, $\Delta_j$. 

\begin{figure}[tb]
    \centering
    \begin{subfigure}{0.32\textwidth}
        \centering\includegraphics[width=\textwidth]{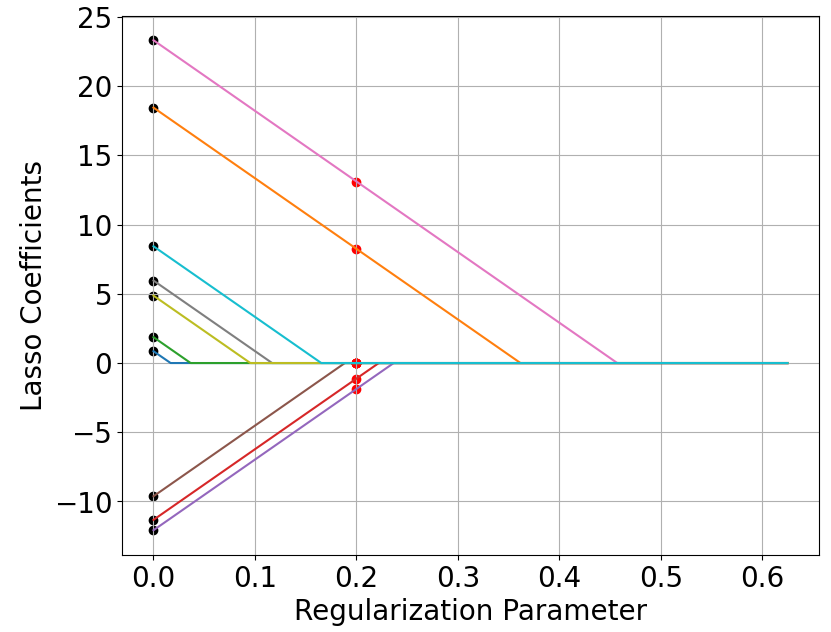}
        \caption{Outcome model}
    \end{subfigure}
    \begin{subfigure}{0.32\textwidth}
        \centering\includegraphics[width=\textwidth]{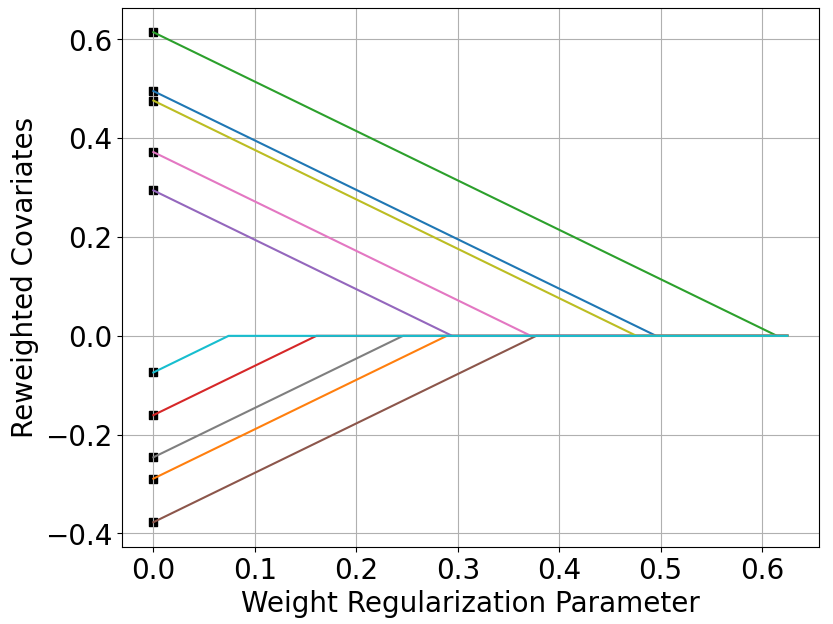}
        \caption{Weighting model}
    \end{subfigure}
        \begin{subfigure}{0.32\textwidth}
        \centering\includegraphics[width=\textwidth]{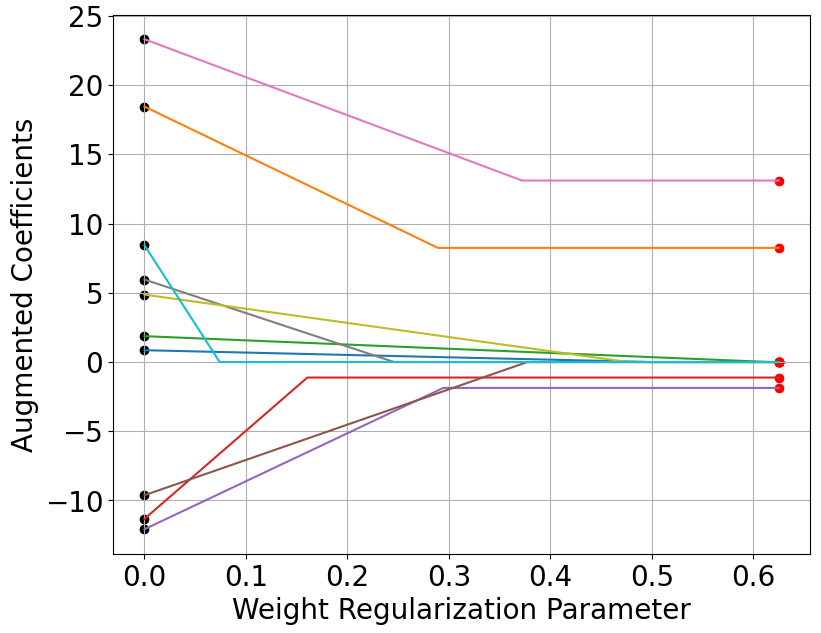}
        \caption{Augmented model}
    \end{subfigure}
    \caption{
    Regularization paths for ``double lasso'' augmented $\ell_\infty$ balancing weights. Panel (a) shows the coefficients $\hat{\beta}_{\text{reg}}^\lambda$ of a lasso regression of $Y_p$ on $\Phi_p$ with hyperparameter $\lambda$.
The black dots on the left are the OLS coefficients, with $\lambda = 0$. The red dots at $\lambda = 0.2$ illustrate the coefficients at a plausible hyperparameter value, $\hat{\beta}_{\text{reg}}^{0.2}$. 
Panel (b) shows re-weighted covariates, $\hat{\Phi}_q^\delta$, for the $\ell_\infty$ balancing weights problem with hyperparameter $\delta$; the black dots show exact balance, which corresponds to OLS. As $\delta$ increases, the weights converge to uniform weights and  $\hat{\Phi}_q^\delta$ converges to $\overline{\Phi}_p$, which we have centered at zero.
Panel (c) shows the augmented coefficients, $\hat{\beta}_{\ell_\infty}$ as a function of the weight regularization parameter $\delta$. The black dots on the left are the OLS coefficients. As $\delta \to \infty$, the coefficients converge to $\hat{\beta}_{\text{reg}}^{0.2}$.
All three regularization paths show the typical lasso ``soft thresholding'' behavior. The regularization path for the augmented estimator also shows ``double selection'' behavior.
    }
    \label{fig:ortho-linf-regpath}
\end{figure}

\subsection{Lasso outcome model}

In the case where $\hat{\beta}_\text{reg}^\lambda$ is itself fit via lasso, as studied in \cite{chernozhukov2022automatic}, then we recover a familiar double selection phenomenon \citep{belloni2014inference}. 

\begin{proposition}[Double Selection]\label{prop:double-selection}
    Let $\hat{\beta}_\text{lasso}^\lambda$ denote the coefficients of lasso regression of $Y_p$ on $\Phi_p$  with regularization parameter $\lambda$. Denote the indices of the non-zero coefficients as $I_\lambda$. Let $\hat{w}_{\ell_\infty}^\delta$ be $\ell_\infty$ balancing weights with parameter $\delta$ as in \Cref{orthonormalellinfty}. Let $I_\delta$ denote the non-zero entries of the reweighted covariates $\hat{\Phi}_q$. Assume that $\hat{\beta}_\text{ols}$ is dense. Then the indices of the non-zero entries of the augmented coefficients $\hat{\beta}_{\ell_\infty}$ are
    $ I_\text{aug} = I_\lambda \cup I_\delta. $
\end{proposition}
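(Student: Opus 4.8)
The plan is to combine the two closed forms already established---Proposition~\ref{orthonormalellinfty} for the reweighted features and Proposition~\ref{linfaugment} for the augmented coefficients---and then argue coordinatewise, since the diagonal assumption on $\Phi_p^\top\Phi_p$ makes every object separable across features. First I would record the two ingredients I need. From Proposition~\ref{orthonormalellinfty} the reweighted features satisfy $\hat{\Phi}_{q,j}^\delta = \mathcal{T}_\delta(\Delta_j)$, so $I_\delta = \{\, j : |\Delta_j| > \delta \,\}$ is exactly the set of coordinates on which the soft-threshold does not vanish; equivalently, $I_\delta$ is precisely the set of indices landing in the second (``otherwise'') branch of the closed form for $\hat{\beta}_{\ell_\infty,j}$ in Proposition~\ref{linfaugment}, and its complement is the first branch. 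Second, in the diagonal case the lasso coefficients are a coordinatewise soft-threshold of the OLS coefficients (as displayed just before Proposition~\ref{linfaugment}), and the only structural consequence I will use is that soft-thresholding never flips a sign, so $\sign(\hat{\beta}_{\text{lasso},j}^\lambda) = \sign(\hat{\beta}_{\text{ols},j})$ whenever $j \in I_\lambda$.

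Next I would partition the indices into the four cells of the $I_\lambda \times I_\delta$ membership table and evaluate $\hat{\beta}_{\ell_\infty,j}$ in each via Proposition~\ref{linfaugment}. Three cells are immediate. If $j \notin I_\delta$ then $\hat{\beta}_{\ell_\infty,j} = \hat{\beta}_{\text{lasso},j}^\lambda$, which is nonzero exactly when $j \in I_\lambda$; this simultaneously settles $j \in I_\lambda \setminus I_\delta$ (nonzero) and $j \notin I_\lambda \cup I_\delta$ (zero). If $j \in I_\delta \setminus I_\lambda$, then $\hat{\beta}_{\text{lasso},j}^\lambda = 0$ and the closed form collapses to $\hat{\beta}_{\ell_\infty,j} = (1 - |\delta/\Delta_j|)\,\hat{\beta}_{\text{ols},j}$, where $|\delta/\Delta_j| \in (0,1)$ because $|\Delta_j| > \delta > 0$; this is nonzero precisely because $\hat{\beta}_{\text{ols}}$ is assumed dense. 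This is exactly where the density hypothesis bites: without it, a coordinate in $I_\delta$ with vanishing OLS (hence vanishing lasso) coefficient would give $\hat{\beta}_{\ell_\infty,j} = 0$ and break the claimed identity.

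The one remaining cell, $j \in I_\lambda \cap I_\delta$, is the crux and the only place cancellation could occur. Here $\hat{\beta}_{\ell_\infty,j} = a\,\hat{\beta}_{\text{lasso},j}^\lambda + (1-a)\,\hat{\beta}_{\text{ols},j}$ with $a = |\delta/\Delta_j| \in (0,1)$ a genuine strict convex weight, and both summands are nonzero. A priori a convex combination of two nonzero numbers can vanish, so I must rule this out, and the sign-agreement fact above is exactly what does it: since $j \in I_\lambda$, the lasso and OLS coefficients share a common sign, so $a\,\hat{\beta}_{\text{lasso},j}^\lambda$ and $(1-a)\,\hat{\beta}_{\text{ols},j}$ are nonzero numbers of the same sign, cannot cancel, and sum to something nonzero (indeed of that same sign). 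Hence $\hat{\beta}_{\ell_\infty,j} \neq 0$ on $I_\lambda \cap I_\delta$. Assembling the four cells yields $\hat{\beta}_{\ell_\infty,j} \neq 0 \iff j \in I_\lambda \cup I_\delta$, i.e.\ $I_\text{aug} = I_\lambda \cup I_\delta$. I expect the non-cancellation step to be the only real obstacle; everything else is bookkeeping over the branches of Proposition~\ref{linfaugment}, and the substantive content is recognizing that the shared-sign structure of lasso and OLS under the diagonal design is what prevents the convex combination from collapsing to zero.
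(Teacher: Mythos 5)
Your proof is correct, and it follows the same basic route as the paper: the paper's entire proof is the single sentence ``The result follows immediately from \Cref{linfaugment},'' i.e.\ casework on the two branches of the closed form, which is exactly your bookkeeping over the $I_\lambda \times I_\delta$ table. The difference is that you identify and actually close a step the paper treats as immediate: on $I_\lambda \cap I_\delta$ the augmented coefficient is a strict convex combination of two nonzero numbers, and such a combination can in general vanish, so ``nonzero $\cup$ nonzero $\Rightarrow$ nonzero'' is not automatic. Your resolution --- that in the diagonal design the lasso coefficients are a coordinatewise soft-threshold of the OLS coefficients (as displayed just before \Cref{linfaugment}), so $\hat{\beta}_{\text{lasso},j}^\lambda$ and $\hat{\beta}_{\text{ols},j}$ share a sign whenever $j \in I_\lambda$ and therefore cannot cancel --- is precisely the missing ingredient, and it is also where the hypotheses (diagonal covariance, lasso rather than an arbitrary base learner) genuinely enter; note that for an \emph{arbitrary} $\hat{\beta}^\lambda_{\text{reg}}$ the statement would be false, since one can choose $\hat{\beta}_{\text{reg},j}^\lambda = -\bigl(1-\lvert\delta/\Delta_j\rvert\bigr)\hat{\beta}_{\text{ols},j}/\lvert\delta/\Delta_j\rvert \neq 0$ and get exact cancellation. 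You likewise correctly isolate where density of $\hat{\beta}_{\text{ols}}$ is needed (the cell $I_\delta \setminus I_\lambda$). In short: same decomposition as the paper, but your write-up supplies the non-cancellation argument that the paper's ``immediately'' elides, and that argument is the only substantive content of the proposition.
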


The lasso coefficients have a sparsity pattern generated by soft-thresholding the OLS coefficients. The augmented estimator then shrinks from OLS toward $\hat{\beta}_\text{reg}^\lambda$ by soft-thresholding the implied feature shift to the target features. As a result, wherever the lasso coefficients are non-zero \emph{or} the weight coefficients are non-zero, the final augmented coefficients are also non-zero. The ``included coefficients'' for the final estimator are then the union of the coefficients included in either individual model. 
Therefore, augmenting a lasso outcome model with $\ell_\infty$ balancing also exhibits a form of undersmoothing in the $\ell_0$ ``norm'', $\| \hat{\beta}_{\ell_\infty}\|_0$, in the sense that there are always at least as many non-zero coefficients as for the unaugmented lasso outcome model. However, this will not correspond to undersmoothing the base learner in the traditional sense, because in general there will not exist a lasso hyperparameter $\lambda$ that will produce sparsity pattern $I_\text{aug}$.

As noted by, for example, \citet{tang2023ultra}, the double selection estimator may suffer from imprecision due to adjustment for covariates that are associated with treatment but not outcome. One could in principle remove covariates that are only predictive of the treatment, but this can jeopardize statistical inference. We refer to \citet{moosavi2023costs} for a discussion on navigating this trade-off.

\section{Kernel Ridge Regression: Asymptotic and Finite Sample Analysis}
\label{sec:kernel_ridge_theory}

The results above are \emph{numerical}: they hold without any statistical or causal assumptions. However, the connection between augmented estimators and outcome models also presents \emph{statistical} insights that we discuss here. 
In particular, we leverage the numerical result that double (kernel) ridge regression --- which uses ridge regression for fitting both the outcome and weighting models --- is equivalent to a single, undersmoothed outcome ridge regression plug-in estimator.

First, we consider an asymptotic analysis in \Cref{sec:kernel_ridge_asymototics}: we use this equivalence to make explicit the connection between asymptotic results for augmented balancing weights with kernel ridge regression and prior results on optimal undersmoothing of a kernel ridge plug-in estimator.
As a result, optimally undersmoothed kernel ridge regression inherits whatever guarantees can be proven for augmented ridge regression. An implication is that we can generalize the insight from \citet{robins2007comment} that OLS is doubly robust to a wider class of non-parametric estimators.
This equivalence also suggests an appropriate hyperparameter scheme when the outcome regression is an element of an RKHS.

Second, we consider a finite sample analysis in \Cref{sec:finite-sample-mse}: we use this equivalence to derive the finite-sample design-conditional mean squared error of augmented kernel ridge regression. We then use this expression to characterize finite-sample-optimal hyperparameter tuning. We turn to hyperparameter tuning in practice in the next section.

\subsection{Asymptotic Results}
\label{sec:kernel_ridge_asymototics}

We now use our results in \Cref{prop:double_ridge} to make explicit the connection between two otherwise distinct sets of asymptotic results. First, \citet{wong2018kernel} and \citet{singh2021debiased} argue that double kernel ridge regression can deliver $\sqrt{n}$-consistent estimation of functionals in certain scenarios. \citet{wong2018kernel} also proposes an optimally undersmoothed $\ell_2$ balancing weights estimator.
Separately, \citet{hirshberg2019minimax} and \citet{mou2023kernel} propose optimally undersmoothed (single) kernel ridge outcome regression. 
Since, as we have shown in \Cref{prop:double_ridge} (see also Remark 2), these three procedures are equivalent, we can connect these results and show that plug-in estimators based on optimally undersmoothed kernel ridge regression or $\ell_2$ balancing weights can be $\sqrt{n}$-consistent. Moveover, results on RKHSs suggest a simple heuristic for hyperparameter choice. We give the high-level argument here and defer additional technical details to Appendix \ref{sec:asymptotics_appendix}.

To move from numerical results to statistical results, we must place some constraints on the data generating process. Assume that we observe $n$ iid samples of $(x_i,y_i,z_i)$ from $p$. Define $K \in \mathbb{R}^{n \times n}$ to be the kernel matrix with $i,j$-th entry $K_{ij} = k( (x_i, z_i), (x_j, z_j))$. Let $\sigma_j^2$ denote the eigenvalues of $K$. We assume that $\sigma_j^2 = \sigma^2>0$ is constant for all $j$; we can relax this at the cost of additional complexity. The ``single'' kernel ridge regression outcome regression estimator with parameter $\lambda$ has coefficient estimates: 
\[ \hat{\beta}_\text{ridge}^\lambda = (K + \lambda I)^{-1} y. \]
Applying \Cref{prop:double_ridge}, the augmented ``double kernel ridge'' estimator with hyperparameter $\delta$ is equivalent to a plug-in estimate for a new kernel ridge model:
\[  \hat{\beta}_\text{aug} = (K + \gamma I)^{-1} y, \qquad \mbox{with} \;\; \gamma = \frac{\lambda \delta}{\sigma^2 + \lambda + \delta}. \]
For statistical guarantees, we must typically allow the hyperparameters to change with $n$; let $\gamma_n$, $\lambda_n$ and $\delta_n$ then denote sequences of hyperparameters. In the doubly robust framework, one can choose $\lambda_n$ and $\delta_n$ in a way that is MSE-optimal for prediction purposes whilst ensuring that the bias of the augmented estimator is small. For two functions of $n$, $f_n$ and $g_n$, let $f_n \asymp g_n$ denote that $f_n = O(g_n)$ and $g_n = O(f_n)$. Then due to special properties of RKHS geometry, it follows  that $\delta_n$ can be of the same order as $\lambda_n$, that is $\delta_n \asymp \lambda_n$ \citep[][Theorem 5.2]{singh2020kernel}. In the next section, we consider setting $\delta = \lambda$ for hyperparameter tuning in practice; our \Cref{prop:double_ridge} then implies that  $\gamma_n\asymp \lambda_n^2$. We note more generally that \Cref{prop:double_ridge} implies that  $\gamma_n\asymp \lambda_n \delta_n$.

There are two important cases to consider.
When the RKHS is finite dimensional, the choice $\lambda_n = \delta_n = n^{-1/2}$ is optimal for controlling the prediction error for both the outcome and weighting models \citep{caponnetto2007optimal,singh2020kernel}. The augmented estimator is then equivalent to a single ridge regression with hyperparameter $\gamma_n \asymp n^{-1}$, which matches the rate of \citet{hirshberg2019minimax,mou2023kernel}.
Hence, this approach will always undersmooth relative to the MSE-optimal hyperparameter for a single ridge regression.

When the RKHS is infinite-dimensional, we find that the undersmoothed hyperparameter implied by the augmented procedure can take on a range of asymptotic rates, both faster and slower than $n^{-1}$, depending on effective dimension and smoothness; we give concrete examples in the Appendix. This somewhat contrasts with the results in \citet{hirshberg2019minimax,mou2023kernel}. 
In this sense, \Cref{prop:double_ridge} generalizes the standard undersmoothing arguments, which typically change the regularization schedule from $n^{-1/2}$ to $n^{-1}$. 

\begin{remark}[Single-model double robustness]
    Another interesting implication of the equivalence of these two procedures is that the single kernel ridge procedure is doubly robust, much the same way OLS is. Because estimating the coefficients from an OLS regression of $Y$ onto features of $(Z,X)$ is equivalent to a balancing weights or an IPW estimator based on a model for the inverse weights that is linear in the same features, this procedure is consistent whenever \emph{either} the weights or the outcome model is truly linear--that is, whenever either of these two linear models is correctly specified \citep{robins2007comment}. Similarly, the single kernel ridge procedure is doubly robust in that it is consistent if either the true outcome regression or the inverse propensity score is consistently estimated. However, valid inference in the case where the inverse weight model but \emph{not} the outcome model is truly linear will typically require different tuning parameter selection. 
\end{remark}

\subsection{Finite Sample Mean-Squared Error}\label{sec:finite-sample-mse}

We now use our numerical equivalences to write out the exact finite-sample mean squared error of the augmented kernel ridge estimator: by re-writing the augmented balancing weights estimator as a single outcome model, we can immediately leverage existing results from \citet{dobriban2018high}.

Following their setup, we define the diagonal matrix $\hat{\Sigma} \coloneqq \tfrac{1}{n} \Phi_p^T\Phi_p$; if $\hat{\Sigma}$ is not diagonal, we can apply the rotation in  \Cref{apx:l2-rot-wlog}. We consider ridge regression with rescaled hyperparameter $\lambda$ and solution $(\hat{\Sigma} + \lambda I)^{-1} \Phi_p Y_p/n$; this is equivalent to standard ridge regression above with hyperparameter $n \lambda$, and also accommodates kernel ridge regression with appropriate choice of $\Phi_p$. 
Assume that $Y_p = \Phi_p \beta_0 + \epsilon$ with $\beta_0 \in \mathbb{R}^d$, and where $\epsilon \in \mathbb{R}^n$ are iid with mean zero and variance $\sigma^2$. Then the exact, design-conditional, squared bias and variance of the ridge regression prediction applied to a new iid sample $(\Phi_\text{new} , Y_\text{new}) \sim p$ are: 
\begin{align*}
    B^2_{\text{p}}( \lambda ) &= \lambda^2 \beta_0 ^T (\hat{\Sigma} + \lambda I)^{-1} \mathbb{E}[\Phi_p^T \Phi_p] (\hat{\Sigma} + \lambda I)^{-1} \beta_0\\
    V_{\text{p}}( \lambda ) &= \frac{\sigma^2}{n} \text{tr} \left[ \hat{\Sigma} (\hat{\Sigma} + \lambda I)^{-1} \mathbb{E}[\Phi_p^T \Phi_p] (\hat{\Sigma} + \lambda I)^{-1} \right].
\end{align*}
Applying \Cref{prop:double_ridge}, we can simlarly derive the squared bias and variance of an augmented ridge estimator for our linear functional estimand; 
we denote these quantities $B_q^2$ and $V_q$ respectively. We express the bias and variance in terms of the two hyperparameters, $\lambda$ and $\delta$:
\begin{proposition}\label{prop:finite-sample-mse}
 Let $\sigma^2_j$ denote the eigenvalues of $\hat{\Sigma}$ and define $\Gamma_{\lambda, \delta}$ to be the diagonal matrix with non-zero entries $\gamma_j \coloneqq \frac{\delta \lambda}{\sigma^2_j + \delta + \lambda}$. Then,
\begin{align*}
    B^2_{\text{q}}( \lambda, \delta ) &= \beta_0 ^T (\hat{\Sigma} + \Gamma_{\lambda, \delta})^{-1}  \Gamma_{\lambda, \delta} \mathbb{E}[\Phi_q]^T \mathbb{E}[\Phi_q]  \Gamma_{\lambda, \delta} (\hat{\Sigma} +  \Gamma_{\lambda, \delta})^{-1} \beta_0\\
    V_{\text{q}}( \lambda, \delta) &= \frac{\sigma^2}{n} \text{tr} \left[ \hat{\Sigma} (\hat{\Sigma} + \Gamma_{\lambda, \delta})^{-1} \mathbb{E}[\Phi_q]^T \mathbb{E}[\Phi_q] (\hat{\Sigma} +  \Gamma_{\lambda, \delta})^{-1} \right].
\end{align*}
\end{proposition}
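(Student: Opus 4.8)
The plan is to reduce the augmented (``double ridge'') estimator to a single generalized ridge regression via \Cref{prop:double_ridge}, and then compute its design-conditional bias and variance directly, treating the given single-ridge formulas $B_p^2(\lambda)$ and $V_p(\lambda)$ as the template: the scalar penalty $\lambda$ is promoted to the diagonal matrix $\Gamma_{\lambda,\delta}$, and the out-of-sample prediction target $\mathbb{E}[\Phi_p^\top\Phi_p]$ is replaced by the linear-functional target $\mathbb{E}[\Phi_q]^\top\mathbb{E}[\Phi_q]$.

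First I would invoke \Cref{prop:double_ridge}: since the outcome model is standard ridge with penalty $\lambda$ (so $\lambda_j=\lambda$ for all $j$), the augmented estimator equals the generalized ridge estimator $\hat\beta^{\Gamma_{\lambda,\delta}} = (\hat\Sigma + \Gamma_{\lambda,\delta})^{-1}\tfrac1n\Phi_p^\top Y_p$ with diagonal penalty $\gamma_j = \frac{\delta\lambda}{\sigma_j^2+\delta+\lambda}$. The estimator for the functional is then $\hat\psi = \mathbb{E}[\Phi_q]\,\hat\beta^{\Gamma_{\lambda,\delta}}$, with estimand $\psi = \mathbb{E}[\Phi_q]\beta_0$.

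Next, substituting $Y_p = \Phi_p\beta_0 + \epsilon$ gives $\tfrac1n\Phi_p^\top Y_p = \hat\Sigma\beta_0 + \tfrac1n\Phi_p^\top\epsilon$, so conditional on the design the estimator splits into a deterministic term and a mean-zero noise term. The conditional bias is $\mathbb{E}[\Phi_q]\big[(\hat\Sigma+\Gamma_{\lambda,\delta})^{-1}\hat\Sigma - I\big]\beta_0 = -\mathbb{E}[\Phi_q](\hat\Sigma+\Gamma_{\lambda,\delta})^{-1}\Gamma_{\lambda,\delta}\beta_0$; squaring this scalar and using that $\hat\Sigma$ and $\Gamma_{\lambda,\delta}$ are diagonal (hence commute) yields exactly $B_q^2(\lambda,\delta)$, with the scalar $\lambda^2$ of the $p$-formula replaced by a factor of $\Gamma_{\lambda,\delta}$ on each side. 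For the variance, the noise term $\mathbb{E}[\Phi_q](\hat\Sigma+\Gamma_{\lambda,\delta})^{-1}\tfrac1n\Phi_p^\top\epsilon$ has conditional variance $\frac{\sigma^2}{n^2}\mathbb{E}[\Phi_q](\hat\Sigma+\Gamma_{\lambda,\delta})^{-1}\Phi_p^\top\Phi_p(\hat\Sigma+\Gamma_{\lambda,\delta})^{-1}\mathbb{E}[\Phi_q]^\top$; substituting $\Phi_p^\top\Phi_p = n\hat\Sigma$ and rewriting the resulting scalar as a trace via the cyclic property gives $V_q(\lambda,\delta)$.

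The algebra here is short; the main obstacle is conceptual rather than computational. I must set up the bias-variance accounting for the $q$-side (a single scalar linear functional evaluated at the deterministic target $\mathbb{E}[\Phi_q]$) consistently with the design-conditioning used for the $p$-side prediction formulas, so that the substitution $\mathbb{E}[\Phi_p^\top\Phi_p]\mapsto\mathbb{E}[\Phi_q]^\top\mathbb{E}[\Phi_q]$ is justified: the former aggregates over a fresh test point, whereas the latter is the rank-one outer product of the fixed target-feature mean. I would therefore state precisely which quantities are held fixed (the training design and the target profile) versus integrated out. A secondary point to verify is that promoting the scalar penalty to the matrix $\Gamma_{\lambda,\delta}$ preserves the clean factored form, which relies entirely on the diagonality of $\hat\Sigma$ (the general case following by the rotation in \Cref{apx:l2-rot-wlog}).
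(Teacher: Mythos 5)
Your proposal is correct and follows essentially the same route as the paper: invoke \Cref{prop:double_ridge} to collapse the augmented estimator into a single generalized ridge estimator with penalty matrix $\Gamma_{\lambda,\delta}$, then compute the design-conditional bias and variance by substituting $Y_p = \Phi_p\beta_0 + \epsilon$, exactly as the paper does in adapting the single-ridge expressions of \citet{dobriban2018high} with $\mathbb{E}[\Phi_p^\top\Phi_p]$ replaced by the rank-one target $\mathbb{E}[\Phi_q]^\top\mathbb{E}[\Phi_q]$. Your attention to the two subtleties --- diagonality of $\hat\Sigma$ (needed both for \Cref{prop:double_ridge} and for commuting $\Gamma_{\lambda,\delta}$ with the resolvent) and the fixed-target versus fresh-test-point distinction --- matches the paper's own framing.
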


In the next section, we compare --- numerically and via simulation --- existing hyperparameter selection schemes to the optimal trade-off between $B_q^2$ and $V_q$. However, first we note that the analysis above opens up exciting new avenues for both theoretical and methodological work. One could theoretically analyze the mean squared error to understand how the optimal $\delta$ scales with the problem parameters; for example, by using proportionate asymptotics from random matrix theory as in the high-dimensional ridge regression literature \citep{hastie2022surprises}. Second, our analysis here suggests a novel, more complex hyperparameter selection scheme based directly on the finite sample analysis. We leave this to future work.

\section{Numerical illustrations and hyperparameter tuning}
\label{sec:numerical_illustrations}

This section illustrates our results in practice. We first explore hyperparameter tuning for double ridge regression, comparing practical methods to the optimal hyperparameter computed using our results from \Cref{prop:finite-sample-mse}. Following our asymptotic results in \ref{sec:kernel_ridge_asymototics}, we recommend equating the weighting and outcome model hyperparameters in practice.
We then apply both double ridge and lasso-augmented $\ell_\infty$-balancing to two versions of the canonical \citet{lalonde1986evaluating} application. 
An important theme throughout is that some approaches for hyperparameter selection frequently lead to $\delta = 0$, which collapses the augmented estimate to OLS alone --- even in settings where this is far from optimal. 
Overall, we take this as a warning that existing hyperparameter tuning schemes can be potentially misleading when applied naively.

\subsection{Hyperparameter tuning for ridge-augmented $\ell_2$ balancing}
\label{sec:hyperpar_tuning_sim}

We begin with practical hyperparameter tuning for the special case of double ridge, building on the MSE expression in Section \ref{sec:finite-sample-mse}. There is an active literature on selecting hyperparameters for augmented balancing weights estimators and double machine learning estimators more broadly \citep{kallus2020generalized, wang2020minimal, ben2021balancing, bach2024hyperparameter}. We contribute to this literature by comparing practical hyperparameter tuning schemes with an oracle hyperparameter tuning scheme based on \Cref{prop:finite-sample-mse}. 

Reflecting empirical practice, we focus here on choosing hyperparameters sequentially: we first select the outcome model hyperparameter $\lambda$ (e.g. by cross-validation) and then select the weighting model hyperparameter $\delta$. Ultimately, we find strong performance for both \emph{CV imbalance} and \emph{CV outcome} hyperparameters, as defined below.
We especially recommend the latter as a reasonable starting point in practice. 
In additional to theoretical support from our asymptotic analysis, the outcome model hyperparameter scheme does not require any additional algorithm or code after having fit the initial outcome model. 

\subsubsection{Oracle and practical hyperparameter tuning}

\paragraph{Oracle hyperparameter.} 

To compute oracle hyperparameters, we first compute the prediction-MSE-optimal $\lambda$ using the standard ridge regression MSE expression, and then we use \Cref{prop:finite-sample-mse} to compute the corresponding optimal $\delta$ for the linear functional estimand: 
\begin{align*}
    \lambda^* &\coloneqq \text{argmax}_{\lambda} \{ B^2_p(\lambda) + V_p(\lambda) \}\\
    \delta^* &\coloneqq \text{argmax}_{\delta} \{ B^2_q(\lambda^*, \delta) + V_q(\lambda^*, \delta) \}.
\end{align*}
While there is not a closed form for $\delta^*$, we can nonetheless directly compute this optimal hyperparameter and characterize its behavior under a range of scenarios.
We draw several conclusions about optimal $\delta^*$ for a wide range of DGPs of the form $Y_p = \Phi_p \beta_0 + \epsilon$.
First, $\delta^*$ is generally increasing in the noise, $\sigma^2$: larger $\sigma^2$ typically implies larger $\delta^*$.
Second, $\delta^*$ generally depends on the target mean, $\mathbb{E}[\Phi_q]$; that is, two DGPs that are identical except for $\mathbb{E}[\Phi_q]$ can have different values of $\delta^*$. 
The optimal hyperparameter, however, does \emph{not} depend on the magnitude of the shift in the target mean: 
replacing $\mathbb{E}[\Phi_q]$ with $c \mathbb{E}[\Phi_q]$ for $c \neq 0$, scales both the bias and variance by $c^2$, leaving $\delta^*$ unchanged.

\paragraph{Practical hyperparameter.}
We compare the oracle hyperparameter with three implementable practical proposals. In all cases, we first pick $\lambda$ by cross-validating the mean squared error of a ridge outcome model. 
\begin{itemize}
    \item \emph{CV imbalance.} Choose $\delta$ by cross-validating the estimated imbalance, $\Vert \tfrac{1}{n} \hat{w} \Phi_p - \bar{\Phi}_q \Vert_2^2$ , adapting a proposal from \citet{wang2020minimal}.
    \item \emph{CV Riesz loss.} Choose $\delta$ by cross-validating the Riesz loss in Equation \eqref{eq:autoform}, adapting a proposal from \citet{chernozhukov2022automatic}; this is the dual form of cross-validating the estimated imbalance.
    \item \emph{CV outcome.} Choose $\delta$ to be equal to the cross-validated ridge outcome $\lambda$, as inspired by the asymptotic theory in \citet{singh2020kernel}.
\end{itemize}

Before presenting simulation results, we provide a preliminary analytic discussion, comparing these practical schemes to the behavior of the oracle $\delta^*$.  For the first two proposals: just like the oracle, both depend on the target mean $\mathbb{E}[\Phi_q]$ and are invariant to re-scaling. However, these two approaches are mechanically independent of the outcomes $Y_p$, unlike the oracle $\delta^*$ which, in general, depends on the variance of the outcomes. On the other hand, the last proposal depends on the outcomes $Y_p$ but is mechanically independent of $\mathbb{E}[\Phi_q]$. 

This suggests that any one of these tuning parameter approaches cannot perform well across all DGPs. In future work, if we pursue a theoretical analysis of the oracle hyperparameter, e.g. in a proportionate asymptotics framework, we may be able predict when either the outcomes or the covariate shift is more important. In this work we begin by demonstrating that no one tuning scheme does uniformly best in simulations. 

\subsubsection{Simulation study}

\begin{table}[tb]
    \centering
    \begin{tabular}{ccccccccc}
        \hline
         & 
        \multicolumn{2}{c}{\# of DGPs} & & 
        \multicolumn{3}{c}{Relative MSE} && \\
        \cline{2-3} \cline{5-7} Method & Best & Worst & & Median & Best & Worst  & & $\text{Prop.}(\delta = 0)$\\ \hline
        CV Outcome & 10 & 3 & & 0.58 & 0.097 & $2 \times 10^5$  & & 0 \\ 
        CV Imbalance & 25 & 2 & & 0.39 & 0.001 & $2 \times 10^5$ & & 0\\  
        CV Riesz Loss & 1 & 31 & & 3,454 & 0.23 & $3 \times 10^7$ & & 0.56 \\ \hline  
    \end{tabular}
    \caption{\label{tab:sim_results}Mean-squared error (relative to the oracle) for four hyperparameter selection methods for \emph{double ridge regression} from a numerical investigation of 36 data generating processes (30 synthetic and 6 semi-synthetic). The final column is the proportion of draws where the hyperparameter $\delta = 0$.}
\end{table}

To assess the behavior of these hyperparameter tuning schemes, we conduct an extensive simulation study using 36 distinct data-generating processes, 30 synthetic and 6 semi-synthetic; see \Cref{apx:simulation-details} for a detailed discussion. 
For each DGP, we directly compute the oracle hyperparameter using the results in Section \ref{sec:finite-sample-mse}. We then compute values from the three practical hyperparameter tuning methods discussed above. The mean squared error that we consider is design-conditional, and so we draw samples of the covariates for each DGP only once. 

Table \ref{tab:sim_results} presents a summary of the MSE for the three methods across the 36 DGPs. Overall, we find that the \emph{CV outcome} approach of choosing $\delta = \lambda$ and the \emph{CV imbalance} approach both perform well in practice: these two achieve the lowest MSE in 35 of the 36 DGPs, with CV imbalance performing slightly better on average. By contrast, selecting $\delta$ via CV for the Riesz loss has numerical stability problems that compromises performance. The performance for the \emph{outcome} and \emph{balance} approaches, on the other hand, seem to degrade gracefully and rarely perform catastrophically. 
Taken together, these preliminary findings suggest researchers should begin with these two tuning methods as defaults.

\paragraph{Recovering the OLS point estimate.} As we discuss above (see, e.g., \Cref{fig:regpath}), when $\delta = 0$ the point estimate for the augmented balancing weights estimator is numerically identical to the OLS point estimate.
Thus, when a hyperparameter tuning procedure chooses $\delta = 0$ in practice, researchers are simply estimating the equivalent of OLS --- even if they are unaware they are doing so.
This is especially problematic in settings where OLS is far from optimal \citep[though see][for counterexamples]{kobak2020optimal, hastie2022surprises}. In our synthetic and semi-synthetic DGPs, $\delta = 0$ is never optimal, and is usually associated with a very large error driven by extreme variance --- see for example, \Cref{fig:bias-var-plot} in the Appendix. Thus the fact that hyperparameter tuning procedures can return $\delta = 0$ in these DGPs represents a pathological case. 

In our simulation study, we find that, when cross validating the Riesz loss,  over half of all draws returned $\delta = 0$. By contrast, none of the other methods returned $\delta = 0$ in the synthetic DGPs, though, as we discuss below, we do observe exact zeros for $\delta$ occasionally when cross-validating imbalance in the standard LaLonde dataset. This further highlights the numeric instability of hyperparameter tuning via CV for the Riesz loss, at least in the settings we consider here. We further suggest that in these cases, practitioners assess the sensitivity of the $\delta = 0$ results to the particular tuning procedure used or to the random choice of cross-validation splits.

\subsection{Application to \citet{lalonde1986evaluating}} \label{sec:numerical_illustration}

We now illustrate our equivalence and hyperparameter tuning results on real-world datasets. Following \citet{chernozhukov2022automatic}, we focus on the canonical \citet{lalonde1986evaluating} data set evaluating a job training program in the National Supported Work (NSW) Demonstration. The primary outcome of interest is annual earnings in 1978 dollars.

For these illustrations, we estimate the Average Treatment Effect on the Treated (ATT), $\mathbb{E}[Y(1) - Y(0) \mid Z = 1]$. We recover the missing conditional mean $\mathbb{E}[Y(0) \mid Z = 1]$ using the setup from Example \ref{ex:dist_shift} in Appendix \ref{sec:examples_appendix}, where the source and target populations are the control and treated units respectively. Thus $\Phi_p$ and $\Phi_q$ correspond to the feature expansion $\phi(X)$ applied to the covariates in the control group and treated group respectively. 
We consider two different features expansions of the original covariates: (1) a ``short'' set of 11 covariates used in \cite{dehejia1999causal};\footnote{These are: age, years of education, Black indicator, Hispanic indicator, married indicator, 1974 earnings, 1975 earnings, age squared, years of education squared, 1974 earnings squared, and 1975 earnings squared.} and (2) an expanded, ``long'' set of 171 interacted features used in \citet{farrell2015robust}.

Our goal is to explicate how augmented estimators under different hyperparameter tuning schemes undersmooth in practice in both low and high-dimensional settings. In some cases,  the augmented estimator collapses to exactly OLS as we document above. 
\Cref{apx:numerical-experiment-details} contains extensive additional analyses, including dataset summaries, additional results from the Infant Health Development Program (IHDP), and sensitivity of these numerical results to cross-fitting.

\subsubsection{High-dimensional setting}\label{sec:numerical-high-dim-results}

Following \citet{chernozhukov2022automatic}, we first consider the expanded set of 171 features for \citet{lalonde1986evaluating} used in \cite{farrell2015robust}. \Cref{fig:lalonde-long} shows estimates for ridge-augmented $\ell_2$ balancing (top row) and lasso-augmented $\ell_\infty$ balancing (bottom row). We explicitly characterize these results in terms of undersmoothing in \Cref{apx:undersmooth-lalonde}. The left two panels of each row show the cross-validation curves for the outcome regression and balancing weights, respectively. 
The right panels show the point estimate as a function of the weighting hyperparamter $\delta$, holding the outcome model hyperparameter $\lambda$ fixed; the black triangle represents the OLS plug-in point estimate. For context, the corresponding experimental estimate is \$1,794 \citep[see][]{dehejia1999causal}.
The green and red dotted lines correspond to hyperparameters chosen by cross-validating balance and the Riesz loss, respectively. For the double ridge estimate, the purple line corresponds to $\delta = \hat{\lambda}$, the outcome hyperparameter selected via cross validation. 

\Cref{fig:lalonde-long} highlights that both the imbalance and the point estimate are highly nonlinear close to zero. Thus, even small departures from OLS (at $\delta = 0$) lead to large changes in the point estimate --- in \Cref{apx:semisynth-biasvsvar} we give some suggestive evidence that the variance blows up relative to the bias in this range. We can also assess the sensitivity of the point estimate to the hyperparameter selection scheme. In this case, choosing $\delta$ via CV balance leads to meaningfully larger choices than via other methods. 

Finally, the selected $\delta$ is always strictly greater than zero for this high-dimensional dataset. However, we find this is sensitive to small perturbations in the problem parameters. For example, when we perturb $\mathbb{E}[\Phi_q]$ by adding a small value to all the even elements, then the cross-validated $\ell_2$ Riesz loss chooses $\delta = 0$ in 38\% of draws of the cross-validation splits. As suggested by \Cref{apx:semisynth-biasvsvar} and our simulation results, this is likely to result in extremely large mean squared error.

\begin{figure}[htb!]
    \centering
    \begin{subfigure}{0.32\textwidth}
        \centering 
        \includegraphics[width=\textwidth]{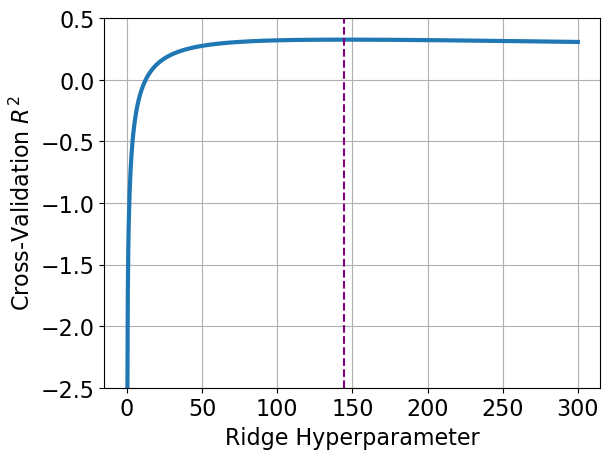}
        \caption{Ridge outcome model}
            \label{fig:lalonde-long-ridge-panel-a}
    \end{subfigure}
    \begin{subfigure}{0.325\textwidth}
        \centering 
        \includegraphics[width=\textwidth]{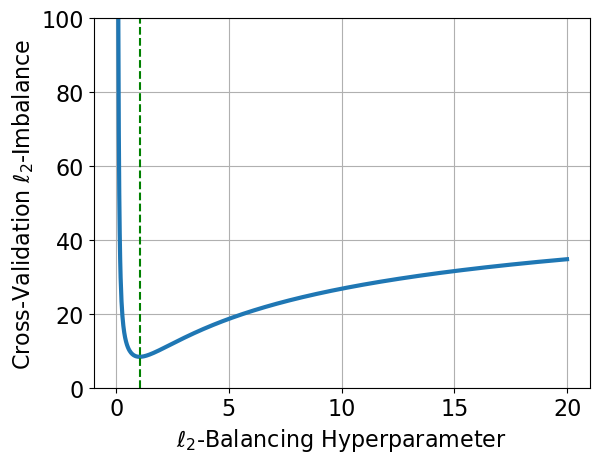}
        \caption{$\ell_2$ balancing}
         \label{fig:lalonde-long-ridge-panel-b}
    \end{subfigure}
        \begin{subfigure}{0.31\textwidth}
        \centering 
        \includegraphics[width=\textwidth]{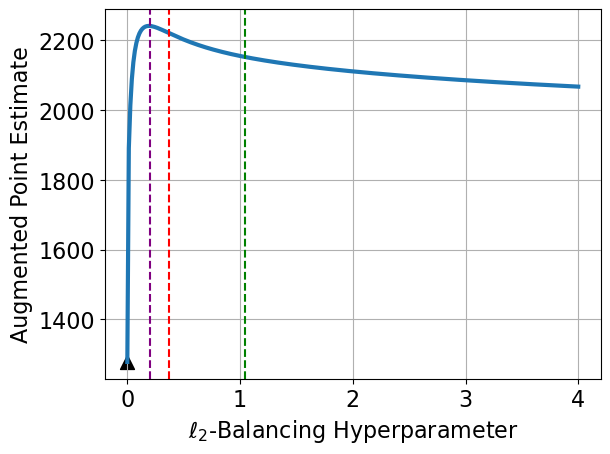}
        \caption{Estimate from ``double ridge''}
        \label{fig:lalonde-long-ridge-panel-c}
    \end{subfigure} \\[1em]
    \begin{subfigure}{0.32\textwidth}
        \centering 
        \includegraphics[width=\textwidth]{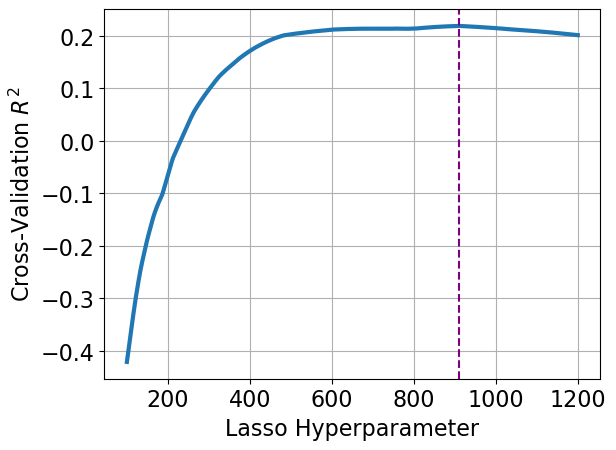}
        \caption{Lasso outcome model}
        \label{fig:lalonde-long-lasso-panel-a}
    \end{subfigure}
    \begin{subfigure}{0.325\textwidth}
        \centering 
        \includegraphics[width=\textwidth]{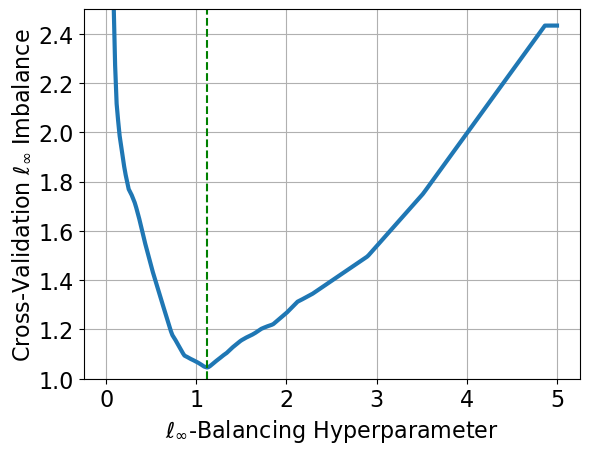}
        \caption{$\ell_\infty$ balancing}
        \label{fig:lalonde-long-lasso-panel-b}
    \end{subfigure}
        \begin{subfigure}{0.31\textwidth}
        \centering 
        \includegraphics[width=\textwidth]{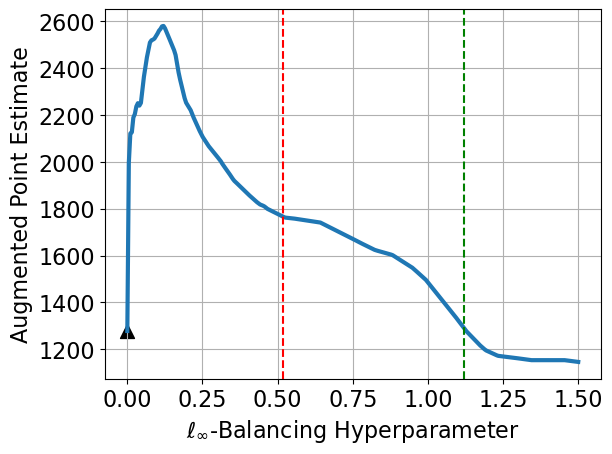}
        \caption{Estimate from ``double lasso''}
        \label{fig:lalonde-long-lasso-panel-c}
    \end{subfigure}

\caption{Augmented balancing weights estimates for the \citet{lalonde1986evaluating} data set with the expanded set of 171 features used in \citet{farrell2015robust}; the top row shows ridge-augmented $\ell_2$ balancing, and the bottom row shows lasso-augmented $\ell_\infty$ balancing.
Panels (a) and (d) show the 3-fold cross-validated $R^2$ for the ridge- and lasso-penalized regression of $Y_p$ on $\Phi_p$ among control units across the hyperparameter $\lambda$; the purple dotted lines show the CV-optimal value for each. 
Panel (b) and (e) show the 3-fold cross-validated imbalance for $\ell_2$ and $\ell_\infty$ balancing weights across the hyperparameter $\delta$; the green dotted lines show the CV-optimal value for each. 
Panels (c) and (f) show the point estimates for the augmented estimators across the weighting hyperparameter $\delta$; the black triangles correspond to the OLS point estimate; the green and red dotted lines correspond to the cross-validated balance and Riesz loss respectively; the purple line corresponds to the cross-validated ridge hyperparameter (for $\delta = \hat{\lambda})$. The variance-based hyperparameter for ridge is $\hat{\sigma}^2/n^2 = 104.8$ and for lasso is $137.5$. The corresponding point estimates are $1923.6$ and $725.8$ respectively, essentially equal to the plug-in outcome model estimates. }
    \label{fig:lalonde-long} 
\end{figure}

\subsubsection{Low-dimensional setting: Recovering OLS}\label{sec:short-regression}

\begin{figure}[tb]
    \centering
    \begin{subfigure}{0.32\textwidth}
        \centering 
        \includegraphics[width=\textwidth]{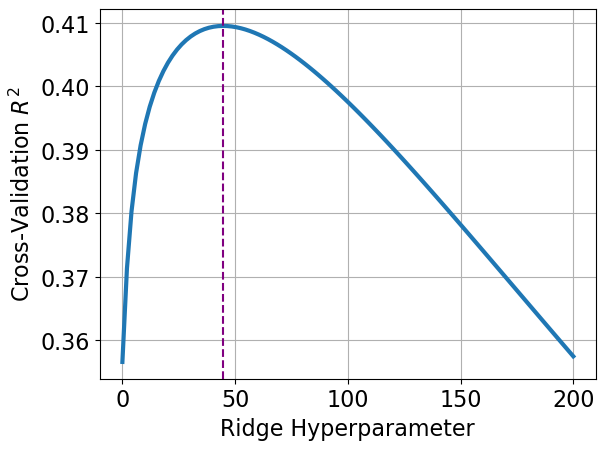}
        \caption{Ridge outcome model}
                \label{fig:lalonde-short-ridge-panel-a}
    \end{subfigure}
    \begin{subfigure}{0.325\textwidth}
        \centering 
        \includegraphics[width=\textwidth]{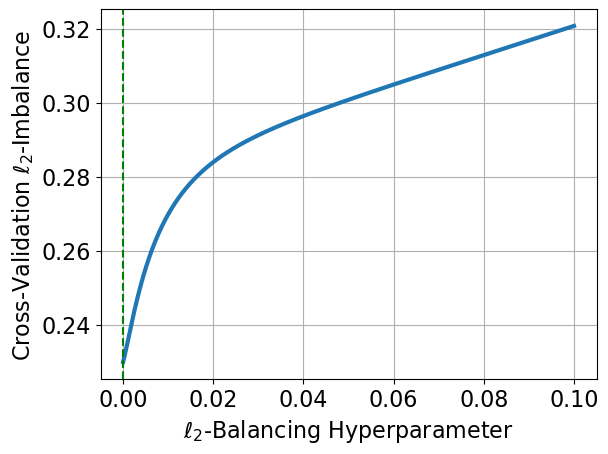}
        \caption{$\ell_2$ balancing}
                \label{fig:lalonde-short-ridge-panel-b}
    \end{subfigure}
        \begin{subfigure}{0.31\textwidth}
        \centering 
        \includegraphics[width=\textwidth]{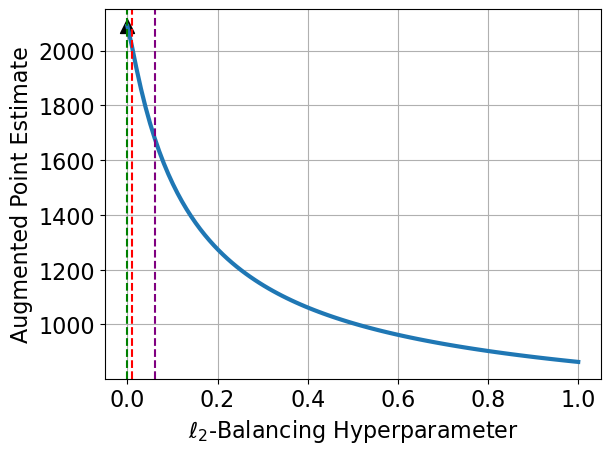}
        \caption{Augmented estimate}
        \label{fig:lalonde-short-ridge-panel-c}
    \end{subfigure}
\caption{Ridge-augmented $\ell_2$ balancing weights (``double ridge'') for \citet{lalonde1986evaluating} with the original 11 covariates. Panel (a) shows the 3-fold cross-validated $R^2$ for the Ridge-penalized regression of $Y_p$ on $\Phi_p$ among control units across the hyperparameter $\lambda$; the purple dotted line shows the CV-optimal value, $\hat{\lambda}$. Panel (b) shows the 3-fold cross-validated imbalance for $\ell_2$ balancing weights across the hyperparameter $\delta$; the green dotted line shows the CV-optimal value, which is $\delta = 0$ or exact balance. Panel (c) shows the point estimate for the augmented estimator across the weighting hyperparameter $\delta$; the black triangle corresponds to the OLS point estimate, the green dotted line corresponds to cross-validated balance, the red dotted line corresponds to cross-validated Riesz loss, and the purple dotted line corresponds to the ridge outcome hyperparameter.}
    \label{fig:lalonde-short-ridge}
\end{figure}

Finally, we apply double ridge to the ``short'' version of the \citet{lalonde1986evaluating} data set with 11 features. \Cref{fig:lalonde-short-ridge}
shows the cross-validation curves for the outcome and weighting models, as well as the point estimate as a function of the balance hyperparameter, with the OLS estimate given by the black triangle.
As above, the green, red, and purple dotted lines correspond to hyperparameters chosen by cross-validating balance, cross-validating the Riesz loss, and choosing $\delta=\lambda$ respectively.  

Unlike for the ``long'' dataset in Figure \ref{fig:lalonde-long}, \Cref{fig:lalonde-short-ridge} does not display quite as stark nonlinearity around zero. Importantly, however, setting $\delta$ by cross-validating imbalance or the Riesz loss yields $\delta = 0$ (up to numerical imprecision), which reduces the augmented estimator to exactly the estimate from a simple OLS regression --- even though the base learner ridge outcome model is heavily regularized. 
By contrast, our preferred hyperparameter tuning scheme of choosing $\delta = \lambda$ results in an estimate that is roughly \$400 dollars smaller than the OLS estimate.

\section{Discussion}
\label{sec:discussion}

We have shown that augmenting a plug-in regression estimator with linear balancing weights results in a new plug-in estimator with coefficients that are shrunk towards --- in some cases all the way to --- the estimates from OLS fit on the same observations. We generalize this equivalence for different choices of outcome and weighting regressions. In the asymptotic setting, we draw the explicit connection between augmented estimators and undersmoothing for the special case of kernel ridge regression. Then we derive the design-conditional finite sample MSE for the double ridge estimator, and use it to solve numerically for oracle hyperparameters. We compare the oracle hyperparameters with three practical tuning schemes and then illustrate our results on the canonical LaLonde data set. In the Appendix also explore many extensions, including to nonlinear weights and to high-dimensional features.

There are many promising avenues for future research. The fundamental connection between doubly robust estimation and undersmoothing opens up several theory directions. While we focus on the special case of kernel ridge regression in Section \ref{sec:kernel_ridge_asymototics}, we anticipate that these connections will hold more broadly. Similarly, while our focus in this paper has been on interpreting balancing weights as a form of linear regression, the converse is also valid: we could instead focus on how many outcome regression-based plug-in estimators are, in fact, a form of balancing weights; see \citet{lin2022regression} for connections between outcome modeling and density ratio estimation.

We also anticipate that the MSE we derive in \Cref{sec:finite-sample-mse} is a starting place for future theoretical analysis that can inform practice. We demonstrate in our simulation study that existing hyperparameter selection methods cannot perform uniformly well over all DGPs. We expect that analyzing the optimal hyperparameters, for example in a proportionate asymptotics regime, can either help devise new tuning schemes or inform which tuning method will work best on the dataset at hand.

We conjecture that these results may provide new insights into the estimation of causal effects in the proximal causal inference framework \citep{tchetgen2020introduction}. This framework uses proxy variables to identify causal effects in the presence of unmeasured confounding. Estimation has been complicated by the fact that, in the absence of strong parametic assumptions,  estimators of proximal causal effects are solutions to ill-posed Fredholm integral equations. \cite{ghassami2022minimax} and \cite{kallus2021causal} recently proposed tractable nonparametric estimators in this setting. They use an ``adversarial'' version of double kernel ridge regression --- allowing the weighting and outcome models to have different bases --- to estimate the solution to the required Fredholm integral equations. Our results apply immediately to standard augmented estimators with different bases for the outcome and weighting models, either via a union basis \citep{chernozhukov2022automatic} or by applying an appropriate projection as in \cite{hirshberg2021augmented}, and extending these results to proximal causal effect estimators might help in constructing new proximal balancing weights, matching, or regression estimators with attractive asymptotic properties.

Finally, many common panel data estimators are forms of augmented balancing weight estimation \citep{abadie2010synthetic, ben2021augmented, arkhangelsky2021synthetic}. We plan to use the numeric results here, especially the results for simplex-constrained weights in Appendix \ref{sec:simplex}, to better understand connections between methods and to inform inference.

\clearpage
\bibliographystyle{abbrvnat}
\bibliography{equiv}

\begin{thebibliography}{103}
\providecommand{\natexlab}[1]{#1}
\providecommand{\url}[1]{\texttt{#1}}
\expandafter\ifx\csname urlstyle\endcsname\relax
  \providecommand{\doi}[1]{doi: #1}\else
  \providecommand{\doi}{doi: \begingroup \urlstyle{rm}\Url}\fi

\bibitem[Abadie et~al.(2010)Abadie, Diamond, and
  Hainmueller]{abadie2010synthetic}
A.~Abadie, A.~Diamond, and J.~Hainmueller.
\newblock Synthetic control methods for comparative case studies: Estimating
  the effect of california’s tobacco control program.
\newblock \emph{Journal of the American statistical Association}, 105\penalty0
  (490):\penalty0 493--505, 2010.

\bibitem[Agarwal and Singh(2021)]{agarwal2021causal}
A.~Agarwal and R.~Singh.
\newblock Causal inference with corrupted data: Measurement error, missing
  values, discretization, and differential privacy.
\newblock \emph{arXiv preprint arXiv:2107.02780}, 2021.

\bibitem[Agarwal et~al.(2022)Agarwal, Tan, Ronen, Singh, and
  Yu]{agarwal2022randomforests}
A.~Agarwal, Y.~S. Tan, O.~Ronen, C.~Singh, and B.~Yu.
\newblock Hierarchical shrinkage: Improving the accuracy and interpretability
  of tree-based models.
\newblock In K.~Chaudhuri, S.~Jegelka, L.~Song, C.~Szepesvari, G.~Niu, and
  S.~Sabato, editors, \emph{Proceedings of the 39th International Conference on
  Machine Learning}, volume 162 of \emph{Proceedings of Machine Learning
  Research}, pages 111--135. PMLR, 17--23 Jul 2022.
\newblock URL \url{https://proceedings.mlr.press/v162/agarwal22b.html}.

\bibitem[Arbour and Feller(2024)]{arbour2024simplex}
D.~Arbour and A.~Feller.
\newblock The role of simplex constraints in regularizing treatment effect
  estimates.
\newblock 2024.

\bibitem[Arkhangelsky et~al.(2021)Arkhangelsky, Athey, Hirshberg, Imbens, and
  Wager]{arkhangelsky2021synthetic}
D.~Arkhangelsky, S.~Athey, D.~A. Hirshberg, G.~W. Imbens, and S.~Wager.
\newblock Synthetic difference-in-differences.
\newblock \emph{American Economic Review}, 111\penalty0 (12):\penalty0
  4088--4118, 2021.

\bibitem[Armagan et~al.(2011)Armagan, Clyde, and
  Dunson]{armagan2011generalized}
A.~Armagan, M.~Clyde, and D.~Dunson.
\newblock Generalized beta mixtures of gaussians.
\newblock \emph{Advances in neural information processing systems}, 24, 2011.

\bibitem[Athey et~al.(2018)Athey, Imbens, and Wager]{athey2018approximate}
S.~Athey, G.~W. Imbens, and S.~Wager.
\newblock Approximate residual balancing: debiased inference of average
  treatment effects in high dimensions.
\newblock \emph{Journal of the Royal Statistical Society: Series B (Statistical
  Methodology)}, 80\penalty0 (4):\penalty0 597--623, 2018.

\bibitem[Bach et~al.(2024)Bach, Schacht, Chernozhukov, Klaassen, and
  Spindler]{bach2024hyperparameter}
P.~Bach, O.~Schacht, V.~Chernozhukov, S.~Klaassen, and M.~Spindler.
\newblock Hyperparameter tuning for causal inference with double machine
  learning: A simulation study.
\newblock \emph{arXiv preprint arXiv:2402.04674}, 2024.

\bibitem[Bai and Ghosh(2017)]{bai2017inverse}
R.~Bai and M.~Ghosh.
\newblock The inverse gamma-gamma prior for optimal posterior contraction and
  multiple hypothesis testing.
\newblock \emph{arXiv preprint arXiv:1710.04369}, 2017.

\bibitem[Bartlett et~al.(2020)Bartlett, Long, Lugosi, and
  Tsigler]{bartlett2020benign}
P.~L. Bartlett, P.~M. Long, G.~Lugosi, and A.~Tsigler.
\newblock Benign overfitting in linear regression.
\newblock \emph{Proceedings of the National Academy of Sciences}, 117\penalty0
  (48):\penalty0 30063--30070, 2020.

\bibitem[Bauer et~al.(2007)Bauer, Pereverzev, and
  Rosasco]{bauer2007regularization}
F.~Bauer, S.~Pereverzev, and L.~Rosasco.
\newblock On regularization algorithms in learning theory.
\newblock \emph{Journal of complexity}, 23\penalty0 (1):\penalty0 52--72, 2007.

\bibitem[Belloni et~al.(2014)Belloni, Chernozhukov, and
  Hansen]{belloni2014inference}
A.~Belloni, V.~Chernozhukov, and C.~Hansen.
\newblock Inference on treatment effects after selection among high-dimensional
  controls.
\newblock \emph{The Review of Economic Studies}, 81\penalty0 (2):\penalty0
  608--650, 2014.

\bibitem[Ben-Michael et~al.(2021{\natexlab{a}})Ben-Michael, Feller, and
  Hartman]{ben2021multilevel}
E.~Ben-Michael, A.~Feller, and E.~Hartman.
\newblock Multilevel calibration weighting for survey data.
\newblock \emph{arXiv preprint arXiv:2102.09052}, 2021{\natexlab{a}}.

\bibitem[Ben-Michael et~al.(2021{\natexlab{b}})Ben-Michael, Feller, Hirshberg,
  and Zubizarreta]{ben2021balancing}
E.~Ben-Michael, A.~Feller, D.~A. Hirshberg, and J.~R. Zubizarreta.
\newblock The balancing act in causal inference.
\newblock \emph{arXiv preprint arXiv:2110.14831}, 2021{\natexlab{b}}.

\bibitem[Ben-Michael et~al.(2021{\natexlab{c}})Ben-Michael, Feller, and
  Rothstein]{ben2021augmented}
E.~Ben-Michael, A.~Feller, and J.~Rothstein.
\newblock The augmented synthetic control method.
\newblock \emph{Journal of the American Statistical Association}, 116\penalty0
  (536):\penalty0 1789--1803, 2021{\natexlab{c}}.

\bibitem[Benkeser and Van Der~Laan(2016)]{benkeser2016hal}
D.~Benkeser and M.~Van Der~Laan.
\newblock The highly adaptive lasso estimator.
\newblock In \emph{2016 IEEE international conference on data science and
  advanced analytics (DSAA)}, pages 689--696. IEEE, 2016.

\bibitem[Bruns-Smith and Feller(2022)]{bruns2022outcome}
D.~A. Bruns-Smith and A.~Feller.
\newblock Outcome assumptions and duality theory for balancing weights.
\newblock In \emph{International Conference on Artificial Intelligence and
  Statistics}, pages 11037--11055. PMLR, 2022.

\bibitem[B{\"u}hlmann and Yu(2003)]{buhlmann2003boosting}
P.~B{\"u}hlmann and B.~Yu.
\newblock Boosting with the l 2 loss: regression and classification.
\newblock \emph{Journal of the American Statistical Association}, 98\penalty0
  (462):\penalty0 324--339, 2003.

\bibitem[Caponnetto and De~Vito(2007)]{caponnetto2007optimal}
A.~Caponnetto and E.~De~Vito.
\newblock Optimal rates for the regularized least-squares algorithm.
\newblock \emph{Foundations of Computational Mathematics}, 7:\penalty0
  331--368, 2007.

\bibitem[Carvalho et~al.(2010)Carvalho, Polson, and
  Scott]{carvalho2010horseshoe}
C.~M. Carvalho, N.~G. Polson, and J.~G. Scott.
\newblock The horseshoe estimator for sparse signals.
\newblock \emph{Biometrika}, 97\penalty0 (2):\penalty0 465--480, 2010.

\bibitem[Chattopadhyay and Zubizarreta(2021)]{chattopadhyay2021implied}
A.~Chattopadhyay and J.~R. Zubizarreta.
\newblock On the implied weights of linear regression for causal inference.
\newblock \emph{arXiv preprint arXiv:2104.06581}, 2021.

\bibitem[Chattopadhyay et~al.(2020)Chattopadhyay, Hase, and
  Zubizarreta]{chattopadhyay2020balancing}
A.~Chattopadhyay, C.~H. Hase, and J.~R. Zubizarreta.
\newblock Balancing vs modeling approaches to weighting in practice.
\newblock \emph{Statistics in Medicine}, 39\penalty0 (24):\penalty0 3227--3254,
  2020.

\bibitem[Chernozhukov et~al.(2018{\natexlab{a}})Chernozhukov, Chetverikov,
  Demirer, Duflo, Hansen, Newey, and Robins]{chernozhukov2018double}
V.~Chernozhukov, D.~Chetverikov, M.~Demirer, E.~Duflo, C.~Hansen, W.~Newey, and
  J.~Robins.
\newblock {Double/debiased machine learning for treatment and structural
  parameters}.
\newblock \emph{The Econometrics Journal}, 21\penalty0 (1):\penalty0 C1--C68,
  01 2018{\natexlab{a}}.

\bibitem[Chernozhukov et~al.(2018{\natexlab{b}})Chernozhukov, Newey, and
  Singh]{chernozhukov2018learning}
V.~Chernozhukov, W.~K. Newey, and R.~Singh.
\newblock Learning l2-continuous regression functionals via regularized riesz
  representers.
\newblock \emph{arXiv preprint arXiv:1809.05224}, 8, 2018{\natexlab{b}}.

\bibitem[Chernozhukov et~al.(2022{\natexlab{a}})Chernozhukov, Escanciano,
  Ichimura, Newey, and Robins]{chernozhukov2022locally}
V.~Chernozhukov, J.~C. Escanciano, H.~Ichimura, W.~K. Newey, and J.~M. Robins.
\newblock Locally robust semiparametric estimation.
\newblock \emph{Econometrica}, 90\penalty0 (4):\penalty0 1501--1535,
  2022{\natexlab{a}}.

\bibitem[Chernozhukov et~al.(2022{\natexlab{b}})Chernozhukov, Newey,
  Quintas-Martinez, and Syrgkanis]{chernozhukov2022riesznet}
V.~Chernozhukov, W.~Newey, V.~M. Quintas-Martinez, and V.~Syrgkanis.
\newblock Riesznet and forestriesz: Automatic debiased machine learning with
  neural nets and random forests.
\newblock In \emph{International Conference on Machine Learning}, pages
  3901--3914. PMLR, 2022{\natexlab{b}}.

\bibitem[Chernozhukov et~al.(2022{\natexlab{c}})Chernozhukov, Newey, Singh, and
  Syrgkanis]{chernozhukov2022dynamic}
V.~Chernozhukov, W.~Newey, R.~Singh, and V.~Syrgkanis.
\newblock Automatic debiased machine learning for dynamic treatment effects and
  general nested functionals.
\newblock \emph{arXiv preprint arXiv:2203.13887}, 2022{\natexlab{c}}.

\bibitem[Chernozhukov et~al.(2022{\natexlab{d}})Chernozhukov, Newey, and
  Singh]{chernozhukov2022automatic}
V.~Chernozhukov, W.~K. Newey, and R.~Singh.
\newblock Automatic debiased machine learning of causal and structural effects.
\newblock \emph{Econometrica}, 90\penalty0 (3):\penalty0 967--1027,
  2022{\natexlab{d}}.

\bibitem[Chernozhukov et~al.(2022{\natexlab{e}})Chernozhukov, Newey, and
  Singh]{chernozhukov2022riesz}
V.~Chernozhukov, W.~K. Newey, and R.~Singh.
\newblock Debiased machine learning of global and local parameters using
  regularized riesz representers.
\newblock \emph{The Econometrics Journal}, 2022{\natexlab{e}}.

\bibitem[Chernozhukov et~al.(2024)Chernozhukov, Newey, Quintas-Martinez, and
  Syrgkanis]{chernozhukov2024riesz_regression}
V.~Chernozhukov, W.~K. Newey, V.~Quintas-Martinez, and V.~Syrgkanis.
\newblock Automatic debiased machine learning via riesz regression, 2024.

\bibitem[Dehejia and Wahba(1999)]{dehejia1999causal}
R.~H. Dehejia and S.~Wahba.
\newblock Causal effects in nonexperimental studies: Reevaluating the
  evaluation of training programs.
\newblock \emph{Journal of the American statistical Association}, 94\penalty0
  (448):\penalty0 1053--1062, 1999.

\bibitem[Deville and S{\"a}rndal(1992)]{deville1992calibration}
J.-C. Deville and C.-E. S{\"a}rndal.
\newblock Calibration estimators in survey sampling.
\newblock \emph{Journal of the American statistical Association}, 87\penalty0
  (418):\penalty0 376--382, 1992.

\bibitem[Dobriban and Wager(2018)]{dobriban2018high}
E.~Dobriban and S.~Wager.
\newblock High-dimensional asymptotics of prediction: Ridge regression and
  classification.
\newblock \emph{The Annals of Statistics}, 46\penalty0 (1):\penalty0 247--279,
  2018.

\bibitem[Duchi et~al.(2021)Duchi, Glynn, and Namkoong]{duchi2021el}
J.~C. Duchi, P.~W. Glynn, and H.~Namkoong.
\newblock Statistics of robust optimization: A generalized empirical likelihood
  approach.
\newblock \emph{Mathematics of Operations Research}, 46\penalty0 (3):\penalty0
  946--969, 2021.

\bibitem[Farrell(2015)]{farrell2015robust}
M.~H. Farrell.
\newblock Robust inference on average treatment effects with possibly more
  covariates than observations.
\newblock \emph{Journal of Econometrics}, 189\penalty0 (1):\penalty0 1--23,
  2015.

\bibitem[Fischer and Steinwart(2020)]{fischer2020sobolev}
S.~Fischer and I.~Steinwart.
\newblock Sobolev norm learning rates for regularized least-squares algorithms.
\newblock \emph{The Journal of Machine Learning Research}, 21\penalty0
  (1):\penalty0 8464--8501, 2020.

\bibitem[Fuller(2002)]{fuller2002regression}
W.~A. Fuller.
\newblock Regression estimation for survey samples.
\newblock \emph{Survey Methodology}, 28\penalty0 (1):\penalty0 5--24, 2002.

\bibitem[Gao et~al.(2022)Gao, Yang, and Kim]{gao2022soft}
C.~Gao, S.~Yang, and J.~K. Kim.
\newblock Soft calibration for selection bias problems under mixed-effects
  models.
\newblock \emph{arXiv preprint arXiv:2206.01084}, 2022.

\bibitem[Ghassami et~al.(2022)Ghassami, Ying, Shpitser, and
  Tchetgen]{ghassami2022minimax}
A.~Ghassami, A.~Ying, I.~Shpitser, and E.~T. Tchetgen.
\newblock Minimax kernel machine learning for a class of doubly robust
  functionals with application to proximal causal inference.
\newblock In \emph{International Conference on Artificial Intelligence and
  Statistics}, pages 7210--7239. PMLR, 2022.

\bibitem[Goldstein and Messer(1992)]{goldstein1992optimal}
L.~Goldstein and K.~Messer.
\newblock Optimal plug-in estimators for nonparametric functional estimation.
\newblock \emph{The annals of statistics}, pages 1306--1328, 1992.

\bibitem[Graham et~al.(2012)Graham, de~Xavier~Pinto, and
  Egel]{graham2012inverse}
B.~S. Graham, C.~C. de~Xavier~Pinto, and D.~Egel.
\newblock Inverse probability tilting for moment condition models with missing
  data.
\newblock \emph{The Review of Economic Studies}, 79\penalty0 (3):\penalty0
  1053--1079, 2012.

\bibitem[Grandvalet(1998)]{grandvalet1998least}
Y.~Grandvalet.
\newblock Least absolute shrinkage is equivalent to quadratic penalization.
\newblock In \emph{International Conference on Artificial Neural Networks},
  pages 201--206. Springer, 1998.

\bibitem[Gretton et~al.(2012)Gretton, Borgwardt, Rasch, Sch{\"o}lkopf, and
  Smola]{gretton2012kernel}
A.~Gretton, K.~M. Borgwardt, M.~J. Rasch, B.~Sch{\"o}lkopf, and A.~Smola.
\newblock A kernel two-sample test.
\newblock \emph{The Journal of Machine Learning Research}, 13\penalty0
  (1):\penalty0 723--773, 2012.

\bibitem[Hainmueller(2012)]{hainmueller2012entropy}
J.~Hainmueller.
\newblock Entropy balancing for causal effects: A multivariate reweighting
  method to produce balanced samples in observational studies.
\newblock \emph{Political analysis}, 20\penalty0 (1):\penalty0 25--46, 2012.

\bibitem[Harshaw et~al.(2019)Harshaw, S{\"a}vje, Spielman, and
  Zhang]{harshaw2019balancing}
C.~Harshaw, F.~S{\"a}vje, D.~Spielman, and P.~Zhang.
\newblock Balancing covariates in randomized experiments with the gram--schmidt
  walk design.
\newblock \emph{arXiv preprint arXiv:1911.03071}, 2019.

\bibitem[Hastie et~al.(2009)Hastie, Tibshirani, Friedman, and
  Friedman]{hastie2009elements}
T.~Hastie, R.~Tibshirani, J.~H. Friedman, and J.~H. Friedman.
\newblock \emph{The elements of statistical learning: data mining, inference,
  and prediction}, volume~2.
\newblock Springer, 2009.

\bibitem[Hastie et~al.(2022)Hastie, Montanari, Rosset, and
  Tibshirani]{hastie2022surprises}
T.~Hastie, A.~Montanari, S.~Rosset, and R.~J. Tibshirani.
\newblock Surprises in high-dimensional ridgeless least squares interpolation.
\newblock \emph{Annals of statistics}, 50\penalty0 (2):\penalty0 949, 2022.

\bibitem[Hazlett(2020)]{hazlett2020kernel}
C.~Hazlett.
\newblock Kernel balancing.
\newblock \emph{Statistica Sinica}, 30\penalty0 (3):\penalty0 1155--1189, 2020.

\bibitem[Hellerstein and Imbens(1999)]{hellerstein1999el}
J.~K. Hellerstein and G.~W. Imbens.
\newblock Imposing moment restrictions from auxiliary data by weighting.
\newblock \emph{Review of Economics and Statistics}, 81\penalty0 (1):\penalty0
  1--14, 1999.

\bibitem[Hill(2011)]{hill2011bayesian}
J.~L. Hill.
\newblock Bayesian nonparametric modeling for causal inference.
\newblock \emph{Journal of Computational and Graphical Statistics}, 20\penalty0
  (1):\penalty0 217--240, 2011.

\bibitem[Hirano et~al.(2003)Hirano, Imbens, and Ridder]{hirano2003efficient}
K.~Hirano, G.~W. Imbens, and G.~Ridder.
\newblock Efficient estimation of average treatment effects using the estimated
  propensity score.
\newblock \emph{Econometrica}, 71\penalty0 (4):\penalty0 1161--1189, 2003.

\bibitem[Hirshberg and Wager(2021)]{hirshberg2021augmented}
D.~A. Hirshberg and S.~Wager.
\newblock Augmented minimax linear estimation.
\newblock \emph{The Annals of Statistics}, 49\penalty0 (6):\penalty0
  3206--3227, 2021.

\bibitem[Hirshberg et~al.(2019)Hirshberg, Maleki, and
  Zubizarreta]{hirshberg2019minimax}
D.~A. Hirshberg, A.~Maleki, and J.~R. Zubizarreta.
\newblock Minimax linear estimation of the retargeted mean.
\newblock \emph{arXiv preprint arXiv:1901.10296}, 2019.

\bibitem[Imai and Ratkovic(2014)]{imai2014cbps}
K.~Imai and M.~Ratkovic.
\newblock Covariate balancing propensity score.
\newblock \emph{Journal of the Royal Statistical Society: Series B: Statistical
  Methodology}, pages 243--263, 2014.

\bibitem[Jacot et~al.(2018)Jacot, Gabriel, and Hongler]{jacot2018neural}
A.~Jacot, F.~Gabriel, and C.~Hongler.
\newblock Neural tangent kernel: Convergence and generalization in neural
  networks.
\newblock \emph{Advances in neural information processing systems}, 31, 2018.

\bibitem[Kallus(2020)]{kallus2020generalized}
N.~Kallus.
\newblock Generalized optimal matching methods for causal inference.
\newblock \emph{J. Mach. Learn. Res.}, 21:\penalty0 62--1, 2020.

\bibitem[Kallus et~al.(2021)Kallus, Mao, and Uehara]{kallus2021causal}
N.~Kallus, X.~Mao, and M.~Uehara.
\newblock Causal inference under unmeasured confounding with negative controls:
  A minimax learning approach.
\newblock \emph{arXiv preprint arXiv:2103.14029}, 2021.

\bibitem[Kennedy(2022)]{kennedy2022semiparametric}
E.~H. Kennedy.
\newblock Semiparametric doubly robust targeted double machine learning: a
  review.
\newblock \emph{arXiv preprint arXiv:2203.06469}, 2022.

\bibitem[Kim et~al.(2022{\natexlab{a}})Kim, Niknam, and
  Zubizarreta]{kim2022kernel}
K.~Kim, B.~A. Niknam, and J.~R. Zubizarreta.
\newblock Scalable kernel balancing weights in a nationwide observational study
  of hospital profit status and heart attack outcomes.
\newblock 2022{\natexlab{a}}.

\bibitem[Kim et~al.(2022{\natexlab{b}})Kim, Kern, Goldwasser, Kreuter, and
  Reingold]{kim2022universal}
M.~P. Kim, C.~Kern, S.~Goldwasser, F.~Kreuter, and O.~Reingold.
\newblock Universal adaptability: Target-independent inference that competes
  with propensity scoring.
\newblock \emph{Proceedings of the National Academy of Sciences}, 119\penalty0
  (4):\penalty0 e2108097119, 2022{\natexlab{b}}.

\bibitem[Kline(2011)]{kline2011oaxaca}
P.~Kline.
\newblock Oaxaca-blinder as a reweighting estimator.
\newblock \emph{American Economic Review}, 101\penalty0 (3):\penalty0 532--37,
  2011.

\bibitem[Kobak et~al.(2020)Kobak, Lomond, and Sanchez]{kobak2020optimal}
D.~Kobak, J.~Lomond, and B.~Sanchez.
\newblock The optimal ridge penalty for real-world high-dimensional data can be
  zero or negative due to the implicit ridge regularization.
\newblock \emph{The Journal of Machine Learning Research}, 21\penalty0
  (1):\penalty0 6863--6878, 2020.

\bibitem[Krishna et~al.(2009)Krishna, Bondell, and Ghosh]{krishna2009bayesian}
A.~Krishna, H.~D. Bondell, and S.~K. Ghosh.
\newblock Bayesian variable selection using an adaptive powered correlation
  prior.
\newblock \emph{Journal of statistical planning and inference}, 139\penalty0
  (8):\penalty0 2665--2674, 2009.

\bibitem[LaLonde(1986)]{lalonde1986evaluating}
R.~J. LaLonde.
\newblock Evaluating the econometric evaluations of training programs with
  experimental data.
\newblock \emph{The American economic review}, pages 604--620, 1986.

\bibitem[Lin et~al.(2022)Lin, Zhang, and Zhang]{lin2022reproducing}
R.~R. Lin, H.~Z. Zhang, and J.~Zhang.
\newblock On reproducing kernel banach spaces: Generic definitions and unified
  framework of constructions.
\newblock \emph{Acta Mathematica Sinica, English Series}, 38\penalty0
  (8):\penalty0 1459--1483, 2022.

\bibitem[Lin and Han(2022)]{lin2022regression}
Z.~Lin and F.~Han.
\newblock On regression-adjusted imputation estimators of the average treatment
  effect.
\newblock \emph{arXiv preprint arXiv:2212.05424}, 2022.

\bibitem[Lumley et~al.(2011)Lumley, Shaw, and Dai]{lumley2011connections}
T.~Lumley, P.~A. Shaw, and J.~Y. Dai.
\newblock Connections between survey calibration estimators and semiparametric
  models for incomplete data.
\newblock \emph{International Statistical Review}, 79\penalty0 (2):\penalty0
  200--220, 2011.

\bibitem[Menon and Ong(2016)]{menon2016linking}
A.~Menon and C.~S. Ong.
\newblock Linking losses for density ratio and class-probability estimation.
\newblock In \emph{International Conference on Machine Learning}, pages
  304--313. PMLR, 2016.

\bibitem[Moosavi et~al.(2023)Moosavi, H{\"a}ggstr{\"o}m, and
  de~Luna]{moosavi2023costs}
N.~Moosavi, J.~H{\"a}ggstr{\"o}m, and X.~de~Luna.
\newblock The costs and benefits of uniformly valid causal inference with
  high-dimensional nuisance parameters.
\newblock \emph{Statistical Science}, 38\penalty0 (1):\penalty0 1--12, 2023.

\bibitem[Mou et~al.(2023)Mou, Ding, Wainwright, and Bartlett]{mou2023kernel}
W.~Mou, P.~Ding, M.~J. Wainwright, and P.~L. Bartlett.
\newblock Kernel-based off-policy estimation without overlap: Instance
  optimality beyond semiparametric efficiency, 2023.
\newblock URL \url{https://arxiv.org/abs/2301.06240}.

\bibitem[Murray and Feller(2024)]{murray2024bayes_weights}
J.~Murray and A.~Feller.
\newblock Bayesian causal models from a weighting perspective: Balance, bias,
  and double robustness.
\newblock 2024.

\bibitem[Newey(1994)]{newey1994asymptotic}
W.~K. Newey.
\newblock The asymptotic variance of semiparametric estimators.
\newblock \emph{Econometrica: Journal of the Econometric Society}, pages
  1349--1382, 1994.

\bibitem[Newey and Robins(2018)]{newey2018cross}
W.~K. Newey and J.~R. Robins.
\newblock Cross-fitting and fast remainder rates for semiparametric estimation.
\newblock \emph{arXiv preprint arXiv:1801.09138}, 2018.

\bibitem[Newey and Smith(2004)]{newey2004el}
W.~K. Newey and R.~J. Smith.
\newblock Higher order properties of gmm and generalized empirical likelihood
  estimators.
\newblock \emph{Econometrica}, 72\penalty0 (1):\penalty0 219--255, 2004.

\bibitem[Newey et~al.(1998)Newey, Hsieh, and Robins]{newey1998undersmoothing}
W.~K. Newey, F.~Hsieh, and J.~Robins.
\newblock Undersmoothing and bias corrected functional estimation.
\newblock 1998.

\bibitem[Newey et~al.(2004)Newey, Hsieh, and Robins]{newey2004twicing}
W.~K. Newey, F.~Hsieh, and J.~M. Robins.
\newblock Twicing kernels and a small bias property of semiparametric
  estimators.
\newblock \emph{Econometrica}, 72\penalty0 (3):\penalty0 947--962, 2004.

\bibitem[Park et~al.(2009)Park, Lee, and Ha]{park2009l_2}
B.~Park, Y.~Lee, and S.~Ha.
\newblock $ l\_2 $ boosting in kernel regression.
\newblock \emph{Bernoulli}, 15\penalty0 (3):\penalty0 599--613, 2009.

\bibitem[Piironen and Vehtari(2017)]{piironen2017sparsity}
J.~Piironen and A.~Vehtari.
\newblock {Sparsity information and regularization in the horseshoe and other
  shrinkage priors}.
\newblock \emph{Electronic Journal of Statistics}, 11\penalty0 (2):\penalty0
  5018 -- 5051, 2017.
\newblock \doi{10.1214/17-EJS1337SI}.
\newblock URL \url{https://doi.org/10.1214/17-EJS1337SI}.

\bibitem[Polson and Scott(2010)]{polson2010global_local}
N.~G. Polson and J.~G. Scott.
\newblock Shrink globally, act locally: Sparse bayesian regularization and
  prediction.
\newblock \emph{Bayesian statistics}, 9\penalty0 (501-538):\penalty0 105, 2010.

\bibitem[Qin and Zhang(2007)]{qin2007el}
J.~Qin and B.~Zhang.
\newblock Empirical-likelihood-based inference in missing response problems and
  its application in observational studies.
\newblock \emph{Journal of the Royal Statistical Society: Series B (Statistical
  Methodology)}, 69\penalty0 (1):\penalty0 101--122, 2007.

\bibitem[Robins et~al.(2007)Robins, Sued, Lei-Gomez, and
  Rotnitzky]{robins2007comment}
J.~Robins, M.~Sued, Q.~Lei-Gomez, and A.~Rotnitzky.
\newblock Comment: Performance of double-robust estimators when" inverse
  probability" weights are highly variable.
\newblock \emph{Statistical Science}, 22\penalty0 (4):\penalty0 544--559, 2007.

\bibitem[Robins et~al.(2008)Robins, Li, Tchetgen, van~der Vaart,
  et~al.]{robins2008higher}
J.~Robins, L.~Li, E.~Tchetgen, A.~van~der Vaart, et~al.
\newblock Higher order influence functions and minimax estimation of nonlinear
  functionals.
\newblock In \emph{Probability and statistics: essays in honor of David A.
  Freedman}, volume~2, pages 335--422. Institute of Mathematical Statistics,
  2008.

\bibitem[Robins et~al.(1994)Robins, Rotnitzky, and Zhao]{robins1994estimation}
J.~M. Robins, A.~Rotnitzky, and L.~P. Zhao.
\newblock Estimation of regression coefficients when some regressors are not
  always observed.
\newblock \emph{Journal of the American statistical Association}, 89\penalty0
  (427):\penalty0 846--866, 1994.

\bibitem[Rotnitzky et~al.(2021)Rotnitzky, Smucler, and
  Robins]{rotnitzky2021characterization}
A.~Rotnitzky, E.~Smucler, and J.~M. Robins.
\newblock Characterization of parameters with a mixed bias property.
\newblock \emph{Biometrika}, 108\penalty0 (1):\penalty0 231--238, 2021.

\bibitem[Rubin(1980)]{rubin1980randomization}
D.~B. Rubin.
\newblock Randomization analysis of experimental data: The fisher randomization
  test comment.
\newblock \emph{Journal of the American statistical association}, 75\penalty0
  (371):\penalty0 591--593, 1980.

\bibitem[Rubinstein et~al.(2021)Rubinstein, Haviland, and
  Choi]{rubinstein2021balancing}
M.~Rubinstein, A.~Haviland, and D.~Choi.
\newblock Balancing weights for region-level analysis: the effect of medicaid
  expansion on the uninsurance rate among states that did not expand medicaid.
\newblock \emph{arXiv preprint arXiv:2105.02381}, 2021.

\bibitem[Shen et~al.(2022)Shen, Ding, Sekhon, and Yu]{shen2022panel}
D.~Shen, P.~Ding, J.~Sekhon, and B.~Yu.
\newblock A tale of two panel data regressions.
\newblock \emph{arXiv preprint arXiv:2207.14481}, 2022.

\bibitem[Shen(1997)]{shen1997methods}
X.~Shen.
\newblock On methods of sieves and penalization.
\newblock \emph{The Annals of Statistics}, 25\penalty0 (6):\penalty0
  2555--2591, 1997.

\bibitem[Singh(2021)]{singh2021debiased}
R.~Singh.
\newblock Debiased kernel methods.
\newblock \emph{arXiv preprint arXiv:2102.11076}, 2021.

\bibitem[Singh et~al.(2020)Singh, Xu, and Gretton]{singh2020kernel}
R.~Singh, L.~Xu, and A.~Gretton.
\newblock Kernel methods for causal functions: Dose, heterogeneous, and
  incremental response curves.
\newblock \emph{arXiv preprint arXiv:2010.04855}, 2020.

\bibitem[Singh et~al.(2022)Singh, Sun, et~al.]{singh2022late}
R.~Singh, L.~Sun, et~al.
\newblock Double robustness for complier parameters and a semiparametric test
  for complier characteristics.
\newblock Technical report, 2022.

\bibitem[Speckman(1979)]{speckman1979minimax}
P.~Speckman.
\newblock Minimax estimates of linear functionals in a hilbert space.
\newblock \emph{Unpublished Manuscript}, 1979.

\bibitem[Tan(2020)]{tan2020regularized}
Z.~Tan.
\newblock Regularized calibrated estimation of propensity scores with model
  misspecification and high-dimensional data.
\newblock \emph{Biometrika}, 107\penalty0 (1):\penalty0 137--158, 2020.

\bibitem[Tang et~al.(2023)Tang, Kong, Pan, and Wang]{tang2023ultra}
D.~Tang, D.~Kong, W.~Pan, and L.~Wang.
\newblock Ultra-high dimensional variable selection for doubly robust causal
  inference.
\newblock \emph{Biometrics}, 79\penalty0 (2):\penalty0 903--914, 2023.

\bibitem[Tchetgen~Tchetgen et~al.(2020)Tchetgen~Tchetgen, Ying, Cui, Shi, and
  Miao]{tchetgen2020introduction}
E.~J.~T. Tchetgen~Tchetgen, A.~Ying, Y.~Cui, X.~Shi, and W.~Miao.
\newblock An introduction to proximal causal learning.
\newblock \emph{arXiv preprint arXiv:2009.10982}, 2020.

\bibitem[Van Der~Laan and Rubin(2006)]{tmle}
M.~J. Van Der~Laan and D.~Rubin.
\newblock Targeted maximum likelihood learning.
\newblock \emph{The international journal of biostatistics}, 2\penalty0 (1),
  2006.

\bibitem[Van~der Laan et~al.(2011)Van~der Laan, Rose, et~al.]{van2011targeted}
M.~J. Van~der Laan, S.~Rose, et~al.
\newblock \emph{Targeted learning: causal inference for observational and
  experimental data}, volume~4.
\newblock Springer, 2011.

\bibitem[Vermeulen and Vansteelandt(2015)]{vermeulen2015bias}
K.~Vermeulen and S.~Vansteelandt.
\newblock Bias-reduced doubly robust estimation.
\newblock \emph{Journal of the American Statistical Association}, 110\penalty0
  (511):\penalty0 1024--1036, 2015.

\bibitem[Wang and Zubizarreta(2020)]{wang2020minimal}
Y.~Wang and J.~R. Zubizarreta.
\newblock Minimal dispersion approximately balancing weights: asymptotic
  properties and practical considerations.
\newblock \emph{Biometrika}, 107\penalty0 (1):\penalty0 93--105, 2020.

\bibitem[Wong and Chan(2018)]{wong2018kernel}
R.~K. Wong and K.~C.~G. Chan.
\newblock Kernel-based covariate functional balancing for observational
  studies.
\newblock \emph{Biometrika}, 105\penalty0 (1):\penalty0 199--213, 2018.

\bibitem[Zhao(2019)]{zhao2019covariate}
Q.~Zhao.
\newblock Covariate balancing propensity score by tailored loss functions.
\newblock \emph{The Annals of Statistics}, 47\penalty0 (2):\penalty0 965--993,
  2019.

\bibitem[Zhao and Percival(2017)]{zhao2017entropy}
Q.~Zhao and D.~Percival.
\newblock Entropy balancing is doubly robust.
\newblock \emph{Journal of Causal Inference}, 5\penalty0 (1), 2017.

\bibitem[Zubizarreta(2015)]{zubizarreta2015stable}
J.~R. Zubizarreta.
\newblock Stable weights that balance covariates for estimation with incomplete
  outcome data.
\newblock \emph{Journal of the American Statistical Association}, 110\penalty0
  (511):\penalty0 910--922, 2015.

\end{thebibliography}

\clearpage
\appendix
\renewcommand\thefigure{\thesection.\arabic{figure}}    
\renewcommand\thetable{\thesection.\arabic{table}}    
\setcounter{figure}{0}
\setcounter{table}{0}    

\tableofcontents

\section{Additional background and examples}
\label{sec:examples_appendix}

\subsection{Examples of general linear functionals via the Riesz representer}

\begin{example}[Counterfactual mean] \label{ex:cf_mean}
Let $Z = \{0,1\}$ and 
$
\psi(m) = \mathbb{E}[ m(X, 1)].
$
Under SUTVA and conditional ignorability, this estimand is equal to $E[Y(1)]$. The Riesz representer is the IPW, $\alpha(X,Z)=Z/e(X)$.  
\end{example}

\begin{example}[Average derivative] \label{ex:avg_deriv}
Let $Z \in \mathbb{R}$ and 
$
\psi(m) =  \mathbb{E}\left[\frac{\partial}{\partial z}m(X,Z)\right]. 
$
Under an appropriate generalization of SUTVA and conditional ignorability, this estimand corresponds to the average derivative effect of a continuous treatment. Under regularity conditions the Riesz representer is given by $\alpha(X,Z)=-\frac{\frac{d}{dz}p(Z|X)}{p(Z|X)}$ where $p(z|x)$ is the conditional density of $Z$ given $X$. 
\end{example}

\begin{example}[Distribution shift]  \label{ex:dist_shift}
Consider an example without $Z$, following the machine learning literature on covariate shift. 
Let $p$ denote the \emph{source} distribution of the observed data, and let $p^\ast$ over $\mathcal{X}$ denote the \emph{target} distribution. The estimand is then $\psi(m) = \int_\mathcal{X} m(x) dp^\ast(x)$. In a causal inference setting, this can recover the Average Treatment Effect on the Treated (ATT) under SUTVA and conditional ignorability; i.e., let $p$ be the distribution of covariates and outcomes for units assigned to control and $p^\ast$ be the distribution of the covariates for units assigned to treatment.
The Riesz representer is the density ratio, $\alpha(X) = \frac{dp^\ast}{dp}(X)$.
\end{example}

\begin{example}[Exact balancing weights] \label{ex:exact_bal_wts} The most common balancing weights estimation problem finds the minimum weights that exactly balance each element of $\Phi$.
In the constrained form, exact balancing solves
\begin{align}
    &  \min_{w\in\mathbb{R}^n} \Vert w \Vert_2^2\label{eq:exact_balance} \\
    &   \text{such that }  \tfrac{1}{n} w\phi_{pj} = \bar{\phi}_{qj} \quad\text{ for all } j\nonumber
\end{align}
\end{example}

\subsection{Examples of balancing weights}\label{apx:balancing-examples}

\begin{example}[$\ell_2$ balancing] \label{ex:l2_bal_wts}  The $\ell_2$ balancing weights problem is usually expressed via  its penalized form:
\begin{align}
    \min_{w\in\mathbb{R}^n} \Big\{ \Vert\tfrac{1}{n} w \Phi_p - \overbar{\Phi}_q \Vert^2_2 + \delta \Vert w\Vert_2^2 \Big\} \label{eq:l2_balance}.
\end{align}
The automatic form is a ridge-penalized regression for the Riesz representer.
\end{example}

\begin{example}[$\ell_\infty$ balancing]  \label{ex:sup_norm_bal_wts}
The constrained form of the $\ell_\infty$ balancing weights problem is
\begin{align}
    &  \min_{w\in\mathbb{R}^n} \Vert w \Vert_2^2\label{supnormproblem} \\
    &   \text{such that } \Vert\tfrac{1}{n} w \Phi_p - \overbar{\Phi}_q \Vert_\infty \leq \delta \nonumber
\end{align}
The automatic form is a lasso-penalized regression for the Riesz representer, sometimes known as the \emph{Minimum Distance Lasso} \citep{chernozhukov2022automatic}. 
\end{example}

\begin{example}[Kernel balancing] \label{ex:kernel_bal_wts}
    As a brief preview of the balancing problem in the infinite-dimensional setting, we provide an example where $\mathcal{F} = \mathcal{H}$ is a reproducing kernel Hilbert space on $\mathcal{X} \times \mathcal{Z}$ with norm $\Vert \cdot \Vert_\mathcal{H}$ and kernel $\mathcal{K} : (\mathcal{X} \times \mathcal{Z}) \times (\mathcal{X} \times \mathcal{Z}) \rightarrow \mathbb{R}$. Then for any $x_i \in \mathcal{X}, z_i \in \mathcal{Z}$, the representer $\phi(x_i,z_i) \coloneqq \mathcal{K}(x_i, z_i ,\cdot , \cdot) \in \mathcal{H}$. Using infinite-dimensional matrix notation, we denote $\Phi_p \in \mathcal{H}^n$ and $\overbar{\Phi}_q \in \mathcal{H}$ as above. The penalized balancing weights problem for $\mathcal{F} = \mathcal{H}$ is:
 \begin{align}
    \min_{w\in\mathbb{R}^n} \Big\{ \Vert \tfrac{1}{n} w \Phi_p - \overbar{\Phi}_q \Vert^2_\mathcal{H} + \delta \Vert w\Vert_2^2 \Big\}. \label{eq:kernel_balance}
\end{align}
 See \Cref{sec:high-dimensions} for details and references.
\end{example}

\subsection{Causal inference}

We now return to Example 1 above.
Here the goal is estimating the unobserved potential outcomes in an observational study. Let $Y$ be the potential outcome under control \citep[with appropriate restrictions, such as SUTVA;][]{rubin1980randomization}, let $p$ be the population of individuals in the control condition, and let $q$ be the population of individuals in the treatment condition. Then $Y$ is observed for population $p$ but not for population $q$, and the missing mean, $\mathbb{E}_q[Y]$, is the average potential outcome under control for the individuals who in fact were treated. Letting $Y$ be the potential outcome under treatment, $p$ the population of individuals in the treatment condition, and $q$ the population of individuals in the control condition, 
$\mathbb{E}_q[Y]$ is the average potential outcome under treatment for the individuals who in fact received control.

For both examples, the crucial assumption for identification is \emph{conditional ignorability}: the conditional distribution of $Y$ given $X$ is the same in the source and target populations. This is also known as ``conditional exchangeability,"  ``selection on observables," or ``no unmeasured confounding." For our purposes, we will require the mean, but not distributional, version of this assumption:
    $ \mathbb{E}_p[Y|X] = \mathbb{E}_q[Y|X].$ 

Since we assume the conditional expectations are the same in the two populations, we occasionally denote the common conditional mean functional without subscripts, $\mathbb{E}[Y|X]$. Under this assumption, we can identify $\mathbb{E}_q[Y]$ with the \emph{regression functional}, also known as the \emph{adjustment formula} or \emph{g-formula}:
\begin{align}\label{g-form}
     \mathbb E_q \left[ \mathbb{E}_p[Y|X] \right] = \mathbb E_q \left[ \mathbb{E}_q[Y|X] \right] = \mathbb{E}_q[Y].
\end{align}

A complementary approach instead relies on the density ratio between the marginal covariate distributions in the source and target populations, $\frac{dq}{dp}(X)$, also known as the Radon-Nikodym derivative, importance sampling weights, or inverse propensity score weights (IPW). \footnote{Using Bayes Rule, we can equivalently express $\frac{dq}{dp}(X)$ via the \emph{propensity score} $P(1_p|X)$, where $1_p$ is the indicator that an observation from the size-proportional mixture distribution of $p$ and $q$ is from population $p$: $\frac{dq}{dp}(X) = \frac{1-P(1_p | X)}{P(1_p | X)} \frac{P(1_p)}{1-P(1_p)}$} 
This is also a special case of a Riesz representer \citep{chernozhukov2022automatic}.
Under an additional \emph{population overlap assumption} that  $q(x)$ is absolutely continuous with respect to $p(x)$, we can identify $\mathbb{E}_q[Y]$ via the \emph{weighting functional}, also known as the \emph{IPW functional}:
\begin{align} \label{IPW}
    \mathbb{E}_p \left[ \frac{dq}{dp}(X) \hspace{0.1cm} Y \right] &= \mathbb{E}_p \left[ \frac{dq}{dp}(X)\hspace{0.1cm}\mathbb{E}_p[Y|X] \right] = \mathbb{E}_q[ \mathbb{E}_p[Y|X]] = \mathbb{E}_q[Y].
\end{align}

Finally, we can combine the regression and weighting functionals to create a third identifying functional, known as the \emph{doubly robust functional} \citep{robins1994estimation}:
\begin{align}
    \mathbb E_q \left[ \mathbb{E}_p[Y|X] \right] + \mathbb{E}_p\left[ \frac{dq}{dp} (X) \left\{Y - \mathbb{E}_p[Y|X]\right\}\right]. \label{eq:doubly_robust_functional_causal_appendix}
\end{align}
This functional has the attractive property of being equal to $\mathbb{E}_q[Y]$ even if either one of $\frac{dq}{dp} (X)$ or $\mathbb{E}_p[Y|X]$ is replaced with an arbitrary function of $X$, hence the term ``doubly robust."\footnote{This functional is equal to $\mathbb{E}_q[Y]$ if $\mathbb{E}_p[Y|X]$ is replaced with an arbitrary well-behaved functional of $X_p$, because the first and last terms cancel and we are left with the weighting functional $\mathbb{E}_p[\frac{dq}{dp}(X)Y]$. It is also equal to  $\mathbb{E}_q[Y]$ if $\frac{dq}{dp}(X)$ is replaced with an arbitrary well-behaved functional of $X_p$, because the $\mathbb{E}_p[ h (X) (Y - \mathbb{E}_p[Y|X])]$ is equal to $0$ for any $h$ and therefore we are left with the regression functional $\mathbb E_q \left[ \mathbb{E}_p[Y|X] \right]$.} 
See \citet{chernozhukov2018double, kennedy2022semiparametric} for recent overviews of the active literature in causal inference and machine learning focused on estimating versions of Equation \eqref{eq:doubly_robust_functional}.

The \emph{augmented} estimators that we analyze in this paper are based on estimating this doubly robust functional. These estimators \emph{augment} an estimator of the regression functional based on an outcome regression (or \textit{base learner}) with appropriately weighted residuals. Alternatively, they \emph{augment} an estimator of the weighting functional with an outcome regression-based estimator of the regression functional (subtracting off the implied estimator of $\frac{dq}{dp}(X) \mathbb E_p[Y|X]$).

Recall that under linearity the imbalance over all $f \in \mathcal{F}$ has a simple closed form. 
For any $f(x,z) = \theta ^\top \phi(x,z)  \in \mathcal{F}$, 
$
    \mathbb{E} [ h(X,Z,f) ] =  \theta^\top \mathbb{E} [h(X,Z,\phi) ],
$
where $h(X,Z,\phi)$ is short-hand for the vector with $j$th entry $ h(X,Z,\phi_j)$. 
We can then write the imbalance in terms of a transformed feature space $h(X,Z\phi)$, giving a closed form that we can readily calculate by applying the linear functional $\psi$ from (\ref{eq:general_linear_functional}) to the features $\phi$ :
\begin{align*}
    \text{Imbalance}_\mathcal{F}(w) &\coloneqq \sup_{f \in \mathcal{F}} \Big\{ \mathbb{E}[ w(X,Z) f(X,Z)] - \mathbb{E}[h(X,Z,f)] \Big\}\\
    &= \sup_{\Vert \theta \Vert \leq 1} \Big\{ \theta^\top\mathbb{E}[ w(X,Z) \phi(X,Z) ] - \theta^\top \mathbb{E} [h(X,Z,\phi)] \Big\}\\
    &=  \big\Vert \mathbb{E}[ w(X,Z) \phi(X,Z) ] - \mathbb{E} [h(X,Z,\phi)] \big\Vert_*.
\end{align*}

Consider the counterfactual mean estimand $\psi(m) = \mathbb{E}[m(X,0) | Z=1]$. We have
\[ \text{Imbalance}_\mathcal{F}(w) = \big\Vert \mathbb{E}[ w(X,Z) \phi(X,Z) ] - \mathbb{E} [\phi(X,0) | Z=1] \big\Vert_*. \]
For simplicity let $\phi(x,z) = x$. Now we get $\text{Imbalance}_\mathcal{F}(w) = | \mathbb{E}[ w(X,Z) X] - \mathbb{E}[X|Z=1] | $ and therefore, the balancing optimization problem finds weights $w$ that reweight the total mean $\mathbb{E}[X]$ to approximate the conditional mean $\mathbb{E}[X|Z=1]$.

\subsection{Equivalences of outcome regression and balancing weighting methods}\label{apx:equiv-conditions}

For the special case of $\ell_2$ kernel balancing, the balancing weights problem is numerically equivalent to directly estimating the conditional expectation $\mathbb{E}[Y_p | \Phi_p]$ via kernel ridge regression and applying the estimated coefficients to $\bar{\Phi}_q$. We begin with the special case of unregularized linear regression and then present the more general setting. We initially present the results assuming $d < n$ and that $\Phi_p$ has rank $d$, turning to the high-dimensional case with $d > n$ in Appendix \ref{sec:high-dimensions}.

\paragraph{Linear regression.} Ordinary least squares regression is equivalent to a weighting estimator that exactly balances the feature means. See \citet{fuller2002regression} for discussion in the survey sampling literature; see \citet{robins2007comment}, \citet{abadie2010synthetic}, \citet{kline2011oaxaca}, and \citet{chattopadhyay2020balancing} for relevant discussions in the causal inference literature.

In particular, let $\hat{w}_\text{exact}$ be the solution to the the exact balancing weights problem in Example \ref{ex:exact_bal_wts} in the main text.
Let $\hat{\beta}_\text{ols} =  (\Phi_p^\top\Phi_p)^{-1} \Phi_p^\top Y_p$ be the OLS coefficients from the regression of $Y_p$ on $\Phi_p$.
We then have the following numerical equivalence:
\begin{align} 
\hat{\mathbb{E}}[ \Phi_q \hat{\beta}_\text{ols} ] \quad &= \quad \hat{\mathbb{E}}[ \hat{w}_\text{exact} \circ Y_p ] \label{ols_equiv} \\
\overbar{\Phi}_q \underbrace{ (\Phi_p^\top\Phi_p)^{-1} \Phi_p^\top Y_p}_{\hat{\beta}_\text{ols}}  \quad &= \quad \underbrace{\overbar{\Phi}_q  (\Phi_p^\top\Phi_p)^{-1} \Phi_p^\top}_{ \tfrac{1}{n} \hat{w}_\text{exact}} Y_p ,\nonumber
\end{align}
where the weights have the closed form $\tfrac{1}{n} \hat{w}_\text{exact} = \overbar{\Phi}_q  (\Phi_p^\top\Phi_p)^{-1} \Phi_p^\top$.

\paragraph{Ridge regression.} This equivalence immediately extends to ridge regression \citep{speckman1979minimax,hirshberg2019minimax,kallus2020generalized}.\footnote{See \citet{harshaw2019balancing} for an interesting connection of this equivalence to experimental design. See \citet{ben2021augmented} and \citet{shen2022panel} for related applications in the panel data setting.}
Let $\hat{w}^\delta_{\ell_2}$ be the minimizer of the $\ell_2$ balancing weights problem in Example \ref{ex:l2_bal_wts} in the main text, with hyperparameter $\delta$. Let
\begin{align}\label{eq:ridge}
\hat{\beta}^\delta_{\text{ridge}} \coloneqq \underset{\beta \in \mathbb{R}^d}{\text{argmin}} \Big\{ \Vert Y_p - \Phi_p\beta \Vert_2^2 + \delta \Vert \beta \Vert_2^2\Big\}
\end{align}
be the ridge regression coefficients from least squares regression of $Y_p$ on $\Phi_p$. 
We then have the following numerical equivalence: 
\begin{align}
\hat{\mathbb{E}}[ \Phi_q \hat{\beta}^\delta_{\text{ridge}} ] \quad &= \quad \hat{\mathbb{E}}[ \hat{w}^\delta_{\ell_2} \circ Y_p ] \label{ridgeequiv} \\
\overbar{\Phi}_q \underbrace{( \Phi_p^\top\Phi_p + \delta I)^{-1} \Phi_p^\top Y_p}_{\hat{\beta}^\delta_{\text{ridge}}} \quad &= \quad \underbrace{\overbar{\Phi}_q ( \Phi_p^\top\Phi_p + \delta I)^{-1} \Phi_p^\top}_{ \tfrac{1}{n}\hat{w}^\delta_{\ell_2}} Y_p,\nonumber
\end{align}
where the weights have the closed form $\tfrac{1}{n} \hat{w}^\delta_{\ell_2} = \overbar{\Phi}_q ( \Phi_p^\top\Phi_p + \delta I)^{-1} \Phi_p^\top$.
Thus, the estimate from  ridge regression is identical to the estimate using the $\ell_2$ balancing weights. We leverage this equivalence in Section \ref{sec:newresults} below.

\paragraph{Kernel ridge regression.} In general, the same equivalence holds in the non-parametric setting where $\phi$ is the feature map induced by an RKHS. In particular, let $\mathcal{F} = \{ f \in \mathcal{H} : \Vert f \Vert_\mathcal{H} \leq r \}$, where $\mathcal{H}$ is a reproducing kernel Hilbert space (RKHS) on $\mathcal{X} \times \mathcal{Z}$ with kernel $\mathcal{K}$, $\Vert \cdot \Vert_\mathcal{H}$ denotes the norm of the RKHS, and $r > 0$. Then the equivalence above holds for $\phi(x,z) \coloneqq \mathcal{K}(x,z,\cdot,\cdot)$. Although  $\phi$ is typically infinite-dimensional, the  Riesz Representer Theorem  shows that the least squares regression and, equivalently, the balancing optimization problem have closed-form solutions. The least squares regression  approach is  \emph{kernel ridge regression} and the weighting estimator is \emph{kernel balancing weights} \citep[see][]{hazlett2020kernel, kim2022kernel}. 
\citet{hirshberg2019minimax} leverage this equivalence to analyze the asymptotic bias of kernel balancing weights. For further discussion of this equivalence  see \citet{gretton2012kernel, kallus2020generalized}.

\medskip
Finally, we briefly mention some additional papers that discuss relevant equivalences. In the context of panel data, \citet{shen2022panel} establish connections between different forms of regression, which is especially relevant for our discussion of high-dimensional features in Appendix \ref{sec:high-dimensions}.
In addition, \citet{lin2022regression} provide an interesting alternative perspective by demonstrating that a large class of outcome regression estimators can be viewed as implicitly estimating the density ratio of the covariate distributions in the two treatment groups. Our results generalize and unify many of these existing numeric equivalences.


\section{Details for when $d > n$}\label{sec:high-dimensions}

In this section, we extend our results to the high-dimensional setting. In all that follows we will assume that $d > n$ and we will assume $\Phi_p^\top$ has rank $n$.\footnote{Alternatively, we could follow \citet{bartlett2020benign} and assume that, almost surely, the projection of $\Phi_p$ on the space orthogonal to any eigenvector of $\mathbb{E}[\Phi_p\Phi_p^\top]$ spans a space of dimension $n$. But as our results are numerical this has no real advantage.} For $d = \infty$, we replace $\mathbb{R}^d$ with any infinite-dimensional Hilbert space $\mathcal{H}$ and we require the norm defining $\mathcal{F}$ to be the norm of the Hilbert space. In this case, it should be understood that $\Phi_p \in \mathcal{H}^n$.

\subsection{Balancing weights when $d > n$}

In the main text, recall that there are three equivalent versions of the balancing weights problem: the penalized, constrained, and automatic form with hyperparameters $\delta_1, \delta_2, \delta_3 \geq 0$ respectively. When $\Phi_p^\top \Phi_p$ is no longer invertible, a unique solution may fail to exist for certain values of these hyperparameters. We provide the relevant technical caveats here.

We begin by mentioning that for $\delta_1 > 0$, the penalized form of the balancing weights optimization problem is strictly convex, and therefore a unique solution exists, regardless of whether $d > n$. However, when $\delta_1 = 0$, there could potentially be infinite many solutions. In this setting, we choose the one with the minimum norm:
\begin{align}
    \min_{w \in \mathbb{R}^n} &\Vert w \Vert_2^2 \label{eq:min-norm-balancing}\\
    \text{such that } & \Vert w \Phi_p - \overbar{\Phi}_q \Vert_*^2 = \min_v \Vert v \Phi_p - \overbar{\Phi}_q \Vert_*^2.\nonumber
\end{align}

If we define $\delta_\text{min} \coloneqq \min_v \Vert v \Phi_p - \overbar{\Phi}_q \Vert_*^2$, we see that the minimum norm solution in \Cref{eq:min-norm-balancing} corresponds to a solution to the constrained form of balancing weights with $\delta_2 = \delta_\text{min}$. Importantly, no solution exists for $\delta_2 < \delta_\text{min}$, and we must make the additional restriction that $\delta_2 \geq \delta_\text{min}$. In particular, no solution exists for $\delta_2 = 0$ and we cannot achieve exact balance; that is, for all $w$, $w \Phi_p \neq \bar{\Phi}_q$.

As in the penalized form, the automatic form is strictly convex and a unique solution exists for $\delta_3 > 0$. When $\delta_3=0$ we choose the minimum norm solution: by duality this will be equivalent to the minimum norm solution to the penalized problem \citep[see][]{bruns2022outcome}.

Note that for $d = \infty$, each ``row'' of $\Phi_p$ is a vector in a Hilbert space $\mathcal{H}$. To solve the balancing weights problem computationally, we need a closed-form solution to the Hilbert space norm $\Vert \cdot \Vert_\mathcal{H}$. For example, this is a tractable computation when $\mathcal{H}$ is an RKHS.

\citet{singh2021debiased} gives the automatic form of this problem. See also \citet{wong2018kernel}, \citet{hazlett2020kernel}, and \citet{kallus2020generalized}.

\subsection{Equivalences from \Cref{apx:equiv-conditions} when $d > n$}\label{sec:equiv-d-bigger-n}

We now extend the equivalences from \Cref{apx:equiv-conditions} to the high-dimensional case.
Let $\delta \geq 0$ be the hyperparameter for the penalized form of balancing weights --- as we note above, this is important to state explicitly, since the constrained form will not have a solution for all values of its hyperparameter. For hyperparameter $\delta>0$, the solutions to $\ell_2$ balancing weights and ridge regression are identical as in \Cref{eq:ridge} with no alterations; ridge regression works by default when $d > n$. On the other hand, when $\delta = 0$, there exist infinitely many solutions to the normal equations that define the solution to the OLS optimization problem. Since $(\Phi_p^\top \Phi_p)$ is not invertible, \Cref{eq:ridge} does not apply directly. Instead, we introduce the minimum norm solution to OLS: 
\begin{align*}
    \min_{\beta \in \mathbb{R}^d} &\Vert \beta \Vert_2^2\\
    \text{such that } & \Vert \Phi_p \beta - Y_p \Vert_2^2 = \min_{\beta'} \Vert \Phi_p \beta' - Y_p \Vert_2^2.
\end{align*}
See \cite{bartlett2020benign} for an extensive discussion of this optimization problem and its statistical properties as an OLS estimator. For $d > n$, the minimum norm solution is:
\begin{align*}
    \hat{\beta}_\text{ols} \coloneqq (\Phi_p^\top \Phi_p)^\dag \Phi_p^\top Y_p = \Phi_p^\top(\Phi_p \Phi_p^\top)^{-1} Y_p,
\end{align*}
where $A^\dag$ denotes the pseudoinverse of a matrix $A$. Note that the definition holds in general.\footnote{For example, when $\Phi_p \in \mathcal{H}^n$ for an infinite-dimensional Hilbert space $\mathcal{H}$, $(\Phi_p^\top \Phi_p)^\dag$ is guaranteed to exist, since it is bounded and has closed range.} In the low-dimensional setting in the main text, $(\Phi_p^\top \Phi_p)$ is invertible, and so $(\Phi_p^\top \Phi_p)^\dag = (\Phi_p^\top \Phi_p)^{-1}$. The second equality holds only when $d > n$. 

A version of \Cref{ols_equiv} holds between the minimum norm $\ell_2$ balancing weights and minimum norm OLS estimators. Because the minimum norm $\ell_2$ balancing weights do not achieve exact balance, we change the notation from $\hat{w}_\text{exact}$ to $\hat{w}_{\ell_2}^0$. In this setting, $\Vert \cdot \Vert_* = \Vert \cdot \Vert_2$ and the minimum-norm balancing weights problem in \Cref{eq:min-norm-balancing} is also a minimum norm linear regression, but of $\bar{\Phi}_q \in \mathbb{R}^d$ on $\Phi_p^\top \in \mathbb{R}^{d \times n}$:
   \[ \hat{w}_{\ell_2}^0 =\Phi_p^\top (\Phi_p^\top \Phi_p)^\dag \bar{\Phi}_q = (\Phi_p \Phi_p^\top)^{-1} \Phi_p \bar{\Phi}_q . \]
Therefore, \Cref{ols_equiv} holds by replacing the inverse with the pseudo-inverse:
\begin{align*} 
\hat{\mathbb{E}}[ \Phi_q \hat{\beta}_\text{ols} ] \quad &= \quad \hat{\mathbb{E}}[ \hat{w}_{\ell_2}^0 \circ Y_p ] \\
\hat{\mathbb{E}}[ \Phi_q \underbrace{ (\Phi_p^\top\Phi_p)^\dag \Phi_p^\top Y_p}_{\hat{\beta}_\text{ols}} ] \quad &= \quad\hat{\mathbb{E}}[ \underbrace{\overbar{\Phi}_q  (\Phi_p^\top\Phi_p)^\dag \Phi_p^\top}_{ \hat{w}_{\ell_2}^0} \circ Y_p  ],\\
\hat{\mathbb{E}}[ \Phi_q \underbrace{ \Phi_p^\top (\Phi_p\Phi_p^\top)^{-1} Y_p}_{\hat{\beta}_\text{ols}} ] \quad &= \quad\hat{\mathbb{E}}[ \underbrace{\overbar{\Phi}_q \Phi_p^\top  (\Phi_p\Phi_p^\top)^{-1} }_{ \hat{w}_{\ell_2}^0} \circ Y_p  ].
\end{align*}

\subsection{Propositions \ref{prop:bal_wt_OLS} and \ref{generalregularizationpath} when $d > n$}\label{sec:regpath-d-bigger-n}

The results in Propositions \ref{prop:bal_wt_OLS} and \ref{generalregularizationpath} apply to the setting where  $d > n$ without any further alteration using the pseudo-inverse.
\begin{proof}[Proof of \Cref{prop:bal_wt_OLS}]
    \begin{align*}
        Y_p^\top \Phi_p  \hat{\theta}^\delta
        = Y_p^\top \Phi_p \Phi_p^\dag \Phi_p \hat{\theta}^\delta
        = Y_p^\top \Phi_p (\Phi_p^\top\Phi_p)^\dag \Phi_p^\top \Phi_p \hat{\theta}^\delta = \hat{\beta}_{\text{ols}} \hat{\Phi}_q, \\
    \end{align*}
    where the first two equalities follow from the pseudoinverse identities $A = A A^\dag A$ and $A^\dag = (A^\top A)^\dag A^\top$ for any matrix $A$. 
\end{proof}

Likewise \Cref{generalregularizationpath} holds exactly for $\hat{\beta}_\text{ols}$ defined with the pseudoinverse.

\subsection{The RKHS Setting}
\label{sec:rkhs_appendix}

The results for $d=\infty$ can be computed efficiently for reproducing kernel Hilbert spaces. For notational simplicity, we will consider the distribution shift setting from \Cref{ex:dist_shift}. Let $\mathcal{H}$ be a possibly-infinite-dimensional RKHS on $\mathcal{X}$ with kernel $\mathcal{K}$ and induced feature map via the representer theorem, $\phi : \mathcal{X} \rightarrow \mathcal{H}$ with $\phi(x) = \mathcal{K}(x, \cdot)$. Let $\Vert \cdot \Vert_\mathcal{H}$ denote the norm of $\mathcal{H}$. Let $K_p$ be the matrix with entries $\mathcal{K}(x_i, x_j)$, where $x_i, x_j \in \mathcal{X}$ are the $i$th and $j$th entries of $X_p$. Then $\Phi_p \Phi_p^\top = K_p$ is invertible. 

We will write out the versions of the main results for $\mathcal{F} = \mathcal{H}$ to demonstrate how to compute the corresponding results for RKHSs even though $d = \infty$. Denote the solution to the regularized least squares problem in $\mathcal{H}$ with $\lambda \geq 0$:
\begin{align*}
    \hat{f}^\delta &\coloneqq \text{argmin}_{f \in \mathcal{H}} \Vert f(X_p) - Y_p \Vert_2^2 + \lambda \Vert f \Vert_\mathcal{H}^2 .
\end{align*}
This is equivalent to the following problem by the representer theorem: 
\begin{align*}
    \hat{\beta}^\delta_\mathcal{H}  &\coloneqq \text{argmin}_{\beta \in \mathbb{R}^n} \Vert K \beta - Y_p \Vert_2^2 + \lambda \beta K \beta\\
    &= (K_p +\lambda I )^{-1} Y_p.
\end{align*}
Let $K_{x,p} \in \mathbb{R}^n$ be the row vector with entries $\mathcal{K}(x, x_i)$ where $x$ is an arbitrary element of $\mathcal{X}$ and $x_i$ is the $i$th entry of $X_p$. Then for any element $x \in \mathcal{X}$, $\hat{f}^\delta(x) = K_{x,p} \hat\beta^\delta_\mathcal{H}$. In particular, let define $K_{q,p}$ as the matrix with $(i,j)$th entry $\mathcal{K}(x_{qi}, x_{pj})$ where $x_{qi}$ is the $i$th sample from the target population and  $x_{qj}$ is the $j$th entry of $X_p$. Furthermore, define $\bar{K}_{q,p} \coloneqq \hat{\mathbb{E}}[K_{p,q}] \in \mathbb{R}^n$. Then, for any solution $\hat{w}^\delta_\mathcal{H}$ to the penalized form of balancing weights with function class $\mathcal{F}$ and hyperparameter $\delta \geq 0$: 
\begin{align}
    \hat{w}_\mathcal{H}^\delta = (K_p + \lambda I)^{-1} \bar{K}_{q,p}.
\end{align}
The proof follows from the closed-form of $\text{Imbalance}_\mathcal{H}(w)$, known as the Maximum Mean Discrepancy (MMD) \citep{gretton2012kernel}; see, e.g., \citet{hirshberg2019minimax, kallus2020generalized, bruns2022outcome}. 

With these preliminaries, we immediately have the following equivalence from \cite{hirshberg2019minimax}, which generalizes \Cref{apx:equiv-conditions} to the RKHS case:
\begin{align}
\hat{\mathbb{E}}[ K_{q,p} \hat{\beta}^\delta_\mathcal{H} ] \quad &= \quad \hat{\mathbb{E}}[ \hat{w}^\delta_\mathcal{H} \circ Y_p ] \\
\hat{\mathbb{E}}[ K_{q,p} \underbrace{(K_p + \delta I) Y_p}_{\hat{\beta}^\delta_\mathcal{H}} ] \quad &= \quad\hat{\mathbb{E}}[ \underbrace{\bar{K}_{q,p} (K_p + \delta I)^{-1} }_{ \hat{w}^\delta_{\mathcal{H}}} \circ Y_p ].\nonumber
\end{align}
Likewise, we have the following form for \Cref{prop:bal_wt_OLS}. Define $\hat{K}_{q,p} \coloneqq \hat{w}^{\delta T}_\mathcal{H} K_p$. Then, for any $\delta \geq 0$:
\begin{align*}
    \hat{\mathbb{E}}[\hat{w}^\delta_\mathcal{H} \circ Y_p] = \hat{\mathbb{E}} [ \hat{K}_{q,p} \hat{\beta}_\mathcal{H}^0].
\end{align*}
The resulting expression for \Cref{generalregularizationpath} is:
\begin{align*}
       & \hat{\mathbb{E}}[ \hat{w}_\mathcal{H}^\delta \circ Y_p] + \hat{\mathbb{E}}\left[\left( K_{q,p} - \hat{K}_{q,p}^\delta\right)\hat{\beta}_\mathcal{H}^\lambda\right] 
        = \hat{\mathbb{E}}[ K_{q,p} \hat{\beta}_\text{aug}],
    \end{align*}
    where the $j$th element of $\hat{\beta}_\text{aug}$ is:
    \begin{align*}
        \hat{\beta}_{\text{aug} ,j} &\coloneqq \left(1-a_j^\delta\right) \hat{\beta}_{\mathcal{H},j}^\lambda + a_j^\delta \hat{\beta}_{\mathcal{H},j}^0\\[0.5em]
        a_j^\delta &\coloneqq \frac{\widehat{\Delta}_{j}^\delta}{ \Delta_j },   
    \end{align*}
    where $\Delta_j = \bar{K}_{q,p,j} - \bar{K}_{p,j}$ and $\widehat{\Delta}^\delta_j = \hat{K}_{q,p,j}^\delta - \bar{K}_{p,j}$ with $\bar{K}_{p,j} \coloneqq \hat{\mathbb{E}}[K_p]$. 

Identical versions for the RKHS setting apply to \Cref{sec:L2}. These follow directly from the expressions above so we will omit repeating them explicitly. Importantly, equivalent versions for $\ell_\infty$ balancing in \Cref{sec:l8_aug_section} do \emph{not} follow immediately because an infinite dimensional vector space equipped with the $\ell_1$ norm does not form a Hilbert space. We conjecture that such extensions could be constructed using the Reproducing Kernel Banach Space literature \citep{lin2022reproducing}.

\section{Sample Splitting and Cross-Fitting}\label{sec:sample-split}

We briefly discuss the application of our numerical results in the setting with sample splitting. In standard balancing weights, we fit the weighting model $\theta^\delta$ using data $\Phi_p$ from population $p$ --- and then also apply the weighting model at $\Phi_p$. Cross-fitting breaks possible dependencies by only applying parameters on samples that were not used for estimation; this is a core technique in AutoDML \citep{chernozhukov2022automatic} and has been widely studied in the recent literature on doubly robust estimation \citep[see, for example][]{newey2018cross, kennedy2022semiparametric}. With cross-fitting, our numerical results only hold approximately, though we argue that the overall behavior of the estimators is qualitatively similar. 

To illustrate this, let the $n$ samples from population $p$ be split into $S$ partitions or ``splits'' and assume for simplicity that each split has size $n' \coloneqq n/S$. Denote the split $s$ covariates $\Phi_{p,s}$ and outcomes $Y_{p,s}$. Let $\Phi_{p,-s}$ and $Y_{p,-s}$ denote covariates and outcomes that are not in split $s$. As a simple example, consider cross-fit, unaugmented $\ell_2$ balancing weights with parameter $\delta = 0$ (i.e., OLS). For each split, we first solve the balancing problem \emph{out-of-sample} by solving for the coefficients as in the example in Equation (\ref{ols_equiv}):
\begin{align*}
    \hat{\theta}^0_{-s} &\coloneqq \text{argmin}_{\theta \in \mathbb{R}^{d}} \Big\{ \Vert \tfrac{1}{n-n'}\theta \Phi_{p,-s}^\top \Phi_{p,-s} - \bar{\Phi}_q \Vert^2_2 \Big\}\\ 
    &= (n-n') \overbar{\Phi}_q  (\Phi_{p,-s}^\top\Phi_{p,-s})^\dag.
\end{align*}
Note that we have re-written the balancing problem to be in terms of the coefficients $\theta$ instead of the weights $w = \theta \Phi_{p,-s}$, in order to emphasize that the goal is to apply this weighting model to the split $s$ samples to obtain weights $\hat{\theta}^0_{-s} \Phi_{p,s}^\top$. The split $s$ balancing weights estimate is then $\tfrac{1}{n'}\hat{\theta}^0_{-s} \Phi_{p,s}^\top Y_{p,s}$ and the final cross-fit estimator averages over these splits:
\[ \frac{1}{S} \sum_{s=1}^S \hat{\theta}^0_{-s} \Phi_{p,s}^\top Y_{p,s}. \]
Note that the coefficients $\hat{\theta}^0_{-s}$ enforce exact balance --- but only for data outside split $s$. In general, these weights will not achieve exact balance for split $s$. That is:
\begin{align*}
     \Vert \tfrac{1}{n-n'} \hat{\theta}^0_{-s} \Phi_{p,-s}^\top \Phi_{p,-s} - \bar{\Phi}_q \Vert_2 &= 0, \\
      \Vert \tfrac{1}{n'} \hat{\theta}^0_{-s} \Phi_{p,-s}^\top \Phi_{p,s} - \bar{\Phi}_q \Vert_2 &\neq 0.
\end{align*}
For an augmented estimator, we would also fit an outcome model using data from outside split $s$. For example, we could fit OLS:
\[ \hat{\beta}_{\text{ols},-s} \coloneqq (\Phi_{p,-s}^\top\Phi_{p,-s} )^\dag \Phi_{p,-s}^\top Y_{p,-s}. \]
The augmented estimator combining $\hat{\theta}^0_{-s}$ with $\hat{\beta}_{\text{ols},-s}$ would give:
\begin{align*}
    \frac{1}{S} \sum_{s=1}^S \Big(\bar{\Phi}_q \hat{\beta}_{\text{ols},-s}  + \tfrac{1}{n'}\hat{\theta}^0_{-s} \Phi_{p,s}^\top (Y_{p,s} - \Phi_{p,s} \hat{\beta}_{\text{ols},-s}) \Big).
\end{align*}

\subsection{\Cref{generalregularizationpath} with Sample Splitting}\label{sec:aug-sample-split}
For \Cref{generalregularizationpath} with sample splitting, within a single split the numerical result is identical, but the substantive point of interest is that we always shift coefficients toward the \emph{in-sample} OLS coefficients.

Let the coefficients $\hat{\beta}_{-s}^\lambda$ and $\hat{\theta}^\delta_{-s}$ be fixed vectors; in practice they will be models fit using samples not in $s$. Define $\hat{\Phi}_{q,s}^\delta \coloneqq \tfrac{1}{n'} \hat{\theta}^\delta_{-s} \Phi_{p,s}^\top \Phi_{p,s}$, and $\hat{\beta}_{\text{ols},s} \coloneqq (\Phi_{p,s}^\top\Phi_{p,s} )^\dag \Phi_{p,s}^\top Y_{p,s}$. 
Then the augmented estimator in the $s$th partition follows immediately from \Cref{generalregularizationpath} in the main text:
    \begin{align*}
        \hat{\mathbb{E}}[ \Phi_q \hat{\beta}_{\text{aug},s}],
    \end{align*}
    where the $j$th element of $\hat{\beta}_{\text{aug},s}$ is:
    \begin{align*}
        \hat{\beta}_{\text{aug},s ,j} &\coloneqq \left(1-a_{j,s}^\delta\right) \hat{\beta}_{-s,j}^\lambda + a_{j,s}^\delta \hat{\beta}_{\text{ols},s,j}\\[0.5em]
        a_{j,s}^\delta &\coloneqq \frac{\widehat{\Delta}_{j,s}^\delta}{ \Delta_{j,s} },  
    \end{align*}
    where $\Delta_{j,s} = \overbar{\Phi}_{q,j} - \overbar{\Phi}_{p,s,j}$ and $\widehat{\Delta}^\delta_{j,s} = \hat{\Phi}_{q,s,j}^\delta - \overbar{\Phi}_{p,j}$.

\subsection{Unregularized Outcome Model}

Whereas \Cref{generalregularizationpath} is unchanged by sample splitting, some of the equivalence results are affected by sample splitting. For example, we know from \cite{robins2007comment} that when the base learner is unregularized, i.e., $\hat{\beta}_\text{reg}^\lambda = \hat{\beta}_\text{ols}$, then the entire estimator collapses to OLS alone. With sample splitting, this is only true if $\hat{\beta}_\text{reg}^\lambda = \hat{\beta}_{\text{ols},s}$. With cross-fitting, however, the outcome model would typically be estimated using only data from outside split $s$. 

For example, consider $\hat{\beta}_{\text{ols},-s}$ introduced above. Plugging this into the result in \Cref{sec:aug-sample-split} yields:
\[  \hat{\beta}_{\text{aug} , s, j} \coloneqq \left(1-a_{j,s}^\delta\right) \hat{\beta}_{\text{ols},-s,j}^\lambda + a_{j,s}^\delta \hat{\beta}_{\text{ols},s,j}.  \]
When the OLS coefficients are fit out of sample, this prevents overfitting to the $\ell$th split. Augmentation shifts the out-of-split OLS coefficents back toward the in-split OLS coefficients. As the sample size in each split goes to infinity, then both $\hat{\beta}_{\text{ols},s}$ and $\hat{\beta}_{\text{ols},-s}$ converge to the same population OLS coefficients and the augmented coefficients converge to standard OLS for any weighting model $\theta^\delta \in \mathbb{R}^d$. 

\subsection{Unregularized Weight Model}\label{apx:cross-fit-unreg-weight}

Consider the opposite case where $\hat{\beta}_\text{reg}^\lambda$ is arbitrary and the weight model $\theta^{0}_{-s}$ achieves exact balance between $\Phi_{p,-s}$ and $\Phi_q$, as defined above. Then, as suggested in \Cref{sec:aug-sample-split}, $\hat{\Phi}^0_{q,s} \coloneqq \hat{\theta}^0_{-s} \Phi_{p,s}^\top \Phi_{p,s} \neq \bar{\Phi}_q$ in general. Instead, we only have an approximation:
\begin{align*}
    a_{j,s}^\delta &\coloneqq \frac{\hat{\Phi}^0_{q,s,j} - \overbar{\Phi}_{p,s,j}}{  \overbar{\Phi}_{q,j} - \overbar{\Phi}_{p,s,j} } \approx 1,
\end{align*}
where the approximation becomes equality as the sample size in each split goes to infinity. As a result, $ \hat{\beta}_{\text{aug} , s} \approx \hat{\beta}_{\text{ols}, s}$, 
the in-split OLS coefficients, where again these coefficients are equal as the sample size in each split goes to infinity.

\subsection{``Double Ridge''}

Similarly, whereas \Cref{prop:double_ridge} reduced to a single ridge outcome model, with sampling splitting we instead obtain an affine combination of in-split and out-of-split ridge regressions. Let $\hat{\beta}^\lambda_{-s}$ and $\hat{\theta}^\delta_{-s}$ denote ridge and $\ell_2$ balancing coefficients respectively fit outside of the $\ell$th split. For notational simplicity, assume that $(\Phi^\top_{p,s} \Phi_{p,s}) = \text{diag}(\sigma^2_{1,s}, ..., \sigma^2_{d,s})$ and similarly for $-s$. Then the augmented estimator in the $s$th split equals $\hat{\mathbb{E}}[ \Phi_q \hat{\beta}_{\text{aug},s} ]$, where
\begin{align*}
    &\hat{\beta}_{\text{aug},s,j} \coloneqq \left( \frac{\sigma^2_{j,-s} - \sigma^2_{j,s} + \delta }{\sigma^2_{j,-s} + \delta} \right) \hat{\beta}^\lambda_{-s,j} + \left( \frac{\sigma^2_{j,s} + \delta}{\sigma^2_{j,-s} + \delta}\right) \hat{\beta}^\delta_{s,j}.
\end{align*}

\section{Nonlinear weights}
\label{sec:nonlinear_weights}

Our main results focus on \emph{linear} balancing weights, where $\hat{w} = \hat{\theta} \Phi_p^\top$. A rich tradition in survey statistics \citep[e.g.,][]{deville1992calibration}, machine learning \citep[e.g.,][]{menon2016linking}, and causal inference \citep[e.g.,][]{zhao2019covariate} focuses instead on \emph{non-linear} balancing weights, such as when the weights correspond to a specific \emph{link function} $g(\cdot)$ applied to the linear predictor, $\hat{w} = g(\hat{\theta} \Phi_p^\top)$, or, equivalently, when the balancing weights problem penalizes an alternative dispersion penalty. Other balancing weights estimators are inherently nonlinear, for instance with weights estimated via a neural network \citep{chernozhukov2022riesznet}. 

In this section, we extend our results to nonlinear weights. We first consider general, arbitrary weights. We then consider the special case of minimum-variance weights that are constrained to be non-negative, which includes many important examples \citep{zubizarreta2015stable, athey2018approximate, abadie2010synthetic}. For this special case, we show that the non-negativity constraint corresponds to sample trimming.

Before turning to these results, we briefly review several important examples of nonlinear weights.

\begin{itemize}

    \item \textbf{Entropy balancing and alternative dispersion measures.} There is a large literature exploring alternative dispersion penalties for balancing weights and other calibrated estimators; see \citet{deville1992calibration} for a discussion in survey sampling and \citet{zhao2019covariate} for a discussion in causal inference. 
    The most common alternative to the variance penalty is \emph{entropy balancing} \citep{hainmueller2012entropy}, which has a dual representation as linear weights passed through an exponential link function, $\hat{w} = \text{exp}(\hat{\theta}\Phi_p^\top)$. For estimands like the Average Treatment Effect on the Treated, this implies a (calibrated) logistic regression model for the propensity score:
$$
 \hat{w} = \text{exp}(\hat{\theta} \Phi_p^\top) = \text{logit}^{-1}(\hat{\theta} \Phi_p^\top)  \big/ \left(1 - \text{logit}^{-1}(\hat{\theta} \Phi_p^\top) \right).
$$
    See \citet{tan2020regularized} for further discussion.

    \item \textbf{Traditional IPW via maximum likelihood estimation.} Importantly, the form above is purely numeric and does not depend on the particular estimation procedure for the dual parameters, $\theta$. Thus, our results below also apply to the common practice of IPW with a maximum likelihood (or other) estimate of a logistic regression propensity score model, as well as to AIPW and \emph{Double Machine Learning} estimators \citep{chernozhukov2018double}.

    \item \textbf{Generalized Empirical Likelihood.} The balancing weights literature is closely related to the empirical likelihood approach, which also finds weights over units to estimate an outcome model. Examples of (generalized) empirical likelihood in similar settings include \citet{hellerstein1999el}, \citet{qin2007el}, \citet{newey2004el}, \citet{graham2012inverse}, and  \citet{duchi2021el}. See \citet[Sec 3]{hirano2003efficient} for a didactic example connecting inverse propensity score weighting, outcome modeling, and the empirical likelihood approach.
    Standard Empirical Likelihood estimators penalize the log of the weights, $\text{log}(w)$. Many other dispersion choices are common; see, for example, \citet{newey2004el}. 
    
\end{itemize}

Our results also apply to weighting methods that do not have this same structure, such as Covariate Balancing Propensity Scores \citep{imai2014cbps}. Finally, the implied weights could also arise from an outcome model that can be represented as a linear smoother, such as a random forest; see, for example, \citet{lin2022reproducing}.

\subsection{General nonlinear weights}

We now show that \Cref{prop:bal_wt_OLS} holds approximately for arbitrary, non-linear weights.

\begin{proposition} \label{prop:bal_wt_OLS_nonlin}
Let $\hat{w} \in \mathbb{R}^n$, be any arbitrary weights. Denote the corresponding weighted features $\hat{\Phi}_q \coloneqq \tfrac{1}{n} \hat{w} \Phi_p$. 
Let $\hat{\beta}_{\text{ols}} = ( \Phi_p^\top\Phi_p)^\dag \Phi_p^\top Y_p$ be the OLS coefficients of the regression of $Y_p$ on $\Phi_p$. Then for any $\eta \in \mathbb{R}^d$, $c \in \mathbb{R}$: 
\begin{align*}
\hat{\mathbb{E}}\left[\hat{w} \circ Y_p\right] &=\hat{\Phi}_q \hat{\beta}_{\text{ols}}  + \underbrace{(\hat{w} - \Phi_p \eta - c)^\top (Y_p - \Phi_p \hat{\beta}_\text{ols})}_{\text{\emph{approximation error}}}.
\end{align*}
\end{proposition}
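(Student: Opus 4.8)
The plan is to recognize that this proposition is an exact algebraic identity rather than a genuine approximation result: the quantity labeled ``approximation error'' is \emph{equal} to the gap $\hat{\mathbb{E}}[\hat{w}\circ Y_p] - \hat{\Phi}_q\hat{\beta}_{\text{ols}}$, and the freedom in $\eta$ and $c$ reflects an orthogonality that holds for \emph{every} choice of these parameters. The engine of the proof is the OLS first-order condition. Writing $r \coloneqq Y_p - \Phi_p\hat{\beta}_{\text{ols}}$ for the residual vector, the normal equations defining $\hat{\beta}_{\text{ols}} = (\Phi_p^\top\Phi_p)^\dag\Phi_p^\top Y_p$ give $\Phi_p^\top r = 0$, and under the standing intercept normalization (so that $\mathbf{1}$ lies in the column space of $\Phi_p$) we also have $\mathbf{1}^\top r = 0$.

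First I would rewrite the target gap directly. Expanding the definitions, $\hat{\mathbb{E}}[\hat{w}\circ Y_p] - \hat{\Phi}_q\hat{\beta}_{\text{ols}} = \tfrac{1}{n}\hat{w}(Y_p - \Phi_p\hat{\beta}_{\text{ols}}) = \tfrac{1}{n}\hat{w}\,r$, a purely mechanical rearrangement (one should check that the $\tfrac1n$ normalization and transpose conventions match those in the displayed statement, but this is routine). The entire content of the result is therefore the claim that $\hat{w}\,r = (\hat{w} - \Phi_p\eta - c)^\top r$ for all $\eta$ and $c$. The key step is to discharge the $\eta, c$ terms using orthogonality: for any $\eta\in\mathbb{R}^d$ and $c\in\mathbb{R}$, $(\Phi_p\eta + c\mathbf{1})^\top r = \eta^\top(\Phi_p^\top r) + c(\mathbf{1}^\top r) = 0$, so subtracting $\Phi_p\eta + c$ from $\hat{w}$ inside the inner product leaves its value unchanged. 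This yields $(\hat{w} - \Phi_p\eta - c)^\top r = \hat{w}\,r$ and completes the identity. As a sanity check, when $\hat{w}$ is a \emph{linear} weight, $\hat{w}^\top = \Phi_p\hat{\theta}^\top$ lies in the column space of $\Phi_p$, so $\hat{w}\,r = 0$ and the error term vanishes, recovering \Cref{prop:bal_wt_OLS} exactly.

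I do not anticipate a genuine obstacle, since the statement reduces to the normal equations. The only point requiring care is interpretive rather than technical: one must explain why the parameters $\eta$ and $c$ appear at all, given that the error term is numerically independent of them. The reason is that they allow $\Phi_p\eta + c$ to be chosen as the best linear-plus-constant approximation to the weight vector $\hat{w}$; the first factor $\hat{w} - \Phi_p\eta - c$ is then the residual of regressing the weights on the features, so its norm --- and hence, via Cauchy--Schwarz, the size of the approximation error --- is governed precisely by how far the nonlinear weights depart from being linear in $\Phi_p$. I would make this interpretation explicit, since it is the substantive payoff that distinguishes the nonlinear case from the exact linear identity of \Cref{prop:bal_wt_OLS}.
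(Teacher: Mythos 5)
Your argument is correct in substance, and since the paper states \Cref{prop:bal_wt_OLS_nonlin} without any explicit proof, the normal-equations argument you give is surely the intended one: the gap $\hat{\mathbb{E}}[\hat{w}\circ Y_p] - \hat{\Phi}_q \hat{\beta}_{\text{ols}}$ is an inner product of the weights with the OLS residual $r = Y_p - \Phi_p\hat{\beta}_{\text{ols}}$, the term $(\Phi_p\eta + c\mathbf{1})^\top r$ vanishes by orthogonality (with the pseudoinverse, $\Phi_p^\top r = 0$ still holds, matching the paper's own proof technique for \Cref{prop:bal_wt_OLS}), so the ``approximation error'' is in fact independent of $(\eta, c)$. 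Your closing interpretation --- that the freedom in $(\eta,c)$ lets the first factor be the residual of regressing $\hat{w}$ on the features --- is exactly what the paper uses next to state the Cauchy--Schwarz/H\"older bound.

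Two caveats deserve sharper treatment than ``routine,'' though both are inherited from the paper's statement rather than introduced by you. First, the $1/n$ normalization: the left-hand side is a sample average, so the gap you compute is $\tfrac{1}{n}\hat{w}\,r$, while the displayed error term $(\hat{w} - \Phi_p\eta - c)^\top r$ is an \emph{unnormalized} inner product, which after orthogonality equals $\hat{w}^\top r$. You correctly carry the $\tfrac1n$ through the gap computation and then silently drop it when asserting ``the entire content of the result is $\hat{w}\,r = (\hat{w} - \Phi_p\eta - c)^\top r$.'' As literally stated the two sides differ by a factor of $n$; exact equality requires the error term to carry a $\tfrac1n$ (equivalently, to be written as $\hat{\mathbb{E}}\bigl[(\hat{w} - \Phi_p \eta - c)\circ(Y_p - \Phi_p\hat{\beta}_{\text{ols}})\bigr]$). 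A careful proof should state this correction explicitly rather than defer it. Second, the constant term: killing $c\,\mathbf{1}^\top r$ requires $\mathbf{1}^\top r = 0$, which you rightly tie to an intercept --- but note that the paper's standing convention is \emph{centered features}, $\bar{\Phi}_p = 0$, under which $\mathbf{1}$ is orthogonal to, not contained in, the column space of $\Phi_p$, so that $\mathbf{1}^\top r = n\bar{Y}_p$ in general. One therefore needs either an explicit intercept column retained in $\Phi_p$ (the paper's ``equivalent'' formulation of centering) or outcomes centered as well; without one of these, the ``for any $c$'' claim fails. Making that hypothesis explicit, as you do, is the right call.
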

Importantly, \Cref{prop:bal_wt_OLS_nonlin} holds for arbitrary weights and, as we discuss above, does not require estimating the weights via a balancing weights optimization problem. 

Because \Cref{prop:bal_wt_OLS_nonlin} holds for any $\eta$ and $c$, we have the following bound on the approximation error:
\begin{align}
    \big| \hat{\mathbb{E}}\left[\hat{w} \circ Y_p\right] - \hat{\Phi}_q \hat{\beta}_{\text{ols}}  \big| \leq \min_{c, \eta} \big\Vert \hat{w} - \Phi_p \eta - c \big\Vert_2 \big\Vert Y_p - \Phi_p \hat{\beta}_\text{ols} \big\Vert_2,
\end{align}
where we could replace the $2$-norms with any valid H{\"o}lder inequality pair. 

The approximation error will necessarily depend on the specific data and target functional. We can nonetheless offer some remarks. 
First, when $d > n$, we have $\Vert Y_p - \Phi_p \hat{\beta}_\text{ols} \Vert_2 = 0$, and we recover \Cref{prop:bal_wt_OLS} exactly. Thus, in the high dimensional setting, \Cref{generalregularizationpath} holds exactly for arbitrary non-linear weights. For $n > d$, our bound on the approximation error will generally be non-zero and depends on the size of the deviations of $\hat{w}$ from its linear projection --- a natural measure of the ``non-linearity'' of the weights. For a given weight vector, we can compute this measure by first fitting an unpenalized linear regression of the weights on $\Phi_p$ and then taking the norm of the residuals. By a simple Taylor approximation argument, this value will be small when $\Phi_p$ is tightly concentrated around its mean.

Finally, following the argument in Section \ref{sec:newresults}, we can extend this proposition to augmented estimators with nonlinear weights: the augmented estimator equals $\hat{\mathbb{E}}[ \Phi_q \hat{\beta}_\text{aug}]$ (just as before) plus the approximation error in \Cref{prop:bal_wt_OLS_nonlin}. Therefore if \emph{nonlinear} weights $\hat{w}$ give exact balance, i.e., if $\hat{\Phi}_q = \bar{\Phi}_q$, then $\hat{\beta}_\text{aug} = \hat{\beta}_\text{ols}$, and the final point estimate differs from OLS by the approximation error in \Cref{prop:bal_wt_OLS_nonlin}.

\subsubsection{Illustration: LaLonde}

As an illustration, we compute both entropy balancing weights and logistic regression IPW weights for the ``short'' LaLonde dataset with $d = 11$ features. When targeting the Average Treatment Effect on the Treated, both weights have the same exponential link form.

We use a (cross-validated) lasso-penalized regression for the outcome model, and compute the corresponding augmented estimators, $\hat{\psi} \coloneqq \bar{\Phi}_q^\top \hat{\beta}_\text{reg} + \hat{w}^\top (Y_p - \Phi_p \hat{\beta}_\text{reg})$, and augmented coefficients $\hat{\beta}_\text{aug}$ as defined in \Cref{generalregularizationpath}. We compare the augmented estimators to the plug-in OLS estimate via a triangle-inequality-type decomposition:
\begin{align*}
    \text{Difference to OLS plug-in: }& | \bar{\Phi}_q^\top \hat{\beta}_\text{ols} - \hat{\psi} |\\
    \text{Difference due to weight non-linearity: }& |\bar{\Phi}_q^\top \hat{\beta}_\text{aug} - \hat{\psi} |\\
    \text{Difference due to imbalance: }& | \bar{\Phi}_q^\top \hat{\beta}_\text{ols} - \bar{\Phi}_q^\top \hat{\beta}_\text{aug} |
\end{align*}

We plot these metrics as a function on the entropy balancing hyperparameter and an $\ell_1$ penalty for the propensity score model in \Cref{fig:ols-nonlin-approx}.

\begin{figure}[tb]
    \centering
    \begin{subfigure}{0.425\textwidth}
        \centering 
        \includegraphics[width=\textwidth]{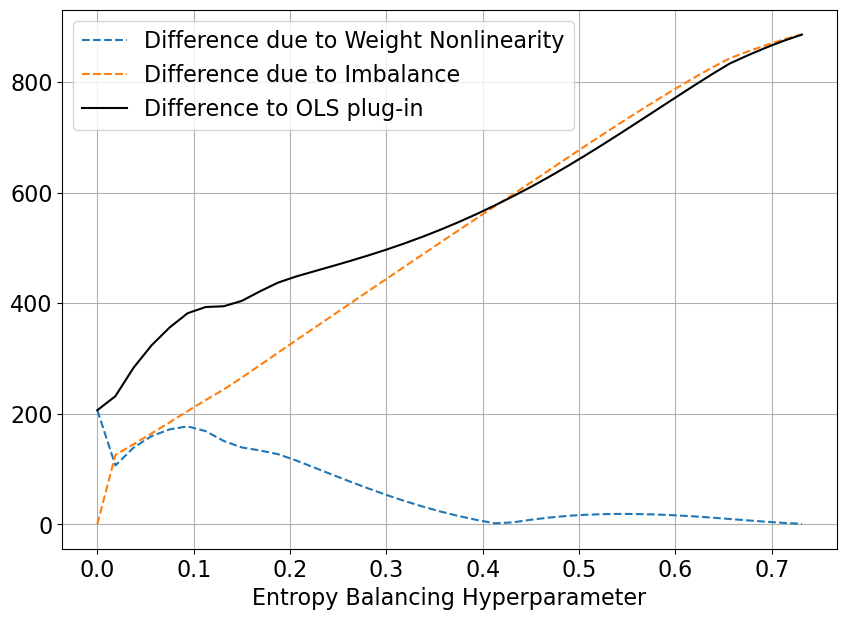}
        \caption{Entropy Balancing}
        \label{fig:ols-nonlin-approx-panel-a}
    \end{subfigure}
        \begin{subfigure}{0.425\textwidth}
        \centering 
        \includegraphics[width=\textwidth]{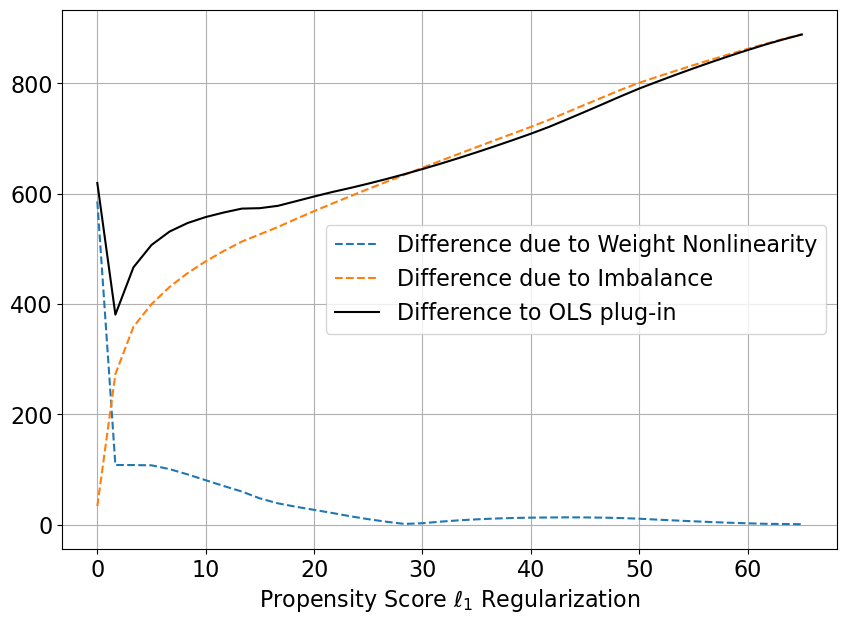}
        \caption{Traditional IPW}
        \label{fig:ols-nonlin-approx-panel-b}
    \end{subfigure}
    \caption{Decomposing the estimate from nonlinear augmented balancing weights for the ``short'' \citet{lalonde1986evaluating} example.}
    \label{fig:ols-nonlin-approx}
\end{figure}

Figure \ref{fig:ols-nonlin-approx} shows that there are two countervailing features at play here. First, as the weights become linear, $\hat{\psi} \rightarrow \bar{\Phi}_q \hat{\beta}_\text{aug}$ with the difference quantified by \Cref{prop:bal_wt_OLS_nonlin}. This occurs as the regularization strength increases because regularization pushes the weights to be uniform. Second, as the imbalance goes to zero, $\hat{\beta}_\text{aug} \rightarrow \hat{\beta}_\text{ols}$. For entropy balancing, this happens exactly as the hyperparamater goes to zero. The IPW weights typically yield substantially larger imbalance, and this imbalance does not vanish as the regularization parameter goes to zero.

\subsection{Constraining weights to be non-negative}
\label{sec:simplex}

A common modification of the (minimum variance) balancing weights problem is to constrain the estimated weights to be non-negative.\footnote{Recall that we already impose the constraint that the weights sum to 1. So imposing non-negativity is equivalent to constraining the weights to be on the simplex.} 
Such weights have a number of attractive practical properties: they limit extrapolation; they ensure that the final weighting estimator is sample bounded; and they are typically sparse, which can sometimes aid interpretability \citep{robins2007comment, abadie2010synthetic}.
Examples of constrained $\ell_\infty$ weights include Stable Balancing Weights and extensions \citep{zubizarreta2015stable, athey2018approximate, wang2020minimal}; examples of constrained $\ell_2$ weights include popular variants of the synthetic control method \citep{abadie2010synthetic, ben2021augmented}.

Using the dual form of the problem, \citet{ben2021balancing} show that (minimum variance) linear balancing weights with a non-negativity constraint have the form $\hat{w} = \{\Phi_p \hat{\theta}^\delta \}_+$, where $\{x\}_+ = \text{max}(x, 0)$ and where the coefficients $\hat{\theta}^\delta$ are generally different from the corresponding coefficients in the unconstrained model.\footnote{This ``positive part link'' representation is unique to the minimum variance weights. Other dispersion penalties, such as the entropy of the weights, imply a different link function. See \citet{zhao2019covariate} and \citet{ben2021balancing}.}
We can apply this insight to extend Proposition \ref{prop:bal_wt_OLS} to non-negative weights.

\begin{proposition} \label{prop:bal_wt_OLS_nonneg}
Let $\hat{w}^\delta_+ \coloneqq  \{ \Phi_p \hat{\theta}^\delta \}_+$, with $\hat{\theta}^\delta \in \mathbb{R}^d$ and where $\{x\}_+ = \text{max}(x, 0)$, be any linear balancing weights with a non-negativity constraint, with corresponding weighted covariates $\hat{\Phi}_q^\delta \coloneqq \hat{w}_+ \Phi_p$.  Let $\Phi_{p+},$ and $Y_{p+}$ denote the respective quantities restricted to those data points where $\hat{w}^\delta_+ >0$.  Let $\hat{\beta}_{\text{ols}}^+ \coloneqq (\Phi_{p+}^\top\Phi_{p+})^{-1} \Phi_{p+}^\top Y_{p+}$ be the OLS coefficients of the regression of $Y_{p+}$ on $\Phi_{p+}$. Then: 
$$\hat{\mathbb{E}}\left[\hat{w}_+^\delta \circ Y_p\right] =  \hat{\Phi}_q^\delta \hat{\beta}_{\text{ols}}^+.$$
\end{proposition}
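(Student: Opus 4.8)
The plan is to reduce the claim to Proposition~\ref{prop:bal_wt_OLS} by passing to the subsample on which the weights are strictly positive, where the nonlinear positive-part operator acts as the identity and the weights are therefore genuinely linear in the features.

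First I would fix the coefficient $\hat{\theta}^\delta$ solving the constrained (non-negative) balancing problem and set $P \coloneqq \{ i : \hat{w}^\delta_{+,i} > 0 \}$, the active set; by definition $\Phi_{p+}$ and $Y_{p+}$ are the rows of $\Phi_p$ and entries of $Y_p$ indexed by $P$. The crucial observation is that on $P$ the map $\{\cdot\}_+ = \max(\cdot,0)$ is the identity, so $\hat{w}^\delta_{+,i} = (\Phi_p \hat{\theta}^\delta)_i$ for every $i \in P$. Consequently the restricted weight vector $\hat{w}^\delta_+|_P$ equals $(\Phi_{p+}\hat{\theta}^\delta)^\top$, which is exactly a vector of \emph{linear} balancing weights on the subsample $(\Phi_{p+}, Y_{p+})$ with coefficient $\hat{\theta}^\delta$. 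Thus, although the full weight vector is nonlinear (and Proposition~\ref{prop:bal_wt_OLS} does not apply to it directly), its restriction to the active set has precisely the linear form $\tilde\theta \Phi_{p+}^\top$ required by that proposition.

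Next I would note that the zero-weight points contribute nothing to either side of the target identity. Since $\hat{w}^\delta_{+,i} = 0$ for $i \notin P$, the unnormalized weighted outcome sum satisfies $\hat{w}^\delta_+ Y_p = (\hat{w}^\delta_+|_P)\, Y_{p+}$ and the weighted feature sum satisfies $\hat{w}^\delta_+ \Phi_p = (\hat{w}^\delta_+|_P)\, \Phi_{p+}$; in particular $\hat{\Phi}^\delta_q$ is the same whether formed from the full sample or from the active subsample. I would then invoke Proposition~\ref{prop:bal_wt_OLS} on the subsample $(\Phi_{p+},Y_{p+})$ with the linear weights $\hat{w}^\delta_+|_P$, which gives the normalization-free identity $(\hat{w}^\delta_+|_P)\,Y_{p+} = (\hat{w}^\delta_+|_P)\,\Phi_{p+}\,\hat{\beta}^+_{\text{ols}}$, where $\hat{\beta}^+_{\text{ols}}$ is exactly the OLS fit of $Y_{p+}$ on $\Phi_{p+}$ (with the pseudoinverse in general, reducing to the stated inverse when $\Phi_{p+}^\top\Phi_{p+}$ is invertible). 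Combining the two displays gives $\hat{w}^\delta_+ Y_p = (\hat{w}^\delta_+\Phi_p)\,\hat{\beta}^+_{\text{ols}}$, and dividing through by $n$ yields $\hat{\mathbb{E}}[\hat{w}^\delta_+ \circ Y_p] = \hat{\Phi}^\delta_q \hat{\beta}^+_{\text{ols}}$, as claimed. The common $\tfrac1n$ normalization cancels, so no bookkeeping of the subsample size $|P|$ is needed.

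The only genuinely delicate step is the first: recognizing that restricting to the active set $P$ simultaneously linearizes the weights and leaves both target quantities unchanged. Everything else is a direct application of Proposition~\ref{prop:bal_wt_OLS}. One should also record that $P$ is determined once $\hat{\theta}^\delta$ is fixed, so conditioning on the support introduces no circularity; this recovers the interpretation of the non-negativity constraint as sample trimming followed by OLS on the retained units.
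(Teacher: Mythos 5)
Your proof is correct and follows exactly the route the paper intends: the dual representation $\hat{w}^\delta_+ = \{\Phi_p\hat{\theta}^\delta\}_+$ means the weights are genuinely linear ($\hat{\theta}^\delta\Phi_{p+}^\top$) on the active set, zero-weight units drop out of both the weighted outcome and weighted feature sums, and Proposition~\ref{prop:bal_wt_OLS} applied to the subsample $(\Phi_{p+},Y_{p+})$ yields the identity with $\hat{\beta}^+_{\text{ols}}$ in place of $\hat{\beta}_{\text{ols}}$. Your observations that the identity is homogeneous in the weights (so no $1/|P|$ bookkeeping is needed) and that fixing $\hat{\theta}^\delta$ determines the active set without circularity are exactly the right points to record.
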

So with the non-negativity constraint, we recover \Cref{prop:bal_wt_OLS} almost exactly, except that the usual OLS coefficients from the entire population $p$, $\hat\beta_{\text{ols}}$, are replaced with the OLS coefficients from only those units with positive weight, $\hat{w}^\delta_+ > 0$, $\hat\beta_{\text{ols}}^+$. 
We can therefore view the non-negativity constraint as a form of sample trimming. 
In particular, we can think of the data points where $\hat{w}^\delta_+ = 0$ as a set of outliers --- too dissimilar from the target population $\Phi_q$ --- that we trim before applying OLS. But the key is that the definition of ``outlier'' depends the choice of $\delta$ and $\mathcal{F}$, and even the target covariates $\Phi_q$: in general, changing $\delta$, $\mathcal{F}$, or $\Phi_q$ will change the set where $\hat{w}^\delta_+ > 0$.
Defining and characterizing how this set changes is a promising avenue for future research.
Finally, note that it is not generally the case that re-estimating the balancing weights problem after sample trimming will lead to non-negative weights (though this does hold for OLS). 

\Cref{prop:bal_wt_OLS_nonneg} simplifies further when the balancing weights achieve exact balance; see \citet[][Proposition 10]{rubinstein2021balancing} for a discussion of this special case. When $\hat{\Phi}_q = \Phi_q$, the balancing weight estimator with the non-negativity constraint is equivalent to
$ \hat{\mathbb{E}} [\Phi_q \hat{\beta}_{\text{ols}}^+ ]$.
Thus, linear balancing weights with a simplex constraint is equivalent to trimming the control group and applying standard OLS (note that the trimming does not affect the target covariate profile, $\Phi_q$). 

Finally, we can extend the results in  \Cref{generalregularizationpath} for augmented balancing weights to incorporate a non-negativity constraint. Many popular augmented balancing weights estimators have the form of \Cref{generalregularizationpath}, including \citet{athey2018approximate} and \citet{ben2021augmented}. Understanding the implications of this connection is an interesting direction for future work.

\subsection{Non-Linear Differentiable Outcome Models}
\label{sec:beyond_linear_appendix}

In this section, we provide a very preliminary sketch of how the results might be extended to the case where $\mathcal{F}$ is non-linear but still differentiable in its parameters.

Let $\mathcal{F} = \{ f(X, \theta) : \theta \in \mathbb{R}^d, \nabla_\theta f(X,\theta) \text{ exists} \}$. Then just like \Cref{prop:bal_wt_OLS} relates any $w$ that are linear in $X$ to the OLS coefficients, we can relate any $w \in \mathcal{F}$ to the least squares regressor in the function class $\mathcal{F}$.

First, let $\theta_\text{LS}$ be the unregularized least squares regressor (where we choose the least norm $\theta$ to break ties):
\[ \theta_\text{LS} \coloneqq \min_\theta \Vert Y_p - f(\theta, X_p) \Vert_2^2 \]

We have the first-order condition:

\[ \nabla_\theta f(\theta_\text{LS},X_p)^\top (Y_p - f(\theta_\text{LS},X_p)) = 0 \]

Now we can get a version of Proposition 2.1 for $w \in \mathcal{F}$ by considering the following taylor expansion:

\[ w(X_p) \coloneqq f(\theta_w, X_p) \approx f(\theta_\text{LS}, X_p) + \nabla_\theta f(\theta_\text{LS},X_p)(\theta_w - \theta_\text{LS}) \]

In which case, applying the first-order condition above, we get:

\[ w(X_p)^\top y \approx \underbrace{w(X_p)^\top f(\theta_\text{LS}, X_p)}_{\text{identical to 3.1}} + \underbrace{f(\theta_\text{LS}, X_p)^\top(Y_p - f(\theta_\text{LS}, X_p))}_{\text{this term is zero in linear case}}. \]


\section{Results with Correlated Features}\label{apx:correlated-features}

\subsection{Overview}

Throughout the main text, we have made the assumption that $\Phi_p^\top\Phi_p$ is a diagonal matrix. This is a useful simplifying assumption for illustrating the intuition behind our results. Our central result, \Cref{generalregularizationpath}, holds for arbitrary $\Phi_p^\top \Phi_p$. However, when $\Phi_p^\top \Phi_p$ is not diagonal, i.e. the covariates are correlated, there are some important additional subtleties for interpreting the results.

In \Cref{apx:l2-rot-wlog}, we show that the results in the main text for $\ell_2$ balancing weights still hold for non-diagonal $\Phi_p^\top \Phi_p$ by applying an additional rotation step. By constrast, in \Cref{apx:linf-aug-correlated}, we demonstrate that diagonal $\Phi_p^\top \Phi_p$ is \emph{not} without loss of generality for $\ell_\infty$ balancing. Not only do the weights not have a closed form, but the resulting augmented coefficients will not be sparse.

\subsection{Augmented $\ell_2$ Balancing Weights with Correlated Features} \label{apx:l2-rot-wlog}

To begin, we provide the following characterization of $\ell_2$ augmenting with correlated features \emph{directly}, that is, without using \Cref{generalregularizationpath}:

\begin{proposition}\label{prop:corrolated-l2-augment}
        Let $\hat{w}_{\ell_2}^\delta$ denote the solution to the penalized linear form of balancing weights with parameter $\delta$ and $\mathcal{F} = \{ f(x) = \theta^\top \phi(x) : \Vert \theta \Vert_2 \leq r \}$. Then,
    \[ \hat{w}_{\ell_2}^\delta = \Phi_p (\Phi_p^\top\Phi_p + \delta I)^{-1} \overbar{\Phi}_q. \]
     Therefore, the augmented $\ell_2$ balancing weights estimator with outcome model $\hat{\beta}_\text{reg} \in \mathbb{R}^d$ has the form,
    \begin{align}
        &\hat{\mathbb{E}}[\Phi_q \hat{\beta}_\text{reg}] + \hat{\mathbb{E}}[ \hat{w}_{\ell_2}^\delta (Y_p - \Phi_p \hat{\beta}_\text{reg})] =  \hat{\mathbb{E}}[\Phi_q \hat{\beta}_{\ell_2}],\nonumber\\
        &\hat{\beta}_{\ell_2} \coloneqq (I - A_\delta) \hat{\beta}_\text{reg} + A_\delta \hat{\beta}_\text{ols}\label{eq:corr-l2}\\
        &A_\delta \coloneqq (\Phi_p^\top\Phi_p + \delta I)^{-1}(\Phi_p^\top\Phi_p).\nonumber
    \end{align}
    For comparison, ridge regression with parameter $\delta$ has closed-form coefficients:
    \[ \hat{\beta}_\text{ridge}^\delta = (\Phi_p^\top\Phi_p + \delta I)^{-1}\Phi_p^\top Y_p = A_\delta \beta_\text{ols}. \]
\end{proposition}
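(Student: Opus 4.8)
The plan is to exploit the fact that the two ingredients behind the diagonal-case result — the ridge/$\ell_2$-balancing equivalence of \Cref{apx:equiv-conditions} and \Cref{prop:bal_wt_OLS} — never actually used the diagonality of $\Phi_p^\top\Phi_p$, so the whole argument carries over once each scalar shrinkage factor $a_j^\delta = \sigma_j^2/(\sigma_j^2+\delta)$ is replaced by the matrix $A_\delta = (\Phi_p^\top\Phi_p+\delta I)^{-1}(\Phi_p^\top\Phi_p)$. First I would establish the closed form for the weights: this is exactly the ridge–balancing equivalence recorded in \eqref{ridgeequiv}, whose derivation solves the penalized $\ell_2$ problem through a first-order condition and at no point invokes a diagonal Gram matrix. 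Transposing that row-vector identity gives the column form $\hat{w}_{\ell_2}^\delta = \Phi_p(\Phi_p^\top\Phi_p+\delta I)^{-1}\overbar{\Phi}_q$ stated in the proposition. From $\tfrac1n\hat{w}_{\ell_2}^\delta = \overbar{\Phi}_q(\Phi_p^\top\Phi_p+\delta I)^{-1}\Phi_p^\top$ I would then read off the implied reweighted features, $\hat{\Phi}_q^\delta = \tfrac1n\hat{w}_{\ell_2}^\delta\Phi_p = \overbar{\Phi}_q(\Phi_p^\top\Phi_p+\delta I)^{-1}(\Phi_p^\top\Phi_p) = \overbar{\Phi}_q A_\delta$, which is the matrix analogue of the diagonal relation $\hat{\Phi}_{q,j}^\delta = a_j^\delta\,\overbar{\Phi}_{q,j}$.

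Next I would substitute into the two-term form of the augmented estimator in \eqref{eq:aug_bal_est_v1}. Because the $\ell_2$ weights are linear, $\hat{w}_{\ell_2}^\delta = \hat\theta^\delta\Phi_p^\top$ with $\hat\theta^\delta = \overbar{\Phi}_q(\Phi_p^\top\Phi_p+\delta I)^{-1}$, \Cref{prop:bal_wt_OLS} applies directly and gives $\hat{\mathbb{E}}[\hat{w}_{\ell_2}^\delta\circ Y_p] = \hat{\Phi}_q^\delta\hat{\beta}_\text{ols} = \overbar{\Phi}_q A_\delta\hat{\beta}_\text{ols}$. The augmentation term simplifies, using $\hat{\mathbb{E}}[\Phi_q]=\overbar{\Phi}_q$, to $(\overbar{\Phi}_q-\hat{\Phi}_q^\delta)\hat{\beta}_\text{reg} = \overbar{\Phi}_q(I-A_\delta)\hat{\beta}_\text{reg}$. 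Adding the two and factoring out $\overbar{\Phi}_q = \hat{\mathbb{E}}[\Phi_q]$ on the left yields $\hat{\mathbb{E}}[\Phi_q\hat{\beta}_{\ell_2}]$ with $\hat{\beta}_{\ell_2} = (I-A_\delta)\hat{\beta}_\text{reg} + A_\delta\hat{\beta}_\text{ols}$, which is \eqref{eq:corr-l2}. The ridge comparison is then a one-line identity: since $\hat{\beta}_\text{ols} = (\Phi_p^\top\Phi_p)^{-1}\Phi_p^\top Y_p$, we get $A_\delta\hat{\beta}_\text{ols} = (\Phi_p^\top\Phi_p+\delta I)^{-1}(\Phi_p^\top\Phi_p)(\Phi_p^\top\Phi_p)^{-1}\Phi_p^\top Y_p = (\Phi_p^\top\Phi_p+\delta I)^{-1}\Phi_p^\top Y_p = \hat{\beta}_\text{ridge}^\delta$.

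The computations are routine; the only thing to watch is that, unlike in the diagonal case, $A_\delta$ is a product of two symmetric positive-semidefinite matrices and is therefore generally \emph{not} symmetric and does not commute with the vectors it multiplies. Consequently I must keep every factor in the correct left/right order and respect the row/column conventions throughout ($\overbar{\Phi}_q$ a row vector acting on the left of $A_\delta$, and the coefficient vectors $\hat{\beta}_\text{ols}$, $\hat{\beta}_\text{reg}$ acting on the right). I expect no genuine obstacle beyond this bookkeeping: $A_\delta$ and $\Phi_p^\top\Phi_p$ are simultaneously diagonalizable, so if a direct matrix manipulation ever became error-prone one could always pass to the eigenbasis via the rotation of \Cref{apx:l2-rot-wlog} and recover the diagonal computation exactly.
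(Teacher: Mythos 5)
Your proof is correct and takes essentially the same route as the paper: the paper likewise derives the closed form of $\hat{w}^\delta_{\ell_2}$ from the first-order condition of the penalized problem (a derivation that never uses diagonality of $\Phi_p^\top\Phi_p$) and then, in its words, the proposition ``follows by simply plugging in the closed-form of $\hat{w}^\delta_{\ell_2}$ into the augmented estimator,'' which is exactly your computation. One small correction to your closing remark: $A_\delta=(\Phi_p^\top\Phi_p+\delta I)^{-1}(\Phi_p^\top\Phi_p)$ is in fact symmetric, because the two factors are functions of the same symmetric matrix and therefore commute; your insistence on preserving left/right order is harmless but not actually necessary, and the eigenbasis fallback you mention is precisely the rotation argument the paper develops immediately after this proposition.
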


This is a generalization of \Cref{l2augment} to the correlated setting. The result follows by simply plugging in the closed-form of $\hat{w}^\delta_{\ell_2}$ into the augmented estimator. We will now show that the resulting $\hat{\beta}_{\ell_2}$ is equivalent to applying \Cref{generalregularizationpath} to rotated versions of the outcome and weighting coefficients. 

First, we define the rotated version of \Cref{generalregularizationpath}:

\begin{proposition}\label{prop:beta-aug-rotated}
    Let $\Phi_p^\top\Phi_p$ be arbitrary, with eigendecomposition $\Phi_p^\top\Phi_p = VD^2V^\top$ where $D^2$ is a diagonal matrix with $j$th entry $\sigma_j^2$. For any $\hat{\beta}_\text{reg}^\lambda \in \mathbb{R}^d$, and any linear balancing weights estimator with estimated coefficients $ \hat{\theta}^\delta \in \mathbb{R}^d$, and with $\hat{w}^\delta \coloneqq \hat{\theta}^\delta \Phi_p^\top$ and $\hat{\Phi}_q^\delta \coloneqq \hat{w}^\delta \Phi_p$, the resulting augmented estimator
    \begin{align*}
       & \hat{\mathbb{E}}[ \hat{w}^\delta \circ Y_p] + \hat{\mathbb{E}}\left[\left(\Phi_q - \hat{\Phi}_q^\delta\right)\hat{\beta}_\text{reg}^\lambda\right] \\
        &= \hat{\mathbb{E}}[ \Phi_q  V \hat{\beta}_\text{aug}^\text{rot}],
    \end{align*}
    where the $j$th element of $\hat{\beta}_\text{aug}^\text{rot}$ is:
    \begin{align*}
        \hat{\beta}_{\text{aug} ,j}^\text{rot} &\coloneqq \left(1-a_j^\text{rot}\right) \hat{\beta}_{\text{reg},j}^\text{rot} + a_j^\text{rot} \hat{\beta}_{\text{ols},j}^\text{rot}\\[0.5em]
        a_j^\text{rot} &\coloneqq \frac{\widehat{\Delta}_{j}^\text{rot}}{ \Delta_j^\text{rot} }, 
    \end{align*}
    defined in terms of the rotated quantities:
    \begin{align*}
        \hat{\beta}_\text{reg}^\text{rot} &\coloneqq V^\top \hat{\beta}_\text{reg}^\lambda,\\
        \hat{\beta}_\text{ols}^\text{rot} &\coloneqq V^\top \hat{\beta}_\text{ols},\\
        \widehat{\Delta}^\text{rot} &\coloneqq (\hat{\Phi}_q - \bar{\Phi}_p) V,\\
        \Delta^\text{rot} &\coloneqq (\bar{\Phi}_q - \bar{\Phi}_p) V.
    \end{align*}
\end{proposition}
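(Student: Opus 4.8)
The plan is to reduce this statement to the already-established \Cref{generalregularizationpath} by passing to the eigenbasis of $\Phi_p^\top\Phi_p$, where the Gram matrix becomes diagonal, and then rotating back. The key observation is that the augmented estimator on the left-hand side is invariant under the simultaneous reparametrization of the features by $V$ and the coefficients by $V^\top$, because $V$ is orthogonal ($V^\top V = VV^\top = I$).

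First I would introduce the rotated features $\tilde{\Phi}_p \coloneqq \Phi_p V$ and $\tilde{\Phi}_q \coloneqq \Phi_q V$ and check that the rotated Gram matrix is diagonal: $\tilde{\Phi}_p^\top\tilde{\Phi}_p = V^\top(\Phi_p^\top\Phi_p)V = V^\top(VD^2V^\top)V = D^2 = \text{diag}(\sigma_1^2,\dots,\sigma_d^2)$. Next I would rewrite each piece of the left-hand side in terms of rotated quantities. The weights are unchanged but re-expressed as linear balancing weights in the rotated basis, $\hat{w}^\delta = \hat{\theta}^\delta\Phi_p^\top = (\hat{\theta}^\delta V)(V^\top\Phi_p^\top) = \tilde{\theta}^\delta\tilde{\Phi}_p^\top$ with $\tilde{\theta}^\delta \coloneqq \hat{\theta}^\delta V$, so that the reweighted rotated features are $\hat{\Phi}_q^\delta V = \hat{w}^\delta\Phi_p V$. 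Using $\hat{\beta}_\text{reg}^\text{rot} = V^\top\hat{\beta}_\text{reg}^\lambda$ together with $VV^\top = I$, I obtain $\Phi_q\hat{\beta}_\text{reg}^\lambda = \tilde{\Phi}_q\hat{\beta}_\text{reg}^\text{rot}$ and $\hat{\Phi}_q^\delta\hat{\beta}_\text{reg}^\lambda = (\hat{\Phi}_q^\delta V)\hat{\beta}_\text{reg}^\text{rot}$. Hence the left-hand side equals the augmented balancing weights estimator for the rotated problem with features $\tilde{\Phi}_p,\tilde{\Phi}_q$, weight coefficients $\tilde{\theta}^\delta$, and outcome coefficients $\hat{\beta}_\text{reg}^\text{rot}$.

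I would then apply \Cref{generalregularizationpath} directly to this rotated problem. Because its Gram matrix $D^2$ is diagonal with strictly positive entries, the proposition applies verbatim and yields $\hat{\mathbb{E}}[\tilde{\Phi}_q\hat{\beta}_\text{aug}^\text{rot}]$, where $\hat{\beta}_\text{aug}^\text{rot}$ is the elementwise affine combination $(1-a_j^\text{rot})\hat{\beta}_{\text{reg},j}^\text{rot} + a_j^\text{rot}\hat{\beta}_{\text{ols},j}^\text{rot}$ with $a_j^\text{rot} = \widehat{\Delta}_j^\text{rot}/\Delta_j^\text{rot}$; moreover, diagonality delivers the sharper conclusion $a^\text{rot}\in[0,1]^d$, so the combination is genuinely convex in the rotated coordinates even though the original $a^\delta$ need not be. Substituting $\tilde{\Phi}_q = \Phi_q V$ gives the claimed $\hat{\mathbb{E}}[\Phi_q V\hat{\beta}_\text{aug}^\text{rot}]$.

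The one nonroutine step --- and the place I would be most careful --- is verifying that the OLS coefficients transform as $\hat{\beta}_\text{ols}^\text{rot} = V^\top\hat{\beta}_\text{ols}$, so that the $\hat{\beta}_\text{ols}$ appearing inside the rotated application of \Cref{generalregularizationpath} is exactly the rotation of the original OLS fit. This is the equivariance of least squares under an invertible linear change of features: $(\tilde{\Phi}_p^\top\tilde{\Phi}_p)^{-1}\tilde{\Phi}_p^\top Y_p = (V^\top\Phi_p^\top\Phi_p V)^{-1}V^\top\Phi_p^\top Y_p = V^\top(\Phi_p^\top\Phi_p)^{-1}VV^\top\Phi_p^\top Y_p = V^\top\hat{\beta}_\text{ols}$, using $(V^\top A V)^{-1} = V^\top A^{-1}V$ and $VV^\top = I$ (with the pseudoinverse replacing the inverse in the rank-deficient case). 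The remaining bookkeeping --- tracking the $\bar{\Phi}_p$ terms so that they match the definitions of $\widehat{\Delta}^\text{rot} = (\hat{\Phi}_q^\delta - \bar{\Phi}_p)V$ and $\Delta^\text{rot} = (\bar{\Phi}_q - \bar{\Phi}_p)V$, and respecting the row/column conventions for $\hat{\theta}^\delta$ --- is routine.
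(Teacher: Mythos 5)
Your proposal is correct and follows essentially the same route as the paper: insert $VV^\top = I$, recognize the left-hand side as the augmented estimator for the rotated problem with covariates $\Phi_p V$, $\Phi_q V$ and outcome coefficients $V^\top\hat{\beta}_\text{reg}^\lambda$, and invoke \Cref{generalregularizationpath} on that rotated (diagonal-Gram) problem. Your explicit verification that the weights remain linear in the rotated features and that OLS is equivariant under the rotation (so the rotated OLS fit is exactly $V^\top\hat{\beta}_\text{ols}$) spells out steps the paper's terser proof leaves implicit, but the argument is the same.
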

\begin{proof}
First notice that:
\begin{align*}
       & \hat{\mathbb{E}}[ \hat{w}^\delta \circ Y_p] + \hat{\mathbb{E}}\left[\left(\Phi_q - \hat{\Phi}_q^\delta\right)\hat{\beta}_\text{reg}^\lambda\right] \\
       &= \hat{\mathbb{E}}\left[ \hat{\Phi}_q^\delta \hat{\beta}_{\text{ols}} + \left(\Phi_q - \hat{\Phi}_q^\delta\right)\hat{\beta}_\text{reg}^\lambda\right] \\
       &= \hat{\mathbb{E}}\left[ \hat{\Phi}_q^\delta V V^\top \hat{\beta}_{\text{ols}} + \left(\Phi_q - \hat{\Phi}_q^\delta\right) 
V V^\top \hat{\beta}_\text{reg}^\lambda\right] \\
\end{align*}
Then the result follows immediately from \Cref{generalregularizationpath} applied to the covariates $\Phi_p V$ and $\Phi_q V$ with weights $\hat{w}$ and outcome model $V^\top\hat{\beta}_\text{reg}^\lambda$. 
\end{proof}

Note that we could also compute $\hat{\beta}_\text{aug}$ directly as in \Cref{generalregularizationpath} because the original derivation does not depend on $\Phi_p^\top\Phi_p$ being diagonal. Combining the two propositions we have that:
\[ \hat{\mathbb{E}}[ \Phi_q \hat{\beta}_\text{aug}] = \hat{\mathbb{E}}[ \Phi_q  V \hat{\beta}_\text{aug}^\text{rot}]. \]
However, \emph{it is not the case} that $\hat{\beta}_\text{aug} = V \hat{\beta}_\text{aug}^\text{rot}$ in general. For $\ell_2$ balancing weights, $V \hat{\beta}^\text{rot}_\text{aug}$ is the correct generalization of \Cref{generalregularizationpath} to the correlated setting as we demonstrate in the next few propositions. 

\begin{proposition}\label{prop:rotated-combination-is-right}
    Let $\hat{\beta}_{\ell_2}$ be defined as in \Cref{eq:corr-l2}. For the same weights $\hat{w}_{\ell_2}^\delta$, let $\hat{\beta}_\text{aug}$ and $\hat{\beta}_\text{aug}^\text{rot}$ be defined as in Propositions \ref{generalregularizationpath} and \ref{prop:beta-aug-rotated} respectively. Then for any $\Phi_p^\top\Phi_p$ (diagonal or not):
    \[ \hat{\mathbb{E}} [ \Phi_q \hat{\beta}_{\ell_2} ] = \hat{\mathbb{E}} [ \Phi_q V \hat{\beta}_\text{aug}^\text{rot} ] = \hat{\mathbb{E}} [ \Phi_q \hat{\beta}_\text{aug} ]. \]
    Additionally, when $\Phi_p^\top\Phi_p$ is diagonal, then
    \[ \hat{\beta}_{\ell_2} = V \hat{\beta}_\text{aug}^\text{rot} = \hat{\beta}_\text{aug}. \]
    However, when $\Phi_p^\top\Phi_p$ is not diagonal, then in general
    \[ \hat{\beta}_{\ell_2} = V \hat{\beta}_\text{aug}^\text{rot} \neq \hat{\beta}_\text{aug}. \]
\end{proposition}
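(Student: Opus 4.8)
The plan is to treat the three objects $\hat{\beta}_{\ell_2}$, $V\hat{\beta}_\text{aug}^\text{rot}$, and $\hat{\beta}_\text{aug}$ as three coefficient representations of the \emph{same scalar estimator}, and then to separate the ``same estimate'' claim from the ``same vector'' claim. For the first display I would simply chain the three identities already proven: by \Cref{prop:corrolated-l2-augment} the augmented $\ell_2$ estimator $\hat{\mathbb{E}}[\hat{w}_{\ell_2}^\delta\circ Y_p] + \hat{\mathbb{E}}[(\Phi_q - \hat{\Phi}_q^\delta)\hat{\beta}_\text{reg}^\lambda]$ equals $\hat{\mathbb{E}}[\Phi_q\hat{\beta}_{\ell_2}]$; by \Cref{prop:beta-aug-rotated} the same estimator equals $\hat{\mathbb{E}}[\Phi_q V\hat{\beta}_\text{aug}^\text{rot}]$; and by \Cref{generalregularizationpath} it equals $\hat{\mathbb{E}}[\Phi_q\hat{\beta}_\text{aug}]$. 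Chaining these gives the first display with no computation.

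The vector identity $\hat{\beta}_{\ell_2} = V\hat{\beta}_\text{aug}^\text{rot}$ is where the special structure of $\ell_2$ balancing enters, and I would prove it for arbitrary $\Phi_p^\top\Phi_p$. With $\Phi_p^\top\Phi_p = VD^2V^\top$ and $\tilde{A} \coloneqq (D^2 + \delta I)^{-1}D^2$ (diagonal, with $\tilde{A}_{jj} = \sigma_j^2/(\sigma_j^2+\delta)$), the matrix $A_\delta$ of \Cref{prop:corrolated-l2-augment} factors as $A_\delta = V\tilde{A}V^\top$ using $V^\top V = I$, so $\hat{\beta}_{\ell_2} = \hat{\beta}_\text{reg} + A_\delta(\hat{\beta}_\text{ols}-\hat{\beta}_\text{reg}) = V[(I-\tilde{A})\hat{\beta}_\text{reg}^\text{rot} + \tilde{A}\hat{\beta}_\text{ols}^\text{rot}]$, where $\hat{\beta}_\text{reg}^\text{rot} = V^\top\hat{\beta}_\text{reg}$ and $\hat{\beta}_\text{ols}^\text{rot} = V^\top\hat{\beta}_\text{ols}$. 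It then remains to identify the bracket with $\hat{\beta}_\text{aug}^\text{rot}$, i.e.\ to check that the rotated weights satisfy $a_j^\text{rot} = \tilde{A}_{jj}$. Using the $\ell_2$ reweighting $\hat{\Phi}_q^\delta = \overbar{\Phi}_q A_\delta = \overbar{\Phi}_q V\tilde{A}V^\top$ from \Cref{prop:corrolated-l2-augment} gives $\widehat{\Delta}^\text{rot} = \hat{\Phi}_q^\delta V = \overbar{\Phi}_q V\tilde{A}$ and $\Delta^\text{rot} = \overbar{\Phi}_q V$ (recall $\overbar{\Phi}_p = 0$), so $a_j^\text{rot} = \widehat{\Delta}_j^\text{rot}/\Delta_j^\text{rot} = \tilde{A}_{jj}$ as needed. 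This also settles the diagonal case at once: when $\Phi_p^\top\Phi_p$ is diagonal we may take $V = I$, the rotation is trivial ($\hat{\beta}_\text{aug}^\text{rot} = \hat{\beta}_\text{aug}$, and $A_\delta = \tilde{A} = \text{diag}(a_j^\delta)$), so all three vectors coincide.

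For the strict inequality $\hat{\beta}_{\ell_2} \neq \hat{\beta}_\text{aug}$ in the non-diagonal case, I would compare the two as affine combinations. Writing $D_a \coloneqq \text{diag}(a_1^\delta,\dots,a_d^\delta)$ for the element-wise weights of \Cref{generalregularizationpath}, both vectors have the form $\hat{\beta}_\text{reg} + M(\hat{\beta}_\text{ols} - \hat{\beta}_\text{reg})$ with $M = A_\delta$ for $\hat{\beta}_{\ell_2}$ and $M = D_a$ for $\hat{\beta}_\text{aug}$, so the difference is $(A_\delta - D_a)(\hat{\beta}_\text{ols} - \hat{\beta}_\text{reg})$. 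The construction $a_j^\delta = (\overbar{\Phi}_q A_\delta)_j/\overbar{\Phi}_{q,j}$ forces $\overbar{\Phi}_q D_a = \overbar{\Phi}_q A_\delta$, which re-derives the first display but only annihilates $A_\delta - D_a$ along the single direction $\overbar{\Phi}_q$. Since $A_\delta = V\tilde{A}V^\top$ has nonzero off-diagonal entries whenever $\Phi_p^\top\Phi_p$ is non-diagonal (taking, e.g., distinct eigenvalues so $\tilde{A}$ has distinct diagonal entries) while $D_a$ is diagonal, the matrix $A_\delta - D_a$ is nonzero, so its null space is a proper subspace; choosing $\hat{\beta}_\text{reg}^\lambda$ so that $\hat{\beta}_\text{ols} - \hat{\beta}_\text{reg}$ lies outside it makes the difference nonzero. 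A concrete $2\times 2$ instance with $\Phi_p^\top\Phi_p = \bigl(\begin{smallmatrix}1&c\\c&1\end{smallmatrix}\bigr)$, $c\neq 0$, makes this fully explicit.

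The main obstacle is this final step: one must produce a genuine counterexample while respecting that $\hat{\beta}_\text{ols}$ is pinned down by $(\Phi_p, Y_p)$ and that $D_a$ itself depends on $\overbar{\Phi}_q$. This is resolved by the observation that $\hat{\beta}_\text{reg}^\lambda$ ranges freely over $\mathbb{R}^d$, so with the data (hence $A_\delta$, $D_a$, and $\hat{\beta}_\text{ols}$) held fixed, the vector $\hat{\beta}_\text{ols} - \hat{\beta}_\text{reg}$ can be set to any target direction; there is no circularity because $D_a$ depends only on $\overbar{\Phi}_q$ and not on $\hat{\beta}_\text{reg}$.
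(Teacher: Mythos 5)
Your proof is correct. For the first display and the diagonal case you follow the paper's own route: the equality of the three point estimates is obtained by chaining \Cref{prop:corrolated-l2-augment}, \Cref{prop:beta-aug-rotated}, and \Cref{generalregularizationpath}, and the identity $\hat{\beta}_{\ell_2} = V\hat{\beta}_\text{aug}^\text{rot}$ is obtained by factoring $A_\delta = V(D^2+\delta I)^{-1}D^2V^\top$ and checking that $\diag(a^\text{rot}) = (D^2+\delta I)^{-1}D^2$ --- exactly the computation in the paper's proof, merely run in the opposite direction (you expand $\hat{\beta}_{\ell_2}$ and recognize $\hat{\beta}_\text{aug}^\text{rot}$ inside it; the paper expands $V\hat{\beta}_\text{aug}^\text{rot}$ and recognizes $\hat{\beta}_{\ell_2}$).

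Where you genuinely depart from the paper is the final claim, that $\hat{\beta}_{\ell_2} \neq \hat{\beta}_\text{aug}$ in general for non-diagonal $\Phi_p^\top\Phi_p$. The paper does not argue this analytically at all: its proof simply points to the empirical results in \Cref{apx:l2-undersmooth-rotated-test} (\Cref{tab:l2-norm-empirical-results}), where the two coefficient vectors have different norms on the LaLonde and IHDP data. Your argument instead writes both vectors as $\hat{\beta}_\text{reg} + M(\hat{\beta}_\text{ols}-\hat{\beta}_\text{reg})$ with $M = A_\delta$ versus $M = D_a \coloneqq \diag(a_j^\delta)$, observes that $\overbar{\Phi}_q(A_\delta - D_a) = 0$ (which is precisely why the scalar estimates always agree) but that $A_\delta - D_a \neq 0$ whenever the design is non-diagonal and $\delta>0$ (if $A_\delta$ were diagonal, then $\Phi_p^\top\Phi_p = \delta A_\delta(I-A_\delta)^{-1}$ would be diagonal), and then exploits the fact that $\hat{\beta}_\text{reg}^\lambda$ is a free vector --- with $D_a$ depending only on $\overbar{\Phi}_q$ and the weights, not on $\hat{\beta}_\text{reg}^\lambda$ --- to place $\hat{\beta}_\text{ols}-\hat{\beta}_\text{reg}$ outside the null space of $A_\delta - D_a$. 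This is self-contained and strictly more informative than the paper's proof: it shows the inequality is generic for \emph{every} non-diagonal design, not just for two particular datasets. Two caveats are worth stating explicitly if you write this up: you need $\delta>0$ (at $\delta = 0$, $A_\delta = D_a = I$ and all three vectors collapse to $\hat{\beta}_\text{ols}$), and you need $\overbar{\Phi}_{q,j} \neq 0$ for all $j$ so that $D_a$ is well defined --- the same implicit assumption as in \Cref{generalregularizationpath}. What the paper's choice buys is evidence that the discrepancy actually occurs, and is quantitatively large, on real data; what yours buys is a theorem.
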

\begin{proof}
    The fist claims follows from Propositions \ref{generalregularizationpath}, \ref{prop:beta-aug-rotated}, and \ref{prop:corrolated-l2-augment}. For the second claim, if $\Phi_p^\top\Phi_p$ is diagonal, then $V=I$ and $\hat{\beta}_\text{aug} = \hat{\beta}_\text{aug}^\text{rot}$. Next we will show for general $\Phi_p^\top\Phi_p$ that $\hat{\beta}_{\ell_2} = V \hat{\beta}_\text{aug}^\text{rot}$. Note that
    \[ A_\delta = V (D^2 + \delta I)^{-1} D^2 V^\top, \]
    and that
    \begin{align*}
        \hat{\Delta}^\text{rot} &=  \overbar{\Phi}_q (\Phi_p^\top\Phi_p + \delta I)^{-1} (\Phi_p^\top \Phi_p) V\\
        &= \overbar{\Phi}_q V (D^2 + \delta I)^{-1} D^2
    \end{align*}  
    and so $\text{diag}(a^\text{rot}) = (D^2 + \delta I)^{-1} D^2$. Therefore,
    \begin{align*}
        V \hat{\beta}_\text{aug}^\text{rot} &= V (I - \text{diag}(a^\text{rot})) V^\top \hat{\beta}_\text{reg} + V \text{diag}(a^\text{rot}) V^\top\hat{\beta}_\text{ols}\\
        &= (VV^\top - A_\delta) \hat{\beta}_\text{reg} + A_\delta \hat{\beta}_\text{ols}\\
        &= (I - A_\delta) \hat{\beta}_\text{reg} + A_\delta \hat{\beta}_\text{ols}.
    \end{align*}
    Finally, for a counter-example where $V \hat{\beta}^\text{rot}_\text{aug} \neq \hat{\beta}_\text{aug}$, see \Cref{apx:l2-undersmooth-rotated-test}. 
\end{proof}

\begin{proposition}\label{prop:double-ridge-norm-order}
    Consider the setting in \Cref{prop:rotated-combination-is-right}, but now assume that the outcome model $\hat{\beta}_\text{reg} = \hat{\beta}_\text{ridge}^\lambda$, the ridge regression coefficients of $Y_p$ on $\Phi_p$ with hyperparameter $\lambda$. Then:
    \[ \hat{\beta}_{\ell_2} = V (D^2 + \text{diag}(\gamma))^{-1} V^\top \Phi_p^\top Y_p, \]
    where $\text{diag}(\gamma)$ is the diagonal matrix with $j$th entry
    \[ \gamma_j \coloneqq \frac{\delta \lambda}{\sigma_j^2 + \lambda + \delta}. \]
    Furthermore, 
    \[ \Vert \hat{\beta}_\text{ridge}^\lambda \Vert_2 \leq \Vert \hat{\beta}_{\ell_2} \Vert_2 = \Vert V \hat{\beta}_\text{aug}^\text{rot} \Vert_2 \leq \Vert \hat{\beta}_\text{ols} \Vert_2,  \]
    but in general it is possible that:
    \[ \Vert \hat{\beta}_\text{ols} \Vert_2 < \Vert \hat{\beta}_\text{aug} \Vert_2. \]
\end{proposition}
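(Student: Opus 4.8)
The plan is to start from the closed form for $\hat{\beta}_{\ell_2}$ supplied by \Cref{prop:corrolated-l2-augment}, namely $\hat{\beta}_{\ell_2} = (I - A_\delta)\hat{\beta}_\text{ridge}^\lambda + A_\delta \hat{\beta}_\text{ols}$ with $A_\delta = (\Phi_p^\top\Phi_p + \delta I)^{-1}(\Phi_p^\top\Phi_p)$, and to reduce everything to the common eigenbasis $V$. Writing $\Phi_p^\top\Phi_p = VD^2V^\top$, all four factors are simultaneously diagonalized: one has $A_\delta = V(D^2+\delta I)^{-1}D^2 V^\top$ and $I - A_\delta = \delta\, V(D^2+\delta I)^{-1}V^\top$, while $\hat{\beta}_\text{ridge}^\lambda = V(D^2+\lambda I)^{-1}u$ and $\hat{\beta}_\text{ols} = V(D^2)^{-1}u$, where $u \coloneqq V^\top \Phi_p^\top Y_p$. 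Substituting and pulling $(D^2+\delta I)^{-1}$ out front collapses the whole expression to $V(D^2+\delta I)^{-1}[\delta(D^2+\lambda I)^{-1} + I]\,u$, a single diagonal operator acting on $u$.

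The one genuine computation is the scalar identity $\sigma_j^2 + \gamma_j = (\sigma_j^2+\lambda)(\sigma_j^2+\delta)/(\sigma_j^2+\lambda+\delta)$, obtained by clearing denominators in $\gamma_j = \delta\lambda/(\sigma_j^2+\lambda+\delta)$ and factoring the numerator $\sigma_j^4 + \sigma_j^2(\lambda+\delta) + \delta\lambda = (\sigma_j^2+\lambda)(\sigma_j^2+\delta)$. This shows the $j$th diagonal entry of the bracketed operator equals exactly $1/(\sigma_j^2+\gamma_j)$, yielding $\hat{\beta}_{\ell_2} = V(D^2+\text{diag}(\gamma))^{-1}V^\top\Phi_p^\top Y_p$, as claimed.

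For the norm ordering I would exploit that $V$ is orthogonal, so $\Vert\cdot\Vert_2$ reduces to a weighted sum over the rotated coordinates: $\Vert\hat{\beta}_\text{ridge}^\lambda\Vert_2^2 = \sum_j u_j^2/(\sigma_j^2+\lambda)^2$, $\Vert\hat{\beta}_{\ell_2}\Vert_2^2 = \sum_j u_j^2/(\sigma_j^2+\gamma_j)^2$, and $\Vert\hat{\beta}_\text{ols}\Vert_2^2 = \sum_j u_j^2/\sigma_j^4$. The bound $0 \le \gamma_j \le \lambda$ from \Cref{prop:double_ridge} gives $\sigma_j^2 \le \sigma_j^2 + \gamma_j \le \sigma_j^2 + \lambda$ entrywise; taking reciprocals and squaring sandwiches the three sums term by term, which yields $\Vert\hat{\beta}_\text{ridge}^\lambda\Vert_2 \le \Vert\hat{\beta}_{\ell_2}\Vert_2 \le \Vert\hat{\beta}_\text{ols}\Vert_2$. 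The middle equality $\Vert\hat{\beta}_{\ell_2}\Vert_2 = \Vert V\hat{\beta}_\text{aug}^\text{rot}\Vert_2$ is immediate, since \Cref{prop:rotated-combination-is-right} already established $\hat{\beta}_{\ell_2} = V\hat{\beta}_\text{aug}^\text{rot}$ as vectors.

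The main obstacle, and the conceptual heart of the proposition, is the final claim that $\Vert\hat{\beta}_\text{ols}\Vert_2 < \Vert\hat{\beta}_\text{aug}\Vert_2$ can occur. The subtlety is that $\hat{\beta}_\text{aug}$ from \Cref{generalregularizationpath} applies the affine weights $a_j^\delta$ entrywise in the \emph{original} correlated coordinates, where \Cref{generalregularizationpath} no longer guarantees $a_j^\delta \in [0,1]$; the clean sandwich above holds only for the rotated combination $V\hat{\beta}_\text{aug}^\text{rot} = \hat{\beta}_{\ell_2}$, which need not equal $\hat{\beta}_\text{aug}$ once $V \neq I$. Since both vectors produce the same prediction $\hat{\mathbb{E}}[\Phi_q\hat{\beta}]$ yet differ as coefficient vectors, the overshoot is genuine. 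I would settle this by exhibiting an explicit small example with correlated $\Phi_p$, the same construction referenced in \Cref{apx:l2-undersmooth-rotated-test}, in which some $a_j^\delta$ exceeds one and the resulting $\hat{\beta}_\text{aug}$ has strictly larger Euclidean norm than $\hat{\beta}_\text{ols}$.
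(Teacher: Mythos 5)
Your proposal is correct and follows essentially the same route as the paper's proof: diagonalize everything in the eigenbasis $V$ of $\Phi_p^\top\Phi_p$, use the scalar identity $\sigma_j^2+\gamma_j = (\sigma_j^2+\lambda)(\sigma_j^2+\delta)/(\sigma_j^2+\lambda+\delta)$ to collapse the affine combination into a single generalized-ridge form, and obtain the norm ordering from $0\le\gamma_j\le\lambda$ (the paper states this as an operator ordering $(D^2+\lambda I)^{-1}\preceq(D^2+\mathrm{diag}(\gamma))^{-1}\preceq (D^2)^{-1}$, you do the identical comparison coordinate-wise after rotating, which is the same argument and if anything slightly cleaner). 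Your handling of the final claim also mirrors the paper's, which likewise settles $\Vert\hat{\beta}_\text{ols}\Vert_2<\Vert\hat{\beta}_\text{aug}\Vert_2$ by pointing to the examples in \Cref{apx:l2-undersmooth-rotated-test} (\Cref{tab:l2-norm-empirical-results}) rather than by an analytic construction.
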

\begin{proof}
    \begin{align*}
        \hat{\beta}_{\ell_2} &= (I - A_\delta) (\Phi_p^\top\Phi_P + \lambda I)^{-1} \Phi_p^\top Y_p + A_\delta  (\Phi_p^\top\Phi_P)^{-1} \Phi_p^\top Y_p\\
        &= V \Big[ (I - (D^2 + \delta I)^{-1} D^2) (D^2 +\lambda I)^{-1} + (D^2 + \delta I)^{-1} D^2 (D^2)^{-1} \Big] V^\top \Phi_p^\top Y_p\\
        &= V (D^2 + (\delta+\lambda) I) (D^2 + \delta I)^{-1} (D^2 + \lambda I)^{-1} V^\top\Phi_p^\top Y_p\\
        &= V (D^2 + \text{diag}(\gamma))^{-1} V^\top \Phi_p^\top Y_p.
    \end{align*}
    We now demonstrate the two norm inequalities. First,
    \begin{align*}
        &\Vert \hat{\beta}_\text{ridge}^\lambda \Vert_2^2 \leq \Vert \hat{\beta}_{\ell_2} \Vert_2^2\\
        \iff & \Vert V (D^2 + \lambda I)^{-1} V^\top \Phi_p^\top Y_p \Vert_2^2 \leq \Vert V (D^2 + \diag(\gamma))^{-1} V^\top \Phi_p^\top Y_p \Vert_2^2\\
        \iff & Y_p^\top \Phi_p (  V \left[(D^2 + \lambda I)^{-1} (D^2 + \lambda I)^{-1} - (D^2 + \diag(\gamma))^{-1} (D^2 + \diag(\gamma))^{-1}\right] V^\top ) \Phi_p^\top Y_p \geq 0\\
        \iff & (D^2 + \lambda I)^{-1} (D^2 + \lambda I)^{-1} \preceq (D^2 + \diag(\gamma))^{-1} (D^2 + \diag(\gamma))^{-1}\\
        \iff & (D^2 + \lambda I)^{-1} \preceq (D^2 + \diag(\gamma))^{-1}. 
    \end{align*}
    Note that
    \begin{align} \frac{1}{\sigma^2_j + \gamma_j} = \left(\frac{1}{\sigma_j^2 + \lambda}\right) \left(\frac{\sigma_j^2 + \lambda + \delta}{\sigma_j^2 + \delta}\right),\label{eq:fact1} \end{align}
    and that
    \begin{align} \frac{\sigma_j^2 + \lambda + \delta}{\sigma_j^2 + \delta} \geq 1.\label{eq:fact2} \end{align}
    Combining facts (\ref{eq:fact1}) and (\ref{eq:fact2}), we have $1/(\sigma^2_j + \gamma_j) >  1/(\sigma^2_j + \lambda)$, and we're done. 

    Similarly, for the second inequality: 
    \begin{align*}
        &\Vert \hat{\beta}_{\ell_2} \Vert_2^2 \leq \Vert \hat{\beta}_\text{ols} \Vert_2^2 \\
        \iff & \Vert V (D^2 + \text{diag}(\gamma))^{-1} V^\top \Phi_p^\top Y_p \Vert_2^2 \leq \Vert V (D^2)^{-1} V^\top \Phi_p^\top Y_p \Vert_2^2\\
        \iff & (D^2 + \diag(\gamma))^{-1} \preceq (D^2)^{-1}
    \end{align*}
    which follows because $\gamma \geq 0$.

    Finally, for a counter-example where $\Vert \hat{\beta}_\text{ols} \Vert_2 < \Vert \hat{\beta}_\text{aug} \Vert_2$, see \Cref{tab:l2-norm-empirical-results}. 
\end{proof}

Even when $\Phi_p^\top \Phi_p$ is diagonal, if $\hat{\beta}_\text{reg}$ can be arbitrary --- and in particular, adversarially chosen --- then we are not guaranteed to see undersmoothing. Consider an arbitrary $\hat{\beta}_\text{reg}$ that $\ell_2$-smoothed with respect to OLS. That is, $\Vert \hat{\beta}_\text{reg} \Vert_2 \leq \Vert \hat{\beta}_\text{ols} \Vert_2$. If there are no other constraints on $\hat{\beta}_\text{reg}$, then there exist problem instances where $\hat{\beta}_{\ell_2}$ has $\ell_2$-norm larger than $\hat{\beta}_\text{ols}$ or smaller than $\hat{\beta}_\text{reg}$. For a simple example, consider $a^\delta = [0, 1/(1+\delta)]$, and $\hat{\beta}_\text{ols} = [0,1]$ --- note that these are jointly realizable by some $\Phi_p, Y_p$. Then $\hat{\beta}_\text{reg} = [1,0]$ will have $\ell_2$-norm bigger than $1$ for sufficiently large $\delta$.

\paragraph{Empirical Results for augmented $\ell_2$-balancing weights.}\label{apx:l2-undersmooth-rotated-test}
To see in practice that the rotated procedure obtains the properly undersmoothed augmented coefficients, we report norm values for $\ell_2$ augmented estimators using both a ridge and a lasso base learner in the high-dimensional LaLonde and IHDP settings. The key takeaway is that $\hat{\beta}_{\ell_2} = V \hat{\beta}^\text{rot}_\text{aug}$ has the correct undersmoothing properties.

\begin{table*}[h]
\centering
\begin{tabular}{ | c | c | c | c | c |  }
 \hline
 $\hat{\beta}_\text{reg}$ & $\Vert \hat{\beta}_\text{reg} \Vert_2$ & $\Vert \hat{\beta}_{\ell_2} \Vert_2$ & $\Vert \hat{\beta}_\text{ols} \Vert_2$ & $\Vert \hat{\beta}_\text{aug} \Vert_2$ \\
 \hline\hline
 LaLonde Ridge & 6.495e3 & 1.160e4 & 4.469e7 & 4.633e7 \\
 \hline
 LaLonde Lasso & 7.273e3 & 1.154e4 & 4.469e7 & 4.633e7 \\
 \hline
 IHDP Ridge & 1.038 & 8.569 & 9.299 & 9.722 \\
 \hline
 IHDP Lasso & 1.262 & 8.666 & 9.299 & 9.626 \\
 \hline
\end{tabular}
\caption{The $\ell_2$ norm of base learner coefficients, unrotated augmented coefficients, OLS coefficients, and properly rotated augmented coefficients. Results are reported for the LaLonde and IHDP datasets using their respective high-dimensional feature sets and with both ridge and lasso base learners.  }\label{tab:l2-norm-empirical-results}
\end{table*}

\subsection{Augmented $\ell_\infty$ balancing weights}\label{apx:linf-aug-correlated}

When we use $\ell_\infty$ balancing for the weighting model (or equivalently the minimum distance lasso estimator for the Riesz representer), the story becomes more complicated. When $\Phi_p^\top\Phi_p$ is not diagonal, both the Lasso outcome model and the $\ell_\infty$ balancing weights lack a closed-form solution. 

For Lasso with correlated features, the regularization path is qualitatively similar to the diagonal case. The coefficients start at OLS and shrink exactly to zero at different rates. However, due to the correlation between the features, each coefficient no longer linearly (or even monotonically) moves toward zero. Likewise, for $\ell_\infty$ balancing weights, we see similar behavior for the coefficients of the Riesz representer model. These coefficients are sparse, with regularization paths that move toward zero (but not monotonically). 

In the diagonal setting, the regularization path for the augmented estimator, $a_j^\delta$, is proportionate (element-wise) to $\hat{\theta}$. Therefore, because $\hat{\theta}$ is sparse, the augmented estimator coefficients defined in \Cref{generalregularizationpath} is also sparse. For general design, the impact of $\hat{\theta}$ on the augmented coefficients also depends on the empirical covariance matrix. 

\begin{lemma}
    Let $\hat{w}$ be the solution to balancing weights with hyperparameter $\delta$ for $\mathcal{F} = \{ f(x) = \theta^\top\phi(x) : \Vert \theta \Vert \leq r \}$, where $\Vert \cdot \Vert$ is any norm. Let $\hat{\theta}$ be the corresponding solution to the dual form of balancing weights, so that $\hat{w} = \Phi_p \hat{\theta}$. Then,
    \[ \hat{\Phi}_q =  \hat{\theta}^\top \Sigma, \]
    and
    \[ a_j^\delta = \frac{\hat{\theta}^\top\Sigma_j}{ \overbar{\Phi}_{q,j}}. \]
\end{lemma}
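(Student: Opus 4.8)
The plan is to derive both identities by direct substitution into the definitions; no optimization argument is needed, since the statement is a bookkeeping consequence of three facts: (i) linear balancing weights are linear in $\Phi_p$, (ii) the definition of the reweighted features $\hat\Phi_q$, and (iii) the definition of $a_j^\delta$ from \Cref{generalregularizationpath} together with the centering normalization $\overbar{\Phi}_p = 0$. Throughout I write $\Sigma \coloneqq \tfrac1n \Phi_p^\top\Phi_p$ for the empirical (centered) Gram matrix and $\Sigma_j$ for its $j$th column; this normalization is exactly the one forced by requiring the first identity to hold given $\hat\Phi_q = \tfrac1n \hat{w}\Phi_p$.

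First I would establish $\hat\Phi_q = \hat\theta^\top\Sigma$. By definition the reweighted feature vector is $\hat\Phi_q = \tfrac1n \hat{w}\,\Phi_p$. Because $\hat{w}$ solves the balancing weights problem over the linear class $\mathcal{F}$, its dual solution $\hat\theta$ satisfies $\hat{w} = \Phi_p\hat\theta$ --- this is precisely the fact, recorded in \Cref{linearbalancingproblems}, that balancing weights are always linear in $\Phi_p$. Substituting,
\[
\hat\Phi_q = \tfrac1n (\Phi_p\hat\theta)^\top \Phi_p = \hat\theta^\top \bigl(\tfrac1n \Phi_p^\top\Phi_p\bigr) = \hat\theta^\top\Sigma,
\]
so that coordinatewise $\hat\Phi_{q,j} = \hat\theta^\top\Sigma_j$.

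For the second identity I would invoke \Cref{generalregularizationpath}, which gives $a_j^\delta = \widehat{\Delta}_j^\delta/\Delta_j$ with $\widehat{\Delta}_j^\delta = \hat\Phi_{q,j}^\delta - \overbar{\Phi}_{p,j}$ and $\Delta_j = \overbar{\Phi}_{q,j} - \overbar{\Phi}_{p,j}$. Under the standing normalization $\overbar{\Phi}_p = 0$ (the Intercept remark), these collapse to $\widehat{\Delta}_j^\delta = \hat\Phi_{q,j}^\delta$ and $\Delta_j = \overbar{\Phi}_{q,j}$; plugging in the first identity yields $a_j^\delta = \hat\theta^\top\Sigma_j / \overbar{\Phi}_{q,j}$, as claimed.

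There is no substantive obstacle here; the only care required is in the transpose and normalization conventions --- in particular keeping the factor $\tfrac1n$ inside $\Sigma$ consistent with $\hat\Phi_q = \tfrac1n\hat{w}\Phi_p$, and treating $\hat\theta$ as a column vector so that $\hat\theta^\top\Sigma_j$ is a scalar. The conceptual payoff, which motivates stating the lemma at all, is that in the general (non-diagonal) design $a_j^\delta$ depends on $\hat\theta$ only through $\hat\theta^\top\Sigma_j$, i.e. through a $\Sigma$-weighted combination of \emph{all} dual coordinates; hence a sparse $\hat\theta$ need not yield a sparse augmented coefficient vector, in contrast to the diagonal case where $\Sigma$ is diagonal, $\Sigma_j \propto e_j$, and therefore $a_j^\delta \propto \hat\theta_j$.
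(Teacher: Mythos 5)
Your proof is correct, and it matches the paper's treatment: the paper states this lemma without any explicit proof, precisely because it is the immediate bookkeeping consequence of the definitions that you spell out (substituting $\hat{w} = \Phi_p\hat{\theta}$ into $\hat{\Phi}_q = \tfrac{1}{n}\hat{w}\Phi_p$, then applying the $a_j^\delta = \widehat{\Delta}_j^\delta/\Delta_j$ formula from \Cref{generalregularizationpath} under the centering $\overbar{\Phi}_p = 0$). Your explicit handling of the transpose and $\tfrac{1}{n}$ normalization conventions, which the paper itself uses inconsistently across the main text and appendix, is a welcome clarification rather than a deviation.
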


In the case of $\ell_\infty$ balancing, since $\hat{\theta}$ is sparse, $a_j^\delta$ is a sparse combination of the elements of the $j$th column of the empirical covariance matrix $\Sigma$. But unless this combination is exactly zero (typically a measure zero event), the resulting $a^\delta$ will not inherit sparsity from $\hat{\theta}$.

\section{Undersmoothing with general weights as viewed by the normal equations}

Our key numerical result, \Cref{generalregularizationpath}, allows us to write augmented weighting estimators as a plug-in estimator using coefficients $\hat{\beta}_\text{aug}$. Notably, this result holds for weights $\hat{w}$ as long as $\hat{w} = \hat{\theta} \Phi_p^\top$ for \emph{any} vector $\hat{\theta} \in \mathbb{R}^d$. The numerical result holds even if $\hat{\theta}$ doesn't result in a reasonable estimator of the Riesz representer. 

It is interesting to see under what conditions on $\hat{w}$ does $\hat{\beta}_\text{aug}$ still behave as an ``undersmoothed'' estimator. We discussed some ways to define this in \Cref{apx:l2-rot-wlog}; for example requiring that the $\ell_2$-norm of the coefficieents be between that of the base learner $\hat{\beta}$ and $\hat{\beta}_\text{ols}$. In this section, we develop a different notion of ``undersmoothing'' in terms of the size of the normal equation violations. In penalized linear regression, the smaller the penalty, the closer the LHS of the normal equation gets to zero. Therefore one way to view ``undersmoothing'' is to measure the size of the normal equation violations. Formally, the first order condition of (unpenalized) linear regression with coefficients $\beta$ is:

\[ \mathcal{N}(\hat{\beta}_\text{ols}) \coloneqq \Phi_p^\top ( Y_p - \Phi_p \hat{\beta}_\text{ols} ) = 0. \]

Define $A$ to be the diagonal matrix with $j$th entry $a_j$ as defined in \Cref{generalregularizationpath}. Then the corresponding normal equation violation (the gradient of the squared loss) for $\hat{\beta}_\text{aug}$ is:

\begin{align*}
    \mathcal{N}(\hat{\beta}_\text{aug}) &= \Phi_p^\top (Y_p - \Phi_p A \hat{\beta}_\text{ols} - \Phi_p (I - A)\hat{\beta})\\
    &= \Phi_p^\top (Y - \Phi_p \hat{\beta}) + (\Phi_p^\top \Phi_p)A(\hat{\beta} - \hat{\beta}_\text{ols})\\
    &= \Phi_p^\top(Y - \Phi_p \hat{\beta}_\text{ols}) + (\Phi_p^\top \Phi_p)(A-I)(\hat{\beta} - \hat{\beta}_\text{ols})\\
    &= (\Phi_p^\top \Phi_p)(I - A)(\hat{\beta}_\text{ols} - \hat{\beta})
\end{align*}  

Note that when $A = 0$, we recover the original outcome model, e.g. $\hat{\beta}_\text{aug} = \hat{\beta}$ and so the normal equation violations of $\hat{\beta}$ can be written:

\[ \mathcal{N}(\hat{\beta}) = (\Phi_p^\top \Phi_p)(\hat{\beta}_\text{ols} - \hat{\beta}).\]

In other words, how close $\hat{\beta}$ is to optimizing the squared loss (in terms of the size of the gradient) is just equal to how close the coefficients are to the OLS coefficients, but mediated via the sample covariance matrix. E.g. large eigenvalues of $\Phi_p^\top \Phi_p$ will amplify the difference more, and small eigenvalues less. The only difference with the augmented coefficients $\hat{\beta}_\text{aug}$ is the introduction of the scaling diagonal matrix $(I-A)$. If $A$ is uniformly close to zero, it will damp the eigenvalues of the covariance matrix and result in small normal equation violations. This is most clear when the sample covariance matrix is diagonal with entries $\sigma^2_j$. In this case, the $j$th element of the squared loss gradient becomes:

\[ \sigma^2_j (1 - a_j) (\hat{\beta}_{\text{ols},j} - \hat{\beta}_j). \]

Whenever all $a_j$ are between $0$ and $2$, then the violations of the normal equations for $\hat{\beta}_\text{aug}$ are guarantee to be uniformly at least as small as that of $\hat{\beta}$.


\section{Comparison to TMLE with Balancing Weights}\label{apx:tmle-connection}

As pointed out in \cite{chernozhukov2022automatic}, the balancing weights estimator of the Riesz representer can be applied to any doubly-robust meta algorithm. In the main text, we used to standard AIPW form, but we can also consider the TMLE form \cite{tmle, van2011targeted}. In the linear setting, for an outcome model $\hat{\beta}$ and weights $\hat{w} \coloneqq \Phi_p \hat{\theta}$ this takes the form:
\begin{align*}
    \bar\Phi_q^T (\hat{\beta} + \hat{\epsilon} \hat{\theta} ),
\end{align*}
where
\begin{align*}
    \hat{\epsilon} &\coloneqq \text{argmin}_\epsilon \Big\{ \Vert Y_p - \Phi_p \hat{\beta} - \epsilon \hat{w} \Vert_2^2 \Big\}\\
    &= \frac{\hat{w}^T ( Y_p - \Phi_p \hat{\beta} )}{\hat{w}^T \hat{w}}.
\end{align*}

In this section, we show the explicit connection between the AIPW and TMLE forms which are closely related. This connection can be stated in two ways, an ``AIPW-centric'' form and a ``TMLE-centric'' form.

\textbf{AIPW-centric version of stating the connection:}
\begin{align*}
    \text{AIPW: } \hspace{3mm} &\bar{\Phi}_q^T \hat{\beta} + \hspace{20.5mm} \frac{1}{n} \hspace{20.5mm} \hat{w}^T(Y_p - \Phi_p \hat{\beta})\\
    \text{TMLE: } \hspace{3mm} &\bar{\Phi}_q^T \hat{\beta} + \underbrace{\left(\frac{\bar{\Phi}_q^T \hat{\theta}}{ \hat{\theta}^T \Phi_p^T \Phi_p \hat{\theta}}\right)}_{\text{scaled ratio of terms in Riesz loss}} \hat{w}^T(Y_p - \Phi_p \hat{\beta})\\
\end{align*}

\textbf{TMLE-centric version of stating the connection:}
\begin{align*}
    \text{TMLE: } \hspace{3mm} &\bar{\Phi}_q^T \hat{\beta} + \hspace{6.5mm} \hat{\epsilon} \bar{\Phi}_q^T \hat{\theta}\\
    \text{AIPW: } \hspace{3mm} &\bar{\Phi}_q^T \hat{\beta} + \underbrace{\hat{\epsilon} \hat{\Phi}_q^T \hat{\theta}}_{\hat{\Phi}_q \text{ instead of } \bar{\Phi}_q}\\
\end{align*}

From the ``AIPW-centric'' version of the connection, we see the familiar $\hat{w}^T Y_p$ term appear in the TMLE, and can apply our main results to rewrite the TMLE estimator around $\hat{\beta}_\text{ols}$. However, because we now weight the average by a term that is different from $\frac{1}{n}$, the results may have different variance properties. We think this is a promising direction for future research.

\section{Simulation Study Details}\label{apx:simulation-details}

\subsection{Setup}

We consider 36 different data generating processes (DGPs) for our simulation study. For each of them, we compare an oracle baseline with three feasible hyperparameter tuning schemes.

For the remainder of this section, we we say ``numerically optimize'', we mean using the scipy.optimize.minimize solver with tolerance 1e-12. In particular, we pre-compute the SVD of the covariance matrix before solving, so that we can compute the pseudoinverse for all involved expressions in closed form without performing traditional matrix inversion during optimization. 

For the oracle hyperparameters, we compute $\lambda^*$ by numerically optimizing the in-distribution mean squared error for ridge regression from \Cref{sec:finite-sample-mse}. We then fix $\lambda=\lambda^*$ and then numerically optimize the expression in \Cref{prop:finite-sample-mse} to get the MSE-optimal $\delta^*$

The three feasible hyperparameter tuning schemes we consider, by contrast, use a particular draw of $Y_p$. In all cases, we choose $\lambda$ by cross-validating a ridge outcome model. Then we considering choosing $\delta$ by: (1) cross-validating balance, (2) cross-validating the Riesz loss, and (3) setting $\delta$ equal to the cross-validated ridge $\lambda$. In all cases, cross-validation is performed 5-fold via numerical optimization as described above instead of using a grid. 

Note that the oracle hyperparameters are only optimal in a setting where we have to pick a single $\lambda$ and $\delta$ for each draw of $Y_p$. Cross-validation picks a new $\lambda$ for each $Y_p$ and so can theoretically outperform the oracle. This happens very rarely, but a non-zero amount of the time. Overall, the results in \Cref{tab:sim_results} suggest that this is still a very good baseline. Finally, note that we could have picked $\lambda$ by cross validation for the oracle, and then solved for the optimal $\delta^*$ separately for each $Y_p$, fixing the CV value of $\lambda$. However, we cannot guarantee that the resulting $\delta$ will have any optimality properties, since the mean squared error expression is derived by averaging over $Y_p$ draws for fixed values of $\lambda$ and $\delta$. But this could be an interesting follow up experiment. 

In all cases, we take 1000 draws of $Y_p$ and then compute the squared error and take their average as a monte carlo estimate of the mean squared error. 

\subsection{Synthetic DGPs}

To compute the oracle $\lambda^*$ and $\delta^*$ from \Cref{sec:finite-sample-mse}, we need: the true coefficients, $\beta_0$, the population covariance matrix, $\mathbb{E}[ \Phi_p^T \Phi_p]$, a sample covariance matrix $\hat{\Sigma}$, the conditional variance $\sigma^2$, and the target covariance mean, $\mathbb{E}[\Phi_q]$. So for each DGP, we need to specify these five objects. 

We consider synthetic DGPs with three basic setups. They all use $n=2000$ and $d = 50$. For each of the three setups, we draw a random $\beta_0$, that is the absolute value of a $d$-dimensional standard normal, that is then normalized to have 2-norm equal to 1. The three setups each generate a popuation covariance matrix in roughly the same way. In each case, we choose a maximum and minimum eigenvalue, $\eta_\text{min}$ and $\eta_\text{max}$ respectively. We then generate an equally space grid between $\eta_\text{min}^{1/c}$ and $\eta_\text{min}^{1/c}$ for some curvature constant $c$. We choose the eigenvalues of the covariance matrix to be the numbers in this grid raised to the $c$th power. We then draw a random eigenvector matrix from the special orthogonal group, $U$, and form the covariance matrix from the eigenvectors and eigenvalues in the standard way. Next, we draw an $\Phi_p$ with $n=2000$ samples from a mean-zero normal distribution with this covariance matrix, and compute $\hat{\Sigma} = \Phi_p^T\Phi_p/n$.

The three basic setups differ in teh choice of $\eta_\text{min}, \eta_\text{max},$ and $c$. For setting 1 we choose $\eta_\text{min} = 1e-4$, $\eta_\text{max}=3$ and $c= 5000$. For setting 2 we choose $\eta_\text{min} = 1e-8, \eta_\text{max} = 3$, and $c=5000$. For setting 3, we choose $\eta_\text{min}=1e-10, \eta_\text{max}=5, c=10$.

Then for each of these basic setups, we create 10 DGPs, by consider all combinations of a list of $\mathbb{E}[\Phi_q]$ and $\sigma^2$. We use $\sigma^2 \in \{ 0.1, 2 \}$. For $\mathbb{E}[\Phi_q]$ we use: the vector of all $0.1$, the vector of all $2$, and then 3 vectors chosen randomly uniformly between $-1$ and $1$ which are then scaled to have norm 1. Thus the total of $2 \times 5 = 10$ DGPs for each setup.

\subsection{Semi-Synthetic DGPs}

We then also use semi-synthetic DGPs based on Lalonde Long and IHDP Long (see \Cref{apx:numerical-experiment-details} for details on these datasets). For each of these datasets, we recenter $\Phi_p$ and $Y_p$ to have mean-zero. Then we choose $\beta_0$ to be the coefficients from cross-validated ridge regression of $Y_p$ on $\Phi_p$. We let $\sigma^2$ be the variance of the residuals from this regression, and we choose the population covariance matrix to be $\Phi_p^T\Phi_p/n$. Next, we \emph{redraw} a new matrix of samples from a mean-zero normal distribution with this covariance matrix, and compute the sample covariance $\hat{\Sigma}$ from these \emph{new} samples. For targets, we use the actual $\mathbb{E}[\Phi_q]$, but also include two perturbed versions where all even elements of $\mathbb{E}[\Phi_q]$ are either increased or decreased by a proportion of the norm of $\mathbb{E}[\Phi_q]$. For IHDP, the perturbation is $1/10$ times the norm, for Lalonde, the perturbation is $1/100$ times the norm. 

So since for each semi-synthetic setting we have three values of $\mathbb{E}[\Phi_q]$ and one value of $\sigma^2$, that corresponds to 6 DGPs each, for a total of 36 together with the synthetic DGPs.

\section{Additional Numerical Experiments}\label{apx:numerical-experiment-details}

\subsection{Dataset Summaries}

The following table summarizes the number of samples and features in the datasets used for numerical illustrations. In the main text, we presented results for ``LaLonde Short'' and ``LaLonde Long'', the \cite{lalonde1986evaluating} data with the original 11 features and the expanded 171 features from \cite{farrell2015robust}.

In \Cref{apx:additional-experiments}, we also provide results for the Infant Health and Development Program (IHDP) dataset, a standard observational causal inference benchmark from \cite{hill2011bayesian}. IHDP is based on data from a randomized control trial of an intensive home visiting and childcare intervention for low birth weight infants born in 1985. For all children, we have a range of baseline covariates (contained in ``IHDP Short''), including both categorical covariates, like the mother’s educational attainment, and continuous covariates, like the child’s birth weight. Our goal is to estimate the average outcome (a standardized test score) in the absence of the intensive intervention. We additionally create an ``IHDP Long'' extended feature set that includes all pair-wise interaction terms between the discrete covariates and a second-order polynomial expansion of all continuous covariates. 

\begin{table*}[h]
\centering
\begin{tabular}{ | c | c | c | c |  }
 \hline
 Dataset & Features & Train Samples & Test Samples   \\
 \hline\hline
 LaLonde Short & 11 & 727 & 185  \\
 \hline
 LaLonde Long & 171 & 727 & 185 \\
 \hline
 IHDP Short & 25 & 608 & 139 \\
 \hline
 IHDP Long & 193 & 608 & 139\\
 \hline
\end{tabular}
\caption{Summary of the four datasets used for numerical illustrations including the number of features, the number of control/train observations (used 
 as samples from $p$) and the number of treated/test observations (used as samples from $q$). }\label{tab:data-summary}
\end{table*}

\subsection{Numerical Example with cross-fitting}\label{apx:lalonde-cross-fit}

In \Cref{sec:short-regression}, we demonstrated that for the low-dimensional NSW setting, the cross-validated weighting hyperparameter is $\delta = 0$. As a result, for any linear base learner, the augmented estimator is numerically equivalent to a simple OLS plug-in estimate. As discussed in \Cref{apx:cross-fit-unreg-weight}, with cross-fitting this should only remain approximately true. In this section, we numerically assess the sensitivity of this result to cross-fitting. 

We repeat the experiment with the 11-dimensional NSW covariates using a lasso base learner and $\ell_\infty$ balancing weights. We perform 5-fold cross-fitting with random splits. In each split, we compute the augmented estimate using models fit with data from the other 4 splits as described at the beginning of \Cref{sec:sample-split}. Each model has its hyperparameter chosen separately by cross-validation. Thus for 5-fold cross-fitting, we fit 5 lasso regressions and 5 balancing weights estimators each with $4/5$ths the sample size and a separate hyperparameter. In the manner of bootstrapping, we re-run this end-to-end procedure 1000 times for different random choices of the splits. 

Note that there are two main ways that cross-fitting might cause the final estimate to deviate from simple OLS. The first is the variation due to only applying the models out of sample. The second is that with smaller sample sizes, cross-validation might select larger values for the hyperparameter. We find that for 80\% of the $5 \times 1000$ weighting models, $\delta = 0$. In the remaining 20\% some of the hyperparameters take on a very small but non-zero value. The mean $\delta$ is 0.0037 and the CDF across the $5 \times 1000$ models is given in \Cref{fig:lalonde-short-cross-fit-delta}.

\begin{figure}[htb]
    \centering
    \begin{subfigure}{0.4\textwidth}
        \centering 
        \includegraphics[width=\textwidth]{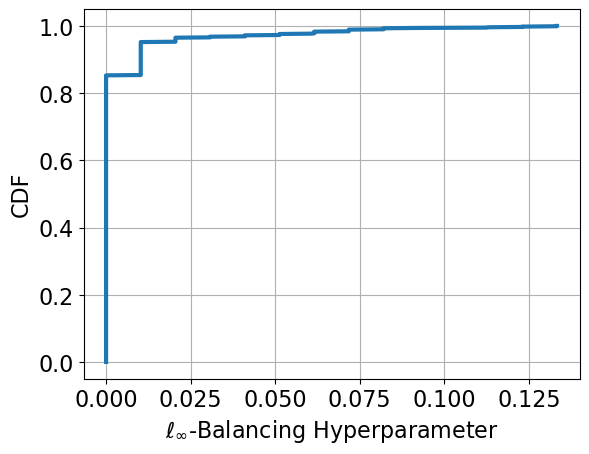}
        \caption{Weighting Hyperparameter CDF}
        \label{fig:lalonde-short-cross-fit-delta}
    \end{subfigure}
    \begin{subfigure}{0.5\textwidth}
        \centering 
        \includegraphics[width=\textwidth]{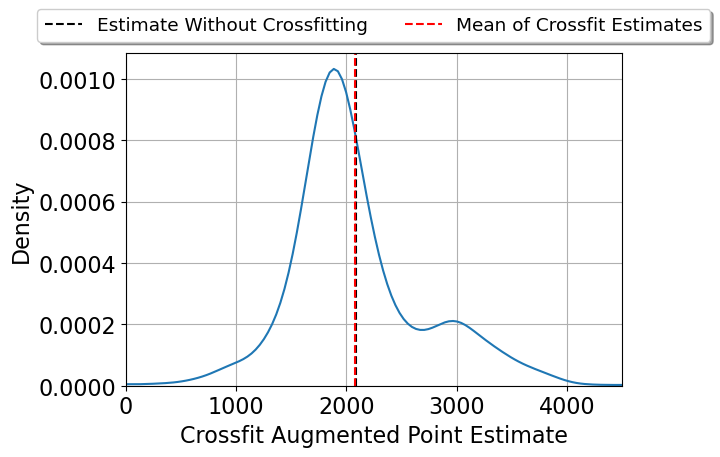}
        \caption{Distribution of cross-fit Estimates}
        \label{fig:lalonde-short-cross-fit-ests}
    \end{subfigure}
    \caption{Results with 5-fold cross-fitting for the \cite{lalonde1986evaluating} dataset with the low-dimensional set of 11 features. The cross-fitting procedure is repeated $1000$ times. Panel (a) plots the empirical cumulative distribution of weighting hyperparameters over the $1000 \times 5$ weighting models; in the main text, the $\ell_\infty$ hyperparameter extends from 0 to 1 (here truncated at 0.12). Panel (b) shows a density plot for the cross-fit augmented point estimates over the $1000$ repeats.   }
    \label{fig:lalonde-cross-fit}
\end{figure}

\Cref{fig:lalonde-short-cross-fit-ests} shows a smoothed density plot for the cross-fit augmented estimates over the 1000 draws. The black dotted line marks the augmented estimate without cross-fitting (recall that this is identical to the plug-in OLS estimate for this dataset). The red dotted line marks the mean of the cross-fit estimates over the 1000 draws.  \emph{On average}, cross-fitting has virtually no effect on the augmented estimate --- at least in this setting where the weighting hyperparameter is usually close to zero. However, there is substantial variation across the 1000 draws. In particular, the density is skewed and the mode of the cross-fit estimates is slightly smaller than the OLS point estimate. There is also a meaningful tail of estimates larger than the OLS point estimate, including a second larger mode.

\subsection{Additional Experiments}\label{apx:additional-experiments}

\subsubsection{Low Dimensional}

We begin by providing corroborating results for the low dimensional setting. We repeat the experiments in \Cref{sec:short-regression} in three additional settings: the LaLonde low-dimensional setting with $\ell_2$ balancing weights in \Cref{fig:lalonde-short-ridge}, the IHDP low-dimensional setting with $\ell_\infty$ balancing weights in \Cref{fig:ihdp-short-lasso}, and the IHDP low-dimensional setting with $\ell_2$ balancing weights in \Cref{fig:ihdp-short-ridge}. In all of these cases, cross-validation for the weighting hyperparameter choose $\delta = 0$, and thus the augmented estimator is equivalent to the OLS plug-in estimate. These plots can be compared to \Cref{fig:lalonde-short-lasso} in the main text. 

\begin{figure}[h]
    \centering
    \begin{subfigure}{0.32\textwidth}
        \centering 
        \includegraphics[width=\textwidth]{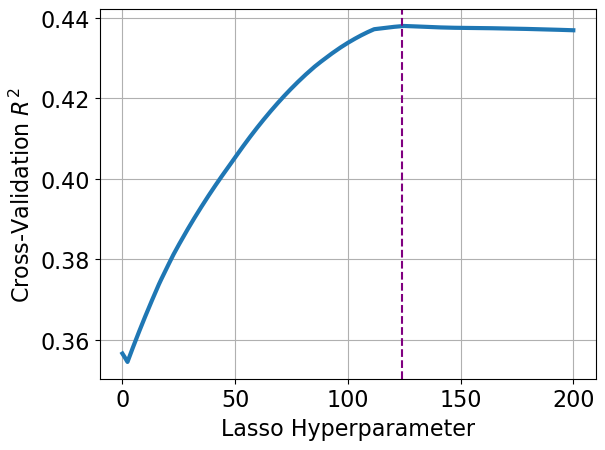}
        \caption{Lasso outcome model}
                \label{fig:lalonde-short-lasso-panel-a}
    \end{subfigure}
    \begin{subfigure}{0.325\textwidth}
        \centering 
        \includegraphics[width=\textwidth]{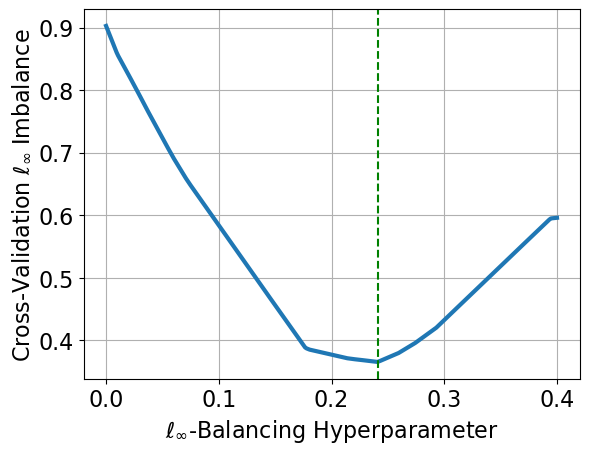}
        \caption{$\ell_\infty$ balancing}
                \label{fig:lalonde-short-lasso-panel-b}
    \end{subfigure}
        \begin{subfigure}{0.31\textwidth}
        \centering 
        \includegraphics[width=\textwidth]{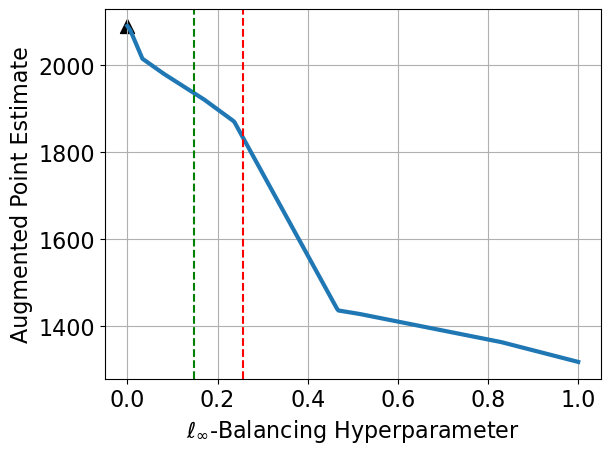}
        \caption{Augmented estimate}
        \label{fig:lalonde-short-lasso-panel-c}
    \end{subfigure}
\caption{Lasso-augmented $\ell_\infty$ balancing weights (``double lasso'') for \citet{lalonde1986evaluating} with the original 11 features. Panel (a) shows the 10-fold cross-validated $R^2$ for the Lasso-penalized regression of $Y_p$ on $\Phi_p$ among control units across the hyperparameter $\lambda$; the purple dotted line shows the CV-optimal value. Panel (b) shows the 10-fold cross-validated imbalance for $\ell_\infty$ balancing weights across the hyperparameter $\delta$; the green dotted line shows the CV-optimal value. Panel (c) shows the point estimate for the augmented estimator across the weighting hyperparameter $\delta$; the black triangle corresponds to the OLS point estimate, the green dotted line corresponds to cross-validating imbalance, the red dotted line corresponds to cross-validating the Riesz loss.}
    \label{fig:lalonde-short-lasso}
\end{figure}

\begin{figure}[h]
    \centering
    \begin{subfigure}{0.32\textwidth}
        \centering 
        \includegraphics[width=\textwidth]{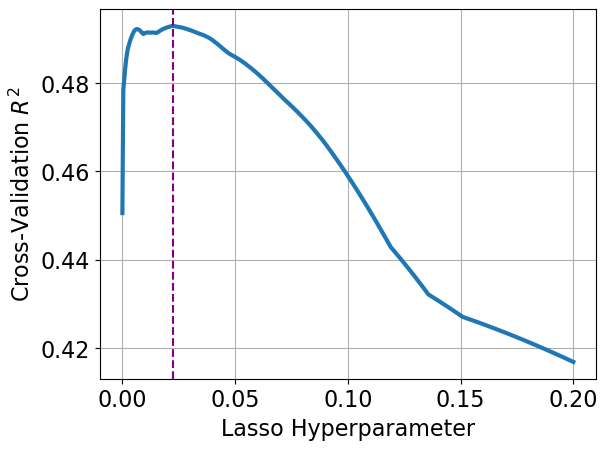}
        \caption{Lasso outcome model}
                \label{fig:ihdp-short-lasso-panel-a}
    \end{subfigure}
    \begin{subfigure}{0.325\textwidth}
        \centering 
        \includegraphics[width=\textwidth]{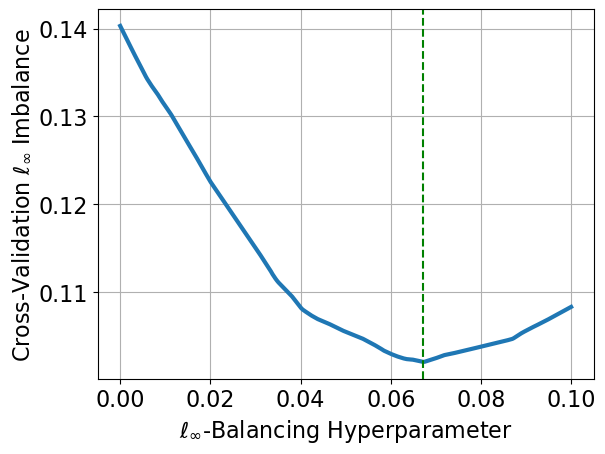}
        \caption{$\ell_\infty$ balancing}
                \label{fig:ihdp-short-lasso-panel-b}
    \end{subfigure}
        \begin{subfigure}{0.31\textwidth}
        \centering 
        \includegraphics[width=\textwidth]{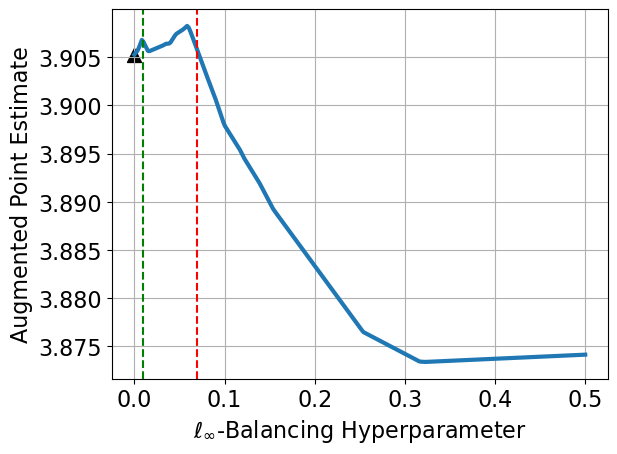}
        \caption{Augmented estimate}
        \label{fig:ihdp-short-lasso-panel-c}
    \end{subfigure}
\caption{Lasso-augmented $\ell_\infty$ balancing weights (``double lasso'') for IHDP with the original 25 covariates. Panel (a) shows the 10-fold cross-validated $R^2$ for the Lasso-penalized regression of $Y_p$ on $\Phi_p$ among control units across the hyperparameter $\lambda$; the purple dotted line shows the CV-optimal value. Panel (b) shows the 10-fold cross-validated imbalance for $\ell_\infty$ balancing weights across the hyperparameter $\delta$; the green dotted line shows the CV-optimal value. Panel (c) shows the point estimate for the augmented estimator across the weighting hyperparameter $\delta$; the black triangle corresponds to the OLS point estimate, the dotted green line corresponds to cross-validated imbalance, and the dotted red line corresponds to cross-validated Riesz loss.}
    \label{fig:ihdp-short-lasso}
\end{figure}

\begin{figure}[h]
    \centering
    \begin{subfigure}{0.32\textwidth}
        \centering 
        \includegraphics[width=\textwidth]{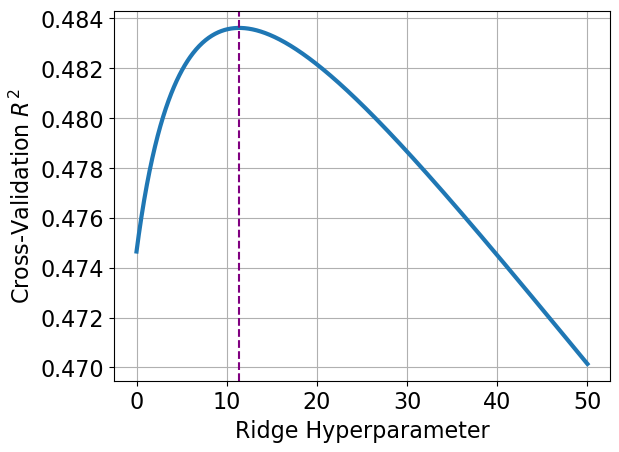}
        \caption{Ridge outcome model}
                \label{fig:ihdp-short-ridge-panel-a}
    \end{subfigure}
    \begin{subfigure}{0.325\textwidth}
        \centering 
        \includegraphics[width=\textwidth]{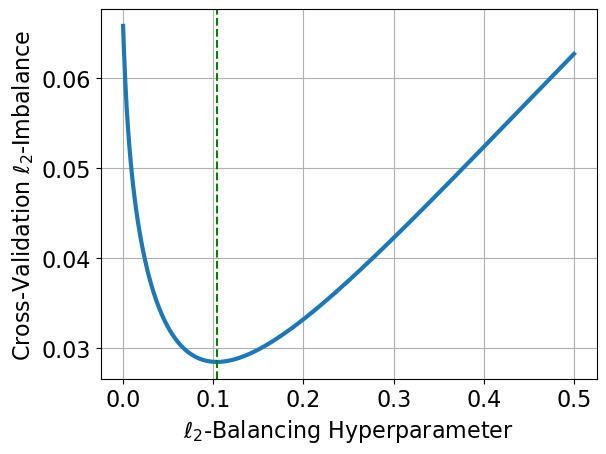}
        \caption{$\ell_2$ balancing}
                \label{fig:ihdp-short-ridge-panel-b}
    \end{subfigure}
        \begin{subfigure}{0.31\textwidth}
        \centering 
        \includegraphics[width=\textwidth]{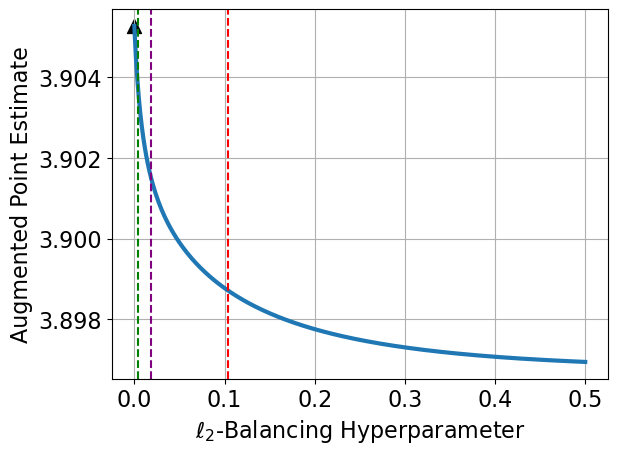}
        \caption{Augmented estimate}
        \label{fig:ihdp-short-ridge-panel-c}
    \end{subfigure}
\caption{Ridge-augmented $\ell_2$ balancing weights (``double ridge'') for IDHP with the original 25 covariates. Panel (a) shows the 10-fold cross-validated $R^2$ for the Ridge-penalized regression of $Y_p$ on $\Phi_p$ among control units across the hyperparameter $\lambda$; the purple dotted line shows the CV-optimal value. Panel (b) shows the 10-fold cross-validated AutoDML loss (\ref{eq:autoform}) for $\ell_2$ balancing weights across the hyperparameter $\delta$; the green dotted line shows the CV-optimal value, which is $\delta = 0$ or exact balance. Panel (c) shows the point estimate for the augmented estimator across the weighting hyperparameter $\delta$; the black triangle corresponds to the OLS point estimate, the green dotted line corresponds to cross-validated balance, the red dotted line corresponds to cross-validated Riesz loss, and the purple dotted line corresponds to the cross-validated ridge outcome hyperparameter.}
    \label{fig:ihdp-short-ridge}
\end{figure}

\subsubsection{High Dimensional}

We also replicate the experiments in \Cref{sec:numerical-high-dim-results} for high dimensional double ridge and double lasso but using the IHDP data with a high dimensional feature set. \Cref{fig:ihdp-long} replicates the reporting of point estimates and cross-validation curves for double lasso and double ridge with IHDP as in \Cref{fig:lalonde-long}; \Cref{fig:ihdp-ridge-undersmooth} replicates undersmoothing in double ridge as in \Cref{fig:lalonde-ridge-undersmooth}; and \Cref{fig:ihdp-aug-norm-fig} replicates undersmoothing in the $\ell_0$ ``norm'' as in \Cref{fig:lalonde-aug-norm-fig}. 

\begin{figure}[h]
    \centering
    \begin{subfigure}{0.32\textwidth}
        \centering 
        \includegraphics[width=\textwidth]{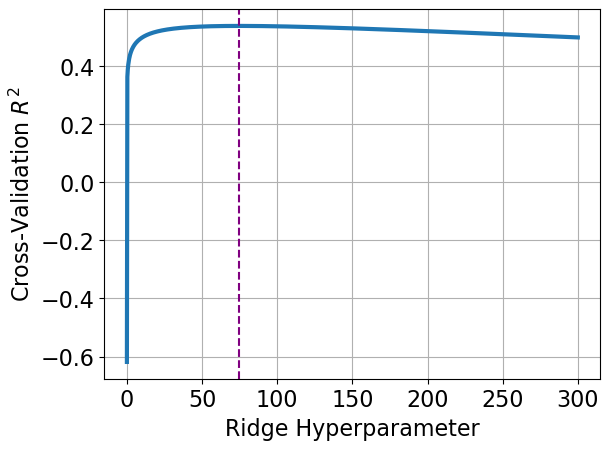}
        \caption{Ridge outcome model}
            \label{fig:ihdp-long-ridge-panel-a}
    \end{subfigure}
    \begin{subfigure}{0.325\textwidth}
        \centering 
        \includegraphics[width=\textwidth]{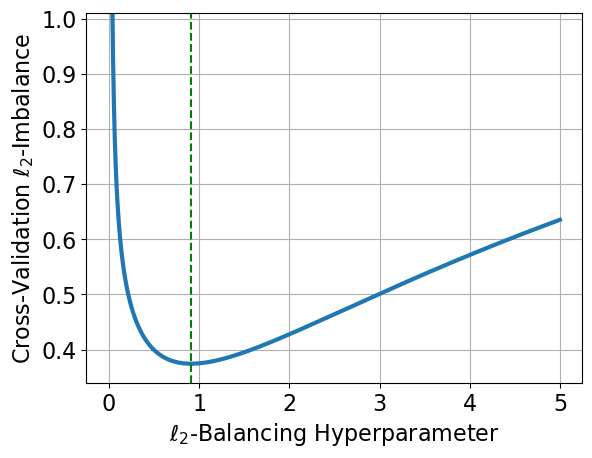}
        \caption{$\ell_2$ balancing}
         \label{fig:idhp-long-ridge-panel-b}
    \end{subfigure}
        \begin{subfigure}{0.31\textwidth}
        \centering 
        \includegraphics[width=\textwidth]{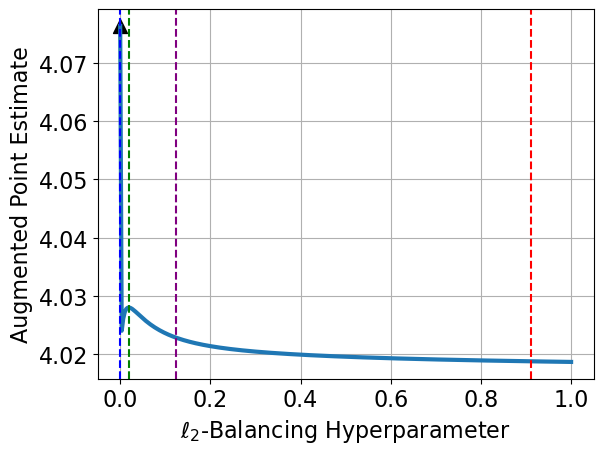}
        \caption{Estimate from ``double ridge''}
        \label{fig:ihdp-long-ridge-panel-c}
    \end{subfigure} \\[1em]
    \begin{subfigure}{0.32\textwidth}
        \centering 
        \includegraphics[width=\textwidth]{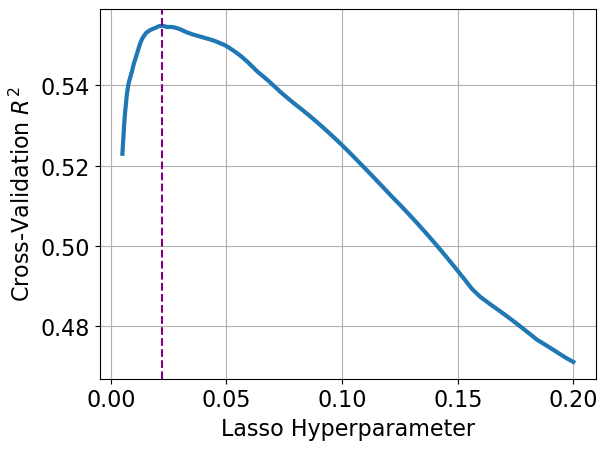}
        \caption{Lasso outcome model}
        \label{fig:ihdp-long-lasso-panel-a}
    \end{subfigure}
    \begin{subfigure}{0.325\textwidth}
        \centering 
        \includegraphics[width=\textwidth]{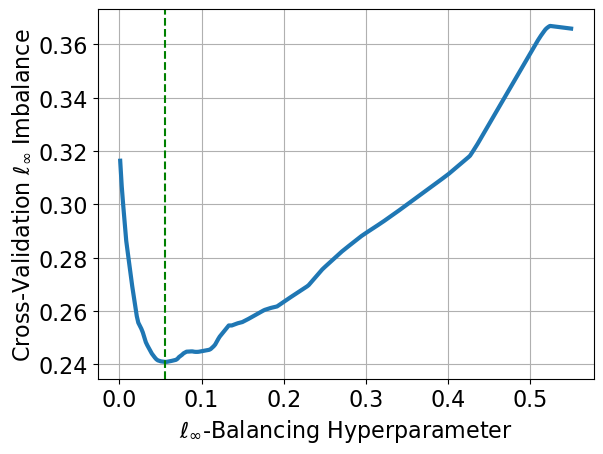}
        \caption{$\ell_\infty$ balancing}
        \label{fig:ihdp-long-lasso-panel-b}
    \end{subfigure}
        \begin{subfigure}{0.31\textwidth}
        \centering 
        \includegraphics[width=\textwidth]{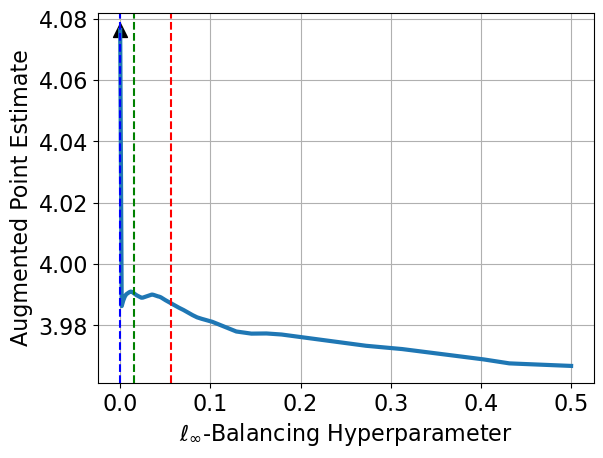}
        \caption{Estimate from ``double lasso''}
        \label{fig:ihdp-long-lasso-panel-c}
    \end{subfigure}    
\caption{Augmented balancing weights estimates for the IHDP data set with the expanded set of 193 features; the top row shows ridge-augmented $\ell_2$ balancing, and the bottom row shows lasso-augmented $\ell_\infty$ balancing.
Panels (a) and (d) show the 3-fold cross-validated $R^2$ for the ridge- and laso-penalized regression of $Y_p$ on $\Phi_p$ among control units across the hyperparameter $\lambda$; the purple dotted lines show the CV-optimal value for each. 
Panel (b) and (e) show the 10-fold cross-validated AutoDML loss (\ref{eq:autoform}) for $\ell_2$ and $\ell_\infty$ balancing weights across the hyperparameter $\delta$; the green dotted lines show the CV-optimal value for each. 
Panels (c) and (f) show the point estimates for the augmented estimators across the weighting hyperparameter $\delta$; the black triangles correspond to the OLS point estimate, and the green dotted lines show the CV-optimal value of $\delta$ for each.}
    \label{fig:ihdp-long}
\end{figure}

\begin{figure}[h]
    \centering
    \begin{subfigure}{0.4\textwidth}
        \centering 
        \includegraphics[width=\textwidth]{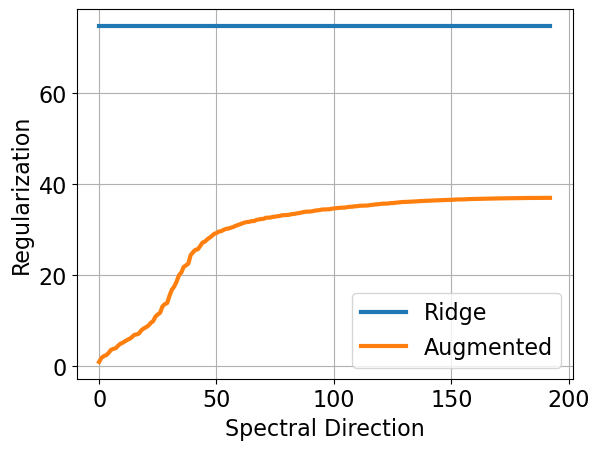}
        \caption{Adaptive regularization}
        \label{fig:ihdp-ridge-undersmooth-panel-a}
    \end{subfigure}
        \begin{subfigure}{0.425\textwidth}
        \centering 
        \includegraphics[width=\textwidth]{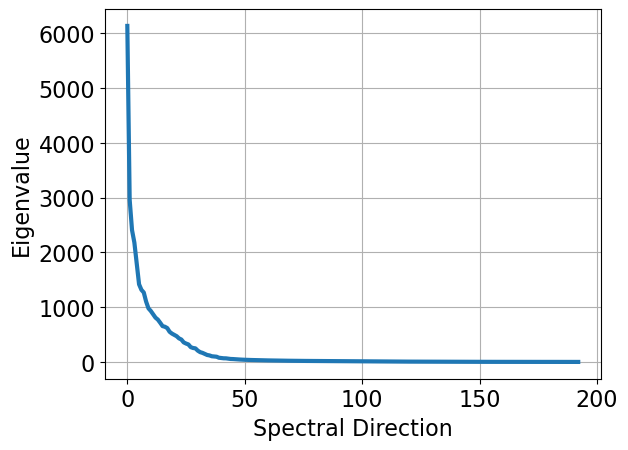}
        \caption{Eigenvalues of $\Phi_p^\top\Phi_p$}
        \label{fig:ihdp-ridge-undersmooth-panel-b}
    \end{subfigure}
    \caption{Single and double ridge regression applied to the ``high-dimensional'' version of the IHDP data set. Panel (a) shows $\lambda_j$, the single ridge hyperparameter, and $\gamma_j$, the implied double ridge hyperparameter, as functions of the spectral direction $j$. Panel (b) shows the corresponding eigenvalues.}
    \label{fig:ihdp-ridge-undersmooth}
\end{figure}

\begin{figure}[h]
    \centering
    \begin{subfigure}{0.4\textwidth}
        \centering 
        \includegraphics[width=\textwidth]{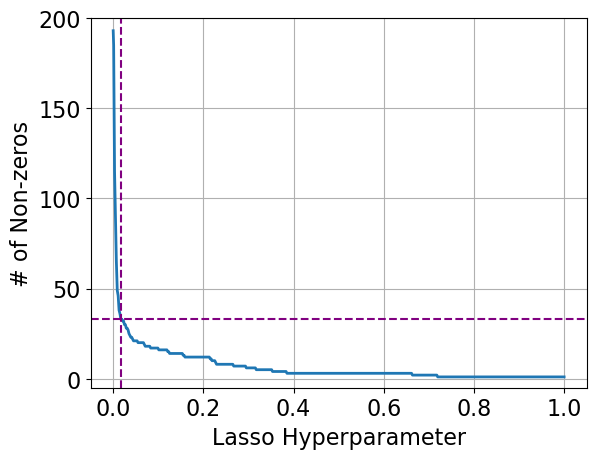}
        \caption{$\ell_0$ ``norm'' for ridge outcome model}
        \label{fig:ihdp-aug-norm-l0-panel-a}
    \end{subfigure}
        \begin{subfigure}{0.4\textwidth}
        \centering 
        \includegraphics[width=\textwidth]{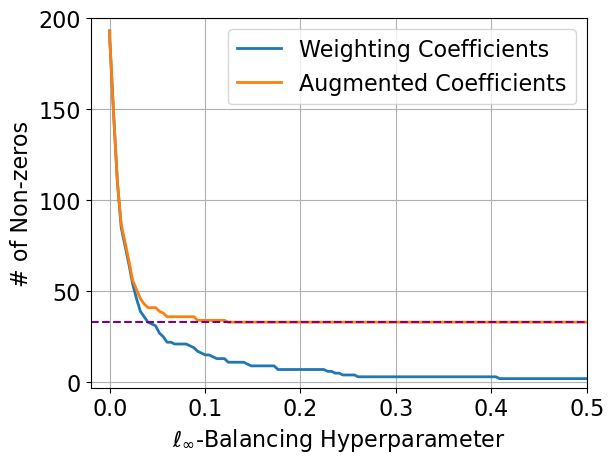}
        \caption{$\ell_0$ ``norm'' for weighting and double lasso}
        \label{fig:ihdp-aug-norm-l2-panel-b}
    \end{subfigure} \\[1em]
    \caption{Undersmoothing in $\ell_0$ ``norm'' for augmented balancing weights applied to the ``high dimesional'' version of the IHDP data set. Panels (a) shows the number of non-zero covariates (the $\ell_0$ ``norm'') for lasso regression of $Y_p$ on $\Phi_p$ among control units. Panels (b) shows the number of non-zero covariates for the weighting model and augmented coefficients. The dotted purple line is the outcome model hyperparameter chosen via cross validation.}
    \label{fig:ihdp-aug-norm-fig}
\end{figure}

\subsection{Undersmoothing in LaLonde}\label{apx:undersmooth-lalonde}

\subsubsection{Undersmoothing for double ridge}

For the special case of double ridge, we now illustrate our result from \Cref{prop:double_ridge} that $\ell_2$ balancing weights produce a new outcome model that is \emph{undersmoothed} relative to the original ridge regression model. Recall that when the base learner is a generalized ridge model with parameters $\lambda_j$, then the augmented estimator is equivalent to plugging in a generalized ridge models with parameters $\gamma_j \leq \lambda_j$. In this example, the $\lambda_j$ are all equal to the same value (chosen via cross-validation), indicated by the purple dotted line in \Cref{fig:lalonde-long-ridge-panel-a}. We compute the corresponding $\gamma_j$ for $\delta$ chosen by cross-validation, indicated by the green dotted line in \Cref{fig:lalonde-long-ridge-panel-b}.

\begin{figure}[htb!]
    \centering
    \begin{subfigure}{0.4\textwidth}
        \centering 
        \includegraphics[width=\textwidth]{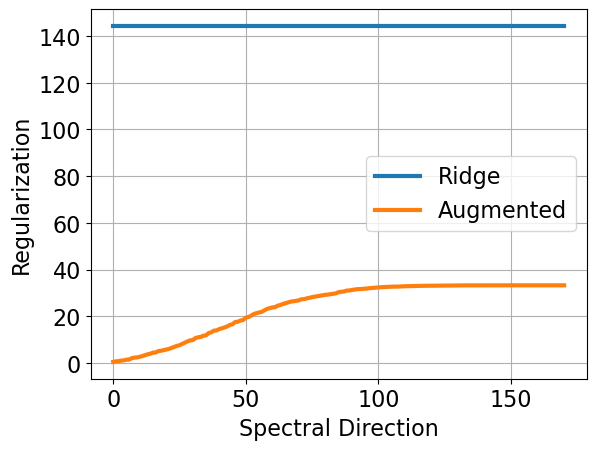}
        \caption{Adaptive regularization}
        \label{fig:lalonde-ridge-undersmooth-panel-a}
    \end{subfigure}
        \begin{subfigure}{0.425\textwidth}
        \centering 
        \includegraphics[width=\textwidth]{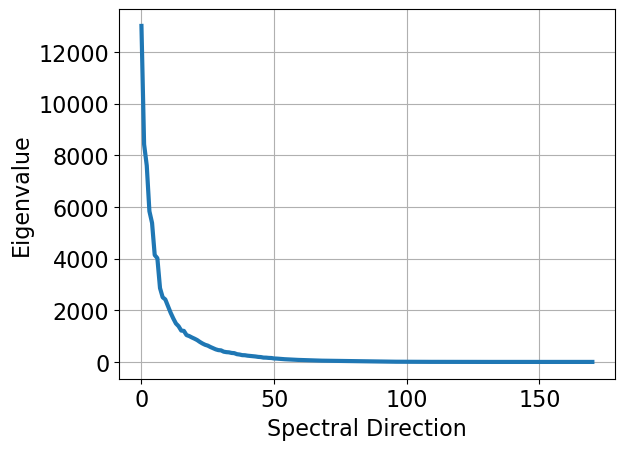}
        \caption{Eigenvalues of $\Phi_p^\top\Phi_p$}
        \label{fig:lalonde-ridge-undersmooth-panel-b}
    \end{subfigure}
    \caption{Single and double ridge regression applied to the ``high-dimensional'' version of the \citet{lalonde1986evaluating} data set. Panel (a) shows $\lambda_j$, the single ridge hyperparameter, and $\gamma_j$, the implied double ridge hyperparameter, as functions of the spectral direction $j$. Panel (b) shows the corresponding eigenvalues.}
    \label{fig:lalonde-ridge-undersmooth}
\end{figure}

\Cref{fig:lalonde-ridge-undersmooth-panel-a} plots the $\lambda_j$ and $\gamma_j$, with $j$ on the $x$-axis sorted in the order of the eigenvalues of $\Phi_p^\top \Phi_p$ from largest to smallest. \Cref{fig:lalonde-ridge-undersmooth-panel-b} plots the corresponding eigenvalues. 
For standard ridge regression, the key idea is to push the eigenvalues of $\Phi_p^\top\Phi_p$ away from zero so that the resulting matrix $\Phi_p^\top \Phi_p + \lambda I$ is invertible and well-conditioned. 
Thus, the original ridge regression applies the same amount of regularization across the spectrum, as shown by the blue line in \Cref{fig:lalonde-ridge-undersmooth-panel-a}.
By contrast, the implied outcome model from the augmented procedure uses far less regularization and is substantially undersmoothed. Importantly, the augmented estimator undersmooths more where the eigenvalues of $\Phi_p^\top\Phi_p$ are large and undersmooths less where the eigenvalues of $\Phi_p^\top\Phi_p$ are close to zero. The augmented estimator therefore avoids bias by only regularizing the spectral directions that are the most significant sources of variation --- and even these to a much smaller degree than is optimal for MSE predictions.

\subsubsection{Undersmoothing in norm}

\begin{figure}[h!]
    \centering
    \begin{subfigure}{0.4\textwidth}
        \centering 
        \includegraphics[width=\textwidth]{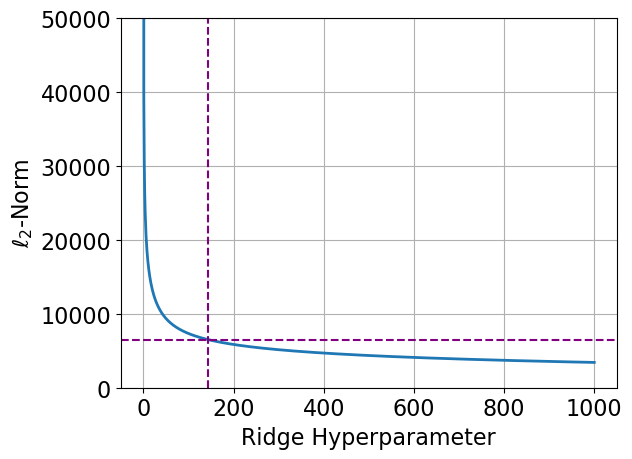}
        \caption{$\ell_2$ norm for ridge outcome model}
        \label{fig:lalonde-aug-norm-l2-panel-a}
    \end{subfigure}
        \begin{subfigure}{0.4\textwidth}
        \centering 
        \includegraphics[width=\textwidth]{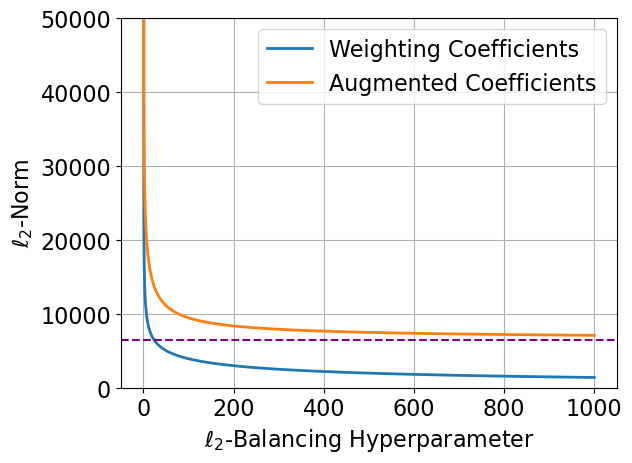}
        \caption{$\ell_2$ norm for weighting and double ridge}
        \label{fig:lalonde-aug-norm-l2-panel-b}
    \end{subfigure} \\[1em]
    \begin{subfigure}{0.4\textwidth}
        \centering 
        \includegraphics[width=\textwidth]{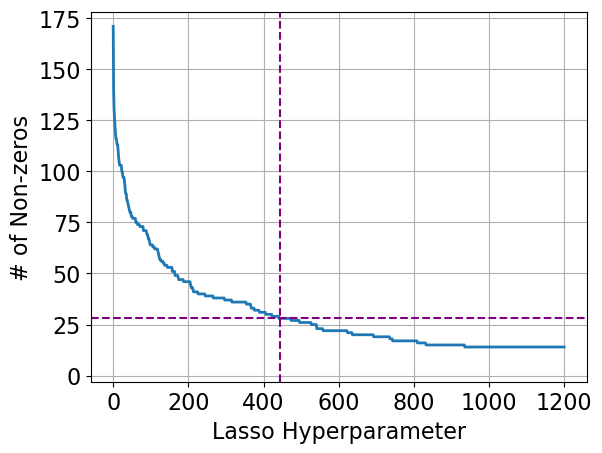}
        \caption{$\ell_0$ ``norm'' for lasso outcome model}
        \label{fig:lalonde-aug-norm-sparsity-panel-a}
    \end{subfigure}
        \begin{subfigure}{0.4\textwidth}
        \centering 
        \includegraphics[width=\textwidth]{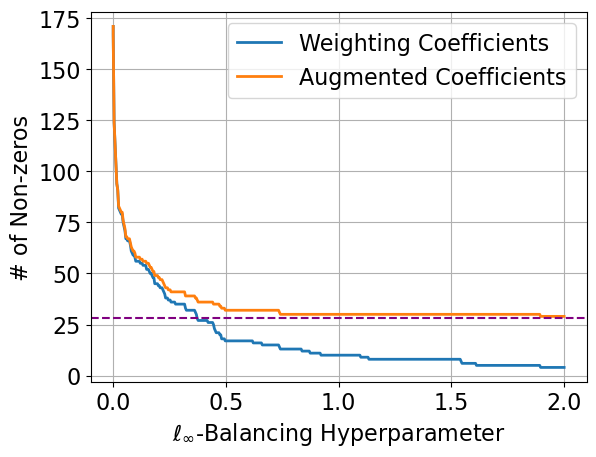}
        \caption{$\ell_0$ ``norm'' for weighting and double lasso}
        \label{fig:lalonde-aug-norm-sparsity-panel-b}
    \end{subfigure}
    \caption{Undersmoothing in norms for augmented balancing weights applied to the ``high dimensional'' version of the \citet{lalonde1986evaluating} data set. Panels (a) and (c) show the $\ell_2$ norm and the number of non-zero covariates (the $\ell_0$ ``norm'') for the ridge and lasso regression, respectively, of $Y_p$ on $\Phi_p$ among control units. Panels (b) and (d) show the $\ell_2$ and number of non-zero covariates for the corresponding unaugmented and augmented balancing weights estimators. The dotted purple line is the outcome model hyperparameter chosen via cross validation. The norms for the augmented estimators are always at least as large as for the outcome model alone.}
    \label{fig:lalonde-aug-norm-fig}
\end{figure}

For the special case with diagonal covariates, both double ridge and double lasso undersmooth in a particular norm of $\| \hat{\beta}_{\text{aug}} \|$, relative to the unaugmented outcome model. We demonstrate this phenomenon in \Cref{fig:lalonde-aug-norm-fig}, where we first use the eigenvectors of $\Phi_p^\top\Phi_p$ to decorrelate the features; as we discuss above, this is without loss of generality for $\ell_2$ balancing but not for $\ell_\infty$ balancing. In particular, \Cref{fig:lalonde-aug-norm-l2-panel-a,fig:lalonde-aug-norm-sparsity-panel-a} show the $\ell_2$ norm and the number of included covariates (the $\ell_0$ ``norm'') for the ridge and lasso outcome regression models, respectively, as a function of the outcome hyperparameter $\lambda$. The purple dotted lines show the values of $\lambda$ chosen via cross validation, and the corresponding values of the $\ell_2$ and $\ell_0$ norms. 
\Cref{fig:lalonde-aug-norm-l2-panel-b,fig:lalonde-aug-norm-sparsity-panel-b} show the corresponding norm for the unaugmented and balancing weights estimators, ``double ridge'' and ``double lasso,'' respectively. For both, the norms for the augmented estimators are always at least as large as the norms for the outcome model alone.
While the patterns are qualitatively the same, the behavior for $\ell_\infty$ balancing does not correspond to traditional undersmoothing, since the union of non-zero coefficients of the outcome and weighting models \emph{cannot} generally be recovered by changing the hyperparameter of the lasso outcome model.


\subsection{Semi-Synthetic Bias vs Variance}\label{apx:semisynth-biasvsvar}

For the Lalonde long data, we use the same semi-synthetic strategy as in \Cref{apx:numerical-experiment-details}, and then plot the analytical design conditional bias and variance in \Cref{fig:bias-var-plot}. Note that the variance dominates, and there's an especially large non-linearity when $\delta$ is close to zero.

\begin{figure}[htb!]
    \centering
        \centering 
        \includegraphics[width=0.5\textwidth]{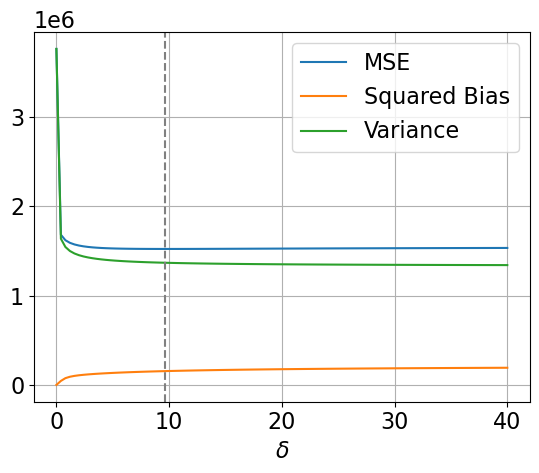}
    \caption{...}
    \label{fig:bias-var-plot}
\end{figure}

\section{Empirical Bayes Interpretation of Augmented $\ell_2$ Balancing Weights}
\label{apx:empirical_bayes}
This section discusses possible Empirical Bayes interpretations of augmented $\ell_2$ balancing weights. We begin with the simple case with fixed outcome model coefficients, before exploring models that incorporate the uncertainty in these coefficients. While plausible, we do not find this interpretation particularly compelling, and instead point interested readers to recent work on weighting and Bayesian double robustness, especially \citet{murray2024bayes_weights}.

\subsection{Fixed outcome coefficients}
We initially consider the straightforward special case where the outcome model coefficients $\hat\beta_{\text{reg}}$ are considered fixed, for example if obtained from an entirely different sample. 

To fix ideas, consider a basic Bayesian linear model for a single data point $i$ with covariates $x \in \mathbb{R}^J$ and outcome $y \in \mathbb{R}$:
$$
y_i = \beta^\top x_i + \varepsilon_i, \quad \varepsilon_i \sim N(0, 1).
$$
\noindent As in the main text, we assume that the covariance matrix is diagonal with entries $\sigma^2_1, \ldots, \sigma^2_J$.

We briefly review the Bayesian interpretation of ridge regression as the posterior mean of $\beta$ corresponding to the following prior:
$$
    \beta_j \mid \tau^2 \sim N(0, \tau^2),
$$
with hyperparameter $\tau^2$. The posterior for $\beta_j$ is conditionally Normal:
\begin{align*}
    \beta_j \mid \tau^2 &\sim N \left( \frac{ \tau^2  }{ \tau^2 + \frac{1}{\sigma^2_j} } \hat{\beta}_{ols,j},
     \frac{ 1 }{ \frac{1}{ \tau^2} + \sigma^2_j } \right) 
\end{align*}
where the posterior mean equals:
$$
\left(\frac{\sigma^2_j}{\sigma^2_j + \frac{1}{\tau^2}}\right) \hat{\beta}_{ols,j} + \left(\frac{\frac{1}{\tau^2}}{\sigma^2_j + \frac{1}{\tau^2}}\right) 0 = \left(\frac{\sigma^2_j}{\sigma^2_j + \frac{1}{\tau^2}}\right) \hat{\beta}_{ols,j}.
$$

To recover the augmented $\ell_2$ balancing weights form, we simply center the prior for $\beta_j$ at the fixed $\hat{\beta}_{\text{reg},j}$:
\begin{align*}
    \beta_j \mid \tau^2 &\sim N(\hat{\beta}_{reg, j}, \tau^2) 
\end{align*}
The posterior mean is then equal to:
$$
\left(\frac{\sigma^2_j}{\sigma^2_j + \frac{1}{\tau^2}}\right) \hat{\beta}_{ols,j} + \left(\frac{\frac{1}{\tau^2}}{\sigma^2_j + \frac{1}{\tau^2}}\right) \hat{\beta}_{reg, j} 
$$
Setting $\tau^2 = \frac{1}{\delta}$ recovers the augmented $\ell_2$ result from the main text.

\subsection{Accounting for uncertainty in outcome model coefficients via global-local shrinkage priors}

We now instead consider an Empirical Bayes interpretation of the ``double ridge'' form:
$$
\left(\frac{ \sigma^2_j}{ \sigma^2_j + \frac{\delta \lambda}{\sigma^2_j + \lambda + \delta}}\right)  \hat{\beta}_{ols, j}.
$$
Note that we cannot simply replace $ \hat{\beta}_{reg, j} $ in the derivation above with $\left(\frac{ \sigma^2_j }{ \sigma^2_j + \lambda}\right) \hat{\beta}_{ols, j}$ since this ignores the uncertainty in estimating these coefficients.

\subsubsection{Scale mixture of Normals}
Again consider a general Bayesian linear model for a single data point $i$:
$$
y_i = \beta^\top x_i + \varepsilon_i, \quad \varepsilon_i \sim N(0, 1)
$$
with the following general class of \emph{global-local shrinkage priors} \citep{polson2010global_local}: 
\begin{align*}
    \beta_j \mid \xi_j, \tau, \sigma_j &\sim N(0, \tau^2 \xi^2_j) \\
    \xi_j &\sim p(\xi) \\
    \tau &\sim p(\tau),
\end{align*}
where $\tau$ is called the \emph{global} parameter, and $\xi_j$ is called the \emph{local} parameter.
To simplify exposition, we again assume that the covariates are uncorrelated with zero mean and variances $\sigma^2_1, \ldots, \sigma^2_J$ (e.g., the principal components).  As above, the posterior for $\beta_j$ is conditionally Normal:
\begin{align*}
    \beta_j \mid \tau^2, \xi_j, \sigma_j &\sim N \left( \frac{ \tau^2 \xi^2_j }{ \tau^2 \xi^2_j + \frac{1}{\sigma^2_j} } \hat{\beta}_j,
     \frac{ 1 }{ \frac{1}{\tau^2 \xi^2_j} + \sigma^2_j } \right),
\end{align*}
where $\hat{\beta}$ is the usual OLS estimate.

Let $\tilde{\beta}_j$ be the posterior mean. We can write this in the following canonical shrinkage form:
$$
    \tilde{\beta}_j = (1 - \kappa_j) \hat{\beta}_j,
$$
where
$$
\kappa_j = \frac{1}{1 + \tau^2 \xi^2_j \sigma^2_j }.
$$
Here $\kappa_j = 0$ implies no shrinkage and $\kappa_j = 1$ implies complete shrinkage. 
\citet{piironen2017sparsity} show that all regression priors that are scale-mixture of Normals can be written in this form.  Some prominent examples include:
\begin{itemize}
    \item \textbf{Ridge.} Standard ridge regression sets $\xi^2_j = 1$.
    \item \textbf{Horseshoe prior.} Here $\xi_j \sim \text{C}^+(0,1)$, where $C^+$ denotes the half-Cauchy density \citep{carvalho2010horseshoe}.
    \item \textbf{Spike-and-slab priors.} When the spike is a point mass at 0, the resulting prior also has this shrinkage form.
    \item \textbf{Zellner's $g$ prior.} While not a proper prior, Zellner's $g$ prior has the same canonical form, with $\xi^2_j = \frac{1}{\sigma^2_j}$ in the case of a diagonal covariance matrix, with corresponding $\kappa_j = \frac{1}{1 + \tau^2}$.
\end{itemize}

\subsubsection{Inverse-Gamma Gamma Prior}
We focus on a flexible hierarchical prior known as the \emph{Inverse Gamma-Gamma} prior \citep{bai2017inverse} or, equivalently, \emph{Beta-prime} prior \citep{armagan2011generalized}: 
\begin{align*}
    \beta_j \mid \xi_j, \tau &\sim N(0, \tau^2 \xi^2_j) \\
    \xi^2_j &\sim \text{Gamma}(a, 1) \\
    \frac{1}{\tau^2} &\sim \text{Gamma}(b, 1)
\end{align*}
\noindent This formulation guarantees that the product $\xi^2 \tau^2 \sim \text{Beta-prime}(a, b)$, also known as the inverse-Beta distribution. For $\sigma^2_j = 1$, we then have $(1-\kappa_j) \equiv \frac{\xi^2 \tau^2}{1 + \xi^2 \tau^2} \sim \text{Beta}(a,b)$, and by symmetry, $\kappa_j \sim \text{Beta}(b,a)$; with the expectation of $\kappa_j = \frac{b}{a+b} = \frac{1}{1 + \frac{a}{b}}$.
The horseshoe is a special case with $a = b = 1/2$.
Incorporating $\sigma^2_j$, the implied prior distribution for the shrinkage factor is $\kappa_j \sim \text{Beta}(b, a \sigma^2_j)$, with expectation $\frac{b}{a\sigma^2_j + b} = \frac{1}{1 + \frac{a\sigma^2_j}{b }}$.

\subsubsection{Application to double ridge}
With a bit of algebra, we can write the shrinkage factor for double ridge as:
$$
\kappa_j^\ast = \frac{1}{1 + \frac{\sigma^2_j(\sigma^2_j + \lambda + \delta)}{\delta \lambda}}
$$
This corresponds to an Inverse Gamma-Gamma prior with $a = \sigma^2_j + \lambda + \delta$ and $b = \delta\lambda$.
\begin{align*}
    \beta_j \mid \xi_j, \tau &\sim N(0, \tau^2 \xi^2_j) \\
    \xi^2_j \mid \delta, \lambda, \sigma^2_j &\sim \text{Gamma}(\sigma^2_j + \lambda + \delta, 1) \\
    \frac{1}{\tau^2} \mid \delta, \lambda &\sim \text{Gamma}(\delta\lambda, 1),
\end{align*}
where the implied prior for $\kappa^\ast_j$ is:
$$
\kappa^\ast_j \mid \delta, \lambda, \sigma^2_j \sim \text{Beta}(\delta \lambda, \sigma^2_j(\sigma^2_j + \delta + \lambda))
$$

As expected, $\delta$ and $\lambda$ have symmetric roles --- the hyperprior distributions only depend on their product and sum. We can assess the behavior of this prior in two extreme cases. First, when 
$\delta \to \infty$ or $\lambda \to \infty$, the prior on $\kappa_j^\ast$ tends to $\text{Beta}(b^\ast,a^\ast)$ with $a^\ast \to \infty, b^\ast \to \infty$. The corresponding expectation of $\kappa_j^\ast$ tends to 1, which implies full shrinkage, $\tilde{\beta}_j = 0$. Second, when $\delta \to 0$ or $\lambda \to 0$, the prior on $\kappa_j^\ast$ tends to $\text{Beta}(b^\ast,a^\ast)$ with $a^\ast \to \sigma^4_j, b^\ast \to 0$. The corresponding expectation of $\kappa^\ast_j$ tends to 0, or no shrinkage, $\tilde{\beta}_j = \hat{\beta}_j$.

As Figure \ref{fig:double_ridge_implied_prior} shows, the shrinkage depends strongly on the eigenvalue $\sigma^2_j$ and the hyperparameters $\delta$ and $\lambda$. Holding $\delta$ and $\lambda$ fixed, $\kappa_j$ is decreasing in $\sigma^2_j$ --- i.e., there is less shrinkage for larger $\sigma^2_j$. Holding $\sigma^2_j$ fixed, $\kappa_j$ is increasing $\delta$ and $\lambda$ --- i.e., as the hyperparameters increase, both the outcome and weighting models move closer to uniform weights and greater shrinkage. 

\begin{figure}[tb]
    \centering
    \begin{subfigure}{0.5\textwidth}
    \includegraphics[width=1\linewidth]{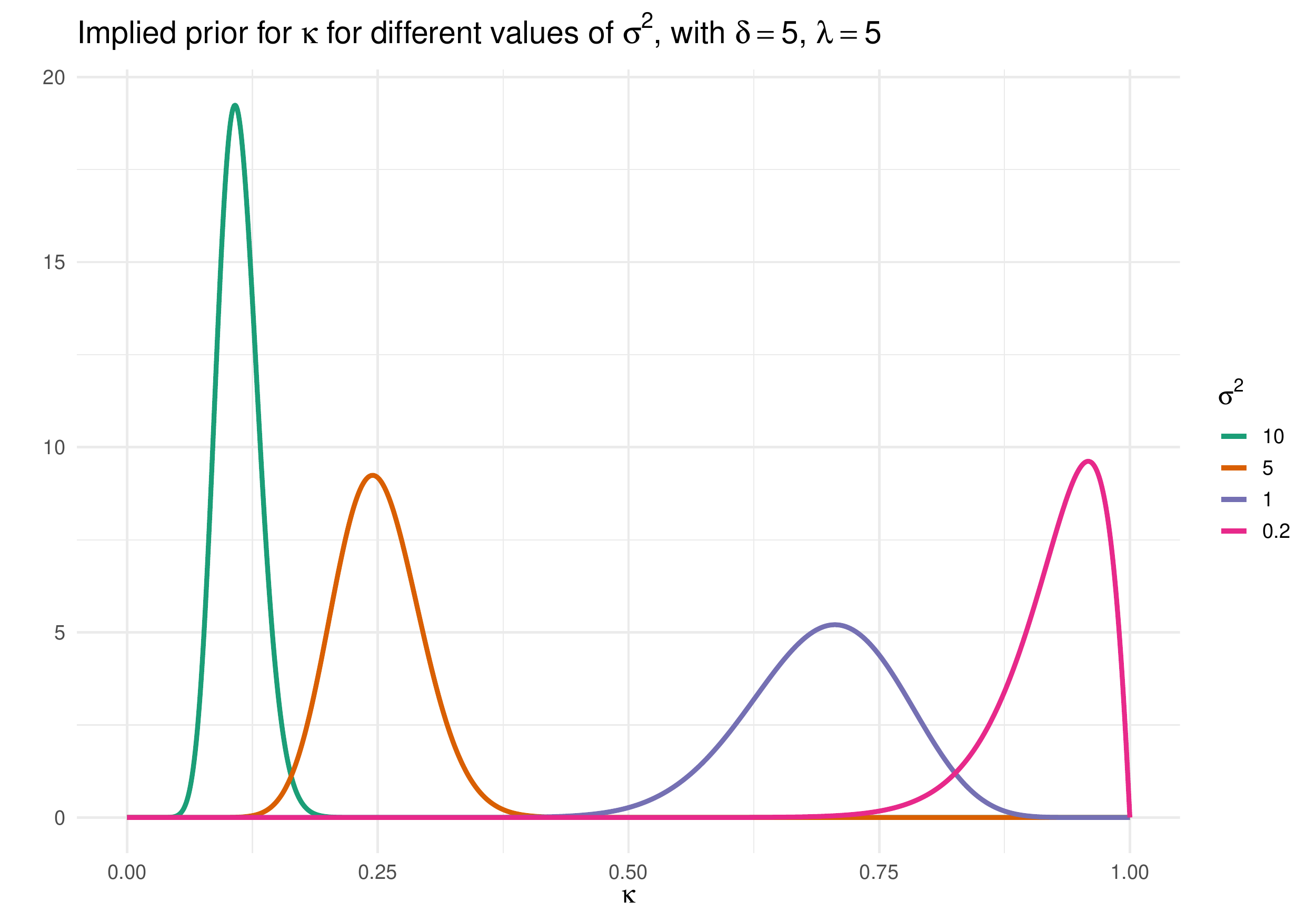}
    \end{subfigure}%
    ~\begin{subfigure}{0.5\textwidth}
    \includegraphics[width=1\linewidth]{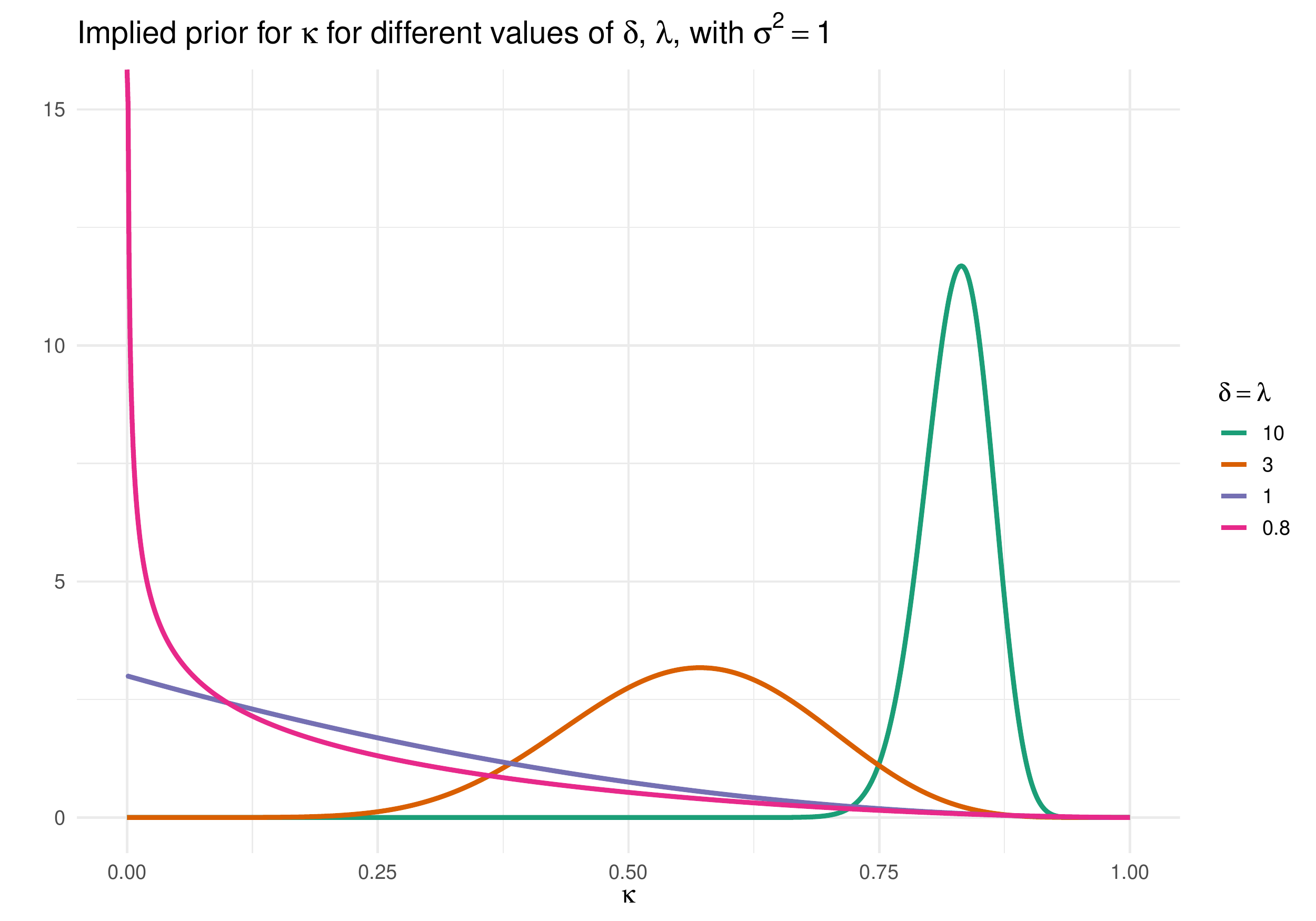}
    \end{subfigure}%
    \caption{Implied prior on $\kappa$ for: (left) varying $\sigma^2_j$, with $\delta = 5$, $\lambda = 5$; (right) varying $\delta = \lambda$, with $\sigma^2_j = 1$.}
    \label{fig:double_ridge_implied_prior}
\end{figure}

\paragraph{Connection to Student $t$ shrinkage.} 
If we drop the hyperprior on $\xi^2$ and set $\xi^2_j = 1$, this reduces to:
\begin{align*}
    \beta_j \mid\tau &\sim N(0, \tau^2 ) \\
    \frac{1}{\tau^2} &\sim \text{Gamma}(\delta\lambda, 1)
\end{align*}
This yields a student $t$ posterior for $\beta_j$.

\paragraph{Connection to Normal-Gamma prior.}
If we instead drop the hyperprior on $\tau^2$ and set $\tau^2 = 1$, this reduces to:
\begin{align*}
    \beta_j \mid \xi_j, \tau &\sim N(0, \xi^2_j) \\
    \xi^2_j &\sim \text{Gamma}(\sigma^2_j + \lambda + \delta, 1)
\end{align*}
which is a Normal-Gamma prior, where the hyperparameter depends on $\sigma^2_j$. Here the prior mean for $\xi^2_j$ is simply $\sigma^2_j + \delta + \lambda$, which is increasing in $\xi^2_j$.

\subsection{Connection to Zellner's $g$ prior}
Finally, we can connect this approach to  generalization of Zellner's $g$ prior known as \emph{adaptive powered correlation prior} \citep{krishna2009bayesian}, which has the form:
\begin{align*}
    \beta_j \mid g &\sim N \left(0, \frac{1}{g} (X'X)^\nu \right)
\end{align*}
with additional hyperparameter $\nu$. 
Zellner's g prior corresponds to $\nu = -1$, and standard ridge regression corresponds to $\nu = 0$. We are interested in the ``inverted $g$ prior'' with $\nu = +1$.
Since we focus on diagonal covariance, we can re-write the prior as:
\begin{align*}
    \beta_j \mid g &\sim N \left(0, \frac{1}{g} \sigma_j^{2\nu} \right)
\end{align*}

Then the posterior mean is:
\begin{align*}
   \tilde{\beta} &= \left(\frac{\frac{1}{g\sigma_j^{2\nu}}}{\frac{1}{g\sigma_j^{2\nu}} + \frac{1}{\sigma^2_j}} \right)\hat{\beta}_j\\
    &= \left(\frac{\sigma^{(2\nu+2)}_j}{\sigma^{(2\nu+2)}_j + g}\right)\hat{\beta}_j
\end{align*}

Special cases include:
\begin{itemize}
    \item \textbf{Zellner's $g$ prior:} ($\nu = -1$)
    $$
    \tilde{\beta} = 
    \left(\frac{1}{{1 + g}}\right)\hat{\beta}_j$$

    \item \textbf{Ridge regression:} ($\nu = 0$)
    $$
    \tilde{\beta} = 
    \left(\frac{\sigma^2_j}{{\sigma^2_j + g}}\right)\hat{\beta}_j$$

    \item \textbf{Inverted $g$ prior:} ($\nu = +1$)
    $$
    \tilde{\beta} = 
    \left(\frac{\sigma^4_j}{{\sigma^4_j + g}}\right)\hat{\beta}_j$$
\end{itemize}

We can view the double ridge as a regularized form of this inverted $g$ prior, with $\nu = +1$ and $g = \delta \lambda$.

\clearpage
\section{Additional Proofs}
\label{sec:proofs}

\begin{proof}[Closed forms for $\ell_2$ and exact balancing weights.]
We derive the closed form for $\ell_2$ balancing weights with parameter $\delta$ (with exact balance following as a special case). The optimization problem:
\begin{align*}
    \min_{w\in\mathbb{R}^n} \Vert w^\top \Phi_p - \bar{\Phi}_q \Vert_2^2 + \delta \Vert w \Vert_2^2 = w^\top \Phi_p \Phi_p^\top w - 2 w^\top \Phi_p \bar{\Phi}_q + \delta w^\top w
\end{align*} 
has first order condition:
\[ 2 (\Phi_p\Phi_p^\top + \delta I_n) w - 2 \Phi_p \bar{\Phi}_q = 0,  \]
which gives the solution:
\begin{align*}
    w^* &= (\Phi_p\Phi_p^\top + \delta I_n)^\dag \Phi_p \bar{\Phi}_q\\
    &= \Phi_p (\Phi_p^\top\Phi_p + \delta I_d)^\dag \bar{\Phi}_q.
\end{align*}
\end{proof}

\begin{proof}[Proof of \Cref{l2augment}]
    We apply \Cref{generalregularizationpath}. We have $a_j^\delta = \hat{\Phi}_{q,j}^\delta / \bar{\Phi}_{q,j}^\delta$. Then:
    \begin{align*}
        \hat{\Phi}_q^\delta &= \hat{w}_{\ell_2}^\delta \Phi_p = \overbar{\Phi}_q(\Phi_p^\top\Phi_p + \delta I)^{-1} \Phi_p^\top\Phi_p.
    \end{align*}
    Since we have assumed that $\Phi_p^\top\Phi_p$ is diagonal, with $j$th diagonal entry, $\sigma_j^2$, we have:
    \[ \hat{\Phi}_{q,j}^\delta = \left(\frac{\sigma_j^2}{\sigma_j^2 + \delta}\right) \bar{\Phi}_{q,j}.  \]
    Plugging this back into $a_j^\delta$ completes the proof. 
\end{proof}
\newpage
\begin{proof}[Proof of \Cref{prop:double_ridge}]
Applying Proposition \Cref{l2augment}:
\begin{align*}
    \hat{\beta}_{\text{$\ell_2$},j} &= \left(\frac{\sigma^2_j}{\sigma^2_j + \delta}\right) \hat{\beta}_\text{ols,j} + \left(\frac{\delta}{\sigma^2_j + \delta}\right) \hat{\beta}_\text{ridge,j}^\lambda\\
    &= \left(\frac{\sigma^2_j}{\sigma^2_j + \delta}\right) \hat{\beta}_\text{ols,j} + \left(\frac{\delta}{\sigma^2_j + \delta}\right) \left(\frac{\sigma^2_j}{\sigma^2_j + \lambda}\right)\hat{\beta}_{\text{ols} , j }\\
    &= \frac{\sigma^2_j (\sigma_j^2 + \lambda + \delta)}{(\sigma^2_j + \delta)(\sigma^2_j + \lambda)}\hat{\beta}_\text{ols,j} \\
\end{align*}
Then taking:
\[\frac{\sigma^2_j (\sigma_j^2 + \lambda + \delta)}{(\sigma^2_j + \delta)(\sigma^2_j + \lambda)} = \frac{\sigma_j^2}{\sigma_j^2 + \gamma_j} \]
and solving for $\gamma_j$ gives:
\[ \gamma_j \coloneqq \frac{\delta \lambda}{\sigma_j^2 + \lambda + \delta}\]
which completes the proof. 
\end{proof}

\begin{proof}[Proof of \Cref{orthonormalellinfty}]
    We begin with the constrained form of the balancing problem:
    \begin{align*}
    \min_{w \in \mathbb{R}^n} & \Vert w \Vert_2^2\\
    \text{such that } & \left\Vert w \Phi_p - \bar{\Phi}_q \right\Vert_\infty \leq \delta.
\end{align*}
Note that we can rewrite the norm constraint as two vector-valued linear constraints:
\begin{align*}
    w \Phi_p &\preceq \bar{\Phi}_q + \delta\\
    - w \Phi_p  &\preceq - \bar{\Phi}_q + \delta,
\end{align*}
which results in the Lagrangian,
\begin{align*}
    \mathcal{L}(w,\mu,\nu) = \Vert w \Vert_2^2 + \mu^\top\left(w \Phi_p - \bar{\Phi}_q - \delta\right) - \nu^\top\left(w \Phi_p -\bar{\Phi}_q + \delta\right).
\end{align*}
The first-order conditions for the optimal $w^*, \mu^*, \nu^*$ are:
\begin{align*}
    w^* = -\frac{1}{2} \left( \Phi_p \mu^* - \Phi_p \nu^* \right)\\
    \mu_j^* \left(w^* \Phi_{p,j} - \bar{\Phi}_{q,j} - \delta\right) = 0, \forall j\\
    \nu_j^* \left(w^* \Phi_{p,j} - \bar{\Phi}_{q,j} + \delta\right) = 0, \forall j\\
    \mu_j^* , \nu_j^* \geq 0, \forall j
\end{align*}
plus the linear constraints on $w^* \Phi_p$. Note that the linear constraints plus the complimentary slackness conditions imply that one of three mutually-exclusive cases holds for each covariate. Case 1: $w^* \Phi_{p,j} = \bar{\Phi}_{q,j} - \delta$, in which case $\mu^*_j = 0$. Case 2: $w^* \Phi_{p,j} = \bar{\Phi}_{q,j} + \delta$, in which case $\nu^*_j = 0$. Or Case 3: $w^* \Phi_{p,j} \in (\bar{\Phi}_{q,j} - \delta, \bar{\Phi}_{q,j} + \delta)$, in which case $\mu^*_j = \nu^*_j = 0$. 

Define:
\begin{align*}
    \theta^*_j \coloneqq \begin{cases}
        0 & \text{if } w^* \Phi_{p,j} \in (\bar{\Phi}_{q,j} - \delta, \bar{\Phi}_{q,j} + \delta)\\
        -\mu_j^*/2 & \text{if } w^* \Phi_{p,j} = \bar{\Phi}_{q,j} + \delta\\
        \nu_j^*/2 & \text{if } w^* \Phi_{p,j} = \bar{\Phi}_{q,j} - \delta.
    \end{cases}
\end{align*}
Then we have $w^* = \Phi_p \theta^*$ from the first-order condition, and thus $w^* \Phi_p = (\Phi_p^\top \Phi_p) \theta^*$. Using the fact that $(\Phi_p^\top \Phi_p)$ is diagonal, we get $w^* \Phi_{p,j} = \sigma_j^2 \theta^*_j$.

Finally, we can plug this into the three cases that define $\theta^*$. First, $\theta_j^* = 0$ when
\begin{align*}
    &\sigma^2_j \theta_j^* \in (\bar{\Phi}_q - \delta, \bar{\Phi}_q + \delta)\\
    \implies& 0 \in (\bar{\Phi}_q - \delta, \bar{\Phi}_q + \delta)\\
    \implies& \bar{\Phi}_q \in (-\delta,\delta).
\end{align*}
Second, $\theta_j^* = -\mu_j^*/2$ when $\sigma_j^2 \theta_j^* = \bar{\Phi}_{q,j} + \delta$, which implies $\mu_j^* = -2(\bar{\Phi}_{q,j} + \delta)/ \sigma_j^2$. We then apply the dual variable constraint:
\begin{align*}
    &\mu_j^* \geq 0 \\
    \implies& -2(\bar{\Phi}_{q,j} + \delta)/ \sigma_j^2 \geq 0 \\
    \implies& \bar{\Phi}_{q,j}  \leq - \delta.
\end{align*}
Third, $\theta_j^* = \nu_j^*/2$ when  $\sigma_j^2 \theta_j^* = \bar{\Phi}_{q,j} - \delta$
, which implies $\nu_j^* = 2(\bar{\Phi}_{q,j} - \delta)/ \sigma_j^2$. We then apply the dual variable constraint:
\begin{align*}
    &\nu_j^* \geq 0 \\
    \implies& 2(\bar{\Phi}_{q,j} - \delta)/ \sigma_j^2 \geq 0 \\
    \implies& \bar{\Phi}_{q,j}  \geq  \delta.
\end{align*}
Putting the cases together we get:
\begin{align*}
    \theta^*_j \coloneqq \begin{cases}
        0 & \text{if }\bar{\Phi}_q \in (-\delta,\delta)\\
        (\bar{\Phi}_{q,j} + \delta)/\sigma_j^2 & \text{if } \bar{\Phi}_{q,j}  \leq - \delta.\\
        (\bar{\Phi}_{q,j} - \delta)/\sigma_j^2 & \text{if } \bar{\Phi}_{q,j}  \geq  \delta.
    \end{cases}
\end{align*}
This is exactly the soft-thresholding operator, which completes the proof. 
\end{proof}

\begin{proof}[Proof of \Cref{linfaugment}]
To obtain this result from the general form of $a^\delta_j = \widehat{\Delta}_j/\Delta_j$ in  \Cref{generalregularizationpath}, notice that the implied feature shift, $\widehat{\Delta}_j = \hat{\Phi}_{q,j}^\delta - \overbar{\Phi}_{p,j} =  \mathcal{T}_\delta(\overbar{\Phi}_{q,j} - \overbar{\Phi}_{p,j})$ is:
   \[   
   \widehat{\Delta}_j  =
\begin{cases}
        0 & \text{if } |\Delta_j| < \delta \\[0.5em]
        \Delta_j - \delta   & \text{if } \Delta_j > \delta \\
       \Delta_j + \delta & \text{if } \Delta_j < -\delta
    \end{cases}.
 \]
 Thus, for instance, $\frac{\widehat{\Delta}_j}{\Delta} =  \frac{\Delta_j - \delta}{\Delta_j} = 1 - \frac{\delta}{\Delta_j}$ when $\Delta_j > \delta$.
 \end{proof}

\begin{proof}[Proof of \Cref{prop:double-selection}]
    The result follows immediately from \Cref{linfaugment}.
\end{proof}

\begin{proof}[Proof of \Cref{prop:asymptotic-sq-rate}]
    Rewriting the definition of $\gamma_n$ with $\lambda_n = \delta_n$, we have
    \begin{align*}
        \gamma_n = \frac{\lambda_n^2}{\sigma^2 + 2\lambda_n^2} = \frac{1}{\sigma^2 \lambda_n^{-2} + 2 \lambda_n^{-1} }.
    \end{align*}

    Because $\sigma^2 x^2 + 2 x = O(x^2)$ as a function of $x$, and because $\lambda_n^{-1}$ is monotonically increasing, $\sigma^2 \lambda_n^{-2} + 2 \lambda_n^{-1} = O(\lambda_n^{-2})$. And $\lambda_n^{-2} = O(\sigma^2 \lambda_n^{-2} + 2 \lambda_n^{-1})$ because $\sigma^2 \geq 0$ and $\lambda_n^{-1} > 0$. Thus $\sigma^2 \lambda_n^{-2} + 2 \lambda_n^{-1} \asymp \lambda_n^{-2}$.
    
    Finally, note that for any two functions of $n$, $f_n$ and $g_n$, 
\[ f_n \asymp g_n \iff f_n^{-1} \asymp g_n^{-1},\] 
    and therefore,
    \[ \gamma_n \asymp \lambda_n^2. \]
\end{proof}

\section{Additional Details for Asymptotic Results}
\label{sec:asymptotics_appendix}

Our setup for the RKHS follows \citet{singh2021debiased}. First, assume that the space $\mathcal{X} \times \mathcal{Z}$ is Polish. Let $\mathcal{H}$ be an RKHS on $\mathcal{X} \times \mathcal{Z}$ with corresponding kernel $k$ satisfying standard regularity conditions \citep[][Assumption 5.2]{singh2021debiased} and let $\eta_j$, $\varphi_j$ denote the eigenvalues and eigenfunctions respectively of its kernel integral operator under $p$. 
Next, assume that the eigenvalues satisfy the decay condition $\eta_j \leq C j^{-b}$ for some $b > 1$ and a constant $C$. The parameter $b$ encodes information on the effective dimension of $\mathcal{H}$.
For a bounded kernel, $b>1$ \citep{fischer2020sobolev}: the case where $b=\infty$ corresponds to a finite-dimensional RKHS; for the case with $1<b<\infty$, the $\eta_j$ must decay at a polynomial rate. 

We then assume that for some $c \in [1,2]$, the outcome function $m(x,z)$ belongs to the set:
\begin{align}
\mathcal{H}^c\coloneqq\left\{f=\sum^\infty_{j=1}a_j\varphi_j:\sum^\infty_{j=1}\frac{a_j^2}{\eta^{c}_j}<\infty\right\}\subset \mathcal{H}, \label{eq:smooth-rkhs}
\end{align}
where $c$ encodes additional \emph{smoothness} of the conditional expectation. If $c=1$, then by the spectral decomposition of the RKHS, \Cref{eq:smooth-rkhs} is equivalent to requiring $m \in \mathcal{H}$; choosing larger values of $c$ corresponds to $m$ being a smoother element of $\mathcal{H}$, with a ``saturation effect'' kicking in for $c>2$ \citep{bauer2007regularization}. Varying $b$ (the effective dimension of the RKHS) and $c$ (the additional smoothness of the outcome function) changes the optimal rates for regression, with larger values of both corresponding to faster rates of convergence. 

Finally, we assume that the Riesz representer, $\alpha(x,z)$, of our linear functional estimand also belongs to $\mathcal{H}^c$. Under these conditions, \cite{singh2021debiased} demonstrates that an augmented estimator combining kernel balancing weights and a kernel ridge regression base learner is root-$n$ consistent and asymptotically normal. 

Following \cite{caponnetto2007optimal}, Theorems 5.1 and 5.2 of \cite{singh2021debiased} use hyperparameter schedules for $\lambda$ and $\delta$, which depend on the effective dimension $b$ and smoothness $c$:
\begin{align*}
\lambda_n = \delta_n  = \begin{cases}
 n^{-1/2} &\textrm{ if } b=\infty\\
n^{-\frac{b}{bc+1}} &\textrm{ if } b\in (1,\infty), \quad c\in(1,2]\\
(n/\log(n))^{-b/(b+1)} &\textrm{ if } b\in (1,\infty),\quad c=1
\end{cases},
\end{align*}
We can compute the implied augmented hyperparameter sequence $\gamma_n$ using the following proposition.
\begin{proposition}\label{prop:asymptotic-sq-rate}
Let $\lambda_n > 0$ be any monotonically decreasing function of $n$ and let $\delta_n = \lambda_n$. Then:
\[ \gamma_n \coloneqq \frac{\lambda_n \delta_n}{\sigma^2 + \lambda_n + \delta_n} \asymp \lambda_n^2. \] 
\end{proposition}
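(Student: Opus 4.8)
The plan is to reduce the claim to an elementary boundedness statement about a single scalar factor. First I would substitute $\delta_n = \lambda_n$ into the definition of $\gamma_n$ to collapse it to the one-parameter expression
\[ \gamma_n = \frac{\lambda_n \delta_n}{\sigma^2 + \lambda_n + \delta_n} = \frac{\lambda_n^2}{\sigma^2 + 2\lambda_n}, \]
so that establishing $\gamma_n \asymp \lambda_n^2$ amounts to showing that the prefactor $(\sigma^2 + 2\lambda_n)^{-1}$ is bounded above and below by strictly positive constants, uniformly in $n$.

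The key observation is exactly that two-sided bound on $\sigma^2 + 2\lambda_n$. For the lower bound I would invoke the standing assumption $\sigma^2 > 0$ (the common eigenvalue of the kernel matrix is positive), which gives $\sigma^2 + 2\lambda_n \ge \sigma^2 > 0$. For the upper bound I would use the monotonicity hypothesis: since $\lambda_n$ is decreasing in $n$, we have $\lambda_n \le \lambda_1$ for every $n$, and hence $\sigma^2 + 2\lambda_n \le \sigma^2 + 2\lambda_1 =: C < \infty$. Both halves of the definition of $\asymp$ then follow at once by dividing through: $\gamma_n \le \lambda_n^2/\sigma^2$ gives $\gamma_n = O(\lambda_n^2)$, and $\gamma_n \ge \lambda_n^2/C$ gives $\lambda_n^2 = O(\gamma_n)$.

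Equivalently — and this is the slightly slicker route I would actually write out, mirroring the reciprocal manipulation — one passes to the inverse, noting
\[ \gamma_n^{-1} = \sigma^2 \lambda_n^{-2} + 2\lambda_n^{-1}, \]
observes that the polynomial $\sigma^2 x^2 + 2x \asymp x^2$ for $x = \lambda_n^{-1}$ (which is bounded below by $\lambda_1^{-1}>0$ by monotonicity), and then applies the elementary equivalence $f_n \asymp g_n \iff f_n^{-1} \asymp g_n^{-1}$ to conclude $\gamma_n \asymp \lambda_n^2$.

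There is no substantive obstacle: the content is a routine $O(\cdot)$ estimate. The only point that genuinely requires care is to record \emph{where each hypothesis enters}, since the equivalence is false without both. Positivity of $\sigma^2$ is precisely what keeps the denominator from vanishing, preventing $\gamma_n$ from being of strictly larger order than $\lambda_n^2$; and monotonicity of $\lambda_n$ (or any bound $\lambda_n = O(1)$) is what keeps the denominator from diverging, preventing $\gamma_n$ from being of strictly smaller order. The plan, then, is simply to make both dependencies explicit rather than to grind through additional algebra.
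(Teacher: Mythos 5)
Your proof is correct and is essentially the paper's argument: the paper passes to the reciprocal $\gamma_n^{-1} = \sigma^2\lambda_n^{-2} + 2\lambda_n^{-1}$, bounds it two-sidedly by $\lambda_n^{-2}$ (using monotonicity of $\lambda_n^{-1}$ and positivity of $\sigma^2$), and applies $f_n \asymp g_n \iff f_n^{-1} \asymp g_n^{-1}$ --- precisely your second, ``slicker'' route, while your primary route (two-sided bounds on the denominator $\sigma^2 + 2\lambda_n$) is the same estimate without the inversion. If anything, your accounting of hypotheses is slightly more careful than the paper's, whose proof invokes ``$\sigma^2 \geq 0$'' where, as you note, strict positivity $\sigma^2 > 0$ is what prevents $\gamma_n$ from being of strictly larger order than $\lambda_n^2$.
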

The standard ridge regression case corresponds to the finite-dimensional setting with $b=\infty$. When $c > 1$, the optimal rate for $\lambda_n$ is $n^{-\frac{b}{bc+1}}$;  the implied hyperparameter is then order $n^{-2b/bc+1} \in (n^{-2} , n^{-2/3})$ for $c \in (1,2]$ and $b \in (1, \infty)$. Whether or not this smooths more than $n^{-1}$ therefore depends on the relationship between the effective dimension $b$ and the smoothness $c$. In particular, the implied hyperparameter goes to zero at a slower rate than $n^{-1}$ whenever $c \geq 2 - \frac{1}{b}$. 
It is unclear whether the rates we find here are the only undersmoothed rates that will yield efficiency for fixed $b$ and $c$; we leave a thorough investigation to future work.

\end{document}